\newtheorem{theorem}{Theorem}
\newtheorem{claim}[theorem]{Claim}
\newtheorem{corollary}[theorem]{Corollary}
\newtheorem{definition}[theorem]{Definition}
\newtheorem{property}[theorem]{Property}
\newtheorem{lemma}[theorem]{Lemma}
\newtheorem{proposition}[theorem]{Proposition}
\newtheorem{remark}[theorem]{Remark}
\newcommand{\Xomit}[1]{ }
\newcommand{\algo}[1]{\ensuremath{\mbox{\textsc{#1}}}\xspace}
\newcommand{\opt}{\algo{opt}}
\newcommand{\sol}{\algo{Sol}}
\newcommand{\R}{{\cal R}}
\newcommand{\I}{{\cal I}}
\newcommand{\C}{{\cal C}}
\newcommand{\cO}{{\cal O}}
\newcommand{\J}{{\cal J}}
\newcommand{\s}{{\cal S}}
\newcommand{\ak}{A^{(k)}}
\newcommand{\akk}{A^{(k')}}
\newcommand{\akkk}{A^{(k'')}}
\newcommand{\solk}{\algo{Sol}^{(k)}}
\newcommand{\solka}{\algo{Sol'}^{(k)}}
\newcommand{\solkb}{\algo{Sol''}^{(k)}}
\newcommand{\solkab}{{Sol}^{(k)}}
\newcommand{\solkkab}{{Sol}^{(k')}}
\newcommand{\solkkkab}{{Sol}^{(k'')}}
\newcommand{\del}{\delta}
\newcommand{\eps}{\varepsilon}
\newcommand{\Rmnum}[1]{\expandafter\@slowromancap\romannumeral #1@}
\begin{document}

\title{Minimum total weighted completion time: \\ Faster approximation schemes}

\author {Leah Epstein\thanks{Department of Mathematics, University
of Haifa,  Haifa, Israel. {\tt lea@math.haifa.ac.il}.} \and
  Asaf Levin\thanks{Faculty of Industrial Engineering and Management, The Technion,  Haifa, Israel. {\tt levinas@ie.technion.ac.il.}}}\date{}

\maketitle

\begin{abstract} We study classic scheduling problems
on {\it uniformly related machines}. Efficient polynomial time
approximation schemes (EPTAS's) are fast and practical
approximation schemes.  New methods and techniques are essential
in developing such improved approximation schemes, and their
design is a primary goal of this research agenda. We present
EPTAS's for the scheduling problem of a set of jobs on uniformly
related machines so as to minimize the total weighted completion
time, both for the case with release dates and its special case
without release dates. These problems are NP-hard in the strong
sense, and therefore EPTAS's are the best possible approximation
schemes unless P=NP. Previously, only PTAS's were known for these
two problems, while an EPTAS was known only for the special case
of identical machines without release dates.
\end{abstract}

\section{Introduction}
We consider one of the most basic multiprocessor scheduling
problems: scheduling on uniformly related machines with the goal
of minimizing the total weighted completion times of jobs. More
precisely, we are given a set of $n$ jobs, where each job has a
positive size, a positive weight, and a non-negative release date
(also called release time) associated with it. We are also given a
set of $m$ machines for their processing, such that each machine
has a positive speed. Processing job $j$ on machine $i$ requires
the allocation of a time interval on this machine, where its
length is precisely the size of $j$ divided by the speed of $i$.
We consider non-preemptive schedules and thus every job is
assigned to a machine and to a single time slot on that machine,
such that the following conditions hold.  First, the length of the
time slot allocated for job $j$ (on one of the machines) is the
time it takes to process $j$ on that machine. Second, the time
interval assigned to job $j$ starts no earlier than the release
date of $j$.  Finally, a machine can process at most one job at
each time, and therefore we require that the time intervals
assigned to two jobs that are assigned to a common machine do not
intersect in an inner point. Given such a schedule, the completion
time of job $j$ is the ending point of the time interval assigned
to job $j$, and the weighted completion time of $j$ is the product
of the weight of $j$ and its completion time. The goal is to find
a schedule that minimizes the sum of the weighted completion times
of all jobs.

Before we state our main result, we define the notions of
approximation algorithms and the different types of approximation
schemes.  An $\R$-approximation algorithm for a minimization
problem is a polynomial time algorithm that always finds a
feasible solution of cost at most $\R$ times the cost of an
optimal solution. The infimum value of $\R$ for which an algorithm
is an $\R$-approximation is called the approximation ratio or the
performance guarantee of the algorithm. A polynomial time
approximation scheme (PTAS) is a family of approximation
algorithms such that the family has a $(1+\eps)$-approximation
algorithm for any $\eps >0$. An efficient polynomial time
approximation scheme (EPTAS) is a PTAS whose time complexity is of
the form $f(\frac{1}{\eps}) \cdot poly(n)$ where $f$ is some (not
necessarily polynomial) function and $poly(n)$ is a polynomial
function of the length of the (binary) encoding of the input. A
fully polynomial time approximation scheme (FPTAS), is a stronger
concept, defined like an EPTAS, but the function $f$ must be a
polynomial in $\frac 1{\eps}$. In this paper, we are interested in
EPTAS's and we say that an algorithm (for some problem) has a
polynomial running time complexity if its time complexity is of
the form $f(\frac{1}{\eps}) \cdot poly(n)$. Note that while a PTAS
may have time complexity of the form $n^{g(\frac{1}{\eps})}$,
where $g$ can be linear or even exponential, this cannot be the
case for an EPTAS.  The notion of an EPTAS is modern and finds its
roots in the FPT (fixed parameter tractable) literature (see
\cite{CT97,DF99,FG06,Marx08}). It was introduced in order to
distinguish practical from impractical running times of PTAS's,
for cases where a fully polynomial time approximation scheme
(FPTAS) does not exist (unless P=NP). In this work we design an
EPTAS for the scheduling problem above.

The seminal work of Smith \cite{Smith} established the existence
of polynomial time algorithm for solving the problem of minimizing
the total weighted completion time without release dates on a
single machine. His algorithm can be described as follows. The
jobs are scheduled according to a non-decreasing order of their
densities, starting at time zero, and without any idle time (where
the density of job $j$ is the ratio between its weight and its
size). The correctness of this algorithm follows by a simple
exchange argument.  This algorithm generalizes the SPT
(shortest-processing-time) approach for the case of equal weights.
In our settings, we can conclude the following property for the
problem {\it without} release dates. Once the jobs have been
assigned to machines (but not to time slots), the permutation of
jobs assigned to a given machine is fixed according to Smith's
algorithm. For a constant number of machines (at least two
machines), the problem without release dates is NP-hard in the
ordinary sense, but it is solvable in pseudo-polynomial time and
has an FPTAS \cite{Sa76}. For the case where the number of
machines is a part of the input, the problem is strongly NP-hard
(see e.g. problem SS13 in \cite{GJ79}). The case of equal weights
(and no release dates) is polynomially solvable even for the more
general case of unrelated machines \cite{H73}. The problems with
release dates are known to be much harder, and they are strongly
NP-hard for any constant number of machines, even with equal
weights.  In fact, even the problem with release dates on a single
machine and equal weights is strongly NP-hard \cite{LRB77}.  The
property that our problems are strongly NP-hard excludes the
possibility to design FPTAS's for the problems that we consider in
this work, and thus EPTAS's are the best possible results (unless
P=NP). The problem in the more general setting of unrelated
machines is APX-hard \cite{Hoogeven} both for the case with
release dates and unit weight jobs, and for the case without
release dates (where the jobs have arbitrary weights).

With respect to approximation algorithms, the development of good
approximation algorithms for these problems was fairly slow. Till
the late nineties, only constant approximation algorithms were
developed for min-sum scheduling problems such as the ones we
study. We refer to the papers cited in \cite{SW00,Afrati+99,CK01}
for a restricted survey of such results.  Here we elaborate on the
approximation schemes in the literature.

The first approximation scheme for a special case of our problem
was introduced by Skutella and Woeginger \cite{SW00} and it was
designed for the special case of identical machines and without
release dates. Their scheme is in fact an EPTAS for this problem.
Shortly afterwards, Afrati et al. \cite{Afrati+99} presented
PTAS's for the special case of the problem on identical machines
with release dates (and also for the problem on a constant number
of unrelated machines).  Their approach was generalized by Chekuri
and Khanna \cite{CK01} who showed the existence of a PTAS for the
problem studied here (namely, related machines with release
dates). See also \cite{AM06} for a description of the methods used
in these schemes.

Before explaining the methods and techniques of these last
schemes, as well as the limitations of those approaches, we
mention the state of the art of approximation schemes for
 load balancing problems on identical machines and
uniformly related machines.  The relation between those families
of problems will become clear later. In these load balancing
problems, the goal is to minimize the vector of {\em machines}
completion times (and not job completion times as we study here).
Hochbaum and Shmoys presented the dual approximation framework and
used it to show that the makespan minimization problem has a PTAS
for identical machines \cite{HS87} and for uniformly related
machines \cite{HS88}.  It was noted in \cite{HocBook} that the
PTAS of \cite{HS87} for identical machines can be converted into
an EPTAS by using integer program in fixed dimension instead of
dynamic programming. Recently, Jansen \cite{Ja10} developed an
EPTAS for the makespan minimization problem on uniformly related
machines (see \cite{JR11} for an alternative EPTAS for this
problem).  The $\ell_p$ minimization problem (of the vector of
machine loads) has an EPTAS on identical machines
\cite{AAWY97,AAWY98}, and a PTAS and an EPTAS on uniformly related
machines \cite{ES04,EL14}.

Next, we elaborate on the known approximation schemes for the
special cases of our problems.  Skutella and Woeginger \cite{SW00}
observed that the $\ell_2$ norm minimization of the vector of
machine loads is equivalent to minimizing the total weighted
completion times (without release dates) if the jobs have equal
densities.  Thus, they showed that if the ratio between the
maximum density and the minimum density of jobs is bounded by a
constant, then one can adapt the ideas of Alon et
al.~\cite{AAWY97,AAWY98} and obtain an EPTAS for this restricted
setting.  Their scheme for the problem without release dates on
identical machines follows the following ideas.  First, round all
the job sizes and job weights to integer powers of $1+\eps$. Next,
apply randomized geometric partitioning of the jobs based on their
(rounded) density, solve each sub-instance consisting of all jobs
of one partition using the scheme (which is similar to the one of
\cite{AAWY97,AAWY98}) and schedule the partial solutions for every
machine sorted by non-decreasing job densities.  This last step of
uniting the solutions for the different parts in the partition was
more delicate in \cite{SW00}, but as noted in \cite{Afrati+99},
the last step could be made much simpler. Afrati et al.~also noted
that this approach can be extended to obtain an approximation
scheme (PTAS) for the problem without release dates in other
machines environments (see the last remark in \cite{Afrati+99})
however they wrote that {\em ``unfortunately, this approach breaks
down completely when jobs have release dates"}. More precisely,
major difficulties arise even for one machine (with release dates)
as it is no longer correct that simply  scheduling all the jobs of
one part of the geometric partition before the remaining jobs has
minor affect on the performance of the algorithm.  Afrati et
al.~\cite{Afrati+99} managed to overcome these difficulties by
rounding the input parameters and then structuring the input in a
way that a job is processed relatively quickly after its release
date. Using this structure, each machine has a constant number of
states (where a state of a machine is the next time in which it
becomes available for processing another job), and one can apply
dynamic programming in the time-horizon to schedule the jobs while
recalling at each time the number of machines in each state, and
the number of unscheduled jobs of each large size that were
released at any given time in the last few release dates (and
similarly, for the total volume of jobs seen as small that were
released in those last few release dates).  Chekuri and Khanna
\cite{CK01} extended these techniques of \cite{Afrati+99} to the
setting of uniformly related machines (with release dates).
Observe that the time complexity of the time-horizon dynamic
programming is too large if one seeks for an EPTAS, as the number
of possible states of machines is clearly a constant depending on
$\eps$ and we need to recall the number of machines in each state,
and thus the degree of $m$ in the polynomial of the running time
depends on $\eps$ and this violates the conditions on the running
time of an EPTAS. Thus, the methods of \cite{Afrati+99,CK01} fail
completely when one tries to obtain an EPTAS for these problems
with release dates (even for identical machines).

\paragraph{Paper outline.}
We start our study in Section \ref{sec:no_release_dates}, where we
present an EPTAS for the special case without release dates. Such
a scheme was not known prior to our work (it was only known for
identical machines). This first scheme is simpler than the one for
the general case that we present later, and thus it can serve as
an introduction of a part of the methods which we will use
afterwards for the design of the EPTAS for the variant with
release dates. In Section \ref{sec:release_dates}, we present our
main result, that is an EPTAS for the general case of the problem,
which is the non-preemptive scheduling problem with release dates
on uniformly related machines so as to minimize the total weighted
completion time of the jobs.

In our schemes, all parameters are rounded to powers of a
parameter $1+\del$ (where $\del=\frac{\eps}{\Upsilon}$ for a
constant $\Upsilon>0$ independent of $\eps$)
\cite{SW00,Afrati+99}. In the presence of release dates, rounding
is done carefully so that the resulting release dates do not
include zero. In the case without release dates, we use the
shifting technique \cite{HM85} to separate the input into {\it
bounded} instances, where for each instance, all the jobs have
sufficiently similar densities, and show that each instance can be
solved separately and independently from other instances, and the
solutions can be combined without further modifications. To solve
each instance (due to shifting, there is a polynomial number of
such instances), we use configuration mixed-integer linear
programs (MILPs), where we require a variable of a machine
configuration to be integral if the machine is heavily loaded,
that is, it receives a large total size of jobs. By splitting the
machine set into slow and fast machines, we show that a slow
machine is never heavily loaded in optimal solutions. Moreover,
there is a limit to how much a fast machine can be loaded (in
terms of the load of the first fastest machine that we guess
approximately), and as a result, the number of heavy
configurations is constant. A configuration consists of a complete
list of relatively large jobs together with their densities
(compared to the total size of jobs assigned to this machine), and
an approximate total size of smaller jobs of each density.  The
assignment of jobs that are scheduled as small jobs is carried out
by another set of assignment variables.  These assignment
variables can be fractional in a feasible solution to the MILP. As
in \cite{SW00}, we approximate the increase of cost due to the
contribution of slightly larger set of small jobs using the total
sizes of jobs of the configuration, and thus fractional assignment
becomes possible, and can be converted into an integral one by
slightly increasing the cost of each configuration. As for
fractional configurations, we round down the number of machines
with each such configuration. The total size of jobs that remain
unassigned is sufficiently small to be combined into the schedule
of a heavily loaded machine (at least one such fastest machine
must exist). Our scheme of the similar densities instance is of
independent interest as it extends the methods of \cite{Ja10} for
the $\ell_2$-norm minimization problem (the case of one density
for all jobs) to obtain an EPTAS for a constant number of
densities (where the constant depends on $\del$), whereas
extending the original EPTAS for this $\ell_2$-norm minimization
problem \cite{EL14} results in only a PTAS for the case of
constant number of densities.  Thus, presenting a new EPTAS for
the $\ell_2$-norm minimization problem is indeed necessary for our
generalization.

The general structure of the scheme for  the case with release
dates is as follows. In this case, we are also interested in
reducing the problem into a polynomial set of subproblems. It is
possible to split time into intervals \cite{Afrati+99}, but
obviously those time intervals cannot be treated separately. Each
subproblem which we create is such that it can be solved using a
mixed-integer linear program. Our scheme starts with a similar
flavor to the schemes of \cite{Afrati+99,CK01}.  Namely, we start
by rounding the input parameters and by an iterative procedure
that we call {\it job shifting}, where we delay the release dates
of some of the jobs. More specifically, if the total number (for
relatively large jobs of equal properties) or total size (for
sub-classes of relatively small jobs with similar properties) that
are being released at a given release date is too large, it is
impossible to schedule such a large number or total size of those
jobs by the next release date, so the release date of some jobs
can be altered.  Then, we deviate dramatically from their
approach, and we apply two kinds of shifting. The first is on the
release dates. In this step we increase release dates of a small
portion of the jobs (based on contribution to cost). The goal of
that is to create large time differences between groups of release
dates, so that sub-inputs of very different release dates could be
solved almost independently. It can happen that jobs of much
smaller release dates should be combined into a solution for given
release dates, but in these cases the assignment of such jobs can
be restricted to gaps of the schedule (a kind of idle time), and
by slightly stretching time \cite{Afrati+99}, there are suitable
gaps with sufficient frequency in any schedule. Afterwards, we
apply shifting on densities. The goal is once again to create
inputs with small numbers of possible parameters, such that there
is a major difference between parameters of subproblems. Unlike
the variant without release dates, combining solutions of very
different densities is challenging; jobs cannot run before their
release date, and if one solution is sparse in some time interval,
we would like another solution to take over, in order to exploit
that time, and to avoid postponing all jobs (and thus increasing
their costs significantly). Yet, that solution is still not
sufficiently good in some cases, as the schedule of jobs with very
high densities must be done carefully, and they cannot be delayed
(compared to their schedule in optimal solutions) in any solution.
On the other hand, it harms the solution to delay multiple jobs
(even if their densities are not very large) due to a sparse time
slot (with high density jobs).  Thus, the schedule of different
subproblems needs to be coordinated and these subproblems need to
be taken care of delicately.   To overcome this notoriously
difficult task, we apply a series of guessing steps and
transformations on the solutions to ensure that these bad cases
simply do not happen. Once again, we employ gaps to combine
unassigned jobs and to allow the modifications to the solutions.
Finally, we use the mixed-integer linear program paradigm to help
us to approximate each subproblem, this time, with time-dependent
configurations.

\section{EPTAS for the special case without release dates\label{sec:no_release_dates}}
Here we consider the special case without release dates.  That is,
we assume that all jobs are available for processing at time $0$
 and all release dates are $0$.  Later, we will use some of
the techniques which we develop for this case in the development
of the EPTAS for the general case with release dates.

\paragraph{Properties} Obviously, since any job can start running at any time, an optimal solution (or schedule) never introduces
idle time, and moreover, every machine runs its assigned jobs in
an optimal order, that is, the jobs are sorted by Smith's ratio.
As any tie-breaking policy leads to the same objective function
value, we use a specific tie-breaking policy in this section. More
specifically, we will always break ties in favor of running larger
jobs first, and in the case of equal size jobs (of the same
weight), jobs of smaller indices are scheduled earlier. We call
this ordering {\it the natural ordering}. Thus, we can define a
{\it solution} or a {schedule} as an ordered partition of the jobs
to $m$ subsets, one subset for each machine. In some cases, we
will compute the total weighted completion time of another
permutation (not of the natural ordering), this will be done when
this is easier, and we are only interested in an upper bound on
the objective value.

For an input $X$ and a solution $B$, we let $B(X)$ denote the
output and the objective value of $B$ for the input $X$. Recall
that for a job $j$, the density of $j$ is the ratio between its
weight and its size. Thus, running the jobs sorted according to
the natural ordering is equivalent to first sorting them by
non-increasing density, and for each density, the jobs are sorted
by non-increasing size, breaking ties (among equal size jobs) in
favor of scheduling jobs of smaller indices earlier.

For a fixed schedule, the work of machine $i$ is the total size of
jobs assigned to it, and its load is its completion time, that is,
the work divided by the speed (as we only consider schedules without any idle time). Let the original instance be
denoted by $A$, where job $j$ has size $a_j>0$ and
weight $\omega_j>0$, and machine $i$ has speed $v_i>0$. 
Let $C_j$ denote the completion time of $j$ under a given
schedule, that is, the total size of jobs that run before $j$ on
the same machine (including $j$), divided by the speed of this
machine. Let $\Gamma_j=\omega_j(C_j-\frac {a_j}{2v_i})$ where
$C_j-\frac {a_j}{2v_i}$ is the time when half of job $j$ is
completed. We call these values $\Gamma$-values, and obviously,
the cost of a solution is at least the sum of $\Gamma$-values.
Moreover, for identical machines, the difference between the cost
of a solution and the sum of the $\Gamma$-values is independent of
the solution whereas for related machines, this difference depends
on the speeds.

\begin{lemma}
\label{lboncost}
Consider a set of jobs $\tilde{I} \subseteq A$ assigned to machine $i$. Let $\Phi =\sum_{j \in \tilde{I}} a_j$ and let
$\phi=\min_{j \in \tilde{I}} \frac{\omega_j}{a_j}$ be the minimum density of any job of $\tilde{I}$. The sum of $\Gamma$-values (and the
cost) of any solution that selects to run all jobs of $\tilde{I}$ on machine $i$ (possibly with other jobs) is at least $\frac {\phi \Phi^2}{2v_i}$.
\end{lemma}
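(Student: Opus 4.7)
My plan is to reduce the claim to a bound involving only sizes, then exploit an algebraic identity that is independent of the scheduling order.

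First, using $\omega_j \geq \phi a_j$ for every $j \in \tilde{I}$, I would rewrite
\[
\sum_{j \in \tilde{I}} \Gamma_j \;=\; \sum_{j \in \tilde{I}} \omega_j\Bigl(C_j - \tfrac{a_j}{2 v_i}\Bigr) \;\geq\; \phi \sum_{j \in \tilde{I}} a_j\Bigl(C_j - \tfrac{a_j}{2 v_i}\Bigr).
\]
So it suffices to prove $\sum_{j \in \tilde{I}} a_j\bigl(C_j - \tfrac{a_j}{2 v_i}\bigr) \geq \tfrac{\Phi^2}{2 v_i}$, independently of the weights.

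Next, I would expand $C_j = \frac{S_j}{v_i}$, where $S_j$ is the total size of all jobs on machine $i$ that complete no later than $j$ (including $j$ itself), counting jobs both inside and outside of $\tilde{I}$. Let $S_j^{\tilde{I}}$ denote the analogous sum restricted to $\tilde{I}$. Since extra jobs can only delay $j$, we have $S_j \geq S_j^{\tilde{I}}$, and therefore
\[
\sum_{j \in \tilde{I}} a_j\Bigl(C_j - \tfrac{a_j}{2 v_i}\Bigr) \;\geq\; \frac{1}{v_i}\sum_{j \in \tilde{I}} a_j\Bigl(S_j^{\tilde{I}} - \tfrac{a_j}{2}\Bigr).
\]

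The main (and only real) step is then the algebraic identity. Enumerate $\tilde{I}$ as $j_1, \ldots, j_k$ in the order they run on $i$, so $S_{j_\ell}^{\tilde{I}} = a_{j_1}+\cdots+a_{j_\ell}$. A direct expansion gives
\[
\sum_{\ell=1}^{k} a_{j_\ell}\Bigl(S_{j_\ell}^{\tilde{I}} - \tfrac{a_{j_\ell}}{2}\Bigr) \;=\; \tfrac{1}{2}\sum_{\ell} a_{j_\ell}^2 + \sum_{\ell' < \ell} a_{j_\ell} a_{j_{\ell'}} \;=\; \tfrac{1}{2}\Bigl(\sum_\ell a_{j_\ell}\Bigr)^{\!2} \;=\; \tfrac{\Phi^2}{2},
\]
which holds for any ordering. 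Substituting back yields the claimed bound $\sum_{j \in \tilde{I}} \Gamma_j \geq \frac{\phi \Phi^2}{2 v_i}$. I do not anticipate any real obstacle: the only thing to be careful about is that $\tilde{I}$ may be a strict subset of the jobs on $i$, which is why I pass through $S_j^{\tilde{I}}$ (a lower bound on $S_j$) before applying the identity.
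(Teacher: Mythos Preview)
Your proof is correct. The paper proves the lemma by induction on $|\tilde{I}|$, peeling off the last-scheduled job $j'$ of $\tilde{I}$: it lower-bounds $\Gamma_{j'}$ by $\phi a_{j'}(2\Phi - a_{j'})/(2v_i)$ and combines this with the inductive bound $\phi(\Phi - a_{j'})^2/(2v_i)$ for the remaining jobs. Your route is slightly different and a bit more transparent: you first factor out $\phi$ globally (using $\omega_j \geq \phi a_j$ and $C_j - a_j/(2v_i) \geq 0$), reduce to the purely size-based quantity $\sum_{j\in\tilde{I}} a_j(S_j^{\tilde{I}} - a_j/2)$, and then observe that this sum is an \emph{exact} identity, equal to $\Phi^2/2$ regardless of the order. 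This makes explicit what the induction is implicitly computing, and also shows that the only inequalities used are $\omega_j \geq \phi a_j$ and $S_j \geq S_j^{\tilde{I}}$; everything else is tight. Both arguments are equally short, but yours isolates the combinatorial identity cleanly.
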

\begin{proof}
We prove the claim by induction on $|\tilde{I}|$. The claim obviously holds if $\tilde{I}=\emptyset$. Consider the job $j' \in \tilde{I}$ that is scheduled last (among the jobs of $\tilde{I}$) to run on machine $i$. The completion time of this job is at least $\frac{\Phi}{v_i}$, the time that half of it is completed, is at least $\frac{\Phi-a_{j'}/2}{v_i}$, and its weight satisfies $\omega_j \geq \phi a_{j'}$, thus $\Gamma_{j'} \geq  \frac{\phi a_{j'}(2\Phi-a_{j'})}{2v_i}$. By the inductive hypothesis, the sum of $\Gamma$-values of the jobs of $\tilde{I}\setminus\{j'\}$ is at least $\frac {\phi (\Phi-a_{j'})^2}{2v_i}$. In total, we get at least $\frac {\phi \Phi^2}{2v_i}$.
\end{proof}

\subsection{Rounding and shifting}
Given the original instance $A$, where job $j$ has size $a_j$ and
weight $\omega_j$, and machine $i$ has speed $v_i$, we create an
instance $A'$ as follows. Let $0<\del \leq 1/8$ be an accuracy
factor, that is a function of $\eps$ (where $\del<\eps$), and
such that $\frac{1}{\del}$ is an integer. The sets of jobs and
machines are the same as in $A$. Let $s_i=(1+\del)^{\lfloor
\log_{1+\del} v_i \rfloor}$ be the speed of machines $i$ in
instance $A'$. Let $w_j=(1+\del)^{\lceil \log_{1+\del} \omega_j
\rceil}$, and $p_j=(1+\del)^{\lceil \log_{1+\del} a_j \rceil}$ be
the weight and size (respectively) of job $j$ in instance $A'$.
Let $\cO$ and $\cO'$ denote optimal solutions for $A$ and $A'$
respectively. Let $SOL$ denote a given solution for both
instances. Using the new notation, for a given schedule, the
completion time of $j$ is still denoted by $C_j$, and its weighted
completion time is $w_j\cdot C_j$.

\begin{proposition}
We have $SOL(A)  \leq SOL(A') \leq (1+\del)^3 SOL(A)$ and $\cO(A)  \leq \cO'(A') \leq (1+\del)^3 \cO(A)$.
\end{proposition}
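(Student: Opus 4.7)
The plan is to first prove the inequalities for an arbitrary common solution $SOL$, and then deduce the optimal statement by an elementary optimality argument.

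First I would record the three one-sided inequalities that follow directly from the definitions of $s_i$, $w_j$, $p_j$ as geometric rounded values: $s_i \le v_i \le (1+\del)s_i$, $\omega_j \le w_j \le (1+\del)\omega_j$, and $a_j \le p_j \le (1+\del)a_j$. Fix an arbitrary solution $SOL$, which is just an assignment of jobs to machines together with an ordering (recall that under the natural ordering the cost is determined by the assignment alone, and in any case the cost as a function of input is a sum over jobs of $w_j\cdot C_j$ where $C_j$ is a ratio of a partial sum of sizes to a speed). For a fixed job $j$ on machine $i$ in $SOL$, the completion time in $A$ is $C_j^{(A)} = \tfrac{1}{v_i}\sum_{j' \preceq j} a_{j'}$, and the analogous completion time in $A'$ is $C_j^{(A')} = \tfrac{1}{s_i}\sum_{j' \preceq j} p_{j'}$.

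The lower bound $SOL(A) \le SOL(A')$ is immediate: $p_{j'} \ge a_{j'}$ and $s_i \le v_i$ give $C_j^{(A')} \ge C_j^{(A)}$, and $w_j \ge \omega_j$ then yields the inequality term by term. For the upper bound, I chain the three rounding losses:
\begin{equation*}
C_j^{(A')} \;=\; \frac{1}{s_i}\sum_{j'\preceq j} p_{j'} \;\le\; \frac{1+\del}{v_i}\sum_{j'\preceq j} (1+\del)\,a_{j'} \;=\; (1+\del)^2\, C_j^{(A)},
\end{equation*}
and multiplying by $w_j \le (1+\del)\omega_j$ gives $w_j C_j^{(A')} \le (1+\del)^3 \omega_j C_j^{(A)}$. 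Summing over $j$ yields $SOL(A') \le (1+\del)^3 SOL(A)$.

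Finally, to pass from arbitrary $SOL$ to the optimum statement I apply these two inequalities to the specific solutions $\cO$ and $\cO'$. From $\cO'(A) \le \cO'(A')$ together with the optimality of $\cO$ for $A$, namely $\cO(A) \le \cO'(A)$, I conclude $\cO(A) \le \cO'(A')$. From $\cO(A') \le (1+\del)^3 \cO(A)$ together with the optimality of $\cO'$ for $A'$, namely $\cO'(A') \le \cO(A')$, I conclude $\cO'(A') \le (1+\del)^3 \cO(A)$. There is no real obstacle here; the only point to be careful about is making sure the same assignment $SOL$ is being evaluated on both $A$ and $A'$ so that the telescoping of the three $(1+\del)$ factors works term by term, and that one uses the optimality of $\cO$ and $\cO'$ on the correct input when lifting the result to the optima.
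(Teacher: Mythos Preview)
Your argument is correct and follows essentially the same route as the paper: bound each factor (size, speed, weight) by $1+\del$, chain them for a $(1+\del)^3$ loss, and then derive the optimal case by applying the two inequalities to $\cO$ and $\cO'$ and invoking their respective optimality. The one nuance the paper makes explicit and you elide is that the \emph{natural ordering} on a machine can differ between $A$ and $A'$ (since rounding may change densities); the paper therefore proves $SOL(A)\le SOL(A')$ by fixing the permutation of $SOL(A')$ and using optimality of the natural ordering for $A$, and proves $SOL(A')\le(1+\del)^3 SOL(A)$ by fixing the permutation of $SOL(A)$ and using optimality for $A'$. Your choice to treat $SOL$ as carrying a single fixed order $\preceq$ sidesteps this and is also valid, but be aware that if $SOL(X)$ is read as ``assignment plus natural ordering for $X$'' then the single-$\preceq$ argument needs exactly that extra optimality step for each direction.
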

\begin{proof}
We start with the first claim. For each job, we compare its
completion time in $SOL$ for the two inputs. For each job $j$, its
size in $A'$ is at least its size in $A$. Thus, when we consider
the permutation of jobs on each machine as it is in $SOL(A')$ the
total size of jobs that are completed before $j$ together with the
size of $j$ cannot be smaller in $A'$ than this value in $A$. The
speed of the machine running $j$ in $A'$ is at most the speed of
the same machine in $A$. Thus, the completion time of $j$ in $A'$
is at least its completion time in $A$. The weight of $j$ in $A'$
is at least its weight in $A$, and thus its contribution to the
objective function in $A$ (that is, $\omega_j$ times the
completion time in $SOL$ for the input $A$ with the permutations
of jobs as defined by $SOL(A')$) is no larger than this value in
$A'$ (that is, $w_j$ times the completion time in $SOL$ for the
input $A'$), and as the actual permutation of $SOL(A)$ cannot have
a larger cost, we find $SOL(A) \leq SOL(A')$. On the other hand,
when we consider the permutation of jobs as it is in $SOL(A)$ the
total size of jobs that are completed before $j$ together with the
size of $j$ in $A'$ is at most $1+\del$ times the corresponding
value in $A$. The speed of the machine running $j$ in $A'$ is at
least $\frac{1}{1+\del}$ times the speed of the same machine in
$A$. Thus, the completion time of $j$ in $A'$ when we consider the
permutation of jobs on each machine as defined by $SOL(A)$ is at
most its completion time in $A$ times $(1+\del)^2$. The weights of
$j$ in $A$ and in $A'$ satisfy $w_j \leq (1+\del) \omega_j$, and
thus its contribution to the objective function in $A'$ is at most
$(1+\del)^3$ times this value in $A$, and we find $SOL(A') \leq
(1+\del)^3 SOL(A)$.

The second claim follows from the first one; since $\cO$ and
$\cO'$ are optimal solutions for $A$ and $A'$ respectively, we
have $\cO(A) \leq \cO'(A)$ and $\cO'(A') \leq \cO(A')$. Letting
$SOL=\cO'$ we get by the first claim that $\cO'(A) \leq  \cO'(A')$ and letting $SOL=\cO$
we get $\cO(A') \leq (1+\del)^3 \cO(A)$, which proves the claim.
\end{proof}

\begin{corollary}
If a solution $SOL$ satisfies $SOL(A') \leq (1+k\del)\cO'(A')$ for some $k>0$,
then $SOL(A) \leq (1+(2k+4)\del) \cO(A)$.
\end{corollary}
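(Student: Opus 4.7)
The plan is to chain the inequalities provided by the preceding Proposition. From the Proposition we have $SOL(A) \leq SOL(A')$, the hypothesis of the corollary is $SOL(A') \leq (1+k\del)\cO'(A')$, and the Proposition also gives $\cO'(A') \leq (1+\del)^3 \cO(A)$. Concatenating these three bounds yields
\[
SOL(A) \;\leq\; SOL(A') \;\leq\; (1+k\del)\cO'(A') \;\leq\; (1+k\del)(1+\del)^3\,\cO(A),
\]
so the corollary reduces to the purely arithmetic inequality $(1+k\del)(1+\del)^3 \leq 1+(2k+4)\del$ for $0 < \del \leq 1/8$.

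To verify this inequality, I would first bound $(1+\del)^3 \leq 1+4\del$ for $\del \leq 1/8$; this is equivalent to $3\del + \del^2 \leq 1$, which clearly holds since $3/8 + 1/64 < 1$. Multiplying by $(1+k\del)$ gives
\[
(1+k\del)(1+\del)^3 \;\leq\; (1+k\del)(1+4\del) \;=\; 1 + (k+4)\del + 4k\del^2.
\]
It remains to show $4k\del^2 \leq k\del$, i.e.\ $4\del \leq 1$, which again holds for $\del \leq 1/8 < 1/4$. Combining gives $(1+k\del)(1+\del)^3 \leq 1+(k+4)\del+k\del = 1+(2k+4)\del$, as required, and thus $SOL(A) \leq (1+(2k+4)\del)\cO(A)$.

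There is no substantive obstacle here; the entire content is the Proposition together with the observation that for $\del$ small enough (as guaranteed by the rounding setup), cubing the rounding factor and multiplying by an additional $(1+k\del)$ slack inflates the approximation ratio by at most an additive $(k+4)\del$ term in the first-order expansion, which the factor $2k+4$ comfortably absorbs.
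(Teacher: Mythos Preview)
Your proof is correct and follows essentially the same approach as the paper: chain the three inequalities from the Proposition and the hypothesis to get $SOL(A)\leq(1+k\del)(1+\del)^3\cO(A)$, then do an elementary estimate using $\del\leq 1/8$ to bound $(1+k\del)(1+\del)^3\leq 1+(2k+4)\del$. The only cosmetic difference is that the paper expands the product directly, whereas you first bound $(1+\del)^3\leq 1+4\del$ and then multiply; both routes are equally valid.
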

\begin{proof}
We  have $SOL(A) \leq SOL(A') \leq (1+k\del)\cO'(A') \leq (1+k\del)(1+\del)^3 \cO(A)$.
Using $\del \leq \frac 18$ we get $(1+k\del)(1+\del)^3 = 1+(k+3)\del+\del(3\del+\del^2+k(3\del+3\del^2+\del^3)) \leq 1+(k+3)\del+\del(\frac 38+\frac1{64}+k(\frac 38+\frac{3}{64}+\frac{1}{512})) \leq 1+(k+3)\del+\del(k+1)$.
\end{proof}

Consider the instance $A'$. Recall that if $j$ is executed on machine $i$, then
$\Gamma_j=w_j(C_j-\frac {p_j}{2s_i})$ where $C_j-\frac
{p_j}{2s_i}$ is the time when half of the job is completed. We say that for a given schedule, a block of jobs is a set of jobs of equal density that are assigned to one machine to run consecutively on that machine (there may be additional jobs of the same density, each assigned to run later or earlier than these jobs).
The next claim shows that computing the sum of $\Gamma$-values for a block of jobs is a function of their total size, their commom density, and the starting time of the block, and it is independent of the other properties of these jobs.

\begin{claim}\label{gammaval}
Let $\bar{I}\subseteq A'$ be a set of jobs, all having densities equal to some $\Delta>0$, and assume that these jobs are scheduled to run consecutively on machine $i$ starting at time $\tau$. Then $$\sum_{j \in \bar{I}} \Gamma_j=\Delta\cdot(\tau+\frac{1}{2s_i}{\sum_{j \in \bar{I}}p_j}) \cdot ( \sum_{j \in \bar{I}}p_j) \ . $$
\end{claim}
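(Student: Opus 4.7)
The plan is to compute the sum by direct expansion after fixing the order in which the block is processed on machine $i$. First I would index the jobs of $\bar{I}$ as $j_1,\dots,j_k$ in the order they are executed, and use the assumption that every density equals $\Delta$ to write $w_{j_\ell} = \Delta\, p_{j_\ell}$ for each $\ell$. Since the block starts at time $\tau$ on a machine of speed $s_i$, the completion time of $j_\ell$ is $C_{j_\ell} = \tau + \frac{1}{s_i}\sum_{r\le \ell} p_{j_r}$, so the half-completion time appearing in the definition of $\Gamma$ is
\[
C_{j_\ell} - \frac{p_{j_\ell}}{2s_i} \;=\; \tau + \frac{1}{s_i}\sum_{r<\ell} p_{j_r} + \frac{p_{j_\ell}}{2s_i}.
\]
Substituting into $\Gamma_{j_\ell} = w_{j_\ell}\bigl(C_{j_\ell} - \tfrac{p_{j_\ell}}{2s_i}\bigr)$ yields an explicit per-job formula in terms of $\Delta$, $\tau$, $s_i$, and the sizes.

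Next I would sum over $\ell$ and separate the term involving $\tau$, which clearly contributes $\Delta\,\tau\,P$ where $P := \sum_{j\in \bar{I}} p_j$. The remaining contribution equals $\tfrac{\Delta}{s_i}\bigl[\sum_\ell p_{j_\ell}\sum_{r<\ell} p_{j_r} + \tfrac{1}{2}\sum_\ell p_{j_\ell}^{2}\bigr]$. Here the key step is the identity
\[
\sum_\ell p_{j_\ell}\sum_{r<\ell} p_{j_r} \;=\; \tfrac{1}{2}\Bigl[P^{2} - \sum_\ell p_{j_\ell}^{2}\Bigr],
\]
which follows by squaring $P = \sum_\ell p_{j_\ell}$ and reorganizing the off-diagonal terms. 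Plugging this in collapses the bracket to $\tfrac{P^{2}}{2}$, so the total equals $\Delta\,\tau\,P + \tfrac{\Delta}{2 s_i}P^{2} = \Delta\bigl(\tau + \tfrac{1}{2s_i}P\bigr)P$, which is exactly the claimed expression.

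I do not expect a genuine obstacle: the argument is elementary algebra, and the only step requiring a moment of thought is the off-diagonal identity above. As an alternative I could mirror the induction used in the proof of Lemma \ref{lboncost}, peeling off the last job $j_k$ of the block and combining $\Gamma_{j_k} = \Delta\, p_{j_k}\bigl(\tau + \tfrac{2P - p_{j_k}}{2 s_i}\bigr)$ with the inductive value for the first $k-1$ jobs; both routes lead to the same formula. It is worth emphasizing in the write-up that the derivation never uses the specific permutation of $j_1,\dots,j_k$ inside the block: as soon as they are scheduled contiguously starting at time $\tau$, the sum of $\Gamma$-values depends only on $\Delta$, $\tau$, $s_i$ and $P$, which is precisely the independence statement the claim is used for in the sequel.
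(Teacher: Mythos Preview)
Your proposal is correct. Your primary route (direct expansion plus the off-diagonal identity $\sum_\ell p_{j_\ell}\sum_{r<\ell} p_{j_r} = \tfrac{1}{2}(P^2 - \sum_\ell p_{j_\ell}^2)$) differs from the paper's proof, which proceeds exactly by the induction you sketch as an alternative: peel off the last job $j'$, write $\Gamma_{j'} = \Delta\, p_{j'}\bigl(\tau + \tfrac{1}{s_i}(\sum_{j\in\bar I} p_j - p_{j'}/2)\bigr)$, apply the inductive hypothesis to $\bar I\setminus\{j'\}$, and simplify. Both arguments are equally elementary; your direct computation has the advantage that the order-independence of the sum of $\Gamma$-values is visible immediately from the symmetric identity, whereas in the inductive version one only sees it after the fact from the final formula. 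Either route is perfectly acceptable here.
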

\begin{proof}
We prove the claim by induction on $|\bar{I}|$. Obviously, it holds if
$\bar{I}=\emptyset$. Let $j'$ be the job of $\bar{I}$ assigned to
run last. We have $\Gamma_{j'}= w_{j'}(\tau+\frac{\sum_{j \in
\bar{I}}p_j-p_{j'}/2}{s_i})=\Delta p_{j'} (\tau+\frac{\sum_{j \in
\bar{I}}p_j-p_{j'}/2}{s_i})$. By the inductive hypothesis, $\sum_{j
\in \bar{I}, j \neq j'}
\Gamma_j=\Delta\cdot(\tau+\frac{1}{2s_i}\sum_{j \in \bar{I}, j
\neq j'}p_j)( \sum_{j \in \bar{I}, j \neq j'}p_j)$, and we get
$$\sum_{j \in \bar{I}} \Gamma_j=\Delta p_{j'} (\tau+\frac{\sum_{j \in
\bar{I}}p_j-p_{j'}/2}{s_i})+\Delta\cdot(\tau+\frac{1}{2s_i}\sum_{j
\in \bar{I}, j \neq j'}p_j)( \sum_{j \in \bar{I}, j \neq
j'}p_j)=\Delta\cdot(\tau+\frac{1}{2s_i}{\sum_{j \in \bar{I}}p_j})(
\sum_{j \in \bar{I}}p_j) \ , $$ as required.
\end{proof}

Let $\xi= \lceil \ell \log_{1+\del}\frac 1{\del} \rceil$ for a
fixed integer $3 \leq \ell \leq 5$ (and thus $\frac{1}{\del^3} \leq \frac{1}{\del^{\ell}} \leq
(1+\del)^{\xi} <
\frac{1+\del}{\del^{\ell}}<\frac{2}{\del^{\ell}}<\frac{1}{\del^{\ell+1}}\leq \frac{1}{\del^6}$. Since $(1+\del)^{\frac{1}{\del^2}}>\frac{1}{\del}$, by the integrality of $\frac{\ell}{\del^2}$, we have $\xi \leq \frac{\ell}{\del^2}<\frac{1}{\del^3}$. However, $(\frac 98)^{\xi} \geq  (1+\del)^{\xi} \geq \frac{1}{\del^{\ell}} \geq {8^3}$, so $\xi >50$.

For an integer $c \in \mathbb{Z}$, let
$\Omega_c=\{c\xi+1,\ldots,(c+1)\xi\}$. Let $0 \leq \zeta\leq
\frac{1}{\del^{\ell+1}}-1$ be an integer. We define the instance
$A_{\zeta}$ by modifying the weights of jobs in $A'$. For a job
$j$, if for some integer $v$, $\log_{1+\del} \frac{w_j}{p_j} \in
\Omega_{v/\del^{\ell+1}+\zeta}$ (recall that the density of $j$,
$\frac{w_j}{p_j}$, is an integer power of $1+\del$), then
$w^{\zeta}_j=w_j \cdot (1+\del)^{\xi}$, and otherwise
$w^{\zeta}_j=w_j$. In the first case, we have $w^{\zeta}_j =
(1+\del)^\xi w_j \leq \frac{1+\del}{\del^{\ell}} w_j$. Let
$\cO^{\zeta}$ be an optimal solution for $A_{\zeta}$. As the set
of jobs and machines is the same in $A$, $A'$, and $A_{\zeta}$,
the sets of feasible solutions for the three instances are the
same.  Next, we bound the increase of the cost due to the
transformation from $A'$ to $A_{\zeta}$. As a result, no job  $j
\in A_{\zeta}$ has a density such that $\log_{1+\del}
\frac{w^{\zeta}_j}{p_j} \in \Omega_{v'/\del^{\ell+1}+\zeta}$ for
any integer $v'$ (since by $\log_{1+\del} \frac{w_j}{p_j} \in
\Omega_{v/\del^{\ell+1}+\zeta}$ we have $\log_{1+\del}
\frac{w^{\zeta}_j}{p_j} \in \Omega_{v/\del^{\ell+1}+\zeta+1}$ and
$v/\del^{\ell+1}+\zeta<v/\del^{\ell+1}+\zeta+1<(v+1)/\del^{\ell+1}+\zeta$).
Any value $(1+\del)^{\beta}$ where $\beta \in
\Omega_{v/\del^{\ell+1}+\zeta}$ for an integer $v$ is called a
forbidden density for $\zeta$, and other values $(1+\del)^{\beta}$
for integer $\beta$ are called allowed density for $\zeta$.

\begin{claim}\label{brshiftnorelease}
Given a solution $SOL$, any $0 \leq {\zeta} \leq \frac{1}{\del^{\ell+1}}-1$ satisfies $SOL(A') \leq
SOL(A_{{\zeta}})$, and there exists a value $0 \leq \bar{\zeta} \leq \frac{1}{\del^{\ell+1}}-1$ such that we have $SOL(A_{\bar{\zeta}}) \leq  (1+2\del) SOL(A')$. Additionally, there exists a value $0 \leq \zeta' \leq \frac{1}{\del^{\ell+1}}-1$
such that $\cO'(A') \leq \cO^{\zeta'}(A_{\zeta'})  \leq  (1+2\del) \cO'(A')$, and if a solution $SOL_1$ satisfies $SOL_1(A_{\zeta'}) \leq (1+k'\del))\cO^{\zeta'}(A_{\zeta'})$ for some $k'>0$, then
$SOL_1(A') \leq (1+(2k'+2)\del) \cO'(A')$.
\end{claim}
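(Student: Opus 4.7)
The first inequality $SOL(A') \leq SOL(A_\zeta)$ for every $\zeta$ is essentially immediate: the weights in $A_\zeta$ dominate those in $A'$ pointwise, while the size and speed parameters are unchanged, so any fixed schedule has at least the cost in $A_\zeta$ that it has in $A'$.

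For the existence of a good $\bar{\zeta}$, the plan is a standard averaging (shifting) argument over the $\frac{1}{\del^{\ell+1}}$ possible values of $\zeta$. The key observation is that each job $j$ has a unique index $\zeta(j)\in\{0,1,\ldots,\frac{1}{\del^{\ell+1}}-1\}$ for which its density falls in a forbidden block, namely the residue modulo $\frac{1}{\del^{\ell+1}}$ of the unique $c$ with $\log_{1+\del}(w_j/p_j)\in\Omega_c$. Thus $w_j^\zeta = w_j\cdot(1+\del)^\xi$ if $\zeta=\zeta(j)$, and $w_j^\zeta=w_j$ otherwise. Summing $SOL(A_\zeta)$ over all $\zeta$ and dividing by $\frac{1}{\del^{\ell+1}}$, the average contribution of each job is
\[
w_j C_j\cdot\bigl[\del^{\ell+1}\,(1+\del)^\xi \;+\; (1-\del^{\ell+1})\bigr].
\]
Using the stated bound $(1+\del)^\xi<\frac{2}{\del^{\ell}}$, we obtain $\del^{\ell+1}(1+\del)^\xi\leq 2\del$, so the average is at most $(1+2\del)\,SOL(A')$. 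Hence some $\bar{\zeta}$ attains the bound, which is the desired inequality.

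For the second half, I apply the same averaging argument with $SOL=\cO'$ to find a value $\zeta'$ with $\cO'(A_{\zeta'})\leq(1+2\del)\cO'(A')$. Optimality of $\cO^{\zeta'}$ for $A_{\zeta'}$ then gives $\cO^{\zeta'}(A_{\zeta'})\leq\cO'(A_{\zeta'})\leq(1+2\del)\cO'(A')$. In the other direction, the first part of the claim applied to $SOL=\cO^{\zeta'}$ yields $\cO^{\zeta'}(A')\leq\cO^{\zeta'}(A_{\zeta'})$, and combining with the optimality of $\cO'$ for $A'$ (so $\cO'(A')\leq\cO^{\zeta'}(A')$) closes the chain.

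Finally, for the last implication I chain
\[
SOL_1(A')\;\leq\;SOL_1(A_{\zeta'})\;\leq\;(1+k'\del)\,\cO^{\zeta'}(A_{\zeta'})\;\leq\;(1+k'\del)(1+2\del)\,\cO'(A'),
\]
and expand $(1+k'\del)(1+2\del)=1+(k'+2)\del+2k'\del^2$, absorbing $2k'\del^2\leq k'\del$ (which holds since $\del\leq 1/8$) into the leading order term to get $1+(2k'+2)\del$. I do not expect any serious obstacle here: the work is entirely in the averaging step, and the calibration that makes it succeed, namely $\del^{\ell+1}(1+\del)^\xi=O(\del)$, is exactly what the definition $\xi=\lceil\ell\log_{1+\del}(1/\del)\rceil$ with $\ell\geq 3$ was engineered to produce.
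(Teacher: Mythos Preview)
Your proof is correct and follows essentially the same shifting/averaging argument as the paper (the paper picks the $\zeta$ minimizing the partial contribution $SOL_\zeta$ and bounds $SOL(A_{\bar\zeta})\le SOL(A')+(1+\del)^\xi SOL_{\bar\zeta}$, whereas you average over all $\zeta$; these are equivalent and use the same calibrated bound $\del^{\ell+1}(1+\del)^\xi<2\del$). One tiny point you elide: in this section a \emph{solution} is only a partition of jobs to machines, with the per-machine order determined by the instance's densities, so $C_j$ in principle depends on $\zeta$; the one-line fix---exactly what the paper does---is to observe that evaluating $SOL(A_\zeta)$ with the $A'$-ordering gives an upper bound, which is all your averaging step needs.
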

\begin{proof}
We start with the first claim. Since for any $\zeta$, the weight
of a job in $A_{\zeta}$ is no smaller than the weight of the
corresponding job in $A'$, for any $\zeta$, if we consider the
permutation of jobs as defined by $SOL(A_{\zeta})$, then
$SOL(A')\leq SOL(A_{\zeta})$ holds, and by the optimality of the
permutation of jobs on each machine, we have $SOL(A')\leq
SOL(A_{\zeta})$ where for each instance we use its permutation of
jobs. Seeing $SOL$ as a solution for $A'$, let $SOL_{\zeta}$
denote the total weighted completion time in $SOL$ (for $A'$) of
jobs for which $w_j\neq w^{\zeta}_j$, that is, the contribution of
these jobs to the objective function value for the instance $A'$.
We have $SOL(A')=\sum_{\zeta=0}^{\frac 1{\del^{\ell+1}}-1}
SOL_{\zeta}$, and $SOL(A_{\zeta}) \leq \sum_{0 \leq \eta \leq
\frac 1{\del^{\ell+1}}-1, \eta \neq \zeta} SOL_{\eta}+
(1+\del)^{\xi}SOL_{\zeta} \leq SOL(A')+(1+\del)^{\xi}SOL_{\zeta}$,
where the first inequality holds by computing an upper bound on
$SOL(A_{\zeta})$ by considering the permutation of the jobs on
each machine according to the densities of the jobs in $A'$.

Let $\bar{\zeta}$ be such that
$SOL_{\bar{\zeta}}$ is minimal. Then, $SOL_{\bar{\zeta}} \leq
{\del^{\ell+1}}SOL(A')$. We get $SOL(A_{\bar{\zeta}}) \leq
(1+\del^{\ell+1}(1+\del)^{\xi}) SOL(A')  \leq (1+2\del)SOL(A') $.

The second part will follow from the first one.  Let $SOL'=\cO'$. By the second claim of the first part, there exists a value $\zeta'$ such that $\cO'(A_{\zeta'}) \leq (1+2\del)\cO'(A')$. Since $\cO'$ and
$\cO^{\zeta'}$ are optimal solutions for $A'$ and $A_{\zeta'}$
respectively, we have $\cO'(A') \leq \cO^{\zeta'}(A')$ and
$\cO^{\zeta'}(A_{\zeta'}) \leq \cO'(A_{\zeta'})$.
Letting
$SOL''=\cO^{\zeta'}$ we get (using the first part of the first claim) $\cO^{\zeta'}(A') \leq
\cO^{\zeta'}(A_{\zeta'})$, which proves $\cO'(A') \leq \cO^{\zeta'}(A_{\zeta'})  \leq  (1+2\del) \cO'(A')$. We get $SOL_1(A') \leq SOL_1(A_{\zeta'})\leq (1+k'\del)O^{\zeta'}(A_{\zeta'}) \leq (1+2\del)(1+k'\del) \cO'(A') = (1+(k'+2)\del+2k'\del^2) \cO'(A')\leq (1+(2k'+2)\del) \cO'(A')$.
\end{proof}

\paragraph{The algorithm.}
In the next section (Section \ref{brsimp}) we present an algorithm that receives an input where the ratio between the maximum density of any job and the minimum density of any job is at most $(1+\del)^{{y}}$, where ${y} =\frac{\xi}{\del^{\ell+1}}-\xi-1$, and outputs a solution of cost at most $1+\del$ times the cost of an optimal solution for this input.   
We will use this algorithm as a black box in this section. For
every value of $\zeta$, we apply the following process and create
a schedule for the input $A_{\zeta}$. Afterwards, we choose a
solution of minimum cost among the $\frac{1}{\del^{\ell+1}}$
resulting solutions. Let $0 \leq \zeta \leq
\frac{1}{\del^{\ell+1}}-1$. Decompose the allowed densities for
$\zeta$ into collections of consecutive densities that are
separated by intervals of forbidden densities for $\zeta$ (two
densities are consecutive if the large one is larger by a factor
of exactly $1+\del$ from the smaller one). By our construction,
this results in subsets of allowed densities with very different
densities, such that each subset has allowed densities in an
interval of the form $[(1+\del)^{-{y}} \rho,\rho]$, where
$\rho=(1+\del)^{(v/\del^{\ell+1}+\zeta)\xi}$ for some integer $v$,
$y$ is as defined above (${y}=\frac{\xi}{\del^{\ell+1}}-\xi-1$)
and additionally there is a gap between the allowed densities of
one set and another set. More precisely, if $I$ and $I'$ are two
such subsets and the allowed densities for $I$ are smaller than
those of $I'$, then the largest allowed density for $I$ is smaller
by a factor of $(1+\del)^{\xi+1}\geq \frac{1+\del}{\del^{\ell}}$
than the smallest allowed density of $I'$. A sub-instance is
defined to be a non-empty subset of the jobs corresponding to an
interval of allowed densities together with the complete set of
machines. Let $q$ denote the number of such (sub-)instances (where
$q\leq n$). Let the instances be denoted by $I_1,\ldots,I_q$, such
that the maximum allowed density in $I_p$ is $\rho_p$, and for
$p>1$, $\rho_p>\rho_{p-1}$ and in fact $\rho_{p} \geq
(1+\del)^{{y}+\xi+1}\rho_{p-1} =
(1+\del)^{\frac{\xi}{\del^{\ell+1}}}\rho_{p-1} \geq
(\frac{1}{{\del^{\ell}}})^{\frac 1{\del^{\ell+1}}} \rho_{p-1}$.
Given a set of solutions $SOL_p$, for $1 \leq p \leq q$ (where
$SOL_p$ is a solution for $I_p$), define a combined solution
$SOL$, where the jobs assigned to machine $i$ in $SOL$ are all
jobs assigned to this machine in all the solutions. Obviously, in
$SOL$ the jobs are scheduled sorted by non-increasing indices of
their sets $I_p$.

\begin{lemma}
We have $OPT(A_{\zeta}) \geq \sum_{p=1}^q OPT(I_p)$.
\end{lemma}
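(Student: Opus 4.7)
The plan is to take an optimal solution $\cO^\zeta$ for $A_\zeta$ and show that, when restricted to each sub-instance $I_p$ separately, the resulting schedules are feasible for the $I_p$'s and their costs sum to at most $\cO^\zeta(A_\zeta)=OPT(A_\zeta)$. Since each restricted schedule has cost at least $OPT(I_p)$ by optimality, this yields the desired inequality.

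Concretely, first I would fix $\cO^\zeta$ and, for each $p \in \{1,\ldots,q\}$, define $S_p$ to be the schedule on the same $m$ machines obtained by keeping only the jobs of $I_p$ in the order and on the machines where they appeared in $\cO^\zeta$. This is clearly a feasible schedule for $I_p$ (no release dates in this section, so there is no additional constraint to check), hence $S_p(I_p) \geq OPT(I_p)$.

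Next I would compare, for each job $j \in I_p$, its completion time $C_j^{(p)}$ in $S_p$ with its completion time $C_j$ in $\cO^\zeta$. On the machine $i$ where $j$ was scheduled, the jobs that precede $j$ in $S_p$ form a subset of the jobs that precede $j$ in $\cO^\zeta$ (namely, the jobs of $I_p$ that were on machine $i$ before $j$). Thus $C_j^{(p)} \leq C_j$, and since weights are the same in $A_\zeta$ and in $I_p$, the contribution of $j$ to $S_p(I_p)$ is at most its contribution to $\cO^\zeta(A_\zeta)$. Summing over $j \in I_p$ gives $S_p(I_p) \leq \sum_{j \in I_p} w_j^\zeta C_j$, and then summing over $p$ and using that the $I_p$'s partition the job set of $A_\zeta$,
\[
\sum_{p=1}^q OPT(I_p) \;\leq\; \sum_{p=1}^q S_p(I_p) \;\leq\; \sum_{p=1}^q \sum_{j \in I_p} w_j^\zeta C_j \;=\; \cO^\zeta(A_\zeta) \;=\; OPT(A_\zeta).
\]

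There is no real obstacle here: the argument rests on the monotonicity observation that deleting jobs from a non-preemptive, idle-time-free schedule can only decrease completion times, which holds because the release dates are all zero in this section. One small point I would make explicit is that, although $\cO^\zeta$ follows the natural (Smith) order across densities so the $I_p$-jobs actually form contiguous blocks on each machine, the proof does not need this structural fact; the inclusion of preceding job sets per machine suffices.
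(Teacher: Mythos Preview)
Your proof is correct and essentially identical to the paper's: the paper's ``pseudo-cost'' of a solution $S$ is defined so that the pseudo-cost of job $j\in I_p$ is exactly its cost in your restricted schedule $S_p$, so the paper's inequality $S_{pc}(A_\zeta)\le S(A_\zeta)$ is your monotonicity step $C_j^{(p)}\le C_j$, and its decomposition $OPT_{pc}(A_\zeta)=\sum_p OPT(I_p)$ is your observation that each $S_p$ is feasible for $I_p$. Your write-up is arguably a touch more direct since it avoids naming the intermediate quantity.
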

\begin{proof}
Given a solution $S$ for the complete set of jobs $A_{\zeta}$,
define its pseudo-cost $S_{pc}$ as follows. The cost of each job
$j \in I_p$ is its weight multiplied by the following amount: the
total size of jobs of $I_p$ that are scheduled to run before $j$
on the same machine (i.e., out of jobs assigned to the same
machine, those are jobs of $I_p$ of strictly larger densities,
larger jobs of $I_p$ with the same density, and jobs of $I_p$ of
the same size and density, but of smaller indices) plus $p_j$, and
divided by the speed of the machine that runs $j$. We let
$OPT_{pc}$ denote the cost of an optimal solution with respect to
pseudo-cost. Obviously, for any solution $S$ and set of jobs $X$,
$S(X)\geq S_{pc}(X)$, and thus $OPT_{pc}(A_{\zeta})\leq
OPT(A_{\zeta})$. Since in an optimal solution for $A_{\zeta}$,
every subset of jobs $I_p$ is assigned independently of other such
subsets, we find $OPT_{pc}(A_{\zeta}) = \sum_{p=1}^q OPT(I_p)$.
\end{proof}

\begin{lemma}
If for any $1 \leq p \leq q$ it holds that $SOL_p(I_p) \leq (1+\nu) OPT(I_p)$ for some $\nu>0$, then the combined solution $SOL$ satisfies $SOL(A_{\zeta}) \leq (1+\nu)(1+8\del) OPT(A_{\zeta})$.
\end{lemma}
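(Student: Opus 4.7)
The plan is to split $SOL(A_{\zeta})$ into a sub-instance cost equal to $\sum_p SOL_p(I_p)$ and a cross-delay cost $S_2$, bound the first directly from the hypothesis and the previous lemma, and then bound the second using the density gap between sub-instances.

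Fix a machine $i$. For a job $j\in I_p$ assigned to $i$ in $SOL_p$, its completion time in the combined schedule is $C_j^{(p)}+\sum_{p'>p} W_{p',i}/s_i$, where $C_j^{(p)}$ is its completion time in $SOL_p$ and $W_{p',i}$ is the total size of $I_{p'}$-jobs on machine $i$ in $SOL_{p'}$ (recall from the definition of $SOL$ that the jobs of higher-indexed sub-instances run first on each machine). Summing $w_j C_j$ over all jobs and machines gives
$$SOL(A_\zeta) \;=\; \sum_p SOL_p(I_p) \;+\; S_2, \qquad S_2 \;=\; \sum_{p<p'}\sum_i M_{p,i}\cdot\frac{W_{p',i}}{s_i},$$
where $M_{p,i}=\sum_{j\in I_p,\,i(j)=i}w_j$. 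By the preceding lemma combined with the hypothesis, the first summand is at most $(1+\nu)\sum_p OPT(I_p)\leq (1+\nu)\,OPT(A_\zeta)$.

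The bound on $S_2$ exploits the density gap $\phi_{p'}/\rho_p\geq (1+\del)^{(p'-p)(y+\xi+1)-y}$, which is at least $(1+\del)^{\xi+1}\geq 1/\del^\ell$ for consecutive sub-instances ($p'=p+1$) and grows very rapidly with $p'-p$. For each pair $p<p'$ and each machine $i$, I would combine the obvious $M_{p,i}\leq \rho_p W_{p,i}$ with the per-machine version of Lemma \ref{lboncost}, $W_{p,i}^2/s_i\leq 2\,SOL_p[i]/\phi_p$ (and the same for $p'$), applied via an AM-GM split of $\rho_p W_{p,i}W_{p',i}/s_i$ whose weight is chosen to cash in the gap factor $\rho_p/\phi_{p'}$. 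Summed over machines and pairs this produces a geometric series dominated by the adjacent-pair terms, which I would control to get $S_2 \leq 8\del \sum_p SOL_p(I_p) \leq 8\del(1+\nu)\,OPT(A_\zeta)$. Adding the two bounds gives $SOL(A_\zeta)\leq (1+\nu)(1+8\del)\,OPT(A_\zeta)$.

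The main obstacle is the adjacent-pair contribution $\sum_i M_{p,i}W_{p+1,i}/s_i$, where the density gap is smallest. A naive Cauchy-Schwarz loses a factor $(1+\del)^y$ from the within-sub-instance density spread, which must be balanced against the inter-sub-instance gap factor $(1+\del)^{y+\xi+1}$; extracting a bound that is linear in $\del$ rather than merely polynomial in $\del^\ell$ appears to require a careful AM-GM split, possibly complemented by a case analysis on each machine that distinguishes the regime $W_{p,i}\leq W_{p+1,i}$ (in which the cross term is charged to $SOL_{p+1}[i]$ via $\rho_p/\phi_{p+1}\leq\del^\ell$) from the regime $W_{p,i}> W_{p+1,i}$ (in which it is charged to $SOL_p[i]$).
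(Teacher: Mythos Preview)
Your overall structure matches the paper: write $SOL(A_\zeta)=\sum_p SOL_p(I_p)+S_2$, bound $S_2$ pairwise by a geometric series in $p'-p$, and conclude. The paper also reduces to a single unit-speed machine and writes the cross term as $C_{p,p'}=M_p\,\Pi_{p'}$ exactly as you do. Where your proposal breaks down is the adjacent-pair bound, specifically the case $W_{p,i}>W_{p+1,i}$ where you propose to charge to $SOL_p[i]$.

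This case cannot be handled with only the aggregate quantities $M_{p,i}$ and $W_{p,i}$. Consider one machine carrying two $I_p$-jobs: a tiny job of size $\epsilon=W(1+\del)^{-y/2}$ at density $\rho_p$, and a large job of size $W$ at density $\phi_p=(1+\del)^{-y}\rho_p$. Then $SOL_p[i]\approx 2W^2\phi_p$ while $M_{p,i}\approx W\rho_p(1+\del)^{-y/2}$, so $M_{p,i}W_{p,i}\approx (1+\del)^{y/2}\,SOL_p[i]$. If $W_{p+1,i}$ is any fixed fraction of $W_{p,i}$, the cross term $M_{p,i}W_{p+1,i}$ is of order $(1+\del)^{y/2}\,SOL_p[i]$, which is astronomically larger than $\del\,SOL_p[i]$. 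No choice of threshold on $W_{p,i}/W_{p+1,i}$ repairs this: to absorb the $(1+\del)^{y/2}$ factor you would need a threshold $\theta\lesssim\del(1+\del)^{-y/2}$, but the companion case (charging to $SOL_{p+1}[i]$ via $\rho_p/\phi_{p+1}$) then requires $\theta\gtrsim\del^{\ell-1}$, and these ranges are disjoint.

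The paper avoids this by refining the cross term one level further, splitting $M_{p,i}$ by density class: $C_{p,p'}=\sum_{r}(1+\del)^{-r}\rho_p\,\Pi_p^r\,\Pi_{p'}$, where $\Pi_p^r$ is the total size at density $(1+\del)^{-r}\rho_p$. The case split is then per density class $r$ (comparing $\Pi_p^r$ to $\Pi_{p'}/\del^{p'-p}$), and the ``charge to $SOL_p$'' branch uses the per-class lower bound $SOL_p\geq\sum_r(1+\del)^{-r}\rho_p(\Pi_p^r)^2/2$, which has no $(1+\del)^y$ loss. This finer decomposition is the missing ingredient in your sketch.
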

\begin{proof}
Let $1 \leq p\leq q$. For $0\leq i \leq {y}$, let $\Pi^i_p=\sum_{j \in I_p: w_j/p_j=(1+\del)^{-i}\rho_p}p_j$ (the total size of jobs of the instance $I_p$ with a given density) and $\Pi_p=\sum_{j \in I_p} p_j=\sum_{i=0}^{{y}} \Pi^i_p$ (the total size of the jobs of $I_p$).
We will prove for each machine separately that the cost for this machine does not exceed $(1+\nu)(1+8\del)$ times the total cost of the solutions $SOL_p$ ($1 \leq p \leq q$) for this machine. Moreover, it is sufficient to prove the property for a machine of speed $1$ (since for a given solution, the cost for a machine of speed $s$ is simply equal to the cost for a unit speed machine divided by $s$, and since the costs of different machines are independent). Therefore, we assume without loss of generality that there is just one machine, and its speed is $1$.
By definition, $SOL(A_{\zeta})=\sum_{p=1}^q SOL_p(I_p)+\sum_{p,p':1\leq p <p'\leq q} C_{p,p'}$, where $C_{p,p'} = \rho_p \cdot \Pi_{p'} \cdot \sum_{i=0}^{{y}} (1+\del)^{-i} \Pi^i_p$. Let $C_{p,p',i}=\rho_p \cdot \Pi_{p'}\Pi^i_p \cdot  (1+\del)^{-i}$ (and we have $C_{p,p'}=\sum_{i=0}^{{y}} C_{p,p',i}$).

By Lemma \ref{lboncost}, we find $$SOL_{p'}(I_{p'}) \geq (1+\del)^{-{y}} \rho_{p'} (\Pi_{p'})^2/2 = (1+\del)^{-{y}} \cdot \frac{\rho_{p'}}{\rho_p} \cdot \frac{\rho_p}2(\Pi_{p'})^2 \geq { { \del^{\frac{-\ell(p'-p-1)}{\del^{\ell+1}}-\ell}}} \cdot \frac{\rho_p}2 {(\Pi_{p'})^2} (1+\del) \ , $$ using $\rho_{p'} \geq (1+\del)^{{y}+\xi+1}\cdot (\frac{1}{{\del^{\ell}}})^{\frac {p'-p-1}{\del^{\ell+1}}} \rho_{p} \geq (1+\del)^{{y}+1}\cdot (\frac{1}{{\del^{\ell}}})^{\frac {p'-p-1}{\del^{\ell+1}}+1} \rho_{p}$.

Let $\iota$ be the set of indices $i$ such that $\Pi_{p}^i \leq \frac{\Pi_{p'}}{\del^{p'-p}}$, and let $\iota'$ be the complement set of indices (such that $\iota\cup \iota'=\{0,1,\ldots,{y}\}$). For any $i \in \iota$,
$$C_{p,p',i} =\rho_p \cdot \Pi_{p'}\Pi^i_p \cdot  (1+\del)^{-i} \leq \rho_p \cdot ( \Pi_{p'})^2 \cdot  (1+\del)^{-i} / \del^{p'-p} \leq 2(1+\del)^{-i-1} SOL_{p'}(I_{p'}){ { \del^{\frac{\ell(p'-p-1)}{\del^{\ell+1}}+\ell-p'+p}}} \ . $$

Thus, $$\sum_{i \in \iota} C_{p,p',i} \leq  2{ { \del^{(p'-p-1)(\frac{\ell}{\del^{\ell+1}}-1)+\ell-1}}} SOL_{p'}(I_{p'})\sum_{i=0}^{{y}} (1+\del)^{-i-1} \leq 2{ { \del^{(p'-p-1)(\frac{\ell}{\del^{\ell+1}}-1)+\ell-2}}} SOL_{p'}(I_{p'}) \  , $$ where the last inequality is since   $\sum_{i=0}^{{y}} (1+\del)^{-i-1} < \frac{1}{\del}$.
We show that we have
${\del^{(p'-p-1)(\frac{\ell}{\del^{\ell+1}}-1)+\ell-2}} \leq \del^{p'-p}$. The last inequality holds since
$(p'-p-1)(\frac{\ell}{\del^{\ell+1}}-1)+\ell-2 \geq (p'-p-1)(8\ell-1)+\ell-2=p'-p+(8\ell-2)(p'-p)-7\ell-1\geq p'-p+\ell-3 \geq p'-p$, as $p'\geq p+1$, $\del \leq \frac 18$, and $\ell \geq 3$.

Since any $i \in \iota'$ satisfies $\Pi_{p}^i > \frac{\Pi_{p'}}{\del^{p'-p}}$, we have $$\sum_{i \in \iota'} C_{p,p',i} \leq \sum_{i \in \iota'} \rho_p \cdot \Pi_{p'}\Pi^i_p \cdot  (1+\del)^{-i} \leq \del^{p'-p}\sum_{i=0}^{{y}} \rho_p \cdot (\Pi^i_p)^2 \cdot  (1+\del)^{-i} \leq 2\del^{p'-p} SOL_p(I_p) \ , $$ as $SOL_p(I_p) \geq \rho_p\sum_{i=0}^{{y}} (1+\del)^{-i}(\Pi_p^i)^2/2$ (this is the sum of $\Gamma$-values of running the jobs of each density separately).

We get that for any $1 \leq p < p' \leq q$, $C_{p,p'}\leq 2\del^{p'-p}(SOL_p(I_p)+SOL_{p'}(I_{p'})$. Moreover, for any $1 \leq p \leq q$, $\sum_{p'>p} \del^{p'-p} = \sum_{u=1}^{q-p} \del^u \leq \sum_{u=1}^{q-1} \del^u$, and $\sum_{p''<p} \del^{p-p''} = \sum_{u=1}^{p-1} \del^u \leq \sum_{u=1}^{q-1} \del^u$.
We find that $$SOL(A_{\zeta})- \sum_{p=1}^q SOL_p(I_p) \leq 2\sum_{1\leq p<p'\leq q} \del^{p'-p} (SOL_p(I_p)+SOL_{p'}(I_{p'})) \leq $$ $$ 4 \sum_{p=1}^{q} SOL_p(I_p) \sum_{u=1}^{q-1} \del^u \leq \frac{4\del}{1-\del}  \sum_{p=1}^q SOL_p(I_p) \leq 8\del  \sum_{p=1}^q SOL_p(I_p)\ . $$
We get $SOL(A_{\zeta}) \leq (1+8\del)\sum_{p=1}^q SOL_p(I_p) \leq (1+8\del)(1+\nu)\sum_{p=1}^q OPT(I_p)\leq (1+8\del)(1+\nu)OPT(A_{\zeta})$.
\end{proof}

In the next section (Section \ref{brsimp}) we design a $(1+\del)$-approximation algorithm
for the bounded ratio problem, that gives a $(1+\del)(1+8\del)$-approximation algorithm for $A^{\zeta}$. Since $(1+\del)(1+8\del)<1+10\del$, for an appropriate choice of $\zeta$, we get an $(1+22\del)$-approximation algorithm for $A'$, and a $(1+48\del)$-approximation algorithm for $A$. Thus, letting $\del=\frac{\eps}{48}$ will give a $(1+\eps)$-approximation
algorithm for the general problem. The algorithm of the next
section is applied at most $n$ times for each choice of $\zeta$, i.e., at most $\frac{n}{\del^6}$ times in total.

\subsection{An EPTAS for the bounded ratio problem}\label{brsimp}
Let $I$ be a bounded instance such that all densities are in
$[1,\rho=\rho(\delta)=(1+\del)^y]$, where $y$ is a function of
$\del$ (${y}$ and $\xi$ are as defined in the previous section). By scaling (and possibly increasing the interval of densities), any valid input can be transformed into this form.
We have the following properties. First, we have
$y=\frac{\xi}{\del^{\ell+1}}-\xi -1 \geq \frac{\xi}{\del^{\ell}}>
400$ (since $\frac{1}{\del^{\ell+1}}-\frac{1}{\del^{\ell}}-1 =
\frac{1-\del-\del^{\ell+1}}{\del^{\ell+1}}>1 > \frac 1{\xi} $ as
$2\del^{\ell+1}+\del<1$) and $y+2<y+\xi+1 \leq \frac{\xi}{\del^6} \leq
\frac{1}{\del^{9}}$. We also have $(1+\del)^{{y}} \geq
(1+\del)^{\frac{\xi}{\del^{\ell}}} \geq
\left( \frac{1}{\del^{\ell}}\right)^{\frac{1}{\del^{\ell}}} \geq
(\frac{1}{\del^3})^{\frac{1}{\del^3}} >  \frac{1}{\del^{1500}}$
and $(1+\del)^y \leq (1+\del)^{\xi/\del^{\ell+1}} \leq
(\frac{1}{\del^{\ell+1}})^{\frac 1{\del^{\ell+1}}} \leq
(\frac{1}{\del})^{6/\del^6} \leq (\frac{1}{\del})^{1/\del^{7}-12}$,
and $y+1 \leq \del^2(1+\del)^y$ since $y+1 \leq
\frac{1}{\del^{9}}$ while $(1+\del)^y \geq \frac{1}{\del^{1500}}$.

Since the density of every job is in $[1,(1+\del)^y]$, every job
$j\in I$ has $p_j=(1+\del)^{k}$, $w_j=(1+\del)^{k'}$, for some
integers $k,k'$ such that $0 \leq k'-k \leq y$. Let
$\gamma=\frac{\del^{12}}{(1+\del)^y}$. We have $\gamma \geq
\del^{1/\del^7}$ and $\gamma \leq \del^{1512}$. Let $\I$ denote
the set of values $i$ such that there is at least one job of size
$(1+\del)^i$. Clearly, $|\I| \leq n$. Let $n_{r,i}$ denote the
number of jobs of size $(1+\del)^i$ and density $(1+\del)^r$. By
scaling the machine speeds (that are integer powers of $1+\del$)
and possibly reordering the machines, let the speeds of machines
be $s_1 \geq s_2 \cdots \geq s_m$, where without loss of
generality, $s_1=1$, and for $i\geq 2$, $s_i=(1+\del)^{k_i}$, for
some non-positive integer $k_i\leq 0$.

Recall that the work of a machine is the total size of jobs
assigned to it. We would like to assume that job sizes are scaled
such that the work of the most loaded machine (the most loaded
machine in terms of work) of speed $1$ is in $[1/(1+\del),1)$. We
will now show that this is possible. For a job $j$, and $1 \leq b
\leq \frac n{\del}$, let $D_{j,b}$ be the interval
$[(1+\del)^{b-1}p_j,(1+\del)^b p_j)$, where
$(1+\del)^{\frac{n}{\del}} \geq n+1$.

\begin{claim}
For any solution, there exist a job $1\leq j' \leq n$ and an integer $1 \leq b' \leq \frac{n}{\del}$, such that the work of the most loaded machine of speed $1$ is in $D_{j,b}$.
\end{claim}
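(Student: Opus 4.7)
The plan is to pick $j'$ to be a job of maximum size on the most loaded speed-$1$ machine, and show that the work of that machine cannot be too large compared to $p_{j'}$, so that it falls inside the range covered by the intervals $D_{j',\cdot}$.

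More concretely, let $M$ be the most loaded speed-$1$ machine in the given solution, and let $W$ denote its work. Assuming $W>0$ (otherwise the claim is vacuous, since no jobs of the input are processed on any machine of speed $1$ and the statement can be read trivially), let $j'$ be a job assigned to $M$ of maximum size. Then every job on $M$ has size at most $p_{j'}$, and there are at most $n$ such jobs, so $p_{j'} \leq W \leq n \cdot p_{j'}$.

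Next, I would use the defining property of the exponent $n/\del$ stated just before the claim, namely $(1+\del)^{n/\del} \geq n+1$. This yields $W \leq n\cdot p_{j'} < (n+1) p_{j'} \leq (1+\del)^{n/\del}\, p_{j'}$. Combined with the lower bound $W \geq p_{j'} = (1+\del)^{0}\, p_{j'}$, we get
\[
W \in \bigl[p_{j'},\, (1+\del)^{n/\del}\, p_{j'}\bigr) \;=\; \bigsqcup_{b=1}^{n/\del} \bigl[(1+\del)^{b-1}p_{j'},\, (1+\del)^{b}p_{j'}\bigr) \;=\; \bigsqcup_{b=1}^{n/\del} D_{j',b},
\]
so there is a unique integer $1 \leq b' \leq n/\del$ with $W \in D_{j',b'}$, as desired.

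There is no real obstacle here; the only subtle point is making sure the two inequalities $p_{j'}\leq W$ and $W < (1+\del)^{n/\del} p_{j'}$ hit exactly the endpoints of the union of the $D_{j',b}$'s, which is why the range of $b$ was set to $\{1,\ldots,n/\del\}$ rather than starting at $0$. The assumption that $p_{j'}$ is an integer power of $1+\del$ (which holds in the rounded instance) is not needed for the claim itself, but it is implicitly what makes the subsequent guessing step polynomial in $n$ and $1/\del$.
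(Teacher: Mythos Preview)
Your proof is correct and follows essentially the same approach as the paper: pick a largest job $j'$ on the most loaded speed-$1$ machine, bound the work $W$ by $p_{j'}\leq W\leq n\,p_{j'}$, and use $(1+\del)^{n/\del}\geq n+1$ to conclude that $W$ lies in the union $\bigcup_{b=1}^{n/\del} D_{j',b}$. The paper's proof is the same argument, stated slightly more tersely.
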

\begin{proof}
Consider the most loaded machine of speed $1$ (breaking ties arbitrarily). Let $j'$ be a largest job assigned to this machine. The work of the machine is at least $p_{j'}$. Since there are at most $n$ jobs assigned to this machine, each of size at most $p_{j'}$, its work is at most $n\cdot p_{j'}$. We have $[p_{j'},n\cdot p_{j'}]\subseteq [p_{j'},(n+1)p_{j'})\subseteq \bigcup_{1 \leq b \leq \frac{n}{\del}} D_{j',b}$, thus, there exists a value $b'$ as required.
\end{proof}

For every choice of a pair $j,b$ ($\frac{n^2}{\del}$ choices in
total), we scale (i.e., divide) job sizes by $(1+\del)^b p_j$
(which is an integer power of $1+\del$), and apply the algorithm
described later in this section (this algorithm enforces the
existence of a fastest machine with a suitable load). By the
claim, given an optimal solution $OPT$, for at least one choice of
$j,b$ the assumption regarding the work of the most loaded machine
of speed $1$ holds as a result of the scaling. We will pick the
best solution, whose objective function value cannot be larger
than that of the solution obtained for the correct choice of
$j,b$. In what follows we analyze the properties of the correct
choice of $j,b$ for a given optimal solution $OPT$ after the
scaling. In what follows the job sizes are according to the
scaling.

Let $U$ be a threshold on job sizes. Let $\hat{I}$ be the subset
of jobs assigned to machine $i$ in some solution. The $U$-cost of
machine $i$ in this solution is defined as the total weighted
completion time of jobs whose sizes are at least $U$, plus the
$\Gamma$-values of jobs whose sizes are below $U$. The cost for
machine $i$ is therefore its $U$-cost plus $\frac{1}{2s_i}\sum_{j
\in \hat{I}, p_j <U } w_j\cdot p_j$. Obviously, the $U$-cost never
exceeds the cost, for any value of $U$. Additionally, similarly to
the cost, the $U$-cost for a fixed value of $U$ is monotone in the
sense that removing a job from the machine decreases its $U$-cost,
and thus, if we consider a block of jobs, decreasing the total
size of jobs in a block also decreases the $U$-cost.

We use the following functions of $\del$: $f(\del)$ and $g(\del)$,
both integral, non-decreasing, and negative. A machine $i$ is
called {\it slow} if its speed is at most $(1+\del)^{f(\del)}$
(and otherwise we say that it is not slow or that it is fast). We
say that a machine is lightly loaded if its work does not exceed
$(1+\del)^{g(\del)}$ (and otherwise it is heavily loaded).  Let
$f(\del)=g(\del)-\frac{2}{{\del}^2}$ and
$g(\del)=-(\frac{1}{\del})^{\frac{1}{\del^{300}}}$. Consider a
fixed optimal solution $OPT$ (for the input considered in this
section, given the scaling of machine speeds and job sizes).

\begin{lemma}
No machine has load strictly above $\frac{2}{\del}$ in $OPT$, and this is an upper bound on the work of any machine as well.
\end{lemma}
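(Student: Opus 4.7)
The plan is a short local exchange argument against a supposed OPT whose maximum load exceeds $2/\delta$. I first reduce the two claims to one: since $s_1 = 1$ and all other $s_i \leq 1$, for every machine $W_i = s_i \cdot \text{load}_i \leq \text{load}_i$, so the work bound is a consequence of the load bound. Machines of speed exactly $1$ are handled by the scaling alone: for such a machine load equals work, and by the hypothesis of the scaling the most loaded speed-$1$ machine has work in $[1/(1+\delta), 1)$, so every speed-$1$ machine has load strictly less than $1 < 2/\delta$. It therefore suffices to rule out a machine $i$ with $s_i \leq 1/(1+\delta)$ whose load in OPT strictly exceeds $2/\delta$.

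Suppose towards a contradiction that such a machine $i$ exists. I would let $j$ be the last job scheduled on $i$ in OPT; its completion time $C_j$ equals the load of $i$, so $C_j > 2/\delta$, and from $p_j/s_i \leq C_j$ I get $p_j \leq s_i C_j \leq C_j/(1+\delta)$. Consider the modified schedule obtained by removing $j$ from machine $i$ and appending it at the very end of machine $1$. Because $j$ was the last job on $i$ and is placed behind every existing job on machine $1$, no other job's completion time changes, so the cost difference is exactly $w_j\bigl(W_1 + p_j - C_j\bigr)$. Plugging in $W_1 < 1$ and $p_j \leq C_j/(1+\delta)$ bounds this above by $w_j\bigl(1 - C_j\,\delta/(1+\delta)\bigr)$, which is strictly negative once $C_j > (1+\delta)/\delta$; and $C_j > 2/\delta$ is more than enough for $\delta \leq 1/8$. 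So the modified schedule is strictly cheaper than OPT, contradicting its optimality. Consequently every machine has load at most $2/\delta$, and the work bound follows from $W_i \leq s_i \cdot 2/\delta \leq 2/\delta$.

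The one place I expect to keep careful track is that the modified schedule may not respect the natural (density) ordering on machine $1$, since $j$ could have higher density than some job already there. This is harmless: the schedule I described is already a feasible schedule of smaller cost than OPT, which by itself contradicts OPT's optimality; equivalently, reordering machine $1$ in the natural order would only further decrease its cost by Smith's rule. No other step of the plan is delicate: the exchange is purely local, it invokes only two ingredients (the scaling-based bound $W_1 < 1$ and the inequality $p_j \leq s_i C_j$), and the arithmetic reduces to the trivial comparison $2/\delta > (1+\delta)/\delta$.
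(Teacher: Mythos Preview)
Your proof is correct and follows the same local exchange argument as the paper: pick a job on the overloaded slow machine and move it to the end of a speed-$1$ machine, then compare completion times. Your version is actually a bit cleaner than the paper's, which splits into two cases according to whether $p_j \le s_i \cdot 2/\delta$ or $p_j > s_i \cdot 2/\delta$; by using directly that $p_j \le s_i C_j \le C_j/(1+\delta)$ you collapse both cases into the single inequality $1 + C_j/(1+\delta) < C_j$, equivalent to $C_j > (1+\delta)/\delta$, which is immediate from $C_j > 2/\delta$ and $\delta < 1$.
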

\begin{proof}
Assume by contradiction that machine $i$ has a job $j$ of
completion time strictly above $\frac{2}{\del}$. Since machines of
speed $1$ have loads of at most $1$, we have $s_i \leq
\frac{1}{1+\del}$ (as $i$ cannot have speed $1$). We move $j$ to
run last on a machine of speed $1$, and compare its previous
completion time with its new completion time. If $p_j \leq  s_i
\cdot \frac{2}{\del}$, then running job $j$ on a machine of speed
$1$ would result in a completion time of at most $1+p_j \leq
1+\frac 2{\del(1+\del)}< \frac 2{\del}$, which holds since it is
equivalent to $2+\del+\del^2 < 2+2\del$ holding for all $\del<1$.
 Thus in this case $j$ has a smaller completion time, contradicting the optimality of the original solution.
If $p_j > s_i \cdot \frac{2}{\del}$, then the load of machine $i$ is at least $\frac{p_j}{s_i}$, and $1+p_j < \frac{p_j}{s_i}$ holds for any $\del<1$. The last inequality holds since it is equivalent to $p_j>\frac{s_i}{1-s_i}$, and thus for proving it, it is sufficient to show $s_i \cdot \frac{2}{\del} \geq \frac{s_i}{1-s_i}$, or equivalently, $s_i \leq 1-\frac{\del}{2}$, which holds for any $\del<1$ since $s_i \leq \frac{1}{1+\del}$.
Since no speed exceeds $1$, the work of a machine does not exceed its load, and thus every machine has a work of at most $\frac{2}{\del}$ as well.
\end{proof}

\begin{corollary}
In $OPT$ every slow machine is
lightly loaded.
\end{corollary}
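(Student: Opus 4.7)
The plan is a direct proof by contradiction, leveraging only the lemma that precedes the corollary together with the definitions of $f(\del)$ and $g(\del)$.

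Suppose for contradiction that some slow machine $i$ in $OPT$ is heavily loaded, i.e., has work $W_i > (1+\del)^{g(\del)}$. Since $i$ is slow, its speed satisfies $s_i \leq (1+\del)^{f(\del)}$. The load of machine $i$ is $W_i / s_i$, and using both bounds together with the identity $g(\del) - f(\del) = 2/\del^2$, we get
\[ \frac{W_i}{s_i} > \frac{(1+\del)^{g(\del)}}{(1+\del)^{f(\del)}} = (1+\del)^{2/\del^2}. \]

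The next step is to show that $(1+\del)^{2/\del^2} > \frac{2}{\del}$. This follows immediately from Bernoulli's inequality: since $2/\del^2 \geq 1$, we have $(1+\del)^{2/\del^2} \geq 1 + \frac{2}{\del^2}\cdot \del = 1 + \frac{2}{\del} > \frac{2}{\del}$. Hence the load of machine $i$ is strictly greater than $\frac{2}{\del}$, contradicting the previous lemma which asserts that no machine in $OPT$ has load strictly above $\frac{2}{\del}$. Therefore every slow machine must be lightly loaded.

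There is essentially no obstacle here; the corollary is an immediate consequence of the gap $g(\del)-f(\del) = 2/\del^2$ having been chosen exactly large enough so that $(1+\del)^{g(\del)-f(\del)}$ dominates the universal load bound $\frac{2}{\del}$ from the preceding lemma. The only thing to be careful about is invoking Bernoulli (rather than trying to estimate $\ln(1+\del)$), which gives the strict inequality against $\frac{2}{\del}$ cleanly and for all admissible $\del \leq 1/8$.
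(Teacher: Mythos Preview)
Your proof is correct and essentially identical to the paper's: both hinge on the previous lemma's load bound $\tfrac{2}{\del}$ and the inequality $(1+\del)^{2/\del^2}>\tfrac{2}{\del}$, with yours phrased as a contradiction and the paper's as a direct upper bound on the work. The only difference is cosmetic---you supply the Bernoulli justification for $(1+\del)^{2/\del^2}>\tfrac{2}{\del}$, which the paper simply asserts.
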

\begin{proof}
Any machine has load of at most $\frac{2}{\del}$, and thus work of
at most $s_i \cdot \frac 2{\del}$. A slow machine $i$ has work of
at most $s_i \cdot \frac {2}{\del} \leq  \frac
{2(1+\del)^{f(\del)}}{\del} =  \frac
{2(1+\del)^{g(\del)-2/{\del^2}}}{\del}
    \leq (1+\del)^{g(\del)}$, since $(1+\del)^{2/{\del^2}}>\frac{2}{\del}$.
\end{proof}

Next, we define machine configurations. A configuration is a
vector that defines the schedule of one machine (a set of jobs
assigned to it in terms of the types of jobs, that is, a job is
specified by its size and weight but not by its identity), and the
set of all configurations will be denoted by $\C$. For a
configuration $C \in \C$, the first component is an integer
$j_1(C) \in \mathbb{Z}$, such that the total work of the machine
is in $((1+\del)^{j_1(C)-2},(1+\del)^{j_1(C)}]$.  The second
component is a non-positive integer $j_2(C)$ such that the speed
of the machine is $(1+\del)^{j_2(C)}$. The third component is an
integer $j_3(C) \in \mathbb{Z}$, such that $j_3(C)=j_1(C)-j_2(C)$,
and therefore the completion time (or load) of the machine is in
$((1+\del)^{j_3(C)-2},(1+\del)^{j_3(C)}]$. Recall that
$\gamma=\frac{\del^{12}}{(1+\del)^y}$. For $0 \leq r \leq y$, and
$i \leq j_1(C)$ such that ${\gamma}(1+\del)^{j_1(C)-1} \leq
(1+\del)^i \leq (1+\del)^{j_1(C)}$ (i.e., $j_1(C)-1+\lceil
\log_{1+\del}\gamma \rceil\leq i \leq j_1(C)$), there is an
integer component $n_{r,i}(C) \geq 0$ stating how many jobs of
size $(1+\del)^i$ and density $(1+\del)^r$ are assigned to this
machine. These jobs are called {\it large} (large for
configuration $C$, or large for a machine that has configuration
$C$). We let $n_{r,i}(C)=0$ for other (smaller or larger) values
of $i$ (these are constants that are not a part of the
configuration). There are additional components for other jobs
assigned to a machine whose configuration is $C$. These jobs
(which are not taken into account in the components of the form
$n_{r,i}(C)$) must have smaller values of $i$, as jobs with larger
values of $i$  (jobs of sizes above $(1+\del)^{j_1(C)}$) cannot be
assigned to a machine that has configuration $C$ (since they are
too large, given the upper bound on the work of the machine).
These remaining jobs are called small jobs for $C$, or small jobs
for a machine whose schedule is according to configuration $C$.
For every $r$, there is an integer component $t_r(C) \geq 0$ such
that the total size of small jobs (of sizes in
$(0,{\gamma}(1+\del)^{j_1(C)-1})$) of density $(1+\del)^r$
assigned to a machine with configuration $C$ is in
$((t_r(C)-1){\gamma}(1+\del)^{j_1(C)-1},t_r(C){\gamma}(1+\del)^{j_1(C)-1}
]$.  We have $t_r(C)=0$ if and only if there are no such jobs in a
machine with this configuration. Recall that $\I$ is the set of
indices $i$ such that there is at least one job of size
$(1+\del)^i$.  A configuration $C$ is valid if the total size of
jobs is sufficiently close to its required work, and its load does
not exceed $\frac{2}{\del}$. Formally, letting $\I(C)=\{i \in \I :
j_1(C)-1+\lceil \log_{1+\del}\gamma \rceil\leq i \leq j_1(C)\}$ it
is required that $\sum_{0 \leq r \leq y, i \in \I(C)} (1+\del)^i
n_{r,i}(C)+\sum_{r=0}^y t_r(C)({\gamma}(1+\del)^{j_1(C)-1}) \in
((1+\del)^{j_1(C)-2},(1+\del)^{j_1(C)}]$. We also let $\I'(C)=\{i
\in \I : i \leq j_1(C)-2+\lceil \log_{1+\del}\gamma \rceil\}$, and
require $(1+\del)^{j_3(C)-2}<\frac{2}{\del}$. In the remainder of
this section, given $C$, we will use
$U(C)={\gamma}(1+\del)^{j_1(C)-1}$ for computing the $U$-cost of a
machine with configuration $C$.

The number of components is therefore at most
$3+(y+1)(\log_{1+\del}\frac{1}{{\gamma}}+3)$. Since
$\log_{1+\del}\frac{1}{\gamma}=y+\log_{1+\del} \frac
{1}{\del^{12}} \leq y+\frac{1}{\del^{13}}$, and $y+1 \leq
\frac{1}{\del^9}$, we have $3+(y+1)(\log_{1+\del}\frac{1}{\gamma}
+3) \leq
3+\frac{1}{\del^9}(\frac{1}{\del^{9}}+\frac{1}{\del^{13}}+3)<\frac{1}{\del^{23}}$.
Consider a fixed pair of values $j_1$ and $j_2$, and let
$j_3=j_1-j_2$. Consider the set of configurations $C$ such that
$j_1(C)=j_1$ and $j_2(C)=j_2$ (and $j_3(C)=j_3$). For any $0 \leq
r \leq y$,  $t_r(C) \geq 0$ is an integer such that
$t_r(C){\gamma}(1+\del)^{j_1-1} \leq (1+\del)^{j_1}$, i.e.,
$t_r(C) \leq
\frac{1+\del}{{\gamma}}=\frac{(1+\del)^{y+1}}{\del^{12}}$. Since
the total size of large items is at most $(1+\del)^{j_1}$ while
the size of such an item is at least ${\gamma}(1+\del)^{j_1-1}$,
we find $0 \leq n_{r,i}(C) \leq \frac{1+\del}{{\gamma}}$. We get
$\frac{1+\del}{\gamma}+1 = \frac{(1+\del)^{y+1}}{\del^{12}}+1 <
\frac{(1+\del)^{y+2}}{\del^{12}}$ (since $0<\del\leq \frac 18$),
fixing $j_1$ and $j_2$, and using $y+2 \leq \frac{1}{\del^9}$ and
$1+\del <\frac 1{\del}$, the number of different configurations
(with given values $j_1,j_2,j_3$) is at most
$(\frac{(1+\del)^{\frac{1}{\del^{9}}}}{\del^{12}})^{\frac{1}{\del^{23}}}<(\frac{1}{\del})^{1/\del^{230}}$,
which is a constant (a function of $\del$).

Next, we find the number of pairs $j_1,j_2$. The values $j_2$
correspond to actual speeds of machines. Thus, the number of
possible values for $j_2$ is obviously at most $m$, which ensures
that there are configurations for each possible speed. The values
$j_1$ are defined as follows. For every subset of jobs $\bar{I}$,
let $\psi_{\bar{I}}=\lceil \log_{1+\del} \sum_{j \in \bar{I}}
p_j\rceil$. The values $\psi_{\bar{I}}$ act as first components of configurations (and there are no other
options). We bound the number of intervals of the form
$I_{\jmath}=((1+\del)^{\jmath-1},(1+\del)^{\jmath}]$ such that the
actual work of a machine (which must be the total size of a subset
of jobs) can belong to such an interval. The work of a machine
whose largest job is $j$ is in $[p_j,n\cdot p_j]$. Every interval
$[p_j,n\cdot p_j]$ overlaps with at most $1+\log_{1+\del} n$
intervals of the form $I_{\jmath}$. Thus, the total number of
values that $j_1$ can have is at most
$n(1+\frac{n}{\del})=O(\frac{n^2}{\del})$, and we moreover only
allow such values that $(1+\del)^{j_3-2}<\frac{2}{\del}$. We find
that the number of options for $j_1,j_2,j_3$ is
$O(\frac{n^2m}{\del})$. Let
$J=\{\sigma_1,\sigma_2,\ldots,\sigma_{\kappa}\}$ be the set of all
$\kappa$ different speeds such that $1=\sigma_1 > \sigma_2 >
\cdots > \sigma_{\kappa}$, and let $N_{i}$ be the number of
machines of speed $\sigma_i$.

Consider the following mathematical program $\Pi$. The goal of
$\Pi$ is to determine a partial schedule via machine
configurations. For every configuration $C$, $X_C$ is a variable
stating how many machines have this configuration. Obviously, the
number of used configurations whose second component is $\sigma_i$
cannot exceed $N_i$. For every triple of density, size, and a
configuration (density $(1+\del)^r$, size $(1+\del)^i$, and
configuration $C$), there is a variable $Y_{r,i,C}$ corresponding
to the  number of jobs of this size and density that are assigned
to machines whose configurations are $C$, as small jobs for this
configuration. The variable $Y_{r,i,C}$ may be positive only if a
job of size $(1+\del)^i$ is small for configuration $C$, i.e., $i
\leq j_1(C)-2+\lceil \log_{1+\del}\gamma \rceil$, and in all other
cases we set $Y_{r,i,C}=0$ to be a constant rather than a
variable. A configuration $C$ has a cost denoted by $cost(C)$
associated with it, which is the $U$-cost for
$U(C)={\gamma}(1+\del)^{j_1(C)-1}$. Recall that for the purpose of
calculating the $U$-cost of a machine, a list of its large jobs is
needed, but for its jobs of size below $U$ (i.e., small jobs), the
only needed property of the subset of small jobs (for
configuration $C$) of each one of the densities is their total
size, and the exact list of such small jobs is not needed. For
large jobs, the exact identities of jobs are not needed as well,
but a list of densities and sizes is needed. Thus, the $U$-cost
for configuration $C$ is calculated assuming that the machine has
exactly $n_{r,i}(C)$ jobs of size $(1+\del)^i$ and weight
$(1+\del)^{i+r}$ (i.e., density $(1+\del)^r$) assigned to it, and
the total size of small jobs (of sizes in
$(0,{\gamma}(1+\del)^{j_1(C)-1})$)  with densities equal to
$(1+\del)^r$ is exactly $t_r(C)\cdot{\gamma}(1+\del)^{j_1(C)-1}$.
The objective of $\Pi$ is to minimize the sum of costs of
configurations plus the missing parts of the costs of jobs that
are assigned as small. For that, for a job of size $(1+\del)^i$
and density $(1+\del)^r$ that is assigned to a machine of speed
$s$ as a small job for its configuration, an additive term of
$\frac{(1+\del)^{r+2i}}{2s}$ is incurred (this is the difference
between the actual cost for this job, and its $\Gamma$-value, which is the part of the cost
already included in the $U$-cost). This last term only depends on
the speed of the machine that runs the job rather than the
specific machine. Thus, for each $r$ and $i$ we add
$(1+\del)^{r+2i} \sum_{C \in \C}
\frac{Y_{r,i,C}}{2(1+\del)^{j_2(C)}}$ to the cost of
configurations to get the total cost of the schedule, where
$\sum_{C \in \C} Y_{r,i,C}$ is the number of jobs of size
$(1+\del)^i$ and density $(1+\del)^r$ that are assigned as small
jobs for their configurations.

Condition (\ref{2nd}) ensures that the number of used machines for each speed does not exceed the existing number of such machines.
Condition (\ref{3rd}) states that every job is assigned (either it is a large job of some machine or a small job) and condition (\ref{4th}) considers jobs of density $(1+\del)^r$ that are assigned as small to machines scheduled according to configuration $C$, and verifies that sufficient space is allocated for them if the space for them is slightly extended. Condition (\ref{5th}) ensures that indeed there is a machine of the maximum speed that has a work that is close to $1$, that is, above $\frac{1}{(1+\del)^3}$ and at most $1$ (the condition that the work is in $[1/(1+\del),1)$ is slightly relaxed).

\begin{eqnarray}
\nonumber && \min \sum_{C\in {\C}} cost(C)X_C +\sum_{r=0}^y\sum_{i \in \I}(1+\del)^{r+2i} \sum_{C \in \C} \frac{Y_{r,i,C}}{2(1+\del)^{j_2(C)}}  {\mbox  \ \  \ \ \  \ \ \  \ \      s.t. \  \ \ \  \ \ \  \ \ \  \ \ \  \ \ }\\ \nonumber \\
\label{2nd} && \sum_{C \in {\C} : (1+\del)^{j_2(C)}=\sigma_i} X_C \leq N_i  \ \ \ \ \ \ \ \ \ \ \ \ \ \ \ \ \ \ \ \ \ \ \ \ \ \ \ \ \ \  \ \ \ \ \ \ \ \ \ \ \ \ \ \ \ \ \  \ \ \ \ \ \ \ \ \ \ \ \ \ \ \ \ \ \forall \sigma_i\in J
\\ \nonumber \\ && \label{3rd} \sum_{C\in\C}n_{r,i}(C)X_C+\sum_{C \in \C} Y_{r,i,C}=n_{r,i} {\mbox \  \  \ \  \ \ \  \ \ \  \ \ \  \ \ \  \ \ \  \ \  \ \  \ \  \ \ \  \ \ \  \ \ \  \ \ \  \ \  \  \ \ \  \ \ \  \ \ \  \ \ } \forall 0\leq r \leq y,i\in {\I}
\\ \nonumber \\
&& \label{4th} \sum_{i\in \I'(C)} (1+\del)^i  Y_{r,i,C} \leq
(t_r(C)+1){\gamma}(1+\del)^{j_1(C)-1} X_C   {\mbox\     \ \ \  \ \
\  \ \ \  \ \ } \forall 0\leq r \leq y, C \in {\C}
\\
&& \label{5th} \sum_{C \in \C : j_1(C) \in \{-1,0\}, j_2(C)=0} X_C \geq 1
\\ \nonumber \\
&& \label{6th} Y_{r,i,C}\geq 0  {\mbox \  \ \ \  \ \ \  \ \ \  \ \ \  \ \ \  \ \ } \forall C \in \C, 0\leq r \leq y,i\in \I'(C)\\
&& \label{7th} X_C \geq 0 {\mbox \  \ \ \  \ \ \  \ \ \  \ \ \  \ \ \  \ \ \  \ \ } \forall C \in {\C}
\end{eqnarray}

Observe that we distinguish between large and small jobs for a machine based on the configuration and not solely by its speed.  This feature is needed since our problem does not allow the use of the dual-approximation method.
We
define the set of heavy configurations $\C_H$ as
$\C_H=\{C \in \C : j_1(C) > g(\del\}$, (note that the
definition of a heavy configuration or a heavily loaded machine depends only on the work,
and it is independent of the speed) and the
complement set of light configurations is $\C_L =\C
\setminus \C_H$. We see $\Pi$ as a mixed-integer linear program. All {\it variables} $Y_{r,i,C}$
may be fractional.  The variables of configurations corresponding
to slow machines and variables of light configurations of fast machines may be fractional,
whereas the variables of configurations corresponding to fast machines
and heavy configurations must be integral.  Recall that for every pair of speed and work
(i.e, a pair $j_1,j_2$), the number of different valid
configurations is constant (as a function of $\del$). The number
of different speeds of fast machines is at most $(-f(\del))$. Recall that the
work of a heavily loaded machine of speed $s$ is in $((1+\del)^{g(\del)},\frac {2s}{\del}]$. For a fixed slow speed $s\leq (1+\del)^{f(\del)}$, consider configuration $C$ that satisfies $j_2(C) =\log_{1+\del}s \leq f(\del)$.
By the condition on $j_3(C)$, the configuration $C$ has $j_1(C)-j_2(C)-1=j_3(C)-1<\log_{1+\del}{\frac{2}{\del}}+1<\frac{2}{\del^2}$, as $(1+\del)^{2/\del^2-1}>1+\frac 2{\del}-\del>\frac 2{\del}$. If $C$ is heavy, then $j_1(C) \geq g(\del)+1=f(\del)+\frac{2}{{\del}^2}+1$, and since $j_2(C) \leq f(\del)$, we reach a contradiction. Thus, any configuration for a slow machine is light. The number of different values of $j_1(C)$ such that $C$ is a heavy configuration is at most $-g(\del)+\log_{1+\del}\frac{2}{\del}+3 \leq \frac{2}{\del^2}+2-g(\del)=-f(\del)+2$, and $j_2(C)$ of a heavy configuration $C$ must be a fast speed (so there are at most $-f(\del)$ values for it).  As the number of integral
variables is constant (as a function of $\del$), an optimal solution can be found in
polynomial time \cite{Len83,Kan83}. We will first compare the cost of the optimal schedule $OPT$ to the
cost of an optimal solution of $\Pi$, and then we will show how to
obtain an actual schedule given a solution of $\Pi$, such the cost of the schedule is larger only by a factor of at most $1+\del$ than the objective function value of the solution to $\Pi$. Let
$(X^*,Y^*)$ denote an optimal solution to the mixed-integer linear
program, and let $Z^*$ be its objective value. Let $Z^{OPT}$
denote the cost of $OPT$.

\begin{theorem}
If $(X^*,Y^*)$ is an optimal solution of $\Pi$, then $Z^* \leq Z^{OPT}$.
\end{theorem}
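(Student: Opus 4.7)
The plan is to exhibit a feasible MILP solution $(X^{\mathrm{OPT}},Y^{\mathrm{OPT}})$ derived from the fixed optimal schedule $OPT$ whose objective value is at most $Z^{OPT}$. Optimality of $(X^*,Y^*)$ then yields $Z^*\leq Z^{OPT}$.

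I would first encode $OPT$ machine by machine. For each machine $i$ in $OPT$, I build a configuration $C_i$: set $j_2(C_i)=\log_{1+\del} s_i$; choose $j_1(C_i)$ so that the work of $i$ lies in $((1+\del)^{j_1-2},(1+\del)^{j_1}]$, which fixes $U(C_i)=\gamma(1+\del)^{j_1(C_i)-1}$ and thus the large/small split; set $n_{r,i}(C_i)$ equal to the count of density-$(1+\del)^r$ size-$(1+\del)^i$ jobs on $i$ that are large; and set $t_r(C_i)$ to be the unique integer with the actual density-$r$ small total on $i$ in $((t_r-1)\gamma(1+\del)^{j_1-1},t_r\gamma(1+\del)^{j_1-1}]$. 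Validity of $C_i$ follows from the preceding lemma bounding every machine's load by $2/\del$ together with the work-bucket choice. I then set $X^{\mathrm{OPT}}_C=|\{i:C_i=C\}|$ and $Y^{\mathrm{OPT}}_{r,i,C}$ equal to the count of density-$(1+\del)^r$ size-$(1+\del)^i$ jobs that $OPT$ places as a small job on a machine with configuration $C$.

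Next I would verify feasibility of the MILP. Constraint (\ref{2nd}) follows from counting machines per speed; (\ref{3rd}) is the identity partitioning each job between its host's large-slot and small-slot role; (\ref{4th}) is satisfied with one unit of slack since every machine of configuration $C$ carries density-$r$ small size at most $t_r(C)\gamma(1+\del)^{j_1-1}\leq(t_r(C)+1)\gamma(1+\del)^{j_1-1}$; (\ref{5th}) is enforced by the enumeration over scaling pairs $(j,b)$ which ensures that for the correct choice the fastest OPT-machine has work in $[1/(1+\del),1)$ and thus a configuration with $j_1\in\{-1,0\}$ and $j_2=0$; and (\ref{6th})–(\ref{7th}) are immediate. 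Integrality of $X_C$ on heavy or fast-machine configurations is automatic because $X^{\mathrm{OPT}}_C$ is a machine count, and the earlier corollary guarantees heavy configurations only arise on fast machines. Finally, to compare objectives I would decompose the actual $U$-cost of every OPT machine $i$ via Claim \ref{gammaval}: the large jobs' weighted completion times (captured exactly by $n_{r,i}(C_i)$) plus per-density block $\Gamma$-sums $(1+\del)^r(\tau+P/(2s_i))P$. The first MILP term sums $cost(C_i)$ across machines; the second MILP term equals $\sum_{\text{small }j}w_jp_j/(2s_i)$, exactly the correction that converts each small job's $\Gamma_j$-contribution to its full $w_jC_j$-contribution. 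Thus the MILP objective should reduce to $Z^{OPT}$.

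The main obstacle is reconciling the per-density block $\Gamma$-sums: $cost(C_i)$ is computed from the hypothetical block total $t_r(C_i)\gamma(1+\del)^{j_1-1}$, which rounds the true total upward by up to $\gamma(1+\del)^{j_1-1}$, and through the nonlinear formula of Claim \ref{gammaval} this discrepancy also shifts the start times of all subsequent blocks on the same machine. Handling it requires either (a) a careful accounting that bounds the aggregate inflation using the smallness $\gamma=\del^{12}/(1+\del)^y$ relative to the work $(1+\del)^{j_1}$ and the density stratification, or (b) a refinement of the construction that distributes some machines fractionally across configurations with adjacent $t_r$-values (permitted for slow and light configurations) so that the hypothetical density-$r$ totals average back to the actual ones; in either case the aggregate hypothetical $U$-cost is shown not to exceed the aggregate actual $U$-cost, yielding the required inequality.
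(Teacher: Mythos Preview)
Your overall strategy---encode $OPT$ as a feasible solution to $\Pi$ and invoke optimality of $(X^*,Y^*)$---is exactly the paper's approach, and your verification of constraints (\ref{2nd}), (\ref{3rd}), (\ref{5th}) is fine. The difficulty you flag at the end is real, but it is self-inflicted by a wrong choice in the construction rather than an intrinsic obstacle.

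You set $t_r(C_i)$ so that the actual density-$r$ small total lies in $((t_r-1)\gamma(1+\del)^{j_1-1},t_r\gamma(1+\del)^{j_1-1}]$, i.e.\ you take the ceiling. This forces $cost(C_i)$ to use a block size \emph{larger} than the true one, and as you note, the $\Gamma$-value formula of Claim~\ref{gammaval} then inflates both that block and all later blocks. Since the theorem asks for the exact inequality $Z^*\leq Z^{OPT}$, your fix~(a) (bounding the inflation) would at best yield $Z^*\leq(1+O(\del))Z^{OPT}$, which is not the stated claim. Your fix~(b) (fractional splitting across adjacent $t_r$-values) is unavailable for heavy configurations, whose $X_C$ must be integral.

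The paper instead sets $t_r(C)=\lfloor W_r/(\gamma(1+\del)^{j_1(C)-1})\rfloor$, i.e.\ rounds \emph{down}. This is the point of the ``$+1$'' on the right-hand side of constraint~(\ref{4th}): with the floor, the true small total $W_r$ still satisfies $W_r<(t_r(C)+1)\gamma(1+\del)^{j_1(C)-1}$, so (\ref{4th}) holds; and since $cost(C)$ is now computed from block sizes $t_r(C)\gamma(1+\del)^{j_1(C)-1}\leq W_r$, monotonicity of the $U$-cost immediately gives $cost(C_i)\leq$ (actual $U$-cost of machine $i$). The second MILP term is exactly $\sum_{\text{small }j} w_jp_j/(2s_i)$ in both cases, so the MILP objective is at most $Z^{OPT}$ with no further accounting. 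The only thing one must still check is that the floored configuration remains valid (its approximate total size stays in $((1+\del)^{j_1-2},(1+\del)^{j_1}]$); the paper does this using $(y+1)\gamma\leq\del^{1503}$.
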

\begin{proof}
We define an integral solution for $\Pi$ that is based on $OPT$.
Given machine $1 \leq \lambda \leq m$, we define its configuration
$C$ as follows. Given the work $W$ of the machine, let
$j_1=j_1(C)=\lceil \log_{1+\del} W \rceil$. Given the speed
$s_{\lambda}$, let $j_2=j_2(C)=\log_{1+\del} s_{\lambda}$. Let
$j_3=j_3(C)=j_1-j_2$. Recall that for every possible subset of
input jobs whose total size does not exceed
$\frac{2s_{\lambda}}{\del}$, the integer $j'$, such that the
interval $((1+\del)^{j'-1},(1+\del)^{j'}]$ contains their total
size, is one of the options for the first component of
configurations unless
$(1+\del)^{j_3-2} \geq \frac{2}{\del}$, which does not hold since
in $OPT$ we have $W \leq
\frac{2s_{\lambda}}{\del}$ and $(1+\del)^{j_3-2} \geq \frac{2}{\del}$ implies $(1+\del)^{j_1-2} \geq \frac{2s_{\lambda}}{\del} \geq W > (1+\del)^{j_1-1}$, a contradiction. Thus, the value $j_1(C)$ as defined
above is a valid first component for a configuration, and the
second component $j_2(C)$ is valid since $(1+\del)^{j_2(C)} \in
J$. Let $\I(C)$ be defined as above ($\I(C)$ contains all $i$ such
that $j_1(C)-1+\lceil \log_{1+\del}\gamma \rceil\leq i \leq
j_1(C)$). For any $i \in \I(C)$ and $0\leq r \leq y$, the value
$n_{r,i}(C)$ is defined to be the exact number of jobs of density
$(1+\del)^r$ and size $(1+\del)^i$ assigned to $\lambda$ in $OPT$. The
remaining jobs assigned to $\lambda$ are obviously small (as there
cannot be a job of size above $W$ assigned to $\lambda$, and $(1+\del)^{j_1}\geq W$). The
total size of the remaining jobs assigned to $\lambda$ is
$W-\sum_{0 \leq r \leq y, i \in \I(C)} (1+\del)^i n_{r,i}(C) \leq
(1+\del)^{j_1(C)}$, and the size of each such job is below
${\gamma}(1+\del)^{j_1(C)-1}<{\gamma}W$. For every $0\leq r \leq
y$, let $W_r$ denote the total size of (small) jobs whose density
is $(1+\del)^r$ and sizes below ${\gamma}(1+\del)^{j_1(C)-1}$ that are assigned to $\lambda$. Let
$t'_r(\lambda)=\frac{W_r}{{\gamma}(1+\del)^{j_1(C)-1}}$ and
$t_r(C)=\lfloor t'_r(\lambda)\rfloor$ (and therefore  $t_r(C) \leq t'_r(\lambda) <
t_r(C)+1$). The values $n_{r,i}$ and $t_r(C)$ are non-negative
integers that do not exceed
$\frac{W}{{\gamma}(1+\del)^{j_1(C)-1}}\leq
\frac{1+\del}{{\gamma}}$. Thus, the components of the configuration vector are such that if $C$ it is valid with respect to the approximate total size of items, then $C \in \C$ .
We now show that the resulting
configuration is indeed a valid configuration, by calculating
$\sum_{0 \leq r \leq y, i \in \I(C)} (1+\del)^i
n_{r,i}(C)+\sum_{r=0}^y t_r(C)({\gamma}(1+\del)^{j_1(C)-1})$. We
have $$\sum_{0 \leq r \leq y, i \in \I(C)} (1+\del)^i
n_{r,i}(C)+\sum_{r=0}^y t'_r(\lambda)({\gamma}(1+\del)^{j_1(C)-1})=W \ ,
$$ and $0 \leq \sum_{r=0}^y
(t'_r(\lambda)-t_r(C))({\gamma}(1+\del)^{j_1(C)-1})\leq
(y+1)({\gamma}(1+\del)^{j_1(C)-1})$. Using $y+1 \leq
\frac{1}{\del^9}$, $\gamma \leq \del^{1512}$, the last expression is at most
${\del^{1503}}(1+\del)^{j_1(C)-1}$. We get $$(1+\del)^{j_1(C)}
\geq W \geq \sum_{0 \leq r \leq y, i \in \I(C)} (1+\del)^i
n_{r,i}(C)+\sum_{r=0}^y t_r(C)({\gamma}(1+\del)^{j_1(C)-1}) $$
$$\geq (1+\del)^{j_1(C)-1}(1-\del^{1503})> (1+\del)^{j_1(C)-2}
\ ,
$$ since $(1-\del^{1503})(1+\del) > 1$ for $\del \leq \frac 18$.
After defining the configuration of each machine, we count the number of machines with each configuration $C'$, and
let the variable $X_{C'}$ be the number of machines with configuration $C'$.
The
values $Y_{r,i,C}$ are simply defined by counting the numbers of corresponding jobs.
Moreover, the objective function value is no larger than the cost of the
schedule since the blocks of small jobs may have smaller sizes in the configurations compared to the original schedule (as the values $t_r(C)$ are rounded down versions of the values $t'_r(\lambda)$), and the total sizes of jobs in these blocks were rounded down in the construction of the corresponding configurations (from $t'_r(\lambda)$ to $t_r(C)$ for density $r$ and configuration $C$).
For the machine of speed $1$ whose work is in $[\frac
1{1+\del},1)$, its configuration $\hat{C}$ has $j_1({\hat{C}})=0$ or $j_1({\hat{C}})=-1$,
and $j_2({\hat{C}})=0$, so $x_{\hat{C}}\geq 1$, and condition
(\ref{5th}) is satisfied since all variables $X_{C'}$ are
non-negative.  Thus, for each speed $s=\sigma_i$, the number of
configurations $C'$ where $j_2(C')=\log_{1+\del} s$ is exactly
$N_i$, and condition (\ref{2nd}) is satisfied. Condition
(\ref{3rd}) is satisfied since each job is either counted in some $Y_{r,i,C}$ if it is small for its configuration, or it is counted since for its configuration $C$, $n_{r,i}(C) \cdot X_C$ is exactly the number of jobs of size $(1+\del)^i$ and density $(1+\del)^r$ that are scheduled as large jobs on machines whose configuration is $C$.
Finally, condition (\ref{4th}) is satisfied
since the total size of jobs of density $(1+\del)^r$ assigned as
small jobs to machine $\lambda$ whose configuration is $C$ is exactly
$t'_r(\lambda)  \gamma  (1+\del)^{j_1(C)-1} $, and $t'_r(\lambda) \leq
t_r(C)+1$, and thus it is at most $(t_r(C)+1)\cdot \gamma(1+\del)^{j_1(C)-1}$.  Summing up the last inequality over all machines with configuration $C$ results in the bound of condition (\ref{4th}).
\end{proof}

\begin{theorem}
There exists a schedule whose cost is at most $(1+\del)Z^*$, and such a schedule can be constructed in polynomial time from $(X^*,Y^*)$.
\end{theorem}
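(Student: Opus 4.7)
The plan is to convert the MILP solution $(X^*,Y^*)$ into an actual schedule via a staged rounding procedure followed by an absorption step on the fastest machine, and to bound the cost overhead at each stage. First, for each configuration $C$ whose variable is forced to be integral---namely every heavy configuration, all of which live on fast machines---open exactly $X^*_C$ machines with configuration $C$ on machines of the correct speed, which is feasible by constraint~(\ref{2nd}). Then, for each configuration $C$ with fractional $X^*_C$ (a light configuration on a fast machine, or any configuration on a slow machine), open $\lfloor X^*_C\rfloor$ machines with configuration $C$. On every open machine of configuration $C$, place the $n_{r,i}(C)$ large jobs of each $(r,i)$ as prescribed. For the small jobs, for each triple $(r,i,C)$ distribute the $Y^*_{r,i,C}$ jobs of size $(1+\del)^i$ and density $(1+\del)^r$ integrally and as evenly as possible among the $\lfloor X^*_C\rfloor$ open machines of configuration $C$; by constraint~(\ref{4th}) each such machine then carries density-$r$ small jobs of total size at most $(t_r(C)+1)\gamma(1+\del)^{j_1(C)-1}$, plus at most one additional such job of size at most $\gamma(1+\del)^{j_1(C)-1}$ coming from the integer round-up.

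The jobs left unplaced after this stage come from two sources: (i) the allocations to the fractional parts of the $X^*_C$ that we rounded down, and (ii) the integer-rounding residuals of the $Y^*$ values. All such residual jobs are placed to run last on the fastest machine, whose existence (with work close to $1$, so $j_1\in\{-1,0\}$) is guaranteed by constraint~(\ref{5th}). Within each machine we then order jobs by Smith's rule---non-increasing density, ties broken by non-increasing size---which is optimal for the set assigned to the machine.

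The main technical obstacle is bounding the resulting total cost by $(1+\del)Z^*$. The cost incurred on each machine with configuration $C$ equals its $U$-cost at $U(C)=\gamma(1+\del)^{j_1(C)-1}$ (which already accounts exactly for the large-job contribution) plus the conversion terms $w_jp_j/(2s_i)$ that appear explicitly in the MILP objective, modulo the fact that the actual small-job block per density may exceed $t_r(C)\gamma(1+\del)^{j_1(C)-1}$ by up to $2\gamma(1+\del)^{j_1(C)-1}$. Summing over the $y+1$ densities, the per-machine extra work is at most $2(y+1)\gamma(1+\del)^{j_1(C)-1}\le 2\del^{3}(1+\del)^{j_1(C)-1}$, which can shift the completion times of later blocks on the machine by at most a $(1+O(\del^{3}))$ factor and therefore inflates the per-machine cost by only a tiny fraction; summed over all machines this contributes only $O(\del^{3})Z^*$. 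The residual jobs absorbed by the fastest machine carry a total work bounded by a polynomial-in-$n$ factor times $(1+\del)^{g(\del)}$, and since $g(\del)=-(1/\del)^{1/\del^{300}}$ makes $(1+\del)^{g(\del)}$ astronomically smaller than any polynomial in $n$ once $\del$ is sufficiently small, the relative perturbation of the fastest machine's load (already at least $1/(1+\del)^{3}$) is negligible and contributes at most $\del\cdot Z^*$ to the overall cost. Combining these two bounds yields a total cost of at most $(1+\del)Z^*$.

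Finally, the construction runs in polynomial time: the MILP has only $O_\del(1)$ integer variables and is therefore polynomial-time solvable via Lenstra's or Kannan's algorithm~\cite{Len83,Kan83}, while the opening, small-job distribution, absorption, and sorting steps are all direct and fast.
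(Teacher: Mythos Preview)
Your overall architecture---round down fractional $X^*_C$, allot large and small jobs to the opened machines, and absorb all leftovers onto a fastest machine guaranteed by constraint~(\ref{5th})---matches the paper's plan. However, your bound on the residual work absorbed by the fastest machine is incorrect as stated, and this is the crux of the whole argument.

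You write that the residual work is ``a polynomial-in-$n$ factor times $(1+\del)^{g(\del)}$'' and that this is negligible because ``$(1+\del)^{g(\del)}$ is astronomically smaller than any polynomial in $n$ once $\del$ is sufficiently small.'' But $\del$ is a \emph{fixed} accuracy parameter; it does not shrink with $n$. For fixed $\del$, the quantity $(1+\del)^{g(\del)}$ is just some positive constant, while a polynomial in $n$ grows without bound. So $\mathrm{poly}(n)\cdot(1+\del)^{g(\del)}$ is \emph{not} bounded, and in particular cannot be made $\le\del Z^*$ uniformly in $n$. This single step invalidates the conclusion.

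What is needed instead---and what the paper does---is to show that the residual work is bounded by a function of $\del$ \emph{alone}, with no dependence on $n$ or $m$. The key observation is that although there are polynomially many light configurations, the relevant quantity is $\sum_{C\in\C_L}(1+\del)^{j_1(C)}$, and this sum is controlled geometrically: for each speed there are at most $(\frac{1}{\del})^{1/\del^{230}}$ configurations per value of $j_1$, and summing $(1+\del)^{j_1}$ over $j_1\le g(\del)$ (fast speeds) or over $j_1$ with $(1+\del)^{j_1-2}\le\frac{2s}{\del}$ (slow speeds $s\le(1+\del)^{f(\del)}$) gives a convergent geometric series. Summing over the at most $-f(\del)$ fast speeds and then geometrically over slow speeds, one obtains a bound of the form $(-g(\del)+O(\del^{-2}))\cdot(1+\del)^{g(\del)}\cdot(\frac{1}{\del})^{1/\del^{232}}$, and the choice $g(\del)=-(1/\del)^{1/\del^{300}}$ is made precisely so that this expression is at most $\gamma$. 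Only with such an $n$-free bound can you legitimately compare the residual contribution against $cost(C')\ge (1+\del)^{-6}/2$ and conclude it is $O(\del^{12})Z^*$.

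A secondary point: the paper rounds the $Y^*$ values \emph{up} for heavy configurations (setting $Y'_{r,i,C}=\lceil Y^*_{r,i,C}\rceil$) and bounds the resulting cost \emph{increase} on those machines, rather than rounding down and tracking residuals. Your alternative of rounding down and absorbing could be made to work, but you would still need the $n$-free geometric bound on the light side, and on the heavy side you would need to argue that the per-$(r,i,C)$ rounding residuals (one job per triple, summed over $i\in\I'(C)$) again form a geometric series in $i$ of total size $O(\gamma(1+\del)^{j_1(C)})$ per heavy configuration, of which there are only $O_\del(1)$.
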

\begin{proof}
We start with constructing an alternative set of values of the variables and bounding the resulting cost from above. We will later convert this alternative solution (that may violate some of the constraints of $\Pi$) into a schedule.
For every $C \in \C$, let $X'_C=\lfloor X^*_C \rfloor$. If $X^*_C=0$, then  we set $X'_C=0$. In this case $Y^*_{r,i,C}=0$ must hold for all $r,i$ by (\ref{4th}), and we set $Y'_{r,i,C}=0$. If $X^*>0$, we consider two cases. If $C \in
C_H$, then the variables of configurations are integral, and therefore $X'_C=X^*_C$. In this case we set
$Y'_{r,i,C}=\lceil Y^*_{r,i,C} \rceil$ for any $0 \leq r \leq y$,
$i \in \I'(C)$. Otherwise, in the case $C \in C_L$, let
$Y'_{r,i,C}=\left\lfloor \frac{X'_C}{X^*_C} \cdot Y^*_{r,i,C} \right\rfloor$.
Using these definitions, in the case $Y'_{r,i,C}>0$, if $C \in C_H$, then we have $Y'_{r,i,C} \leq
Y^*_{r,i,C}+1$, and otherwise we have $Y'_{r,i,C} \leq \frac{X'_C}{X_C^*}Y^*_{r,i,C}$.
Moreover, if $C\in C_L$, then we have $Y'_{r,i,C} >
\frac{X'_C}{X^*_C} \cdot Y^*_{r,i,C} -1 >
Y^*_{r,i,c}-\frac{Y^*_{r,i,C}}{X^*_C}-1$, by $X^*_C < X'_C+1$ (that holds even if $X^*_C=0$). Using condition (\ref{4th}),
for every $C \in C_L$ such that $X^*_C>0$ and $0 \leq r \leq y$, we have $$\sum_{i\in \I'(C)} (1+\del)^i
Y'_{r,i,C} \leq \sum_{i\in \I'(C)} (1+\del)^i \frac{X'_C}{X^*_C}
Y^*_{r,i,C} \leq \frac{X'_C}{X^*_C}
(t_r(C)+1){\gamma}(1+\del)^{j_1(C)-1} X^*_C
$$ $$=(t_r(C)+1){\gamma}(1+\del)^{j_1(C)-1} X'_C \ . $$ Similarly, for
every $C \in C_H$  such that $X^*_C>0$ and $0\leq r \leq y$, we have $$\sum_{i\in \I'(C)} (1+\del)^i
Y'_{r,i,C} \leq \sum_{i\in \I'(C)} (1+\del)^i (Y^*_{r,i,C}+1) \leq
(t_r(C)+1){\gamma}(1+\del)^{j_1(C)-1} X^*_C+\sum_{i\in \I'(C)}
(1+\del)^i $$ $$\leq
(t_r(C)+2+\frac{2}{\del}){\gamma}(1+\del)^{j_1(C)-1} X'_C$$ as
$X'_C=X^*_C \geq 1$, and $\sum_{i\in \I'(C)} (1+\del)^i \leq
{\gamma}(1+\del)^{j_1(C)-1}\sum_{i=0}^{\infty}
\frac{1}{(1+\del)^i}={\gamma}(1+\del)^{j_1(C)-1} \cdot
\frac{1+\del}{\del}$. If $X^*_C=0$, then the corresponding condition (the first one if $C \in C_L$, and the second one if $C \in C_H$) hold trivially, as all variables are equal to zero.

Since $X'_C \leq X^*_C$ for $C \in \C$, and $Y'_{r,i,C} \leq
Y^*_{r,i,C}$ for all $C\in C_L$, $0 \leq r \leq y$, and $i\in I'(C)$, the objective function value
for these variables is at most $Z^*$ plus $$\sum_{r=0}^y\sum_{i
\in \I}(1+\del)^{r+2i} \sum_{C \in C_H: i \in \I'(C)}
\frac{Y'_{r,i,C}-Y_{r,i,C}^*}{2(1+\del)^{j_2(C)}} \leq \sum_{r=0}^y\sum_{C \in C_H}\sum_{i
\in \I'(C)}(1+\del)^{r+2i}
\frac{X^*_C}{2(1+\del)^{j_2(C)}} \ , $$ since for $C \in C_H$ it holds that
$Y'_{r,i,C}=0$ if $X^*_C=0$, and otherwise $Y'_{r,i,C}
-Y^*_{r,i,C}=1 \leq X^*_C$. Next, we show that $\sum_{r=0}^y\sum_{i \in
\I'(C)}(1+\del)^{r+2i}  \frac{1}{2(1+\del)^{j_2(C)}} \leq \del^{100}
cost(C)$ for any $C \in C_H$, and thus we will conclude that the objective function value of the set of
variables $X'_C$, $Y'_{r,i,C}$ is at most $(1+\del^{100})Z^*$. Note that the solution that we consider is not a feasible solution for $\Pi$, as (for example) condition (\ref{3rd}) does not necessarily hold as it is possible that
 $\sum_{C\in\C}n_{r,i}(C)X'_C+\sum_{C \in \C} Y'_{r,i,C} \neq n_{r,i}$, still we can bound its objective function value using the objective function value of the optimal (and feasible) solution. We have $cost(C) \geq \frac{(1+\del)^{2j_1(C)-4}}{2(1+\del)^{j_2(C)}}$ (by Lemma \ref{lboncost}, since $cost(C)$ is computed for jobs of total size above $(1+\del)^{j_1(C)-2}$, and the jobs densities are no smaller than $1$).
Let $i'=\max \I'(C)$. We have $(1+\del)^{2i'} \leq (\gamma
(1+\del)^{j_1(C)-1})^2$ and  $\sum_{i \leq i'} (1+\del)^{2i} <
(1+\del)^{2i'+2}/\del$. Additionally,  $\sum_{r=0}^{y} (1+\del)^{r}
< (1+\del)^{y+1}/\del$. For a given $C \in C_H$, we get $$\sum_{r=0}^y\sum_{i \in
\I'(C)}(1+\del)^{r+2i}  \frac{1}{2(1+\del)^{j_2(C)}} \leq
\frac{(1+\del)^{y+2i'+3}}  {2\del^2(1+\del)^{j_2(C)}} \leq
\frac{(1+\del)^{y+3}(\gamma (1+\del)^{j_1(C)-1})^2}
{2\del^2(1+\del)^{j_2(C)}} $$ $$= \frac{(1+\del)^{y+3}
\frac{\del^{24}}{(1+\del)^{2y}} (1+\del)^{2j_1(C)-2}}
{2\del^2(1+\del)^{j_2(C)}} \leq cost(C) \cdot \del^{22}(1+\del)^{5-y}
\leq \del^{100} cost(C)$$ (since $(1+\del)^y \geq
\frac{1}{\del^{1500}}$). The first increase in the cost  due to the transformation from $(X^*,Y^*)$ to $(X',Y')$ is therefore by an
additive factor of at most $\del^{100} Z^*$.

We let $n'_{r,i}=n_{r,i}-\sum_{C\in\C}n_{r,i}(C)X'_C-\sum_{C \in \C} Y'_{r,i,C}$ and if $n'_{r,i}>0$, then say that $n'_{r,i}$ is the number of unassigned jobs of size $(1+\del)^i$ and density $(1+\del)^r$. These jobs will be called {\it unassigned jobs}, as they will remain unassigned also after the first assignment step.
The first assignment step will be to assign jobs according to the
configurations for which $X'_C>0$, such that there will be $X'_C$
machines whose sets of large jobs will be as required (based on the definition of $C$). We will
then assign $Y'_{r,i,C}$ jobs of size $(1+\del)^i$ and density
$(1+\del)^r$ to the machines whose configuration is $C$ (these are
small jobs for $C$). In order to ensure that all jobs can be
scheduled, for each machine that is assigned the configuration
$C$, additionally to total size of at most
$t_r(C){\gamma}(1+\del)^{j_1(C)-1}$ jobs of density $(1+\del)^r$
that are small jobs for configuration $C$ that the machine can
contain, such jobs of total size at most
$\frac{3{\gamma}}{\del}(1+\del)^{j_1(C)-1}$ are allocated to this machine. These
jobs are called {\it additional jobs}, and we will calculate the
increase in the total cost as a result. The unassigned jobs will
all be assigned to a machine of speed $1$ whose configuration $C'$
has $j_1(C')=0$ or $j_1(C')=-1$. We will compute an upper bound on the total size
of these jobs that will allow us to find an upper bound on the
increase in the cost.

Consider the machines whose configuration is $C$. For every $0\leq
r \leq y$, create $X'_C$ {\it bins} of size
$t_r(C){\gamma}(1+\del)^{j_1(C)-1}$ (these bins are called bins of the first kind), and $X'_C$ bins of size
$\frac{3{\gamma}}{\del}(1+\del)^{j_1(C)-1}$ (these bins are called bins of the second kind); if $t_r(C)=0$, then we introduce
only the second kind of bins. The additional jobs are those that are packed into bins of the second kind. We define the allocation of jobs to machines by packing them as items into these bins. If the the number of items of a certain type assigned using this packing process exceeds the existing number of jobs, then some of the spaces allocated in this process for items will not receive jobs.
For every $i \in \I'(C)$, pack
$Y'_{r,i,C}$ {\it items} of size $(1+\del)^i$ into these bins
using First Fit. We show that all items are packed. Assume by
contradiction that this is not the case. Since the size of each
item is below ${\gamma}(1+\del)^{j_1(C)-1}$ and there is an item
that cannot be packed, each bin of the first kind (if such a bin exists)
is occupied by at least $(t_r(C)-1){\gamma}(1+\del)^{j_1(C)-1}$,
and each bin of the second kind is occupied by at least
$(\frac{3}{\del}-1){\gamma}(1+\del)^{j_1(C)-1}$. We find that the total size of
items of density $(1+\del)^r$ that are to be packed into these
bins as small, which is equal to $\sum_{i \in \I'(C)} (1+\del)^i
Y'_{r,i,C}$ is above $(t_r(C)+\frac{3}{\del}-2){\gamma}(1+\del)^{j_1(C)-1}X'_C$,
contradicting the upper bound that we proved on $\sum_{i \in \I'(C)} (1+\del)^i Y'_{r,i,C}$, since $\frac{3}{\del}-2 > 2+ \frac 2{\del}$ (since $\del \leq \frac{1}{8}$). The increase in the cost
of each machine (since the total size of small jobs of density $(1+\del)^r$ for each $0\leq r\leq y$ becomes larger) can be upper bounded as follows. A possible
schedule is obtained by assigning the jobs of the bins of the
first kind as a block of size at most
$t_r(C){\gamma}(1+\del)^{j_1(C)-1}$, and the jobs of the bins
of the second kind as a block of the last jobs assigned to the
machine (this is possibly not an optimal ordering). For a job $j$
that is small for its configuration the difference between its
cost and $\Gamma_j$ is already included in the objective function
value (since it is assigned as a small job), and thus we compute the total $\Gamma$-values of the added blocks.
Instead of considering these blocks (for different values of $r$) separately, we see it as one block assuming that all densities are equal to $(1+\del)^y$ (this assumption cannot reduce the cost). We
assume without loss of generality that $j_2(C)=0$. Let $W= \sum_{0 \leq r \leq y, i \in \I(C)} (1+\del)^i
n_{r,i}(C)+\sum_{r=0}^y t_r(C)({\gamma}(1+\del)^{j_1(C)-1})$. The total
size of the blocks of small jobs added to this machine is at most
$\frac{3(y+1){\gamma}}{\del}(1+\del)^{j_1(C)-1}$, and thus, by Claim \ref{gammaval}, the sum of
$\Gamma$-values for the all these blocks is at most
$(1+\del)^y(W+3(y+1){\gamma}(1+\del)^{j_1(C)-1}/(2\del))(3(y+1){\gamma}(1+\del)^{j_1(C)-1}/\del)$.
By Lemma \ref{lboncost}, we have $cost(C) \geq W^2/2$, and $W \geq
(1+\del)^{j_1(C)-2}$, and get that the total $\Gamma$-values of the added blocks is at most
$$\frac{(1+\del)^y(W+3(y+1){\gamma}(1+\del)^{j_1(C)-1}/(2\del))(3(y+1){\gamma}(1+\del)^{j_1(C)-1})}{\del}$$ $$\leq
3(y+1)\del^{11}W(1+{1.5(y+1)(1+\del)\gamma})(1+\del)W$$
$$\leq 3(1+\del)\del^2 \cdot W^2 (1+\del^{1400}) < 7\del^2
\cdot W^2/2$$ as $\del \leq 1/8$, $y+1 \leq \frac{1}{\del^9}$, and
$\gamma \leq \del^{1512}$. Thus, the second increase in the cost
is by an additive factor of at most $7\del^2\cdot Z^*$.

Recall that $n'_{r,i}$ is the number of unassigned jobs of size
$(1+\del)^i$ and density $(1+\del)^r$, and for $C \in
C_H$, $X'_C=X^*_C$ and $Y'_{r,i,C}\geq Y^*_{r,i,C}$. Using (\ref{3rd}), we have
$$n'_{r,i} = n_{r,i}-\sum_{C\in C_H}n_{r,i}(C)X'_C - \sum_{C \in C_H
: i \in \I'(C)} Y'_{r,i,C} -\sum_{C\in C_L}n_{r,i}(C)X'_C - \sum_{C \in C_L
: i \in \I'(C)} Y'_{r,i,C}
$$

$$
\leq n_{r,i}-\sum_{C\in C_H}n_{r,i}(C)X^*_C - \sum_{C \in C_H
: i \in \I'(C)} Y^*_{r,i,C} -
\sum_{C\in
C_L}(n_{r,i}(C)(X^*_C-1))$$ $$-\sum_{C \in C_L : i \in \I'(C), X^*_C>0}
(Y^*_{r,i,C}-\frac{Y^*_{r,i,C}}{X^*_C}-1)=\sum_{C\in
C_L}n_{r,i}(C)+\sum_{C \in C_L : i \in \I'(C), X^*_C>0}
(\frac{Y^*_{r,i,C}}{X^*_C}+1) \ . $$ The total size of unassigned
items is at most $$\sum_{r=0}^y \sum_{C\in C_L} \sum_{i \in \I(C)}
(1+\del)^i n_{r,i}(C)+\sum_{r=0}^y \sum_{C \in C_L : X^*_C>0}\sum_{ i \in
\I'(C)} (1+\del)^i (\frac{Y^*_{r,i,C}}{X^*_C}+1) \ . $$ Using
$\sum_{i\in \I'(C)} (1+\del)^i  \frac{Y^*_{r,i,C}}{X^*_C} \leq
(t_r(C)+1){\gamma}(1+\del)^{j_1(C)-1}$ for $C \in C_L$ such that $X^*_C>0$, the total
size of unassigned jobs is at most
 $$\sum_{r=0}^y \sum_{C\in C_L} \left((t_r(C)+1){\gamma}(1+\del)^{j_1(C)-1}+\sum_{i \in \I(C)} (1+\del)^i n_{r,i}(C)\right)+\sum_{r=0}^y \sum_{C
\in C_L}\sum_{ i \in \I'(C)} (1+\del)^i \ . $$
Using $\sum_{0 \leq r \leq y, i \in \I(C)} (1+\del)^i n_{r,i}(C)+\sum_{r=0}^y t_r(C)({\gamma}(1+\del)^{j_1(C)-1}) \leq (1+\del)^{j_1(C)}$
and $\sum_{ i \in \I'(C)} (1+\del)^i \leq \gamma(1+\del)^{j_1(C)} /\del$ for any $C \in \C$,
the total size of unassigned jobs is at most $$\sum_{C \in C_L}  (1+\del)^{j_1(C)} + \sum_{r=0}^y \sum_{C\in C_L} {\gamma}(1+\del)^{j_1(C)-1}+\sum_{r=0}^y \sum_{C
\in C_L} \gamma(1+\del)^{j_1(C)} /\del <  \sum_{C \in C_L} 2(1+\del)^{j_1(C)} \ . $$ since $(1+(y+1)\gamma(1/\del+1))\leq 1+ \frac{1}{\del^9} \cdot \del^{1500} \cdot \frac{2}{\del}  <2$.

We will show in what follows that the definition of $g(\del)$ is such that the total size of unassigned jobs is indeed at most $2\gamma$.
The jobs are assigned to a machine with a configuration $C'$ such that $j_1(C') \in \{0,-1\}$, its completion time of this machine before any modifications is at most $1$, and after the modifications it is at most $1+\frac {3}{\del} \gamma (y+1)<1+\del^{1400}$ (since ${3} \gamma (y+1)/\del\leq \frac{3}{\del} \cdot \del^{1500} \frac 1{\del^9} < \del^{1400}$).
Even if the density of each such job is $(1+\del)^{y}$, and the unassigned jobs are now assigned as a block on the most loaded machine of speed $1$ starting at time $1+\del^{1400}$, the sum of their weighted completion times will be at most $2\gamma(1+\del)^y(1+\del^{1400}+2\gamma)<2\del^{12}(1+\del^{1399})<3\del^{12}$, while the value of $cost(C')$ is at least $(\frac {1}{1+\del})^6/2 > \frac 15$, so the increase in the cost is at most by an additive factor of $15\del^{12}\cdot Z^*$. The total cost will be at most $Z^*(1+\del^{100}+7\del^2+15\del^{12})<(1+\del)Z^*$.

Consider a speed $s$. If $s$ is a fast speed, then for every $j_1$
such that $j_1 \leq g(\del)$
 there are at most $(\frac{1}{\del})^{\frac{1}{\del^{230}}}$ light configurations (and the others are heavy), and we get $\sum_{C \in C_L : j_2(C)=s} (1+\del)^{j_1(C)} \leq (\frac{1}{\del})^{\frac{1}{\del^{230}}}(1+\del)^{g(\del)}\sum_{i=0}^{\infty} \frac{1}{(1+\del)^i} \leq (1+\del)^{g(\del)+1}(\frac{1}{\del})^{\frac{1}{\del^{231}}}$.
If $s$ is a slow speed, all the configurations $C$ with
$j_2(C)=\log_{1+\del} s$ are light, but every such configuration
$C$ satisfies $(1+\del)^{j_1(C)-2} \leq \frac{2s}{\del}$, and we
get that the corresponding sum is at most  $
\frac{2s(1+\del)^2}{\del}(\frac{1}{\del})^{\frac{1}{\del^{230}}}$.
Taking the sum over all $s$ such that $s$ is a slow speed we get
$$
\frac{2(1+\del)^{f(\del)+3}}{\del^2}\left(\frac{1}{\del}\right)^{\frac{1}{\del^{230}}}=
\frac{2(1+\del)^{g(\del)-\frac{2}{\del^2}+3}}{\del^2}\left(\frac{1}{\del}\right)^{\frac{1}{\del^{230}}}
\leq
\frac{(1+\del)^{g(\del)+3}}{\del}\left(\frac{1}{\del}\right)^{\frac{1}{\del^{230}}}
\ , $$ since $(1+\del)^{2/\del^2} \geq 2/\del$. In total, taking into account all $-f(\del)=-g(\del)+2/\del^2$ fast speeds and all slow speeds, we find
$\sum_{C \in C_L} 2(1+\del)^{j_1(C)} \leq
(2/\del^2-g(\del))(1+\del)^{g(\del)}(\frac{1}{\del})^{\frac{1}{\del^{232}}}$.
Recall that $g(\del)=-(\frac{1}{\del})^{\frac{1}{\del^{300}}}$.
Let $G(\del)=-g(\del)$. We find $G(\del)\geq
(\frac{1}{\del})^{\frac{1}{\del^{299}}}\cdot \frac{1}{\del^4}$.
Using
$(1+\del)^\frac{1}{\del^4}=((1+\del)^\frac{1}{\del^2})^{1/\del^2}
\geq (\frac 1{\del})^{1/\del^2}$, we get $(1+\del)^{G(\del)} \geq
((\frac 1{\del})^{1/\del^2})^{1/\del^{299}}=(\frac
1{\del})^{1/\del^{301}}=G(\del)^{1/\del} \geq G(\del)^8$. Since
$2/\del^2 < G(\del) < (\frac 1{\del}-1) G(\del)$, and since
$\gamma \geq \del^{\frac1{\del^7}}$, it is sufficient to show
$G(\del)^7 \geq (\frac 1{\del})^{\frac 1{\del^{235}}}$, which
holds for our choice of $g(\del)$.
\end{proof}

In summary, in this section we provided an
$(1+\del)$-approximation algorithm for the bounded ratio problem.
The algorithm tests all possible intervals $D_{j,b}$, and for each
one it constructs the MILP $\Pi$, finds an optimal solution,
and if a solution exists, converts it into a schedule. The running
time of this last conversion is $O(m+n)$, and there are
$O(\frac{n^2}{\del})$ intervals. The number of configurations is
at most $n^2m(\frac{1}{\del})^{1/\del^{231}}$, which is an upper
bound on the number of variables $X_C$. The number of variables
$Y_{r,i,C}$ is therefore at most
$n^3m(\frac{1}{\del})^{1/\del^{232}}$ since the number of values
for $i$ is at most $n$ and the number of values of $r$ is at most
$\frac{1}{{\del}^{9}}$. The number of variables $X_C$ that are
integral variables is at most
$(-f(\del))(-f(\del)+1)(\frac{1}{\del})^{1/\del^{230}}$, and since
$-f(\del)=-g(\del)+\frac2{\del^2}=(\frac{1}{\del})^{\frac{1}{\del^{300}}}+\frac{2}{\del^2}$,
so $(-f(\del)+1)^2 \leq (\frac{1}{\del})^{\frac{1}{\del^{301}}}$,
and $(-f(\del))(-f(\del)+1)(\frac{1}{\del})^{1/\del^{230}} \leq
(\frac{1}{\del})^{1/\del^{301}}$. The number of constraints
(excluding non-negativity constraints for all variables) is
$O(m+n^2m(\frac{1}{\del})^{1/\del^{231}})$.

\section{An EPTAS with release dates\label{sec:release_dates}}
We generalize the EPTAS of the previous section for the case with release dates. In this case, in the original instance $A$, a job $j$ has a time $\rho_j \geq 0$ associated with it, such that $j$ cannot be executed before time $\rho_j$. A solution or schedule is not only a partition of the jobs
 among the machines, but it also states the completion time of each
 job, which is sufficient as the schedule is non-preemptive (an
 alternative way to define a schedule is by defining the starting
 times, but we will use the former option).
A schedule (or a valid schedule) must comply with the property that no machine runs more than one job at a time, except for times when one job completes its processing and another one starts its processing. We will assume in what follows that the time slot that a job runs on a machine is half open, that is, if $j$ runs on machine $i$, its time slot is $[C_j-\frac{a_j}{v_i},C_j)$. Note that an optimal schedule may have idle times.
Using scaling (sorting the machines and dividing all speeds and sizes by the largest machine speed), we assume that $1 = v_1\geq v_2 \geq \cdots \geq
v_m>0$. Let $0<\del \leq 1/36$ be an accuracy factor,
that is a function of $\eps$ (where $\del<\eps$), and  such that
$\frac{1}{\del}$ is an integer.

\subsection{Standard rounding steps}
As a first step, we convert $A$ into $A'$.
Let $a_{\min}=\min_{1 \leq j \leq n} a_j$.
Let $\beta=\del^2 a_{\min}$. The sets of jobs and machines are
the same as in $A$. Let $r'_j=(1+\del)^{\lceil \log_{1+\del}
(\rho_j+\beta) \rceil}$ be the re;ease date of job $j$. Note that $\rho_j+\beta >0$, so $r'_j$ is well-defined for any $j$ (even if $\rho_j=0$). Let
$s_i=(1+\del)^{\lfloor \log_{1+\del} v_i \rfloor}$ be the speed
of machine $i$ in instance $A'$. Let $w_j=(1+\del)^{\lceil
\log_{1+\del} \omega_j \rceil}$, $p_j=(1+\del)^{\lceil
\log_{1+\del} a_j \rceil}$ be the weight and size
(respectively) of job $j$ in instance $A'$.
Using the new notation, for a given schedule, the completion
time of $j$ is still denoted by $C_j$, and its weighted
completion time is $w_j\cdot C_j$. Note that $1=s_1 \geq s_2 >
\cdots \geq s_m>0$. Let $\cO$ and $\cO'$ denote optimal solutions for $A$ and $A'$
respectively.

\begin{claim}
\label{easydirection}
Every solution $SOL$ for $A'$ is also a solution for $A$, and $SOL(A)\leq SOL(A')$.
\end{claim}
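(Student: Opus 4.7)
The statement bundles two things: that a feasible schedule for $A'$ can be reinterpreted as a feasible schedule for $A$, and that its objective value does not grow under this reinterpretation. I would use $SOL$ as a combinatorial object -- the assignment of jobs to machines together with the ordering on each machine -- and then simulate it under the parameters of $A$, letting the completion times float to their natural values.

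For the first part, I would record three monotonicities coming directly from the rounding: $\rho_j \leq \rho_j + \beta \leq r'_j$, $a_j \leq p_j$, and $v_i \geq s_i$ for every job $j$ and machine $i$. These give in particular $a_j / v_i \leq p_j / s_i$, so processing any job is no slower in $A$ than in $A'$. I would then define the completion times in $A$ inductively on each machine: scan the jobs in the order prescribed by $SOL$, and set each job $j$ (on machine $i$) to start at the maximum of $\rho_j$ and the completion time of the preceding job on $i$, and to complete $a_j / v_i$ time units later. By construction this produces a non-preemptive schedule that respects the $A$-release dates and assigns disjoint (half-open) time intervals on each machine, so it is a valid solution of $A$.

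For the cost comparison I would show, by induction on the position of $j$ along its machine, that $C_j^{A} \leq C_j^{A'}$, where $C_j^A$ and $C_j^{A'}$ are the completion times of $j$ in the simulated schedule for $A$ and in $SOL(A')$ respectively. The base case is the first job on a machine, whose completion time in $A$ is $\max(\rho_j, 0) + a_j/v_i \leq \max(r'_j,0) + p_j/s_i = C_j^{A'}$. For the inductive step, with $j'$ the job immediately preceding $j$ on machine $i$,
\[
C_j^{A} \;=\; \max(\rho_j,\, C_{j'}^{A}) + \tfrac{a_j}{v_i}
\;\leq\; \max(r'_j,\, C_{j'}^{A'}) + \tfrac{p_j}{s_i}
\;=\; C_j^{A'},
\]
using the inductive hypothesis on $j'$ together with the three monotonicities. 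Combining this with $\omega_j \leq w_j$ gives $\omega_j C_j^{A} \leq w_j C_j^{A'}$ for every $j$, and summing over all jobs yields $SOL(A) \leq SOL(A')$.

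The only place where one has to be a bit careful is the interpretation of ``$SOL$ is a solution for $A$'': the starting times in $SOL(A')$ cannot simply be copied verbatim, because $\rho_j$ may be strictly smaller than $r'_j$ and the processing intervals in $A$ are shorter. Viewing $SOL$ as an assignment plus an ordering and letting the simulation under $A$ determine the completion times sidesteps this issue cleanly, and the induction above is then routine.
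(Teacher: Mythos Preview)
Your argument is correct, but it takes a detour the paper avoids. In this section a solution is \emph{defined} as an assignment of jobs to machines together with fixed completion times (not just an ordering), so ``$SOL$ is also a solution for $A$'' means literally that the same completion times $C_j$ remain feasible when the data are those of $A$. The paper's proof exploits this directly: with $C_j$ held fixed, the starting time of $j$ on machine $i$ under $A$ is $t = C_j - a_j/v_i$, while under $A'$ it is $t' = C_j - p_j/s_i$; since $a_j/v_i \le p_j/s_i$ one gets $t \ge t'$. Hence $t \ge t' \ge r'_j \ge \rho_j$ (release dates respected) and $t \ge t' \ge C_{j'}$ for any job $j'$ completing earlier on the same machine (no overlaps). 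Feasibility for $A$ is immediate, and $SOL(A) = \sum_j \omega_j C_j \le \sum_j w_j C_j = SOL(A')$ with no induction needed.

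Your closing remark that ``the starting times in $SOL(A')$ cannot simply be copied verbatim'' is therefore misplaced: in fact one can copy either the completion times (as the paper does) or the starting times, and feasibility for $A$ follows in one line because jobs are only shorter and release dates only earlier in $A$. By reinterpreting $SOL$ as an ordering and rebuilding the completion times you end up proving a slightly different statement (you construct a \emph{new} schedule for $A$, with smaller $C_j^A$, rather than showing the given $SOL$ itself is feasible), and you pay for it with an induction that the direct argument does not need.
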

\begin{proof}
Recall that a solution is defined by an allocation of jobs to
machines and to completion times. Thus, $C_j$ is the completion time of job $j$ no matter which input is considered ($A$ or $A'$), and since $w_j \geq \omega_j$ for
any job $j$, it is sufficient to show that $SOL$ is a valid
solution for $A$.

For job $j$ assigned to machine $i$ in $SOL$, let
$t$ be the starting time of $j$ in $SOL$ when the input $A$ is considered,
and let $t'$ be the starting time of $j$ in $SOL$ when the input $A'$ is considered. We will show that $t \geq t'$. Since $\rho_j \leq r'_j $, and $t' \geq r'_j$ by the validity of $SOL$ for $A'$, we will have $t \geq t' \geq r'_j \geq \rho_j$, so $j$ starts running after its release date in $SOL$ for $A$ as well. Moreover, since $t' \geq C_{j'}$ for any job $j'$ that is assigned to run on $i$ in $SOL$, such that the completion time of $j'$ satisfies $C_{j'}<C_j$, we will have $t \geq t' \geq C_{j'}$, and thus no overlaps between jobs assigned to the same machine exist in $SOL$ for the input $A$. These two properties will imply the validity of $SOL$ for $A$.

The speeds of $i$ for $A$ and $A'$ are $v_i$ and $s_i$ respectively. The sizes of $j$ for $A$ and $A'$ are $a_j$ and $p_j$, respectively. Thus we have
$C_j=t+\frac{a_j}{v_i}$ and $C_j=t'+\frac{p_j}{s_i}$. Since $p_j \geq a_j$ and $s_i \leq v_i$, $\frac{a_j}{v_i} \leq \frac{p_j}{s_i}$, and therefore
$t \geq t'$, as required.
\end{proof}

Given a solution $SOL$ defined for a given input, let the time-augmented solution $TA(SOL,\upsilon)$ for some $\upsilon>1$ and the same sets of jobs and machines be the solution where each job is assigned to the same machine as in $SOL$, and the {\it completion time} of each job is $\upsilon$ times its completion time in $SOL$. In the next claim we show that if $SOL$ is a solution for $A$, then  $TA(SOL,\upsilon)$ is a solution for $A$ as well.
A schedule $SOL''$ for $A'$ is called {\it timely} if for every job $j$, the starting time of $j$ is at least $\del \cdot \frac{p_j}{s_i}$, where $i$ is the machine that runs $j$ in $SOL''$.
For a solution $SOL$ for $A$,  we will be able to use
$TA(SOL,\upsilon)$ as a solution for $A'$ for certain values of
$\upsilon$. The next claim specifies cases such that this is
indeed a valid schedule for $A'$, and moreover it is a timely schedule.

\begin{claim}
\label{stretchedworks}
Given a schedule $SOL$ for $A$ and $\upsilon>1$, let $SOL'=TA(SOL,\upsilon)$. The schedule $SOL'$ is a schedule for $A$ such that $SOL'(A)=\upsilon \cdot SOL(A)$.
If $\upsilon \geq (1+\del)^3$, then $SOL'$ is a timely schedule for $A'$.
\end{claim}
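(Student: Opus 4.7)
The plan is to handle the two assertions separately, relying on the rounding relations $p_j \in [a_j,(1+\del)a_j]$, $s_i \in [v_i/(1+\del),v_i]$, $r'_j \in [\rho_j+\beta,(1+\del)(\rho_j+\beta)]$, and on the fact that in $SOL$ each job $j$ assigned to machine $i$ has $C_j \ge \rho_j + a_j/v_i \ge a_j/v_i$. Throughout I will denote by $i$ the machine that runs $j$ in $SOL$ (and hence in $SOL'$). A convenient auxiliary bound, which I will record up front, is
\[
\frac{p_j}{s_i} \;\le\; \frac{(1+\del)\,a_j}{v_i/(1+\del)} \;=\; (1+\del)^2\,\frac{a_j}{v_i}.
\]

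\textbf{First assertion.} For this I interpret $SOL'$ against the original data $A$, so the processing time of $j$ on $i$ is still $a_j/v_i$ and the start time of $j$ in $SOL'$ is $\upsilon C_j - a_j/v_i$. Validity is immediate: for release dates, $\upsilon C_j - a_j/v_i \ge C_j - a_j/v_i \ge \rho_j$ since $\upsilon\ge 1$; for non-overlap on a common machine with $C_{j_1}<C_{j_2}$, validity of $SOL$ gives $C_{j_2}-C_{j_1}\ge a_{j_2}/v_i$, which upon multiplication by $\upsilon>1$ yields $\upsilon C_{j_2} - \upsilon C_{j_1} \ge a_{j_2}/v_i$, so the scaled start of $j_2$ is no earlier than the scaled completion of $j_1$. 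The cost identity is the linearity $SOL'(A)=\sum_j \omega_j\cdot \upsilon C_j=\upsilon\cdot SOL(A)$.

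\textbf{Second assertion.} Now interpret $SOL'$ against $A'$, so the processing time of $j$ is $p_j/s_i$ and the start time is $\upsilon C_j - p_j/s_i$. Using $C_j\ge a_j/v_i$ together with the auxiliary bound,
\[
\upsilon C_j \;\ge\; (1+\del)^3\,\frac{a_j}{v_i} \;\ge\; (1+\del)\,\frac{p_j}{s_i},
\]
so the start time satisfies $\upsilon C_j - p_j/s_i \ge \del\cdot p_j/s_i$, which is the timeliness requirement. For non-overlap in $A'$, the $SOL$-gap bound $C_{j_2}-C_{j_1}\ge a_{j_2}/v_i$ multiplied by $\upsilon\ge(1+\del)^3$ and combined with the auxiliary bound gives $\upsilon(C_{j_2}-C_{j_1})\ge p_{j_2}/s_i$, which is exactly the condition needed.

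\textbf{Release dates in $A'$ (the delicate step).} I must verify $\upsilon C_j - p_j/s_i \ge r'_j$. Starting from $C_j\ge \rho_j + a_j/v_i$, the same calculation as above produces
\[
\upsilon C_j - \frac{p_j}{s_i} \;\ge\; (1+\del)^3\rho_j + \Bigl(\upsilon\frac{a_j}{v_i}-\frac{p_j}{s_i}\Bigr) \;\ge\; (1+\del)^3\rho_j + \del\,\frac{p_j}{s_i}.
\]
I then lower-bound the second term using the definition $\beta=\del^2 a_{\min}$: since $p_j\ge a_j\ge a_{\min}$ and $s_i\le 1$, one has $p_j/s_i \ge a_{\min}$, hence $\del\cdot p_j/s_i \ge \del\,a_{\min}$. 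Because $\del\le 1/36$ gives $\del(1+\del)\le 1$, this is at least $(1+\del)\del^2 a_{\min}=(1+\del)\beta$. Combining with $(1+\del)^3\rho_j\ge (1+\del)\rho_j$ yields $\upsilon C_j - p_j/s_i \ge (1+\del)(\rho_j+\beta)\ge r'_j$, as required. The main obstacle is precisely this last verification, since $\rho_j$ could vanish and one must exploit the additive $\beta$ shift in the definition of $r'_j$ together with the identity $\beta=\del^2 a_{\min}$ and the normalization $v_1=1$.
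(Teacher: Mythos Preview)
Your proof is correct and follows essentially the same route as the paper's. Both arguments hinge on the auxiliary inequality $p_j/s_i \le (1+\del)^2\, a_j/v_i$, use it together with $\upsilon\ge(1+\del)^3$ to get the start time in $A'$ at least $\del\,p_j/s_i$ (timeliness), and handle the release-date check for $A'$ by combining $s_i\le 1$, $p_j\ge a_{\min}$, and $(1+\del)\del<1$ to absorb the additive $\beta=\del^2 a_{\min}$ shift. The only cosmetic difference is that the paper packages non-overlap in $A'$ via the single estimate $t''\ge \upsilon t$ (then invokes the $A$-argument), whereas you redo the gap calculation directly; both are equally valid.
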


\begin{proof}
First, we consider $A$. Let $t$ and $t'$ denote the starting times of job $j$ in $SOL$ and $SOL'$, respectively, and let $i$ be the machine that $j$ is assigned to. Its completion time in $SOL$ is $t+\frac{a_j}{v_i}$, its completion time in $SOL'$ is $\upsilon(t+\frac{a_j}{v_i})$, and $t'=\upsilon(t+\frac{a_j}{v_i})-\frac{a_j}{v_i}>\upsilon t$ holds since $\upsilon>1$. Any job that runs on the same machine as $j$ and completes before $j$ in $SOL$ has a completion time of at most $\upsilon t$ in $SOL'$, and therefore for each machine, no overlaps between time intervals dedicated to different jobs are created. Moreover, as $SOL$ is valid, $t' > t \geq \rho_j$, so no job is started before its release date. The claim regarding the cost holds since
the cost of each job increases exactly by a factor of
$\upsilon$ (as its weight does not change).

Next, consider $A'$ and job $j$ again. Let $t$ denote its starting time of $j$ in $SOL$ for $A$, and let $t''$ denote the starting time of $j$ in $SOL'$ for $A'$. Once again let $i$ be the machine that $j$ is assigned to. We have $t''=\upsilon(t+\frac{a_j}{v_i})-\frac{p_j}{s_i}\geq \upsilon t+\upsilon\frac{p_j/(1+\del)}{s_i(1+\del)}-\frac{p_j}{s_i} \geq \upsilon t+\del\frac{p_j}{s_i}$ as $\upsilon \geq (1+\del)^3$. Since $t'' \geq \upsilon t$, as in the proof for $A$, there are no overlaps between the time intervals of jobs assigned to one machine. The release date $r'_j$ satisfies $r'_j \leq (1+\del)(\rho_j+\beta) = (1+\del)(\rho_j+\del^2 a_{\min})$. Since $a_{\min} \leq a_j \leq p_j$ and $s_i \leq 1$, $r'_j \leq (1+\del)\rho_j+(1+\del)\del^2 \frac{p_j}{s_i} < \upsilon t+\del\frac{p_j}{s_i} \leq t''$ as $t \geq \rho_j$, $\del \leq \frac{1}{36}$, and $\upsilon \geq (1+\del)^3$.
Since $t'' \geq \del\frac{p_j}{s_i}$, $SOL'$ is a timely schedule for $A'$.
\end{proof}

In what follows, when we are given a solution $SOL$ for $A$, we let $\bar{SOL}=TA(SOL,(1+\del)^3)$.

\begin{claim}\label{timely}
We have $(1+\del)^3 SOL(A)  \leq \bar{SOL}(A') \leq (1+\del)^4 SOL(A)$ and $\cO(A)  \leq \cO'(A') \leq (1+\del)^4 \cO(A)$.
\end{claim}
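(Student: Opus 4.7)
The plan is to combine Claims \ref{easydirection} and \ref{stretchedworks} in a mostly mechanical fashion.

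For the first claim, I would start with the lower bound. By Claim \ref{stretchedworks}, $\bar{SOL}=TA(SOL,(1+\del)^3)$ is a valid schedule for $A'$ (it is in fact timely), and viewed as a schedule for $A$ it satisfies $\bar{SOL}(A)=(1+\del)^3 SOL(A)$. Applying Claim \ref{easydirection} to $\bar{SOL}$ (which is a solution for $A'$, hence also for $A$) yields $\bar{SOL}(A)\leq \bar{SOL}(A')$, and combining gives $(1+\del)^3 SOL(A)\leq \bar{SOL}(A')$.

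For the upper bound, observe that the completion time $C_j$ of each job under $\bar{SOL}$ is determined purely by the allocation of jobs to machines and completion times, so it is the same integer quantity whether we evaluate the schedule on input $A$ or on input $A'$. The only difference between $\bar{SOL}(A)$ and $\bar{SOL}(A')$ is that the weight $\omega_j$ is replaced by $w_j=(1+\del)^{\lceil \log_{1+\del}\omega_j\rceil}\leq (1+\del)\omega_j$. Hence
\[
\bar{SOL}(A')=\sum_j w_j C_j\leq (1+\del)\sum_j \omega_j C_j=(1+\del)\bar{SOL}(A)=(1+\del)^4 SOL(A).
\]

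The second claim follows by applying each direction to the appropriate optimal schedule. For $\cO(A)\leq \cO'(A')$, set $SOL=\cO'$ in Claim \ref{easydirection}: we get $\cO'(A)\leq \cO'(A')$, and the optimality of $\cO$ for $A$ yields $\cO(A)\leq \cO'(A)\leq \cO'(A')$. For $\cO'(A')\leq (1+\del)^4 \cO(A)$, set $SOL=\cO$ in the first claim: $\bar{\cO}$ is a valid schedule for $A'$, so by optimality of $\cO'$ for $A'$ and the already-proved upper bound, $\cO'(A')\leq \bar{\cO}(A')\leq (1+\del)^4 \cO(A)$.

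There is no real obstacle here; the argument is purely bookkeeping built from the two preceding claims. The only subtle point worth flagging explicitly in the write-up is that completion times are preserved between $A$ and $A'$ for a fixed schedule (since a schedule is defined by assignment and completion times, not by starting times), so that the ratio between the two evaluations is exactly the weight ratio.
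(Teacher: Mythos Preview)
Your proof is correct and essentially identical to the paper's: both derive the first claim from $\bar{SOL}(A)=(1+\del)^3 SOL(A)$ together with $\bar{SOL}(A)\leq \bar{SOL}(A')\leq (1+\del)\bar{SOL}(A)$ (the paper argues the latter directly from the shared completion times and the weight bounds, whereas you invoke Claim~\ref{easydirection} for the left inequality, which amounts to the same thing), and both deduce the second claim by instantiating with $\cO'$ and $\cO$ exactly as you do.
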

\begin{proof}
Recall that $\bar{SOL}$ is a solution for both $A$ and $A'$. Moreover, for each job, its completion time in $\bar{SOL}$ is the same for $A$ and $A'$. Since $\omega_j \leq w_j \leq (1+\del)\omega_j$, we get $\bar{SOL}(A)  \leq \bar{SOL}(A') \leq (1+\del) \bar{SOL}(A)$, and the first claim follows from $\bar{SOL}(A)=(1+\del)^3 SOL(A)$ (by Claim \ref{stretchedworks}).
The second part of the second claim follows from the first claim; since $\cO$ is an optimal solution for $A$,  we have $\cO'(A') \leq \bar{\cO}(A')\leq (1+\del)^4 \cO(A)$.  The property $\cO(A)  \leq \cO'(A')$ follows from Claim \ref{easydirection}, as $\cO'$ is a valid solution for $A$, and thus $\cO(A) \leq \cO'(A) \leq \cO'(A')$.
\end{proof}

We will only consider timely schedules for $A'$. Let $\cO''$ denote an optimal timely schedule for $A'$.

\begin{claim}
We have $\cO''(A') \leq (1+\del)^4 \cO'(A')$.
\end{claim}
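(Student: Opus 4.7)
The plan is to exhibit an explicit timely schedule for $A'$ whose cost is at most $(1+\del)^4 \cO'(A')$, and then use the optimality of $\cO''$ among timely schedules.

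First, I would start with $\cO'$, which is an optimal (not necessarily timely) schedule for $A'$. By Claim \ref{easydirection}, $\cO'$ is also a valid schedule for $A$, and moreover $\cO'(A) \leq \cO'(A')$. So I can legitimately treat $\cO'$ as a schedule for the original instance $A$.

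Next, I would apply the time-augmentation operator with factor $\upsilon = (1+\del)^3$, producing $\bar{\cO'} = TA(\cO', (1+\del)^3)$. By Claim \ref{stretchedworks}, since $\upsilon \geq (1+\del)^3$, the schedule $\bar{\cO'}$ is a \emph{timely} schedule for $A'$. Thus $\bar{\cO'}$ is a feasible competitor in the class over which $\cO''$ is optimal, giving $\cO''(A') \leq \bar{\cO'}(A')$.

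Finally, I would bound $\bar{\cO'}(A')$ using Claim \ref{timely}, which gives $\bar{\cO'}(A') \leq (1+\del)^4 \cO'(A)$. Combining with $\cO'(A) \leq \cO'(A')$ from Claim \ref{easydirection} yields
\[
\cO''(A') \;\leq\; \bar{\cO'}(A') \;\leq\; (1+\del)^4 \cO'(A) \;\leq\; (1+\del)^4 \cO'(A'),
\]
as desired. There is no real obstacle here: the statement is essentially a direct corollary of the two claims already established (\ref{easydirection} and \ref{stretchedworks}/\ref{timely}), and the only thing to check is that each invocation is syntactically legitimate, namely that viewing $\cO'$ as a schedule for $A$ is justified before applying time-augmentation.
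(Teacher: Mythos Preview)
Your proof is correct and follows essentially the same route as the paper: produce a timely schedule for $A'$ by time-augmenting with factor $(1+\del)^3$, then invoke Claims~\ref{stretchedworks} and~\ref{timely}. The only cosmetic difference is that the paper time-augments $\cO$ (the optimum for $A$) and then uses $\cO(A)\leq \cO'(A')$, whereas you time-augment $\cO'$ itself (viewed via Claim~\ref{easydirection} as a schedule for $A$) and use $\cO'(A)\leq \cO'(A')$; both chains are valid and yield the same bound.
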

\begin{proof}
Consider the schedule $\bar{\cO}=TA(\cO,(1+\del)^3)$. By Claim \ref{stretchedworks}, this is a timely schedule for $A'$. By Claim \ref{timely}, $\bar{\cO}(A') \leq  (1+\del)^4 \cO(A) \leq (1+\del)^4  \cO'(A')$. Thus, $\cO''(A') \leq (1+\del)^4 \cO'(A')$.
\end{proof}

\begin{corollary}\label{kk}
If a solution $SOL$ satisfies $SOL(A') \leq (1+k\del)\cO''(A')$ for some $k>0$,
then $SOL(A) \leq (1+(4k/3+12)\del) \cO(A)$.
\end{corollary}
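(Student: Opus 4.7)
The plan is to string together three inequalities that are already available. First, Claim~\ref{easydirection} tells us that any feasible schedule for $A'$ is feasible for $A$ and has no larger cost there, so $SOL(A) \leq SOL(A')$. Second, the hypothesis of the corollary gives $SOL(A') \leq (1+k\del)\cO''(A')$. Third, I would use the bound $\cO''(A') \leq (1+\del)^4 \cO(A)$. This last bound is not stated as a named claim, but it appears explicitly inside the proof of the claim immediately preceding the corollary: there one constructs $\bar{\cO}=TA(\cO,(1+\del)^3)$, checks via Claim~\ref{stretchedworks} that it is a timely schedule for $A'$, and then uses Claim~\ref{timely} to conclude $\bar{\cO}(A')\leq (1+\del)^4\cO(A)$. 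Since $\cO''$ is the optimal timely schedule for $A'$, $\cO''(A')\leq \bar{\cO}(A')\leq (1+\del)^4\cO(A)$.

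Chaining the three inequalities yields $SOL(A) \leq (1+k\del)(1+\del)^4\cO(A)$, so the corollary reduces to the elementary arithmetic fact $(1+k\del)(1+\del)^4 \leq 1+(4k/3+12)\del$ under the standing assumption $\del \leq 1/36$. I would verify this by expanding $(1+\del)^4 = 1+4\del+6\del^2+4\del^3+\del^4$, dominating each term of order $\del^i$ with $i\geq 2$ by $\del^{i-1}/36$, and then multiplying by $(1+k\del)$. Tracking the contributions one sees the linear coefficient is at most $(k+4) + \text{(small constant)} + k\cdot\text{(small constant)}$, which, thanks to $\del\leq 1/36$, comes out well below the target $4k/3+12$ with plenty of slack.

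This is a bookkeeping corollary, and I do not anticipate any real obstacle. The only point to be careful about is not to use the weaker bound $\cO''(A')\leq(1+\del)^4\cO'(A')$ that was the nominal statement of the preceding claim: composing it with Claim~\ref{timely} would give the inflated factor $(1+\del)^8$ in front of $\cO(A)$ and no longer fit comfortably under $(1+(4k/3+12)\del)$ for all $k>0$. Extracting instead the intermediate inequality $\bar{\cO}(A')\leq(1+\del)^4\cO(A)$ that was proved along the way is what keeps the final constant at the advertised level.
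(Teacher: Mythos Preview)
Your chain $SOL(A)\leq SOL(A')\leq(1+k\del)\cO''(A')\leq(1+k\del)(1+\del)^4\cO(A)$ is valid, and the arithmetic $(1+k\del)(1+\del)^4\leq 1+(4k/3+12)\del$ certainly holds with room to spare when $\del\leq 1/36$. So the proof is correct.

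However, your final caveat is mistaken, and it is worth noting because the paper's proof does exactly what you warn against. The paper chains the nominal statement $\cO''(A')\leq(1+\del)^4\cO'(A')$ with $\cO'(A')\leq(1+\del)^4\cO(A)$ from Claim~\ref{timely} to obtain $SOL(A)\leq(1+k\del)(1+\del)^8\cO(A)$, and then proves the auxiliary inequality $(1+\del)^\pi\leq 1+\tfrac{3}{2}\pi\del$ for integers $\pi$ with $\del\leq\tfrac{1}{3\pi}$. Applying this with $\pi=8$ (valid since $\del\leq 1/36\leq 1/24$) gives $(1+\del)^8\leq 1+12\del$, whence $(1+k\del)(1+12\del)=1+(k+12)\del+12k\del^2\leq 1+(k+12+k/3)\del=1+(4k/3+12)\del$. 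So the $(1+\del)^8$ route fits \emph{exactly}, and indeed is the origin of the particular constant $4k/3+12$ in the statement. Your route through $(1+\del)^4$ is a genuine shortcut---it would support the stronger bound $1+(7k/6+6)\del$---but the paper did not need it.
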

\begin{proof}
Recall that $SOL$ is a solution for $A$ as well. We have $SOL(A) \leq SOL(A') \leq (1+k\del)\cO''(A') \leq (1+k\del)(1+\del)^4\cO'(A') \leq
(1+k\del)(1+\del)^8 \cO(A)$. We show that $(1+\del)^{\pi} \leq \frac 32 \pi \del$ if $\pi \geq 1$ is an integer such that $\del \leq \frac {1}{3\pi}$, by induction. For $\pi=1$, $1+\del \leq 1+1.5\del$ holds. For larger values of $\pi$ we have $(1+\del)^{\pi+1}=(1+\del)(1+\del)^{\pi}  \leq (1+\del)(1+1.5 \pi\del)=1+(1.5\pi+1)\del+1.5\pi\del^2\leq 1+1.5\pi\del+\del+0.5\del=1+1.5(\pi+1)\del$.
Using $\pi=8$ and $\del \leq \frac 1{36}$, we find $(1+k\del)(1+\del)^8  \leq (1+k\del)(1+12\del) \leq 1+k\del+12\del+12k\del^2 \leq (4k/3+12)\del$.
\end{proof}

In the analysis of algorithms for the input $A'$, we will use {\it pseudo-costs} rather than actual costs. The completion time of any job is strictly positive. The pseudo-cost of job $j$ whose completion time $C_j$ satisfies $C_j \in [(1+\del)^{i},(1+\del)^{i+1})$ is defined as $w_j(1+\del)^{i+1}$. Let $\cO_{PC}$ denote an optimal timely solution for $A'$ with respect to pseudo-cost (that is, the total pseudo-cost of all jobs is minimized). Let $PC(SOL(A'))$ denote the pseudo-cost of $SOL$ for the input $A'$.

\begin{claim}
We have $\frac{PC(SOL(A'))}{1+\del} \leq SOL(A') \leq PC(SOL(A')) $, and $PC(\cO_{PC}(A')) \leq PC(\cO''(A')) \leq (1+\del)\cO''(A')$. If a solution $SOL'$ satisfies $PC(SOL'(A')) \leq (1+k'\del)PC(\cO_{PC}(A'))$, then $SOL'(A) \leq (1+(2k'+14)\del) \cO(A)$.
\end{claim}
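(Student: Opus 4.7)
My plan for this claim has three layers, each following directly from the definition of pseudo-cost. For the first double inequality, I would argue job-by-job. Fix a solution $SOL$ for $A'$ and a job $j$ with completion time $C_j \in [(1+\del)^{i},(1+\del)^{i+1})$. By definition, $j$ contributes $w_j C_j$ to $SOL(A')$ and $w_j(1+\del)^{i+1}$ to $PC(SOL(A'))$. The bound $C_j < (1+\del)^{i+1}$ yields $w_j C_j \leq w_j(1+\del)^{i+1}$, and $(1+\del)^{i+1} = (1+\del)\cdot(1+\del)^{i} \leq (1+\del) C_j$ yields $w_j(1+\del)^{i+1} \leq (1+\del) w_j C_j$. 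Summing over all jobs gives $SOL(A') \leq PC(SOL(A'))$ and $PC(SOL(A')) \leq (1+\del) SOL(A')$, which together are the two claimed bounds.

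The second assertion splits cleanly. Since $\cO''$ is a timely schedule for $A'$ and $\cO_{PC}$ is by definition a minimizer of pseudo-cost over all timely schedules for $A'$, we have $PC(\cO_{PC}(A')) \leq PC(\cO''(A'))$ by optimality. The remaining inequality $PC(\cO''(A')) \leq (1+\del)\cO''(A')$ is nothing but the first assertion specialized to $SOL = \cO''$.

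For the main implication I would chain these bounds into Corollary~\ref{kk}. Combining the hypothesis with the inequalities above yields
\[
SOL'(A') \leq PC(SOL'(A')) \leq (1+k'\del) PC(\cO_{PC}(A')) \leq (1+k'\del)(1+\del)\cO''(A') = (1 + K\del)\cO''(A'),
\]
where $K := (k'+1) + k'\del$ comes from expanding $(1+k'\del)(1+\del) = 1 + (k'+1)\del + k'\del^{2}$. Applying Corollary~\ref{kk} with this $K$ gives $SOL'(A) \leq (1 + (4K/3 + 12)\del)\cO(A)$, so it remains to verify $4K/3 + 12 \leq 2k' + 14$. Using $\del \leq 1/36$ we get $4K/3 + 12 \leq (4k' + 4)/3 + k'/27 + 12$, and the difference $(2k'+14) - \left((4k'+4)/3 + k'/27 + 12\right) = 2k'/3 + 2/3 - k'/27 \geq 0$ for every $k' \geq 0$. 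I expect no conceptual obstacle here; the only subtlety is that the crude bound $(1+k'\del)(1+\del) \leq 1+(2k'+1)\del$ is \emph{too} lossy (it fails $4K/3 + 12 \leq 2k' + 14$), so one must keep the $k'\del^{2}$ term and use $\del \leq 1/36$ to absorb it into $k'/27$, which is exactly what lands the constant on the stated $2k' + 14$.
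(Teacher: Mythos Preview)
Your proposal is correct and follows essentially the same route as the paper: both argue the first double inequality job-by-job from the definition of pseudo-cost, derive the second assertion from optimality of $\cO_{PC}$ plus the first part, and then chain into Corollary~\ref{kk} after bounding $(1+k'\del)(1+\del) \leq 1+(k'+1+k'/36)\del$ via $\del \leq 1/36$. You carry out the final arithmetic check $4K/3+12 \leq 2k'+14$ explicitly, whereas the paper simply states that the claim follows from Corollary~\ref{kk}; otherwise the arguments coincide.
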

\begin{proof}
For any job $j$ its cost $w_jC_j$ in a solution $SOL$ is at least $\frac{1}{1+\del}$ fraction of its pseudo-cost and at most its pseudo-cost, which implies the first property and the inequality
 $ PC(\cO''(A')) \leq (1+\del)\cO''(A')$.
The inequality $PC(\cO_{PC}(A')) \leq PC(\cO''(A'))$ holds since $\cO_{PC}$ is an optimal timely schedule for $A'$ with respect to pseudo-cost and $\cO''$ is timely.
If $PC(SOL'(A')) \leq (1+k'\del)PC(\cO_{PC}(A'))$, then we find $SOL'(A') \leq PC(SOL'(A')) \leq (1+k'\del)PC(\cO_{PC}(A')) \leq (1+\del)(1+k'\del)\cO''(A') \leq (1+(k'+1+k'/36)\del) \cO''(A')$ (since $\del \leq \frac 1{36}$). Using Corollary \ref{kk} the last claim follows.
\end{proof}

\subsection{A simple representation of schedules}
We consider the input $A'$ and timely schedules. In what follows we only discuss pseudo-costs.  We slightly abuse  notation and use $SOL(A')$ to denote the {\it pseudo-cost} of $SOL$ for $A'$. Let $\J_{i,\ell}$ denote the time interval $[(1+\del)^{i},(1+\del)^{i+1})$ on machine $\ell$. Obviously, the total size of jobs for which both the starting time and completion time are within this interval (on machine $\ell$) is at most $s_{\ell} \cdot \del \cdot (1+\del)^i$, which is called the {\it length} of this interval. We let $\theta$ denote the smallest value of $i$ such that $(1+\del)^{\theta}$ is a release date of some job of $A'$.
Since we use pseudo-costs, we will represent schedules by stating for each job $j$ the machine that runs it, the interval where $j$ starts, and the interval that contains the completion time of the job (both being time intervals of the same machine). Note that if the completion time of $j$ is $(1+\del)^i$ (on some machine $\ell$), then we say that the completion time of $j$ belongs to $J_{i,\ell}$ even though the time slot that $j$ runs in is $[(1+\del)^i-\frac{p_j}{s_i},(1+\del)^i)$ (so $j$ is completed just before $\J_{i,\ell}$).
Alternatively, it is possible to state, for each interval, the list of jobs whose starting times are in this interval, and the list of jobs whose completion times are in this interval. The cost of the schedule can be computed by computing the total cost of all intervals. The cost of $\J_{i,\ell}$ is $(1+\del)^{i+1}$ times the total weight of jobs whose completion times are in $\J_{i,\ell}$.
Obviously, additional conditions are required for such lists to ensure that a list originates in a valid schedule (where an exact completion time is assigned to each job), and a complete schedule of the same cost (i.e., pseudo-cost) can be constructed. For example, if job $j$ has a starting time in $\J_{i_1,\ell}$ and completion time in $\J_{i_2,\ell}$, where $i_2-i_1 \geq 2$, then for any $i_1<i'<i_2$, no job can have a starting time or a completion time in $\J_{i',\ell}$. Moreover, the total size of jobs that have to run in a sequence of intervals cannot exceed their total length.
In the next lemma we formulate necessary and sufficient conditions for a list to represent an actual schedule.

\begin{lemma}
Consider the following conditions on a list.
\begin{enumerate}
\item For every job $j$, that has a starting time in $\J_{i,\ell}$ and a completion time in $\J_{i',\ell'}$, it holds that $\ell'=\ell$, $i'\geq i$, and $r'_j\leq (1+\del)^i$.
\item  For any $1\leq \ell \leq m$, $i\geq \theta$, and $\theta'$ such that $\theta \leq \theta' \leq i$ the total size of jobs for which both the completion times and starting times are in $\J_{\theta',\ell} \cup \J_{\theta'+1,\ell} \cup \cdots \cup \J_{i,\ell}$ is strictly smaller than the total length of the intervals $\J_{\theta',\ell},\J_{\theta'+1,\ell},\ldots,\J_{i,\ell}$.
\item For any $j$ whose starting time and completion time are in the intervals $\J_{i_1,\ell}$ and $\J_{i_2,\ell}$ respectively, such that $i_2>i_1$, any interval $\J_{i',\ell}$ with $i_1<i'<i_2$ has no starting times of jobs and no completion times of jobs.
\item For any $\J_{i,\ell}$, there is at most one job whose starting time is in this interval and its completion time is not, and at most one job whose completion time is in this interval, but its starting time is not.
\end{enumerate}
Every schedule has these properties, and for every list that has these properties there exists a schedule of the same cost that obeys the requirements on starting times and completion times of jobs of this list.
\end{lemma}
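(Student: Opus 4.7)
The plan is to prove the lemma in two directions: first, show that any valid schedule satisfies the four conditions (necessity); second, give a constructive procedure that, given a list satisfying the conditions, produces a valid schedule of the same pseudo-cost (sufficiency).

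For necessity, each condition is a direct consequence of the definitions. Condition 1 holds because each job runs on a single machine (so $\ell' = \ell$), its starting time precedes its completion time (so $i' \geq i$), and since its starting time lies in $[(1+\del)^i,(1+\del)^{i+1})$ and is at least $r'_j$ -- itself a power of $1+\del$ -- we get $r'_j \leq (1+\del)^i$. For Condition 2, jobs whose start and completion both lie within $\J_{\theta',\ell},\ldots,\J_{i,\ell}$ occupy disjoint time slots inside the half-open interval $[(1+\del)^{\theta'},(1+\del)^{i+1})$, so the sum of their processing times $p_j/s_\ell$ is strictly less than $(1+\del)^{i+1} - (1+\del)^{\theta'}$; multiplying by $s_\ell$ yields total size strictly less than the total length $\sum_{i'=\theta'}^{i} \del\, s_\ell(1+\del)^{i'}$. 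Condition 3 follows because a job whose start is in $\J_{i_1,\ell}$ and whose completion is in $\J_{i_2,\ell}$ occupies every intermediate interval entirely, leaving no room for other start or completion events. Condition 4 follows because two jobs both starting in $\J_{i,\ell}$ without completing there would be simultaneously in progress at the right endpoint of that interval, contradicting the single-machine property.

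For sufficiency, I fix a machine $\ell$ and construct an actual schedule by processing the intervals $\J_{\theta,\ell},\J_{\theta+1,\ell},\ldots$ in order of increasing $i$, maintaining a time pointer $t$ equal to the current free time of the machine. For each interval, Conditions 3 and 4 give a clean classification of the jobs present: at most one ``incoming'' straddler $T_i$ completing in the interval (started earlier), a set $S_i$ of jobs with both endpoints in the interval, and at most one ``outgoing'' straddler $U_i$ starting in the interval (completing later); moreover, if any job passes entirely through the interval, then by Condition 3 we have $T_i=U_i=\bot$ and $S_i=\emptyset$ and no action is required. I schedule $T_i$'s completion at the natural continuation of its previously placed start, then the jobs of $S_i$ back-to-back starting at $\max(t,(1+\del)^i)$, and finally $U_i$ after $S_i$ (whose exact completion time will be fixed when processing the later target interval). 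Condition 1 ensures release dates never push any start past $(1+\del)^i$, so the start of each job lies in its designated interval.

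The main obstacle is showing that all starts and completions land in the intervals prescribed by the list. This requires a careful induction on $i$ maintaining the invariant: ``at the moment we finish processing $\J_{i,\ell}$, either the machine is free with $t<(1+\del)^{i+1}$, or a straddler is in progress whose remaining processing will be completed inside its designated interval per the list.'' The fit within a single interval and across chains of consecutive intervals spanned by a straddler both follow from Condition 2 applied to the appropriate window of intervals: the total size of fully-contained jobs is strictly less than the cumulative length, providing the slack needed to ensure that the straddler's processing indeed ends inside its prescribed completion interval rather than spilling past it. Because the pseudo-cost of a schedule is a function solely of the completion interval of each job, and the construction places completions precisely in the intervals dictated by the list, the resulting schedule has the same pseudo-cost as the one encoded by the list.
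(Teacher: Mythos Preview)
Your overall strategy matches the paper's: verify the four conditions are necessary by direct inspection, then construct an actual schedule machine by machine, sweeping intervals left to right and placing the incoming straddler, the fully-contained jobs, and the outgoing straddler in that order. The paper argues correctness by contradiction (take the first job whose actual start or completion falls outside its listed interval and trace back to the last idle time to violate Condition~2), whereas you aim for an explicit forward invariant; either style is fine.

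There is, however, a real gap in your construction. You only guard against a straddler \emph{spilling past} its designated completion interval, but you never guard against it \emph{finishing too early}. Condition~2 is a strict upper bound on total size; it gives no lower bound. So nothing in the four conditions prevents a list in which $U_i$ has listed start in $\J_{i_1,\ell}$ and listed completion in $\J_{i_2,\ell}$ with $i_2>i_1$, yet $p_{U_i}/s_\ell$ is small enough that, if $U_i$ is started immediately after $S_{i_1}$, it completes well before $(1+\delta)^{i_2}$. Your sentence ``whose exact completion time will be fixed when processing the later target interval'' does not resolve this, because once you have placed $U_i$'s start, its completion is determined. The paper handles this explicitly: when it reaches the outgoing straddler $j$ with listed completion in $\J_{i'',\ell}$, it assigns $j$ the completion time $\max\bigl((1+\delta)^{i''},\, t + p_j/s_\ell\bigr)$, where $t$ is the current free time. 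This $\max$ inserts idle time before $j$ whenever necessary so that $j$'s completion lands at or after $(1+\delta)^{i''}$; the contradiction argument then shows it does not overshoot $(1+\delta)^{i''+1}$. Your invariant ``a straddler is in progress whose remaining processing will be completed inside its designated interval'' is asserted but not established, precisely because you omitted this delay step. Add the $\max$ (equivalently: start $U_i$ at the later of the current free time and $(1+\delta)^{i_2}-p_{U_i}/s_\ell$) and then your inductive argument goes through.
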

\begin{proof}
Given a schedule where completion times of jobs are specified, all properties hold trivially, thus it is left to show that given a list satisfying all properties, it is possible to assign a completion time (and thus a time slot) to every job.

Consider a schedule given as a list. We refer to the time intervals where the starting time and completion time of a job $j$ are listed as its listed starting time and listed completion time. We will assign each job an actual starting time and an actual completion time.
Due to the first property, the list defines a partition of jobs to machines, and therefore each machine can be considered separately. Consider machine $\ell$ and apply the following process. Find
the first time interval $\J_{i,\ell}$ (with the smallest $i$) that has at least one job whose starting time is in $\J_{i,\ell}$. Let
$i'=i$, repeat the following process, and stop after all jobs of $\ell$ have
been assigned to actual starting times and actual completion times.  If $\J_{i',\ell}$ has no jobs
listed to start in this time interval, then let $i'=i'+1$.
Otherwise, starting the time that is the maximum between $(1+\del)^{i'}$ and the last actual completion time assigned to any job, schedule the jobs having both listed starting times and listed completion times in the time interval $J_{i',\ell}$ to actual time slots consecutively in some order by assigning them actual starting times and actual completion times. If there is a job $j$ listed to start in
$\J_{i',\ell}$ and a listed completion time in $\J_{i'',\ell}$ where
$i''>i'$, assign it an actual completion time which is the maximum between
$(1+\del)^{i''}$ and the last actual completion time assigned to any job
in $\J_{i',\ell}$ (or $(1+\del)^{i'}$ if no such job exists) plus $\frac{p_j}{s_{\ell}}$, and a suitable starting time, and let $i'=i'+1$.

Each job with a listed starting time in $\J_{i,\ell}$ was assigned to an actual time slot no earlier than $(1+\del)^i$, and without any overlap (since a job is either assigned an actual starting time that is the actual completion time of the previously assigned job, or a later actual starting time).
Every job $j$ that was assigned for a given value $i'$ has a listed starting time in $\J_{i',\ell}$, and it was assigned an actual starting time of at least $(1+\del)^{i'}$. Since the listed starting time of $j$ is in $\J_{i',\ell}$, by the first condition, $r'_j\leq (1+\del)^{i'}$, and $j$
is assigned an actual starting time that is no smaller than its release date. It remains to show that any job $j$ whose listed starting time is in $\J_{i_1,\ell}$ and its listed completion time is in $\J_{i_2,\ell}$ (for $i_2\geq i_1$) was indeed assigned an actual starting time and an actual completion time in these intervals, respectively.
Assume by contradiction that the process failed to do so. Let $j'$ be the first job for which either the starting time or the completion time is not in the appropriate time intervals, and consider the first actual time that violates the requirement (i.e., if $j'$ was assigned both an actual starting time and an actual completion time that are not with accord with the listed ones, we will consider its starting time). Consider the case $i_1=i_2$, where $j$ was assigned a completion time of at least $(1+\del)^{i_1+1}$, consider the last time before this time that the machine is idle (there must be such a time as the machine is idle in $[0,(1+\del)^i)$, for some $i\geq \theta$). By construction, the last idle time ends at some time $(1+\del)^{\theta'}$ for $\theta \leq \theta'\leq i_1 $. Consider the set of jobs that were assigned to actual time slots starting $(1+\del)^{\theta'}$ and up to the actual completion time assigned to $j'$ (in particular, this set includes $j'$). The listed starting times of these jobs are in $\J_{\theta',\ell} \cup \J_{\theta'+1,\ell} \cup \cdots \cup \J_{i_1,\ell}$. Jobs with listed starting times in $\J_{\theta',\ell} \cup \J_{\theta'+1,\ell} \cup \cdots \cup \J_{i_1-1,\ell}$ have listed completion times in $\J_{\theta',\ell} \cup \J_{\theta'+1,\ell} \cup \cdots \cup \J_{i_1,\ell}$, as otherwise, by the third condition, $j'$ cannot exist. A possible job with a listed starting time in $\J_{i_1,\ell}$ and a listed completion time in a later time interval of $\ell$ is not considered by the process above up to $j'$. Thus, we find that the jobs that have both a listed starting time and a listed completion time in $\J_{\theta',\ell} \cup \J_{\theta'+1,\ell} \cup \cdots \cup \J_{i_1,\ell}$
violate the second condition, as their total size is at least the total size of the intervals $\J_{\theta',\ell},\J_{\theta'+1,\ell},\ldots,\J_{i_1,\ell}$.  Consider the case $i_1<i_2$, where $j'$ is assigned a starting time of at least $(1+\del)^{i_1+1}$. In this case $j'$ is not included in the set of jobs violating the condition, and since by the fourth condition the other jobs assigned for $i_1$ have listed completion times in $\J_{i_1,\ell}$, their assigned completion time is strictly smaller than $(1+\del)^{i_1+1}$ contradicting the assumption on the starting time of $j'$. Consider the case $i_1<i_2$, where $j'$ is assigned an actual completion time of at least $(1+\del)^{i_2+1}$. Since $j'$ was not assigned a smaller actual completion time, there is no idle time between the job that was assigned just prior to $j'$ and $j'$. Moreover, as $j'$ received an actual starting time in $\J_{i_1,\ell}$, and using the third condition, the set of jobs assigned to actual time slots starting the last idle time (denoted by $(1+\del)^{\theta'}$) and up to $j'$ have listed starting times and listed completion times in $\J_{\theta',\ell} \cup \J_{\theta'+1,\ell} \cup \cdots \cup \J_{i_2,\ell}$, contradicting the second condition, as their total size is at least the total size of these time intervals.\end{proof}

Observe that if a timely schedule is given as a list, then every schedule that is created from the list (i.e., choosing a permutation of the jobs that start and complete in a common interval) results in a timely schedule.

\subsection{Shifted schedules and time stretched schedules}
For a job $j$ of size $(1+\del)^i$, let its stretched size be
$(1+\del)^{i+1}$. Given a schedule $SOL$ for $A'$, we define a
schedule $S(SOL)$,  called the {\it shifted} schedule of the
schedule $SOL$, and job sizes are stretched. The stretched input
(consisting of the same machines and jobs, where each machine has
the same speed in both inputs, each job has the same release date
but its processing time is exactly $1+\del$ times larger) is
called $\bar{A'}$. If job $j$ has a starting time in $\J_{i_1,\ell}$
and completion time in $\J_{i_2,\ell}$ (where $i_2 \geq i_1$) in
$SOL$, we define its starting time and its completion time in
$S(SOL)$ to be in the intervals $\J_{i_1+1,\ell}$ and
$\J_{i_2+1,\ell}$, respectively.

\begin{claim}\label{stre}
If $SOL$ is a valid timely schedule, then so is $S(SOL)$, and the costs satisfy $S(SOL)(\bar{A'})=(1+\del)SOL(A')$.
\end{claim}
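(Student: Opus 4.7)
The natural plan is to exhibit an explicit realization of the list $S(SOL)$ by uniformly scaling times in $SOL$ by the factor $1+\del$, and to verify validity, timeliness, and cost by direct computation. Let $R$ be any actual realization of $SOL$ for $A'$; since $SOL$ is timely, each job $j$ on machine $\ell$ has actual starting time $t_j \geq \del \cdot p_j/s_\ell$. I define a candidate schedule $R'$ for $\bar{A'}$ by keeping the job-to-machine assignment and setting each starting time to $(1+\del) t_j$. The completion time of $j$ in $R'$ is $(1+\del)t_j + (1+\del)p_j/s_\ell = (1+\del)(t_j+p_j/s_\ell)$, i.e., exactly $1+\del$ times its completion time in $R$.

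First I would verify that $R'$ realizes the list $S(SOL)$: if $t_j \in \J_{i_1,\ell}$ and $t_j+p_j/s_\ell \in \J_{i_2,\ell}$, then multiplying by $1+\del$ maps these into $\J_{i_1+1,\ell}$ and $\J_{i_2+1,\ell}$ respectively, since $(1+\del)\cdot[(1+\del)^i,(1+\del)^{i+1}) = [(1+\del)^{i+1},(1+\del)^{i+2})$. Validity is then immediate: scaling by a common positive factor preserves ordering and disjointness of time intervals on each machine, release dates are the same in $\bar{A'}$ as in $A'$, and $(1+\del)t_j \geq t_j \geq r'_j$. Timeliness for $\bar{A'}$ follows because $(1+\del)t_j \geq (1+\del)\cdot \del p_j/s_\ell = \del\cdot ((1+\del)p_j)/s_\ell$, which is exactly $\del$ times the processing time of $j$ in $\bar{A'}$. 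Hence $R'$ is an actual valid timely schedule for $\bar{A'}$ whose list representation equals $S(SOL)$, so $S(SOL)$ is a valid timely schedule for $\bar{A'}$.

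For the cost identity, I use the definition of pseudo-cost: a job $j$ with completion interval $\J_{i_2,\ell}$ in $SOL$ contributes $w_j(1+\del)^{i_2+1}$ to $SOL(A')$, and the same job has completion interval $\J_{i_2+1,\ell}$ in $S(SOL)$, contributing $w_j(1+\del)^{i_2+2} = (1+\del)\cdot w_j(1+\del)^{i_2+1}$ to $S(SOL)(\bar{A'})$. Summing over all jobs yields $S(SOL)(\bar{A'}) = (1+\del)\,SOL(A')$.

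I do not foresee a serious obstacle. The construction is a direct analog of the time-augmentation $TA(\cdot,\upsilon)$ from Claim \ref{stretchedworks}, specialized to $\upsilon=1+\del$, and the identity $(1+\del)\cdot \J_{i,\ell} = \J_{i+1,\ell}$ makes the shift of interval indices match the stretch of job sizes exactly. The only small point requiring care is the observation that the scaling of starting times simultaneously handles the release-date feasibility (which only becomes easier) and the timeliness inequality (which remains tight under the stretch because both sides scale by $1+\del$).
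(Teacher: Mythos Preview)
Your proof is correct and follows essentially the same approach as the paper: both realize $S(SOL)$ by multiplying every starting (and hence completion) time by $1+\del$, then observe that release dates are still respected because jobs start later, and that timeliness is preserved because both the processing time and the starting time scale by the same factor. You add explicit verification of the interval shift $(1+\del)\cdot\J_{i,\ell}=\J_{i+1,\ell}$ and of the pseudo-cost identity, which the paper leaves implicit, but the underlying argument is the same.
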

\begin{proof}
A job $j$ starts later in $S(SOL)$ than it does in $SOL$, and therefore the release dates are respected. The schedule $S(SOL)$ can be obtained from $SOL$ by multiplying each starting time and completion time by $1+\del$. It remains timely since both the size and the starting time are multiplied by the same factor.
\end{proof}

The schedule $S(SOL)$ can obviously be used also as a schedule for an input where the size of each job in $A'$ is stretched by some factor in $[1,1+\del]$ (for this we consider $S(SOL)$ as an assignment of a machine and completion time for each job), and it remains timely since jobs can only start later while their sizes can only decrease.  This is the case even if the stretch factors of different jobs may be different.

Given a timely schedule $\hat{S}$ for an input $\bar{A}$, we define a new schedule that we call the {\it schedule obtained from $\hat{S}$ by time stretching by a factor of $1+\del$}, as follows.
If $j$ is assigned to run (in $\hat{S}$) on machine $\ell$ during $[t,t')$, then we have $p_j=s_{\ell}(t'-t)$.
We reserve the time period $[(1+\del)t,(1+\del)t')$ on machine $\ell$ for job $j$ (where $(1+\del)t$ is the reserved starting time of $j$, and $(1+\del)t'$ is the reserved completion time of $j$), and assign $j$ to start at time $(1+\del)t+\frac{\del p_j}{2s_{\ell}}$. This time will be called the basic starting time of $j$.
Obviously, no job will start running before its release date, and the schedule remains timely.
The basic completion time of $j$ will be $(1+\del)t+\frac{\del p_j}{2s_{\ell}}+\frac{p_j}{s_{\ell}}=(1+\del)t+\frac{p_j}{s_{\ell}}(1+\del)-\frac{\del p_j}{2s_{\ell}}= (1+\del)t+(t'-t)(1+\del)-\frac{\del p_j}{2s_{\ell}}
 = (1+\del) t'-\frac{\del p_j}{2s_{\ell}}$. If $j$ originally completed during $\J_{i,\ell}$ in $\hat{S}(\bar{A})$, then both its reserved and basic completion times  will be no later than in the interval $\J_{i+1,\ell}$, so the cost increases by at most a multiplicative factor of $1+\del$.  Next, for any interval $\J_{i,\ell}$, such that there is no job whose reserved time interval contains $\J_{i,\ell}$, shift all jobs that start and end during this time interval such that they run continuously as early as possible, that is, either starting at the beginning of $\J_{i,\ell}$, or just at the basic completion time of a job that starts in an earlier time interval and completes in $\J_{i,\ell}$ (jobs are reassigned to run without idle time, ignoring the time that was reserved before jobs are started or after they are completed). Moreover, if there is a job whose reserved starting time is during $\J_{i,\ell}$, its reserved completion time is in another time interval, but it is small for $\J_{i,\ell}$ (namely, its size is smaller than $\del^{10}$ times the length of the interval), it is also shifted to start as early as possible. The set of these jobs, that we call the {\it jobs of $\J_{i,\ell}$}, can be processed in any order, as long as the length of the interval is sufficient to accommodate all of them such that their completion times are within the time interval.
The new starting times and completion times will be called actual starting times and actual completion times, respectively.  For jobs whose time slots are not modified the actual starting times and actual completion times are equal to the basic starting times and basic completion times, respectively.

 \paragraph{Analysis.}
 Let $U$ denote the total size of the jobs whose reserved starting times and reserved completion times are in $\J_{i,\ell}$. If there is a job whose reserved starting time is this time interval but its reserved completion time is not, let $U_2$ denote the total size of this job that is assigned to run during $\J_{i,\ell}$ and $U'_2$ the size that was originally reserved for this job before its basic starting time during $\J_{i,\ell}$ (letting $U_2=U'_2=0$ if no such job exists), and similarly, if there is a job with a reserved starting time in an earlier time interval on the same machine and a reserved completion time in $\J_{i,\ell}$, let $U_1$ denote the total size of this job running during $\J_{i,\ell}$, and $U'_1$ the size that was reserved for this job after its basic completion time during $\J_{i,\ell}$ (where $U_1=U'_1=0$ if no such job exists). Let $Z=s_{\ell} \del(1+\del)^i$ denote the length of the time interval $\J_{i,\ell}$.

\begin{claim}
Assume that there is no job with a reserved time interval that contains $\J_{i,\ell}$, then the total size that is available for jobs of $\J_{i,\ell}$ is at
least $U+\del^2 Z$, and the current total size of the jobs of $\J_{i,\ell}$ is at
most $U+\del^{10} Z$.
\end{claim}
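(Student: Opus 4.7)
The plan is to derive both inequalities from a single accounting identity for the interval $\J_{i,\ell}$, combined with a short dichotomy yielding the key slack bound on the total actual running in $\J_{i,\ell}$.

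Under the hypothesis that no reserved time interval contains $\J_{i,\ell}$, the reserved periods meeting $\J_{i,\ell}$ on machine $\ell$ decompose as: the fully-inside jobs with total size $U$ and reserved size $(1+\del)U$; at most one straddling-in job of size $p_{j_1}$ contributing actual running $U_1$ and a reserved excess $U'_1 = \del p_{j_1}/2$ inside $\J_{i,\ell}$ (after its basic completion); and at most one straddling-out job of size $p_{j_2}$ contributing $U_2$ and $U'_2 = \del p_{j_2}/2$ (before its basic start). These excesses are exactly the symmetric slack that stretching inserts around every basic period. Since reserved periods plus idle time tile $\J_{i,\ell}$, I get the identity
\[ (1+\del)U + (U_1+U'_1) + (U_2+U'_2) + \mathrm{idle} \leq Z, \]
with equality in the generic situation; possible extra pre-basic-start excess $U''_1$ for job $1$ or post-basic-end excess $U''_2$ for job $2$ inside $\J_{i,\ell}$ only strengthens the inequality and is controlled by $U''_i \leq U'_i$, contributing only lower-order terms.

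Let $R := U + U_1 + U_2$ be the total actual running inside $\J_{i,\ell}$. The main step, and the hardest one, is to show $R \leq (1-\del^2)Z$; the naive per-job stretching argument fails because the slack it furnishes for the straddling jobs is proportional to $\del p_{j_i}$ rather than to $\del U_i$, and can be much smaller when most of the straddling job lies outside $\J_{i,\ell}$. I would resolve this via a dichotomy on $S := U + (p_{j_1}+p_{j_2})/2$. The identity yields $R \leq Z - \del S$. If $S \geq \del Z$, this directly gives $R \leq (1-\del^2)Z$. Otherwise the crude bound $R \leq U + p_{j_1} + p_{j_2} \leq 2S < 2\del Z \leq (1-\del^2)Z$ suffices, using $\del \leq 1/36$.

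With $Z-R \geq \del^2 Z$ in hand, both conclusions follow quickly. For part (a), the jobs of $\J_{i,\ell}$ need only avoid the actual running in $\J_{i,\ell}$ of the straddling jobs that are not shifted. If the straddling-out job is small or absent it is shifted in with the other jobs of $\J_{i,\ell}$, so the available size is $Z - U_1 = (Z-R) + U + U_2 \geq U + \del^2 Z$; if it is not small it stays at its basic start and the available size is $Z - U_1 - U_2 = (Z-R) + U \geq U + \del^2 Z$. For part (b), the current jobs of $\J_{i,\ell}$ consist of the fully-inside jobs of total size $U$, plus at most the small straddling-out job, whose size is by definition at most $\del^{10}Z$, giving the required upper bound $U + \del^{10}Z$.
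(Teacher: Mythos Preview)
Your reduction of part~(a) to the bound $R := U+U_1+U_2 \leq (1-\del^2)Z$, and your argument for part~(b), are both correct. The gap is in your proof of $R \leq (1-\del^2)Z$.

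You assert $U'_1 = \del p_{j_1}/2$, but by definition $U'_1$ is only the portion of $j_1$'s post-basic-completion slack lying \emph{inside} $\J_{i,\ell}$. When the reserved completion of $j_1$ lies only barely inside the interval---close enough that the basic completion falls before $(1+\del)^i$---one gets $U_1 = 0$ while $U'_1$ is strictly smaller than $\del p_{j_1}/2$. Your tiling identity then yields only $(1+\del)U + U'_1 + U_2+U'_2 \leq Z$, and the claimed bound $R \leq Z - \del S$ with $S = U + (p_{j_1}+p_{j_2})/2$ fails. Concretely, take $j_2$ absent, $U'_1 \approx 0$, and $U \approx Z/(1+\del)$: then $R = U$, but since timeliness permits $p_{j_1}$ of order $Z/\del^2$, the quantity $Z - \del S$ can be far below $U$. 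Your ``otherwise'' branch does not rescue this case, because $S \geq U \geq \del Z$ here.

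The paper's proof is both simpler and avoids the issue. It uses the always-valid bound $U'_i \geq \tfrac{\del}{2}U_i$: when $U_i>0$ the basic completion (resp.\ basic start) of the straddling job lies inside the interval, so the full slack $\del p_{j_i}/2 \geq \del U_i/2$ is contained in it; when $U_i=0$ the bound is trivial. From this one gets $U_i \leq (1-\del/3)(U_i+U'_i)$ and $U \leq (1-\del/3)(1+\del)U$, whence $R \leq (1-\del/3)\bigl((1+\del)U + U_1+U'_1 + U_2+U'_2\bigr) \leq (1-\del/3)Z$ directly from the tiling inequality, with no dichotomy. Your diagnosis that the naive stretching argument ``fails because the slack \ldots\ can be much smaller'' is thus backwards: the in-interval slack for each straddling job is always at least $\tfrac{\del}{2}U_i$, which is exactly what the direct argument needs. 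Your dichotomy would become correct if you replaced $S$ by $U+(U_1+U_2)/2$, but then it is just a repackaging of the paper's estimate.
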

\begin{proof}
The total size of the jobs of $\J_{i,\ell}$ exceeds $U$ by at most the size of one small job for this time interval, i.e., by at most $\del^{10} Z$.

Out of the total size $Z$ of $J_{i,\ell}$, the size that was not originally reserved for any job is $Z-U_1-U'_1-U_2-U'_2-(1+\del)U$.
The last calculation is valid since according to the basic starting times, total size of $(1+\del)U$ is reserved for jobs starting and ending during $\J_{i,\ell}$. We have $U'_1 \geq \frac{\del}2 U_1$, and $U'_2 \geq \frac{\del}2 U_2$ (since for $y=1,2$, if $U_y\neq 0$, then $U'_y$ is equal to $\frac {\del}2$ times the size of the job, while $U_y$ may be only a part of this size, and if $U_y=0$, the claim trivially holds). The total size occupied by jobs running during $\J_{i,\ell}$ is exactly $U_1+U_2+U$. Thus, the total remaining size is at least $Z-U_1-U_2-U$. Using $U'_1 \geq \frac {\del}2 U_1$ we find $U_1+U'_1 \geq  (1+\frac {\del}2) U_1$, and $U_1 \leq (1-\frac{\del}3)(U_1+U'_1)$, since $\del<1$. Similarly, we have $U_2 \leq (1-\frac{\del}3)(U_2+U'_2)$, and $U \leq (1-\frac{\del}3)(1+\del)U$.
Using $Z \geq (1+\del)U+U_1+U_2+U'_1+U'_2$, we have
$Z-U-U_1-U_2 \geq Z-(1-\frac{\del}3)(U_1+U'_1+U_2+U'_2+(1+\del)U) \geq \frac{\del}3 Z
 \geq \del^2 Z$. \end{proof}

Consider a timely schedule $\hat{S}$ and the solution $\bar{S}$ obtained from $\hat{S}$ by time stretching by a factor of $1+\del$.  Let $\J_{i,\ell}$ be an interval that is not contained in a reserved time period of any job, then by the last claim it contains an idle time of length at least $\del^3$ times the length of the interval, and this kind of idle time in such an interval $\J_{i,\ell}$ is said to be a {\it gap} in $\J_{i,\ell}$.  Observe that there might be other time intervals containing idle time which are not gaps (in case that the time interval is contained in a reserved period of some job).

\begin{corollary}\label{timestretch}
Given a timely schedule $\hat{S}$, the schedule $\tilde{S}$ obtained from $\hat{S}$ by time stretching by a factor of $1+\del$ can be constructed in polynomial time and the cost (pseudo-cost) of this solution is at most $1+\del$ times the cost (pseudo-cost) of $\hat{S}$, and such that for each interval $\J_{i,\ell}$ for which one of the following holds for $\tilde{S}$: the entire time interval is idle, or
there is a  job with a reserved starting time in this time interval, or there is a job with a reserved completion time in this interval, the interval contains idle time of length at least $\del^3 Z$ where $Z=s_{\ell} \del(1+\del)^i$ is the length of $\J_{i,\ell}$. Moreover, any job that is small for the interval that contains its actual starting time has a completion time in the same interval.
\end{corollary}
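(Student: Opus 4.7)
The plan is to unpack the three assertions of the corollary one by one, leaning on the earlier claim for the quantitative estimate.

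First, for the polynomial-time and cost bounds, I would observe that the construction as described has a trivial linear-time implementation: compute basic starting/completion times for every job, then sweep through the intervals $\J_{i,\ell}$ and, for each interval not contained in any reserved period, greedily compact the jobs of $\J_{i,\ell}$. For the cost bound, I would note that if a job $j$ originally completes in $\J_{i,\ell}$ in $\hat{S}$ then its original completion time $t'$ satisfies $t'<(1+\del)^{i+1}$, so its basic completion time $(1+\del)t'-\frac{\del p_j}{2s_\ell}\leq (1+\del)t'<(1+\del)^{i+2}$ lies in $\J_{i',\ell}$ with $i'\leq i+1$; since the shifting step only moves actual completion times earlier than the basic ones, each job contributes pseudo-cost at most $w_j(1+\del)^{i'+1}\leq (1+\del)\cdot w_j(1+\del)^{i+1}$, giving $\tilde{S}(A')\leq (1+\del)\hat{S}(A')$.

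For the gap property, I would first note that any interval $\J_{i,\ell}$ meeting one of the three listed conditions cannot be strictly contained in the reserved time period of a job. (If it were, the interval would be entirely occupied by a single job running through it without starting or completing there, so none of the three conditions would hold.) The previous claim therefore applies, yielding that the total free space in $\J_{i,\ell}$ that is available for jobs of $\J_{i,\ell}$ is at least $U+\del^2 Z$ while these jobs occupy at most $U+\del^{10}Z$. The idle length in $\J_{i,\ell}$ is consequently at least
\[
\del^2 Z-\del^{10}Z\;=\;\del^2(1-\del^8)Z\;\geq\;\del^3 Z,
\]
using $\del\leq 1/36$, which gives $\del^8+\del<1$ and hence $1-\del^8\geq \del$. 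The corner case of a fully idle interval is immediate since the idle length equals $Z\geq \del^3 Z$.

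Finally, for the last sentence, consider a job $j$ with $p_j<\del^{10}Z$ whose actual starting time lies in $\J_{i,\ell}$. The compaction step only moves jobs within the interval where their basic start lies, so $j$'s basic starting time must also be in $\J_{i,\ell}$; hence $j$ is one of the jobs of $\J_{i,\ell}$ in the sense of the previous claim (either by having both its reserved times in $\J_{i,\ell}$, or by being the small job with reserved start in $\J_{i,\ell}$ and reserved completion elsewhere). The claim then ensures the total load of the jobs of $\J_{i,\ell}$ is at most $U+\del^{10}Z<Z$, so after the compaction they all complete within $\J_{i,\ell}$; in particular the actual completion time of $j$ is in $\J_{i,\ell}$, as required. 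The main subtlety in the argument is the case analysis above ensuring that the interval $\J_{i,\ell}$ in question is not strictly inside a reserved period, so that the previous claim applies; once this is settled the rest is a direct computation.
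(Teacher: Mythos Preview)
Your proposal is correct and follows essentially the same route the paper takes: the paper states this as an immediate corollary of the preceding claim without giving a separate proof, and you have simply filled in the implicit steps (the cost bound via the basic completion time, the case analysis showing the three listed conditions preclude the interval being contained in a reserved period so that the claim applies, and the gap computation $\del^2 Z-\del^{10}Z\ge\del^3 Z$). One small imprecision: in the last paragraph, the relevant fact is not merely $U+\del^{10}Z<Z$ but rather that the available space established by the claim is at least $U+\del^2 Z$, which exceeds the load $U+\del^{10}Z$ of the jobs of $\J_{i,\ell}$, so they fit strictly before the outgoing job's basic starting time (or the end of the interval); this is what guarantees completion within $\J_{i,\ell}$.
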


Consider the solution $\bar{S}$ obtained from a timely schedule $\hat{S}$ by time stretching by a factor of $1+\del$, and let $j$ be a job whose reserved starting time is in the interval $\J_{i,\ell}$, then the processing time of $j$ on machine $\ell$ is at most $\frac{(1+\del)^{i+1}}{\del}$ and its actual starting time in $\hat{S}$ is at most $(1+\del)^{i+1}$, thus its actual completion time in $\hat{S}$ is no later than $(1+\del)^{i+1}\cdot (1+\frac{1}{\del})=\frac{(1+\del)^{i+2}}{\del}$, and therefore its reserved completion time in $\bar{S}$ is at most $\frac{(1+\del)^{i+3}}{\del}$ and the end of the time interval containing the reserved completion time of this job is at most $\frac{(1+\del)^{i+4}}{\del}$.  Therefore, using $\frac{(1+\del)^{4}}{\del} \leq \frac{1}{\del^2}$ we conclude the following.

\begin{claim}\label{consec_gaps}
If  $\bar{S}$ obtained from a timely schedule $\hat{S}$ by time stretching by a factor of $1+\del$, and $\J_{i,\ell}$ has a gap, then $\hat{S}$ has a gap on machine $\ell$ during the time frame $[(1+\del)^{i+1},(1+\del)^{i}\frac{1}{\del^2})$.
\end{claim}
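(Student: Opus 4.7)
The plan is to exploit the case analysis provided by Corollary~\ref{timestretch} together with the completion-time bound derived in the paragraph immediately preceding the claim. Since $\J_{i,\ell}$ has a gap in $\bar{S}$, that corollary tells us $\J_{i,\ell}$ is not contained in any reserved period and it falls into one of three cases: (a) $\J_{i,\ell}$ is entirely idle in $\bar{S}$, (b) $\J_{i,\ell}$ contains the reserved starting time of some job $j$, or (c) $\J_{i,\ell}$ contains the reserved completion time of some job $j$. The key quantitative input is that if a job $j$ has its reserved starting time in $\J_{i,\ell}$, then the $\J$-interval containing its reserved completion time ends at most at $(1+\del)^{i+4}/\del \le (1+\del)^i/\del^2$, and its actual completion time in $\hat{S}$ is at most $(1+\del)^{i+2}/\del$.

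I would first dispose of case (b). The job $j$ whose reserved starting time is in $\J_{i,\ell}$ has its reserved completion time in some interval $\J_{i',\ell}$ with $i' \ge i+1$ (since the reserved period is not contained in $\J_{i,\ell}$ given the overall structure --- if it were, the same analysis would be applied to the small stretched idle time inside $\J_{i,\ell}$ itself and propagated forward). By the preceding paragraph's computation, $\J_{i',\ell}$ ends at most at $(1+\del)^i/\del^2$, i.e., $(1+\del)^{i'+1} \le (1+\del)^i/\del^2$, and $\J_{i',\ell}$ contains a reserved completion time. Invoking Corollary~\ref{timestretch} once more, $\J_{i',\ell}$ therefore has a gap, and this gap sits inside the target time frame $[(1+\del)^{i+1}, (1+\del)^i/\del^2)$, giving the desired conclusion in this case.

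Next, I would handle case (c) by reducing to case (b): if $\J_{i,\ell}$ contains a reserved completion time, the associated job $j$ has its reserved starting time in some earlier interval, and I would apply the bound from the preceding paragraph (or a symmetric version of it) to that earlier interval to locate a reserved starting time that places the argument back into case (b). For case (a), the absence of any reserved period intersecting $\J_{i,\ell}$ of $\bar{S}$ combined with the inverse of the stretching correspondence forces an idle region in $\hat{S}$ at the corresponding time, which when traced forward under the stretching (together with the $\del p_j/(2s_\ell)$ reserved-idle padding at the ends of neighboring reserved periods) yields a gap in a subsequent $\J$-interval lying within the claimed window.

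The main obstacle, and the step requiring the most care, is the sub-case in which the interval $\J_{i',\ell}$ containing $j$'s reserved completion time happens to be contained in another job $j''$'s reserved period, so that Corollary~\ref{timestretch} cannot be invoked on $\J_{i',\ell}$ directly. The plan for this sub-case is to iterate: $j''$'s reserved starting time precedes $\J_{i',\ell}$ and its reserved completion time succeeds it, so by applying the same length bound to $j''$ we push the search window forward by at most a multiplicative factor of $(1+\del)^4/\del \le 1/\del^2$ each step, while the original upper bound $(1+\del)^i/\del^2$ caps the chain after only a bounded (in fact, one) application, guaranteeing that the chain terminates at an interval that is not contained in any further reserved period, which then carries the required gap in $[(1+\del)^{i+1}, (1+\del)^i/\del^2)$.
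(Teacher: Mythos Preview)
Your case analysis has two genuine problems, and the ``obstacle'' you worry about is in fact a non-issue.

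\textbf{The nesting worry is vacuous.} Reserved periods on a single machine are just $(1+\del)$-scaled copies of the processing intervals of $\hat{S}$, so they are pairwise disjoint. Hence if $\J_{i',\ell}$ contains the reserved completion time of some job $j$, no other job's reserved period can contain the portion of $\J_{i',\ell}$ preceding that completion time, so $\J_{i',\ell}$ is never contained in any reserved period and automatically has a gap. There is nothing to iterate, and your proposed chain argument (which would overshoot the window $[(1+\del)^{i+1},(1+\del)^i/\del^2)$ after a single extra step anyway) is unnecessary.

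\textbf{Case (c) goes the wrong direction.} If $\J_{i,\ell}$ contains only a reserved completion time, the associated job's reserved \emph{starting} time lies in an \emph{earlier} interval $\J_{i'',\ell}$ with $i''<i$. Applying the preceding paragraph's bound there yields information about $[(1+\del)^{i''+1},(1+\del)^{i''}/\del^2)$, which need not intersect the target window at all. This reduction does not work.

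\textbf{Case (b) is incomplete.} A job with reserved starting time in $\J_{i,\ell}$ may also have its reserved completion time in $\J_{i,\ell}$; you cannot assume $i'\ge i+1$.

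The clean argument, which is what the paper's preceding paragraph is setting up, does not case-split on $\J_{i,\ell}$ at all. Instead, look at $\J_{i+1,\ell}$. If it is not contained in any reserved period, it has a gap and lies in the window. If it \emph{is} contained in the reserved period of some job $j$, then since $\J_{i,\ell}$ is \emph{not} contained in that same period (it has a gap), the reserved starting time of $j$ must lie in $\J_{i,\ell}$. Now the bound from the preceding paragraph places $j$'s reserved completion time in some $\J_{i',\ell}$ with $(1+\del)^{i'+1}\le (1+\del)^i/\del^2$ and $i'\ge i+2$; by the disjointness observation above, $\J_{i',\ell}$ has a gap, and it lies in the required window.
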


In what follows, given a time-stretched schedule, we refer to actual starting times and actual completion times simply as starting times and completion times, respectively.

\subsection{Job shifting procedure}
Our next goal is to create a new input from $A'$ by increasing the release date of some of the jobs while keeping all other parts of the input unchanged such that the following holds.  There is a positive constant $z$, such that for every integer value of $t$, the set of jobs released at time $(1+\del)^t$ of a common density can be scheduled on the $m$ machines to complete no later than time $z(1+\del)^t$.

First, we define {\it divisions} of job sizes as follows. Let $\Delta=\lceil \log_{1+\del} 2 \rceil$. We have $\Delta \leq \frac{1}{\del}$ as $(1+\del)^{\frac 1{\del}}>2$. For a job size $(1+\del)^i$, let $k_i$ be an integer such that $2^{k_i} \cdot (1+\del)^i \in (1,2]$. Let $k'_i=\lceil \log_{1+\del} 2^{k_i} \rceil+i$, where $k'_i \leq \Delta$ as $(1+\del)^{\Delta-i} =\frac{(1+\del)^{\Delta}}{(1+\del)^i} \geq {2}\cdot {2^{k_i-1}}=2^{k_i}$, and $k'_i \geq 1$, as $(1+\del)^{-i} <2^{k_i}$.
The division of a job of size $(1+\del)^i$ is defined to be $k'_i$, and its subdivision is $k_i$. The pseudo-size of such a job is defined to be $\pi_i=\frac{(1+\del)^{k'_i}}{2^{k_i}}$. We have $(1+\del)^{k'_i-i} \geq 2^{k_i}$ and $(1+\del)^{k'_i-i-1} < 2^{k_i}$. Thus we have $(1+\del)^i \leq \pi_i < (1+\del)^{i+1}$. The divisions $1,2,\ldots,\Delta$ form a partition of $[1,2)$. The division of a size is the part of $[1,2)$ that it belongs to when it is multiplied by an appropriate power of $2$. The subdivision is this last power of $2$.
For example, a job of size $(1+\del)^{-1}$ has the subdivision $1$, as $1<\frac{2}{1+\del}\leq 2$. Its division is $\lceil \log_{1+\del} 2\rceil -1=\Delta-1$, which means that once the size is multiplied by the appropriate power of $2$, it becomes relatively close to $2$. For a job of size $1+\del$, the subdivision is $0$, and its division is $1$. Let ${A}''$ be the instance $A'$ where the size of a job of size $(1+\del)^i$ is replaced with the pseudo-size $\pi_i$. The values $\pi_i$ of one division form a divisible sequence, and more specifically, given two such distinct values of one division, the larger one divided by the smaller one is a positive integral power of $2$.
In what follows we will schedule the jobs of ${A}''$ in some cases. Let $\tilde{\cO}$ be an optimal timely solution for ${A}''$.

\begin{corollary}
We have $\cO''(A') \leq \tilde{\cO}({A''}) \leq (1+\del) \cO''(A')$.
\end{corollary}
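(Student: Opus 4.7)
The plan is to establish the two inequalities separately by reinterpreting one instance's schedule as a schedule for the other, exploiting the fact that pseudo-sizes sit in a narrow interval: $(1+\del)^i \leq \pi_i < (1+\del)^{i+1}$.

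For the lower bound $\cO''(A') \leq \tilde{\cO}(A'')$, I would take the schedule $\tilde{\cO}$ (optimal timely schedule for $A''$) and argue that the very same assignment of jobs to machines and to completion times is a valid timely schedule for $A'$. Since for each job the pseudo-size $\pi_i$ in $A''$ dominates the size $(1+\del)^i$ in $A'$, using the same completion times in $A'$ only creates extra slack: no two jobs on the same machine will overlap, no job starts before its release date, and the required timely inequality $t'' \geq \del \cdot p_j/s_\ell$ is preserved because the starting time $C_j - (1+\del)^i/s_\ell$ in $A'$ is at least the starting time $C_j - \pi_i/s_\ell$ in $A''$, which is at least $\del \cdot \pi_i/s_\ell \geq \del (1+\del)^i/s_\ell$. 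Because pseudo-cost depends only on the interval $\J_{i,\ell}$ containing each completion time (and the weights, which are unchanged), the pseudo-cost of this schedule viewed as a schedule for $A'$ equals $\tilde{\cO}(A'')$. The optimality of $\cO''$ for $A'$ then yields the desired inequality.

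For the upper bound $\tilde{\cO}(A'') \leq (1+\del)\cO''(A')$, I would apply the shifted schedule construction from Claim \ref{stre} to $\cO''$, producing $S(\cO'')$. By that claim $S(\cO'')$ is a valid timely schedule for the stretched input $\bar{A'}$ (where every size is $(1+\del)$ times larger than in $A'$) with pseudo-cost exactly $(1+\del) \cO''(A')$. I would then invoke the observation made immediately after Claim \ref{stre}: since the pseudo-size $\pi_i$ satisfies $\pi_i < (1+\del)^{i+1}$, i.e., $\pi_i$ never exceeds the stretched size, the schedule $S(\cO'')$ remains a valid timely schedule when the sizes are taken from $A''$ instead of $\bar{A'}$. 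The completion times of each job are unchanged, so the pseudo-cost is also unchanged. Since $\tilde{\cO}$ is optimal for $A''$, we conclude $\tilde{\cO}(A'') \leq S(\cO'')(A'') = (1+\del)\cO''(A')$.

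There is no substantive obstacle here; both directions are essentially bookkeeping that relies on the size comparison $(1+\del)^i \leq \pi_i < (1+\del)^{i+1}$ together with the facts that pseudo-cost is determined entirely by completion-time intervals and that making jobs smaller preserves feasibility and timeliness of a schedule. The only point requiring care is to verify the timely inequality when transferring $\tilde{\cO}$ from $A''$ to $A'$, but this is immediate from $\pi_i \geq (1+\del)^i$.
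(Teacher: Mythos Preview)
Your proposal is correct and follows essentially the same approach as the paper. The paper's proof is a terse two-liner: the first inequality holds because any timely schedule for $A''$ is a timely schedule for $A'$ (which is exactly your argument that shrinking sizes preserves feasibility and timeliness), and the second inequality follows from Claim~\ref{stre} together with the remark after it that $S(\cO'')$ remains valid when sizes are stretched by any factor in $[1,1+\del]$, which is precisely your reasoning.
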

\begin{proof}
The first inequality holds since any timely schedule for ${A''}$ is a timely schedule for $A'$. The second inequality follows from  Claim \ref{stre}.
\end{proof}

\begin{claim}
Consider a timely schedule. If job $j$ is assigned to run on machine $\ell$ and has a starting time in $\J_{i,\ell}$, then $p_j < \frac{s_{\ell}}{\del}(1+\del)^{i+1}$. If the completion time of $j$ is in the same interval, then $p_j < s_{\ell} \del (1+\del)^{i}$.
\end{claim}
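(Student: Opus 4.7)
The claim is a direct consequence of two facts: the timeliness condition on the schedule, and the simple observation that any job contained entirely within a single interval can have processing time at most the length of that interval.

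My plan for the first inequality is as follows. Let $t$ denote the starting time of $j$ on machine $\ell$. Since the starting time lies in $\J_{i,\ell}=[(1+\del)^i,(1+\del)^{i+1})$, we have $t<(1+\del)^{i+1}$. The timely property (recalled from the definition just before Claim \ref{stretchedworks}) gives $t\geq \del\cdot\frac{p_j}{s_\ell}$. Chaining these two inequalities and rearranging yields $p_j<\frac{s_\ell}{\del}(1+\del)^{i+1}$, as required.

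For the second inequality, my plan is to use the definition of the length of a time interval on a machine. Observe that if both the starting time and the completion time of $j$ lie in $\J_{i,\ell}$, then $j$ runs entirely within this time interval on machine $\ell$, which has duration $(1+\del)^{i+1}-(1+\del)^i=\del(1+\del)^i$. Therefore the processing time on machine $\ell$ satisfies $\frac{p_j}{s_\ell}<\del(1+\del)^i$, giving $p_j<s_\ell\del(1+\del)^i$, which is exactly the length of $\J_{i,\ell}$ as already noted in the paper when introducing interval lengths.

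There is essentially no obstacle here: both parts are one-line manipulations from the definitions. The only thing to be careful about is that strict inequalities are preserved because the interval $\J_{i,\ell}$ is half-open with strict upper endpoint, and that the timely schedule condition is the one with the constant $\del$ (rather than some other factor). Since both points are explicit earlier in the text, the proof should consist of just these two short computations.
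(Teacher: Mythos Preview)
Your proposal is correct and follows exactly the same approach as the paper's proof: the first inequality is obtained by combining the timeliness bound $t\geq \del\,p_j/s_\ell$ with $t<(1+\del)^{i+1}$, and the second is obtained directly from the length of the interval. The paper's proof is just a terser version of what you wrote.
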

\begin{proof}
Let $t$ be the starting time of $j$. We have $t \geq \del \frac{p_j}{s_{\ell}}$. Since $t<(1+\del)^{i+1}$, the first claim follows. The second claim holds by the length of the interval.
\end{proof}
Recall that any job of size below $\del^{10} s_{\ell} \del
(1+\del)^i=\del^{11}s_{\ell}(1+\del)^i$ is defined to be small for
interval $\J_{i,\ell}$. We also say that any job of size in
$[\del^{11}s_{\ell}(1+\del)^i,s_{\ell}\del(1+\del)^i)$ is called
medium for this time interval, and job of size in
$[s_{\ell}(1+\del)^i,\frac{s_{\ell}}{\del}(1+\del)^{i+1})$ is
called large for this time interval, and larger jobs (of size at
least $\frac{s_{\ell}}{\del}(1+\del)^{i+1}$) are called huge for
the interval. Recall that a job that is huge for a given time
interval cannot start during this time interval in a timely
schedule. We say that a job is big for an interval if it is either
medium or large for it. The following definition uses the
pseudo-sizes of $j_1,j_2$ that are their sizes in ${A''}$.

\begin{definition}
A schedule for ${A''}$ is called {\it  organized} if it is
timely, and it has the following properties.
\begin{enumerate} \item Let $j$ be a job whose starting time is in $\J_{i,\ell}$. If $j$ is small for this interval, then its completion time is in the same interval.

\item Consider two jobs, $j_1,j_2$, of a given division $k$ and
the same density, such that the starting time of $j_1$ is in
interval $\J_{i_1,\ell_1}$, and the starting time of $j_2$ is in
interval $\J_{i_2,\ell_2}$, where $i_2>i_1$ or both $i_2=i_1$ and $\ell_2>\ell_1$.
Moreover, assume that
$r'_{j_2} \leq (1+\del)^{i_1}$.

\begin{enumerate}
\item If $\pi_{j_1} < \pi_{j_2}$, then  $j_2$ is not small for $\J_{i_1,\ell_1}$.

\item If $\pi_{j_1}=\pi_{j_2}$, then $r'_{j_2} \geq r'_{j_1}$.

\item If $\pi_{j_1}=\pi_{j_2}$, and
 $r'_{j_2}=r'_{j_1}$, then the index of the second job is smaller,
i.e., $j_2<j_1$.

\end{enumerate}
\end{enumerate}
\end{definition}

Note that in part 2(a) of the definition, if $i_1=i_2$, the case where $j_1$  is big for $\J_{i,\ell_1}$ while $j_2$ is small for $\J_{i,\ell_2}$, cannot occur. This definition defines fixed priorities on jobs. For jobs of equal sizes and densities, a job of smaller release date has higher priority, and out of two jobs with the same release date, the one of the higher index has a higher priority. For a fixed time interval, there are also priorities between jobs  of one division that are small for it. The priority between two jobs of one size remains the same, and additionally, a larger job has a higher priority than a smaller job (this is defined per time interval, and only for jobs of the instance that are small for it). Obviously, given a time interval $\J_{i,\ell}$, all jobs that are released at time $(1+\del)^{i+1}$ or later are irrelevant for it and have no priority.
Let $\bar{\cO}$ denote an optimal organized schedule for
${A''}$ (in particular, $\bar{\cO}$ must be timely).

\begin{lemma}
$\bar{\cO}({A''}) \leq (1+2\del) \tilde{\cO}({A''})$.
\end{lemma}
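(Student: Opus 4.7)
The plan is to construct an organized schedule from $\tilde\cO$ while losing at most a factor $1+2\del$ in cost, via time stretching followed by cost-nonincreasing exchange arguments.

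First, I would apply Corollary~\ref{timestretch} to $\tilde\cO$ with stretching factor $1+\del$, obtaining a timely schedule $\hat S$ with $\hat S(A'')\le(1+\del)\tilde\cO(A'')$. The corollary immediately grants Property~1 of an organized schedule (every job that is small for the interval containing its starting time also completes in that interval) and guarantees that every interval $\J_{i,\ell}$ which contains a start, a completion, or any idle time carries a gap of work capacity at least $\del^3 Z$, where $Z=s_{\ell}\del(1+\del)^i$.

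Second, I would enforce Property~2 by a sequence of exchanges within each class of jobs of a common density and a common division. For a violation of~(b) or~(c), where $\pi_{j_1}=\pi_{j_2}$, simply swap the identities of the two jobs in their slots: the weights coincide, so the pseudo-cost is unchanged; feasibility holds because $r'_{j_2}\le(1+\del)^{i_1}$ permits $j_2$ to start in the earlier slot, and by timeliness $r'_{j_1}\le(1+\del)^{i_1}\le(1+\del)^{i_2}$ permits $j_1$ to start in the later slot. For a violation of~(a), where $\pi_{j_1}<\pi_{j_2}$ and $j_2$ is small for $\J_{i_1,\ell_1}$, insert $j_2$ into the gap of $\J_{i_1,\ell_1}$ (which exists because $j_1$ starts there, and $\pi_{j_2}<\del^{11}Z\ll\del^3 Z$ so $j_2$ fits and completes in $\J_{i_1,\ell_1}$), and place $j_1$ into $j_2$'s vacated slot in $\J_{i_2,\ell_2}$ (which accommodates it since $\pi_{j_1}<\pi_{j_2}$). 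Writing $i'_2$ for the index of the interval originally containing $j_2$'s completion and $i''_1$ for that of $j_1$'s new completion (with $i_1<i''_1\le i'_2$), the change in pseudo-cost is
\[
\omega_{j_2}\bigl((1+\del)^{i_1+1}-(1+\del)^{i'_2+1}\bigr)+\omega_{j_1}\bigl((1+\del)^{i''_1+1}-(1+\del)^{i_1+1}\bigr),
\]
which is strictly negative because $\omega_{j_2}>\omega_{j_1}$ and $(1+\del)^{i'_2+1}-(1+\del)^{i_1+1}\ge(1+\del)^{i''_1+1}-(1+\del)^{i_1+1}$.

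The main obstacle is the bookkeeping of gaps across iterated (a)-swaps and the verification that the process terminates in a genuinely organized schedule. To handle this, I would process the classes of (density, division) one at a time and, within each class, process target intervals $\J_{i_1,\ell_1}$ in increasing order of $(1+\del)^{i_1}$, resolving all violations rooted at the current interval in a single batch before moving on; once a class is settled at an interval it is never revisited. Termination then follows from a lexicographic potential on the number of violations of types (a), (b), and (c). The gap capacity in $\J_{i_1,\ell_1}$ remains sufficient throughout, because the total size of jobs of a single class inserted into it cannot exceed the work capacity $Z$ of the interval (jobs that do not fit are simply left at later positions and revisited when processing their own starting intervals). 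Combining the two stages yields an organized schedule of cost at most $(1+\del)\tilde\cO(A'')\le(1+2\del)\tilde\cO(A'')$, as required, with the extra $\del$ in the factor absorbing any residual slack in the exchange bookkeeping.
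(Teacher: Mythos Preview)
Your overall architecture---time-stretch to obtain Property~1 and a gap in every relevant interval, then perform exchanges to enforce Property~2---matches the paper, and your treatment of violations of type~(b) and~(c) by swapping identically-sized jobs is correct and is exactly what the paper does at the end of its argument.

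The genuine gap is in your handling of~(a). Your swap is one-for-one: the small job $j_1$ leaves its slot in $\J_{i_1,\ell_1}$, the larger job $j_2$ enters the interval, and $j_1$ goes to $j_2$'s old slot. The net increase in total work assigned to $\J_{i_1,\ell_1}$ is $\pi_{j_2}-\pi_{j_1}>0$. The number of such swaps at a single interval for a single (division, density) class is bounded only by the number of class-members originally starting there, and that number is \emph{not} bounded by any function of $\del$: the original jobs can be arbitrarily small, so there can be arbitrarily many, and each swap can contribute up to $\del^{10}Z$ to the work increase. The $\del^3 Z$ gap produced by a single time-stretch therefore does not suffice, and your fallback (``jobs that do not fit are simply left at later positions and revisited when processing their own starting intervals'') does not resolve the violation, since Property~2(a) is a statement about the \emph{earlier} interval $\J_{i_1,\ell_1}$. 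The extra factor of $1+\del$ you reserve at the end cannot absorb an $O(Z)$ work increase.

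The paper closes this gap by exploiting precisely the reason pseudo-sizes were introduced: within a division, sizes form a divisible sequence. When a larger job $j'$ is to be brought into $\J_{i_1,\ell_1}$, the paper removes not just one job but a \emph{subset} of small jobs of that division (first of the same density, then of smaller densities, in size order) whose total size is at most $\pi_{j'}$; divisibility guarantees that either the subset hits $\pi_{j'}$ exactly (zero excess) or all jobs of that and smaller densities are exhausted. This lets them show the cumulative excess per division is at most twice the size of the smallest surviving job, hence at most $2\del^{10}Z$, and summing over at most $1/\del$ divisions gives an excess below $\del^{8}Z$, well within the gap. This subset-for-one exchange, rather than a one-for-one exchange, is the missing idea in your argument.
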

\begin{proof}
Given $\tilde{\cO}({A''})$, we define a schedule as follows.  First, we consider the solution obtained from $\tilde{\cO}({A''})$ by time stretching by a factor of $1+\del$. Thus, any job that is small for the interval that contains its starting time, the completion time of the job is  in the same interval as its starting time, and therefore  the first condition of an organized schedule holds.
We consider the set of jobs that have both starting times and completion times in $\J_{i,\ell}$, and we call them  {\it jobs of $\J_{i,\ell}$} (this set contains, in particular, all jobs with starting times in the interval that are small for this interval).
In what follows we will modify this set of jobs (possibly for multiple intervals) such that the total size of the jobs of the interval  may increase by an additive factor of $\del^8 (1+\del)^i$ (for $\J_{i,\ell}$).

Given a pair of intervals $\J_{i\ell}$ and $\J_{i',\ell'}$, we say
that $\J_{i',\ell'}$ is later than $\J_{i,\ell}$ if either $i'>i$
or both $i'=i$ and $\ell'>\ell$.  In this case, we also say that
$\J_{i,\ell}$ is earlier than $\J_{i',\ell'}$. This defines a
total order on the intervals. Now, we apply a process for every
interval, such that as a result, given a job $j$ of a division $k$
whose starting time is in an interval $\J_{i,\ell}$ and $j$ is
small for the interval (so this is a job of the interval), there
is no job $j'$ of the same division and density as $j$ that is
released at time $(1+\del)^i$ or earlier that is larger than $j$,
such that $j'$ is small for $\J_{i,\ell}$, and $j'$ starts in a later
interval $\J_{i',\ell'}$. The process will be applied for
$i=\theta,\theta+1,\ldots$ in increasing order, and the intervals
$\J_{i,\ell}$ for the different values of $\ell$ and a common
value of $i$ will be considered in an increasing order of $\ell$.
During this process for a given interval, we may modify the assignment of jobs
that are currently assigned both a starting time and a completion
times in $\J_{i,\ell}$, that are small jobs for this interval, that
is, the subset of jobs out of the the jobs of $\J_{i,\ell}$ that
are small for this interval may be partially swapped with other jobs. Recall
that if there is a job that starts in this interval but ends
later, then it is not small for this interval, and thus we do not
deal with it now (as we currently only deal with jobs that are small
for the intervals where their starting times are). When
$\J_{i,\ell}$ is treated, the cost of the solution will not
increase. The set of jobs of this interval will possibly be
modified iteratively until it reaches a new fixed set. Letting $Z$ be the
length of this interval, the total size of the final set of
jobs of the interval will be larger by at most an additive factor
of $\del^7 Z$ compared to the initial set. Thus, since the
interval has a gap in a time stretched schedule as any small job for the time interval had either a reserved starting time or a reserved completion time (or both) in this time interval, the resulting
set of jobs of the interval can be assigned starting and
completion times in this interval. After we finish dealing with a
time interval, we will not modify the set of jobs of this time
interval again when we apply the same process on other (later) time
intervals. The process is applied for every division separately
(but the different densities for one division are considered
together). In each step of the process for one time interval, we
will either decrease the number of jobs of $\J_{i,\ell}$, or we
will declare that the process is completed for a given division
and density (or both). Thus, the process is finite for each time interval and division.

For a given division $k$, in order to apply the process, we create
a list of jobs of this interval that are small for this time interval and that belong to division $k$,
sorted by non-increasing density, and for each density, the jobs
are sorted by non-increasing size. At each time, there will be a
prefix of this list whose jobs already fulfill the conditions
above (for any job, no larger job of the same division and density
that could have been assigned to the current time interval as a small job with
respect to its release date is assigned to a later time interval).
Consider the first job $j$ in the list for division $k$ that does
not fulfill the condition. Let $d$ be its density. Replace it with
the largest job $j'$ of the same division and density that is
assigned to a later interval and could be assigned to $\J_{i,\ell}$ as a small job (if there is no such job, then the process for division $k$ and this time interval is complete). Note that there is currently no job
of $\J_{i,\ell}$ of division $k$ and density $d$ that is larger
than $j$ but smaller than $j'$. Additionally, note that from this
time the only jobs of this division and density that will be moved
out of the set of jobs of $\J_{i,\ell}$ are smaller than $j'$. In
order to add $j'$ to the set of jobs of $\J_{i,\ell}$, we now
create a subset of jobs that will leave the set of jobs of
$\J_{i,\ell}$ and will be scheduled instead of $j'$. The first
such job is $j$. Jobs are added to the subset according to the sorted list of density $d$ (i.e., giving preference to larger jobs), as long as the suffix of the sorted list for density $d$ that started with $j$ is non-empty, and the total size $\pi_{j'}$ was not reached by the subset. Since the jobs sizes of one division (as we are dealing with ${A''}$, these are the pseudo-sizes defined above) form a
divisible sequence, that is, for every pair of distinct sizes, the smaller
one divides the larger one, the difference between $\pi_{j'}$ and the total size of the subset of jobs is always an integer multiple of the size of any  remaining job of density $d$ (that appeared after $j$ in the list). Thus, as long as there is still a job of density $d$ that appeared after $j$ in the sorted list, and if the size $\pi_{j'}$ was not reached yet, the result of selecting an additional job to the subset will never result in exceeding the total size $\pi_{j'}$. The selection is stopped in one of
two cases. The first case is that the total size of jobs selected for the subset
reaches exactly  $\pi_{j'}$.
The second case is that all jobs of density $d$
of division $k$ that appeared after $j$ in the sorted list (i.e., all jobs that were in the list before $j'$ was chosen, whose sizes were smaller than $\pi_{j'}$ were selected. In this case we continue to select jobs of the same division
and smaller densities (if such jobs exist) for the subset,  as long as the total size of the subset does not
exceed $\pi_{j'}$ (but in this case we may skip some jobs if the total size of the resulting subset  would exceed $\pi_{j'}$).  Once again,
the process stops either if the total size of selected jobs is
exactly $\pi_{j'}$, or if all jobs that could be selected
were already selected (here the only constraint on remaining jobs of smaller densities and the same division is that adding any such job to the subset would result in exceeding the total size $\pi_{j'}$). We have three cases in total. In the last two cases we are done dealing with density $d$ for the current division. In the first two cases, $j'$ is swapped with jobs whose total size is exactly $\pi_{j'}$, and in the last case it is swapped with jobs whose total size is smaller. Thus, the total size of the jobs of $\J_{i,\ell}$ may have increased, but the jobs that replace $j'$ have a total size no larger than $\pi_{j'}$.
Later we will show that the total size of the jobs of $\J_{i,\ell}$ remains sufficiently small after applying the process for all divisions. We now show that after applying the process for a single division, the cost did not increase (assuming that the total size of the jobs of the interval indeed remains small enough).  We will show that the
cost of the modified solution (by swapping $j'$ and the subset that we found) does not
exceed the previous cost. Let $\pi$ denote the total size of jobs
that were moved out of the set of jobs of $\J_{i,\ell}$, and let
$\pi'=\pi_{j'}$. We have $\pi \leq \pi'$. Let
$\J_{i',\ell}$ be the interval containing the completion time of $j'$. Let $w'=w_{j'}$, and let $w$ be the total weight of jobs of the selected subset. The
contribution of $j'$ and the jobs of the subset to the objective
function before the swap was $(1+\del)^{i'+1} w' + (1+\del)^{i+1} w$.
The modified contribution of these jobs is at most $(1+\del)^{i'+1} w +
(1+\del)^{i+1} w'$ (this includes the possibility that $j'$ starts in an earlier time interval and some of the jobs that are swapped with $j'$ will complete in this earlier time interval). Since $i'\geq i$, it is sufficient to show $w'\geq
w$. Since the density of a selected job is at most $d$, and the
total size of these jobs is $\pi \leq \pi'$, we find $w \leq d
\cdot \pi' =w'$ as required. Next, we compute the total excess of the time interval, i.e., the total size of all
jobs ever moved to be a part of the jobs of $\J_{i,\ell}$ compared
to the set of jobs that were removed from this set in the same step (the sum of differences $\pi'-\pi$ for all steps and all divisions). The total size
of the jobs of $\J_{i,\ell}$ increases only when in the process
above we have $\pi < \pi'$. In this case, the size of any remaining job of
division $k$ any a  sufficiently small density (below $d$) is larger than $\pi'-\pi$. We
claim that for every division, the total excess is at most
$2\del^{11}s_{\ell}(1+\del)^i$, that is, at most $2\del^{10}$
times the length of the time interval. This is obviously true for
empty divisions, while a non-empty division never becomes empty.
We say that a pair of division and density was
dealt with if at least one job of this division and density was moved into $\J_{i,\ell}$. In this case the list of jobs of $\J_{i,\ell}$ with this division and density cannot become empty (but some densities can become empty). Densities that did not exist in the set of jobs of $\J_{i,\ell}$ for a given division will never enter this set.
We claim that
throughout the process for a given division $k$, the current total excess
never exceeds twice the size of the smallest job for the division
$k$ that is currently a job of $\J_{i,\ell}$ and its density is at most the
last density that was dealt with and created an additional excess
(if no density was considered so far, the property is that the
total excess is zero). The property is true when the process
starts. Since according to the definition of the process, the size
of the smallest job for division $k$ and density $\tilde{d}$ (that is a
job of $\J_{i,\ell}$) cannot decrease throughout the process (as subsets of jobs are being swapped with larger jobs), the property
keeps holding in the case that no new excess is created. When the
first excess is created while considering jobs of density $d$,
this excess is no larger than the smallest job out of the remaining jobs of
the current division and density smaller than $d$, then we are done.
Finally, assume that the excess increases when job $j'$ is swapped
with job $j$ and the selected subset of jobs, when the process is applied for density
$d'$. Prior to this process, the excess was at most twice the
smallest job of density below $d$ (where $d$ is the density of the job considered in the previous iteration that created new excess), and thus it was at most twice
the smallest job of density below $d'$ (as $d' \leq d$). Let
$j''$ be the smallest job that is selected for the subset replacing $j'$. The difference between $\pi_{j'}$ and the total size of the selected jobs is an integer multiple of  $\pi_{j''}$ (as sizes are divisible).
Since the difference remains positive,
there are no jobs of the size of $j''$ or smaller that remain for densities
at most $d'$ (and division $k$), and the remaining smallest job
has size at least twice the size of $j''$.
Let $\bar{\jmath}$ be the
new smallest job of density at most $d'$ and division $k$. The
previous excess is at most $2\pi_{j''} \leq 2 \cdot \pi_{\bar{\jmath}}/2=\pi_{\bar{\jmath}}$, and the new
excess is at most $\pi_{\bar{\jmath}}$, proving the claim in this
case too. As there are at most $\frac 1{\del}$ divisions, the
total excess over all divisions is at most $\frac{1}{\del}
\cdot  2\del^{11}s_{\ell}(1+\del)^i \leq \del^{9}
s_{\ell}(1+\del)^i=\del^8 Z$, where $Z$ is the size of the
interval.

Finally, now that properties 1 and 2(a) hold,  we will perform a process where pairs of jobs of equal size and density are swapped. Note that this will not increase the cost or harm property 1; any two swapped jobs have exactly the same sizes and weights, so the time intervals allocated to jobs remain the same. Property 2(a) must hold without any relation to indices of jobs or release dates.
In this process, for any size and density, remove all jobs from their positions, sort them by non-decreasing release dates and by decreasing indices within each release date, and assign them into the original time slots according to the order of time intervals of these time slots. We argue that a job cannot be assigned before its release date. Consider the job whose position in the sorted list is $x$. Since all the time slots were previously used for similar jobs (of the same size and density), there are at least $x$ jobs whose release date is sufficiently early to run in the time slot that $x$ gets in the process. Since the job appears in position $x$ in the sorted list, its release date must be one of the smallest $x$ release dates. This ensures the remaining properties.
\end{proof}

We apply an additional transformation on release dates to obtain the instance $\tilde{A}$ from $A'$ where the release date of $j$ is denoted by $r_j$ (and satisfies $r_j \geq r'_j$). Other than modified release dates for some of the jobs, $\tilde{A}$ and $A'$ have the same machines and the same jobs. The instance $\tilde{A}$ may contain release dates that do not exist in $A'$, but all release dates are integer powers of $1+\del$ and the smallest release date will remain $(1+\del)^{\theta}$.
Thus, any (timely) solution for $\tilde{A}$ is a (timely) solution for $A'$ as well. In order to use a solution of $A'$ for $\tilde{A}$, one has to show that each job $j$ that has a starting time in $\J_{i,\ell}$ has $r_i \leq (1+\del)^i$ (while other aspects of feasibility follow from the feasibility for $A'$). In particular, we can use an organized schedule for $A''$ as a schedule for $A'$ and thus for $\tilde{A}$. In $\tilde{A}$, for every kind of jobs, if too many (in terms of the number or the total size) pending jobs exist at time $(1+\del)^i$, the release date of some of them is increased. This can be done when it is impossible to schedule all these jobs due to the lengths of intervals. We need to take into account jobs that have  starting times in an interval $\J_{i,\ell}$ even if their completion times are in later time intervals. Jobs with this property are called special. The total size of jobs that are not special will be at most the total lengths of time intervals. The process is applied separately and independently on every possible density.

We will define $\tilde{A}$ using a process that acts on increasing values of $i$ and at each step applies a modification on release dates of jobs whose current release date is $(1+\del)^i$.
For a given density $d$, initialize
$r_j=r'_j$ for every job $j$ of density $d$. For each possible size, create a list of preferences. In this list, jobs are ordered by non-decreasing release dates in $A'$, and within every release date, by decreasing indices.
In what follows, for any job $j$ we will refer to $r'_j$ as the initial release date of $j$, and to the
values $r_j$ as modified release dates. As these values will be changed during
a modification process that we define (they can possibly be changed a number of times for each job), when we discuss such a value, we will always refer to the current value even if it will be changed later. We apply the process for
$i=\theta,\theta+1,\ldots$. The stopping condition will be the
situation where for some $i$ there are no jobs whose modified release dates
are at least $(1+\del)^i$. If during the process for some $i$
there are no jobs with modified release date $(1+\del)^i$, but there exist
jobs with larger release dates (in this case these are both initial and modified release dates of those jobs), we skip this value of $i$ and move
to the next value. Recall that our goal is to increase some (modified) release dates of
jobs where these jobs will not be started before the next possible release date, and to break ties consistently, we do this for jobs that would not be started before the next release date in an organized schedule. Since we are interested in organized schedules that are, in particular, timely, jobs that are huge for a given interval should not be scheduled a start time in it.
For a given value of $i$, and a fixed density $d$, we consider jobs whose modified release date is $(1+\del)^i$.
Out of these jobs we will select a subset, and for every such $j$ in the subset, the current value of $r_j$ will remain $(1+\del)^i$ and will not change later. For every job $j'$ that is not  selected, we modify its modified release date into $r_{j'}=(1+\del)^{i+1}$. Initially, no job is selected.
For every time interval $\J_{i,\ell}$ (for some $1 \leq \ell \leq m$), for every size that is big (large or medium) for this interval and every density, select the unselected highest priority job $j$ such that $r_j=(1+\del)^i$.
For sizes that no such job exists obviously no job is selected. Additionally, for every size that is medium for this interval and every density, select an additional unselected $\frac{1}{\del^{10}}$ highest priority jobs of this size (selecting all such jobs if less than $\frac{1}{\del^{10}}$ such jobs exist). For every division $k$ and every density, consider the job sizes of this division that are small for this time interval in non-increasing order. For each size, unselected jobs of one size are selected according to priorities. Keep selecting unselected jobs according to priorities, moving to the next size if all jobs of one size have been selected, until the total size of selected jobs of division $k$ (and density $d$, that are small for $\J_{i,\ell}$) is at least  $s_{\ell} \del (1+\del)^i$ (and thus their total size is at most $s_{\ell} \del (1+\del)^i (1+\del^{10})$, since any such job is small for $\J_{i,\ell}$). If the total size of all these jobs is smaller than $s_{\ell} \del (1+\del)^i$, then all of them are picked.
For a given triple $d,i,\ell$, letting $Z=s_{\ell}\del(1+\del)^i$, the total size of selected jobs is as follows. There are at most $\log_{1+\del}\frac {1+\del}{\del^2}+1 \leq \frac{1}{\del^3}+2$ sizes of large jobs, each having a size of at most $\frac{1+\del}{\del^2}Z$. There are at most $\log_{1+\del}\frac 1{\del^{10}}+1 \leq \frac{1}{\del^{11}}+1$ sizes of medium jobs, each having a size of at most $Z$, and at most $\frac 1{\del}$ divisions. The total size of selected large jobs is at most $Z(\frac{1}{\del^3}+2)(\frac{1+\del}{\del^2})<\frac{Z}{\del^6}$. The total size of selected medium jobs is at most $(1+\frac{1}{\del^{10}})(\frac{1}{\del^{11}}+1)Z <\frac{Z}{\del^{22}}$. The total size of small jobs is at most $Z\cdot \frac 1{\del} (1+\del^{10})$.
The total size is therefore at most $\frac{Z}{\del^{23}}$.

\begin{lemma}
In an organized schedule for ${A''}$, every job $j$ is scheduled a starting time in a time interval that is no earlier than $r_j$. Thus, any organized solution for ${A''}$ is a valid timely solution for $\tilde{A}$ with the same cost.
\end{lemma}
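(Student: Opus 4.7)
The plan is to prove, by strong induction on the intervals $\J_{i,\ell}$ in the lexicographic total order by $(i,\ell)$ (which is the order in which the shifting procedure visits them), the following strengthening: fix any organized schedule $S$ for ${A''}$; after the shifting procedure finishes processing the interval $\J_{i,\ell}$, every job that starts in $S$ at some interval $(i',\ell') \leq (i,\ell)$ has already been selected. The lemma then follows at once, since a job $j$ starting in $\J_{i^*,\ell^*}$ in $S$ must be selected at some iteration $i'' \leq i^*$, so its locked release date $r_j = (1+\del)^{i''}$ satisfies $r_j \leq (1+\del)^{i^*}$.

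For the inductive step I fix $(i,\ell)$ and a job $j$ of density $d$ starting at $\J_{i,\ell}$ in $S$. Since $S$ is timely, $p_j$ cannot be huge for $\J_{i,\ell}$, so $j$ is large, medium, or small for this interval, and I handle the three cases via a common template with two ingredients: (i) property~2 of an organized schedule forces every job that outranks $j$ in the shifting's internal priority and could compete with $j$ for the relevant slot at $\J_{i,\ell}$ to start in $S$ no later than $j$, so such competitors either start strictly earlier in the total order (and are already selected by the inductive hypothesis) or lie inside $\J_{i,\ell}$ itself; (ii) a direct count shows that the shifting reserves enough slots at $\J_{i,\ell}$ (or enough small-job volume) to absorb all such competitors together with $j$.

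For the large case, any large job already has processing time exceeding the interval length, so at most one large job of any given size can start in $\J_{i,\ell}$ on machine $\ell$; property~2(b)/(c) places every higher-priority same-(size, density) competitor strictly earlier in $S$, the inductive hypothesis selects them all, and the unique slot the shifting opens for $(p_j, d, \J_{i,\ell})$ is awarded to $j$. For the medium case, an elementary geometric count bounds the number of medium jobs of a single size that can start in $\J_{i,\ell}$ by $1/\del^{10}$, strictly less than the $1 + 1/\del^{10}$ slots the shifting opens per $(p_j, d, \ell)$, and property~2(b)/(c) again confines the unselected higher-priority same-size competitors to $\J_{i,\ell}$. For the small case, property~2(a) forces any unselected job of the same division $k$ and density $d$ that is small for $\J_{i,\ell}$ and outranks $j$ in the shifting's priority (largest pseudo-size first, then smallest $r'$, then largest index) to start in $S$ no later than $j$, so it is either already selected or lies in $\J_{i,\ell}$; since the total pseudo-size of the small $(k,d)$-jobs of $S$ starting in $\J_{i,\ell}$ is bounded by the interval length $s_\ell \del (1+\del)^i$, which equals the shifting's small-job quota for $(k, d, \ell)$ at iteration $i$, the priority-ordered filling of the quota reaches $j$.

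The trickiest step will be the small case, because the shifting's priority interleaves multiple pseudo-sizes within a division while property~2(a) only compares jobs indirectly through the small-for-$\J_{i,\ell}$ predicate; carefully tracing the implications yields the required confinement of higher-priority unselected competitors to $\J_{i,\ell}$. The base case $\J_{\theta,1}$ is covered by the same template with a vacuous inductive hypothesis, since any higher-priority unselected competitor of $j$ must then lie in $\J_{\theta,1}$ itself.
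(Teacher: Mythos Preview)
Your proof is correct and follows essentially the same approach as the paper's: where the paper argues by contradiction (taking the \emph{first} interval $\J_{i,\ell}$ containing a job not yet selected and deriving a contradiction either with minimality or with the organized property), you run the equivalent strong induction on the lexicographic order of intervals, using the organized properties 2(a)--(c) in the identical way to confine unselected higher-priority competitors to $\J_{i,\ell}$ and then counting against the shifting quota. Your medium-job count of at most $1/\del^{10}$ per size is in fact slightly sharper than the paper's stated $1/\del^{10}+1$, but both suffice.
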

\begin{proof}
Consider an organized schedule for ${A''}$ and assume by
contradiction that there exists a job $j'$ that is assigned a starting time in a time interval $\J_{i',\ell'}$ such that $r_{j'}\geq (1+\del)^{i'+1}$.
Let $\J_{i,\ell}$ be the first interval (according to the ordering of time intervals) that has a job $j$ with a starting time in it that was not selected for $\J_{i,\ell}$ or any earlier interval, such that either $j$ is big for this interval, and its priority is lower than jobs of the same size and density selected for $\J_{i,\ell}$, or $j$ is small for this time interval, and it has lower priority than small jobs of the same division and density selected for this time interval. Since $A''$ is a valid schedule, $r'_{j} \leq (1+\del)^{i}$, so it was possible to select $j$ if it is still unselected, and since $r_j>(1+\del)^i$, $j$ was not selected for this time interval or for earlier time intervals. Thus, in the process of  selection of jobs similar to $j$ for $\J_{i,\ell}$ (either jobs of the same size and density, if $j$ is big for this time interval, or jobs of the same division and density otherwise), the full number or total size of such jobs was selected (since an unselected job remains). If $j$ is large for $\J_{i,\ell}$, then there is another job $\tilde{\jmath}$ of the same size and density selected for $\J_{i,\ell}$, where $\tilde{\jmath}$ has higher priority.
As at most one such job can have a starting time in this time interval, $\tilde{\jmath}$ must have a starting time in another interval $\J_{i_1,\ell_1}$.  If $j$ is medium for $\J_{i,\ell}$, then there are $\frac{1}{\del^{10}}+1$ other jobs of the same size and density selected for $\J_{i,\ell}$, of higher priority. At most $\frac{1}{\del^{10}}+1$ jobs of this size and density can have starting times in $\J_{i,\ell}$, and therefore there is among these jobs a job $\tilde{\jmath}$ with higher priority that has a starting time in another interval  $\J_{i_1,\ell_1}$ .
Consider the case that $j$ is small for  $\J_{i,\ell}$. All jobs that are small for this interval and have starting times in it also have completion times in it, so their total size does not exceed the length of the time interval. Since there are jobs of the same division and density whose total size is at least the length of the interval that were selected for this time interval, at least one such job  $\tilde{\jmath}$ is assigned a starting time in another time interval, and  $\tilde{\jmath}$ has a higher priority than $j$ in $\J_{i,\ell}$. In this case we also let $\J_{i_1,\ell_1}$ denote this time interval where the  starting time of $\tilde{\jmath}$ is.
If $\J_{i_1,\ell_1}$ is an earlier interval. Since $\tilde{\jmath}$ is selected for $\J_{i,\ell}$ while $r'_{\jmath} \leq (1+\del)^i$ (since its starting time is in $\J_{i_1,\ell_1}$ and the schedule is valid for $A''$), we find that an earlier interval already has a job whose priority is too low (no matter whether this is its priority as a big job for that interval or a small job for that interval), contradicting the minimality of $\J_{i,\ell}$. If this is a later interval, then this contradicts the fact that the schedule is organized, as a job of a higher priority in $\J_{i,\ell}$ than $j$ has a starting time in a later time interval while $j$ has a starting time in $\J_{i,\ell}$.
\end{proof}

\begin{lemma}
\label{corYDensities}
If the input belongs to at most $\hat{y}$ densities (where $\hat{y}$ is a function of $\del$), then the jobs of modified release date $r$ can be all scheduled during $[t,t')$, where $t \geq r$, and $t' \leq t+ \frac{r\hat{y}}{\del^{22}}$.

Given a set of $\hat{y}$ densities, the set of jobs with modified release date at most $r$ and these densities, can be all scheduled during a time interval $[t,t')$ where $t>r$ and $t' \leq t+\frac{r\hat{y}(1+\del)}{\del^{23}}$.
\end{lemma}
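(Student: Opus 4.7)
The plan is to read off both bounds directly from the accounting of the job shifting procedure. Recall that the procedure processes the release dates $(1+\del)^i$ in increasing order, and for a fixed density $d$ and fixed interval $\J_{i,\ell}$ it selects a set of jobs for that triple $(i,\ell,d)$ whose total size was bounded in the paragraph immediately preceding the lemma by $\tfrac{Z}{\del^{23}}$ where $Z=s_{\ell}\del(1+\del)^i$. Each selected job is selected for exactly one triple (once selected, it is unavailable for later iterations), and the jobs whose modified release date is exactly $(1+\del)^i$ are precisely those selected at this value of $i$ across the various $(\ell,d)$.

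For the first part, let $r=(1+\del)^i$ and consider the jobs with $r_j=r$. For every machine $\ell$ and every density $d$, the jobs that were selected for some triple $(i,\ell,d)$ have total size at most $\tfrac{s_{\ell}\del(1+\del)^i}{\del^{23}}=\tfrac{s_{\ell}r}{\del^{22}}$, so if we assign each such job to machine $\ell$ (the machine it was selected for) and run the jobs on $\ell$ without idle time starting at $t$, the contribution to the load of $\ell$ from density $d$ is at most $\tfrac{r}{\del^{22}}$. Since there are at most $\hat{y}$ densities, processing the densities one after another on each machine gives load at most $\tfrac{r\hat{y}}{\del^{22}}$ on every machine; starting at any $t\geq r$ all jobs are available, and they all complete by $t+\tfrac{r\hat{y}}{\del^{22}}$.

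For the second part, I would apply the same machine assignment to every release date $r'=(1+\del)^{i'}$ with $i'\leq i$ simultaneously, and bound the per machine load by summing the single-release-date bound. Fix machine $\ell$ and density $d$; the total size of jobs selected for $(i',\ell,d)$ across all $i'\leq i$ is at most
\[
\sum_{i'\leq i}\frac{s_{\ell}(1+\del)^{i'}}{\del^{22}}\;\leq\;\frac{s_{\ell}}{\del^{22}}\cdot\frac{(1+\del)r}{\del}\;=\;\frac{s_{\ell}r(1+\del)}{\del^{23}},
\]
using the geometric series $\sum_{k\geq 0}(1+\del)^{-k}=\tfrac{1+\del}{\del}$. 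Dividing by $s_{\ell}$ and summing over the $\hat{y}$ densities, the total load on each machine is at most $\tfrac{r\hat{y}(1+\del)}{\del^{23}}$, so any $t>r$ (at which point every job with $r_j\leq r$ is available) yields a valid schedule completing by $t+\tfrac{r\hat{y}(1+\del)}{\del^{23}}$.

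The only subtle point—and really the main thing to verify—is that assigning each job to the unique machine for which it was selected and ordering jobs by density (and, within each density, arbitrarily) produces a feasible schedule. Feasibility reduces to checking that no job starts before its modified release date and that loads are as computed; the first is guaranteed by choosing $t\geq r$ (respectively $t>r$), and the second follows from the per triple size bound above. Thus no clever packing is required, and the lemma falls out of the selection bound established during the shifting procedure.
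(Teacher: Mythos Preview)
Your proposal is correct and follows essentially the same approach as the paper: assign each job to the unique machine $\ell$ for which it was selected during the job shifting procedure, use the per-density bound $\tfrac{Z}{\del^{23}}=\tfrac{s_\ell r}{\del^{22}}$ established just before the lemma, multiply by $\hat{y}$ densities, and for the second part sum the resulting geometric series over release dates $r'\le r$. Your write-up is if anything slightly more explicit than the paper's about why the assignment is feasible (uniqueness of the selecting machine and availability at time $t$), but the argument is the same.
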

\begin{proof}
Let $t \geq r$, and schedule the jobs whose modified release date is $r$ starting at time $t$, such that the jobs scheduled on machine $\ell$ are those that were selected for $\J_{i,\ell}$ (where $r=(1+\del)^i$). The total size of the jobs of one density is at most $\frac{Z}{\del^{23}}=\frac{s_{\ell}(1+\del)^i}{\del^{22}}=\frac{s_{\ell}r}{\del^{22}}$ so the jobs of $\hat{y}$ densities will be completed (on a machine of speed $s_{\ell}$) within a time of $\frac{r\cdot \hat{y}}{\del^{22}}$.

To prove the second claim, note that the total size of jobs of one density and modified release date at most $r$ that will run on machine $\ell$ is at most $\sum_{r'=1}^r \frac{s_{\ell} r}{\del^{22}}\leq \frac{s_{\ell}r'}{\del^{22}}\cdot \frac{1+\del}{\del}$.
\end{proof}

\begin{corollary}\label{job_shifting_cor}
Consider an instance $\tilde{A}_s$ obtained from $\tilde{A}$ by keeping jobs belonging to only $\hat{y}$ given densities, and that are released by time $R$ (for a fixed $R>0$). There exists an optimal timely schedule $\O_s$ for $\tilde{A}_s$, no job is completed after time $\frac{\hat{y}R}{\del^{25}}$.
\end{corollary}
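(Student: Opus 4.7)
The plan is two-stage: first, exhibit a timely schedule $\bar S$ of $\tilde{A}_s$ with maximum completion time at most $T := \frac{\hat{y}R}{\del^{25}}$; then argue that an optimal timely schedule can be taken to inherit this property. For the first stage, I apply Lemma~\ref{corYDensities} (Part~2) with $r := R$ and the $\hat{y}$ given densities. Since every job of $\tilde{A}_s$ has modified release date at most $R$ and density in the specified set, this yields a feasible (possibly non-timely) schedule $S_0$ in which every job completes by time at most $T_0 := R + \frac{R\hat{y}(1+\del)}{\del^{23}}$. Applying Claim~\ref{stretchedworks} with $\upsilon := (1+\del)^3$ to $S_0$ produces a timely schedule $\bar{S}$ whose maximum completion time is at most $(1+\del)^3 T_0$. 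A short calculation using $\del \leq 1/36$ and $\hat{y} \geq 1$ (which gives $(1+\del)^3 \leq 2$ and $(1+\del)^4 \leq 1/\del^2$) verifies $(1+\del)^3 T_0 \leq T$, completing the first stage.

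For the second stage, I let $\O_s$ be a minimum-cost timely schedule of $\tilde{A}_s$ among those whose maximum completion is at most $T$; this set is nonempty because $\bar{S}$ belongs to it. To show $\O_s$ is optimal among \emph{all} timely schedules of $\tilde{A}_s$, I plan an exchange argument: given any timely schedule $\O'$ whose maximum completion exceeds $T$, I construct a timely schedule $\O''$ with cost no larger than that of $\O'$ and maximum completion at most $T$. Any job $j$ of $\O'$ completing after $T$ is relocated to an earlier completion slot whose existence is witnessed by $\bar{S}$ viewed as a capacity certificate for $[0,T]$; any displaced jobs are absorbed into the idle gaps furnished by time-stretching (Corollary~\ref{timestretch}). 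Since relocating $j$ strictly decreases $w_j C_j$ and each displaced job only moves into previously idle space, the total cost does not increase, so were $\O'$ strictly cheaper than $\O_s$ the result would contradict the choice of $\O_s$.

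The main obstacle is executing the exchange argument without cascading violations. Three concerns must be addressed: termination of the sequence of displacements; compatibility of each displacement with the timeliness inequality $t \geq \del \, p_j / s_\ell$; and preservation of release dates, which is automatic here since every release date is at most $R \leq T$. The decisive structural ingredient is that, in any time-stretched schedule, every interval containing a reserved starting or completion endpoint contains idle time of length at least a $\del^3$ fraction of the interval (Corollary~\ref{timestretch}); together with the explicit packing of all jobs into $[0,T]$ exhibited by $\bar{S}$, this provides the slack needed to absorb the relocations demanded by the exchange.
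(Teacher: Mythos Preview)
Your Stage~1 is essentially sound (the citation of Claim~\ref{stretchedworks} is slightly off, since that claim concerns the passage from $A$ to $A'$ rather than schedules on $\tilde{A}_s$, but the underlying observation---time-augmenting any feasible schedule by a factor $\geq 1+\del$ yields a timely one---is elementary and goes through).

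Stage~2, however, is not a proof. You propose to take a timely schedule $\O'$ with a job $j$ completing after $T$, relocate $j$ to a slot ``witnessed by $\bar S$'', and absorb displaced jobs into the gaps of Corollary~\ref{timestretch}. But $\bar S$ and $\O'$ may be structured completely differently on $[0,T]$: that $\bar S$ packs all jobs there gives no direct handle on where $j$ can be inserted into $\O'$ without collision. Once you allow displacements, you face exactly the cascading problem you name, and the gaps of Corollary~\ref{timestretch} are a property of a \emph{single} time-stretched schedule, not a mechanism for reconciling two arbitrary schedules. You list termination and timeliness as obstacles and then assert that the gap structure ``provides the slack needed''; this is a hope, not an argument.

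The paper avoids Stage~2 entirely by working directly from an arbitrary optimal timely schedule $\O'_s$. It fixes a threshold $T'=(1+\del)^{\iota}$ just above $R':=\hat{y}R/\del^{24}$, removes \emph{all} jobs with completion time exceeding $T'$, and re-packs precisely those jobs into the single interval $[T',(1+\del)T')$ using Lemma~\ref{corYDensities}. The removal clears this interval on every machine, so no conflict or displacement cascade can arise. Each removed job previously had pseudo-cost at least $w_j(1+\del)^{\iota+1}$ and now has pseudo-cost exactly $w_j(1+\del)^{\iota+1}$, so the total cost does not increase, and hence the resulting schedule is again optimal. Its maximum completion time is at most $(1+\del)T'\leq (1+\del)^2 R' < \hat{y}R/\del^{25}$. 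This one-shot truncate-and-repack is the missing idea: rather than threading late jobs back into the body of $\O'$, discard their old positions and re-schedule them en bloc in freshly vacated space.
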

\begin{proof}
Consider an optimal schedule $\O'_s$, and let $R'=\frac{\hat{y}R}{\del^{24}}$. Let $R' <  T \leq  (1+\del)R'$ be such that $T=(1+\del)^{\iota}$ for an integer $\iota$. Consider the jobs whose completion times are above $T$ (and thus in their pseudo-costs, their weights are multiplied by at least $T(1+\del)$). By Lemma \ref{corYDensities}, it is possible to schedule these jobs during $[T,T+\frac{R\hat{y}}{\del^{23}})$. We have $\frac{R\hat{y}}{\del^{23}}=\del R' < \del T$, and therefore it is possible to remove all these jobs from $\O'_s$, and schedule them within the time interval $[T,(1+\del)T)$. As all jobs that are still running at time $T$ are removed, before the jobs are assigned again, the time interval $[T,(1+\del)T)$ is idle on all machines. No reassigned jobs has an increased cost, and therefore the resulting schedule, $\O_s$ is optimal as well. We are done since $(1+\del)T \leq (1+\del)^2 R'=(1+\del)^2 \frac{\hat{y}R}{\del^{24}} < \frac{\hat{y}R}{\del^{25}}$.
\end{proof}

\subsection{Applying the shifting technique on the release dates of jobs}
We let $\alpha = \frac{\hat{y}}{\del^{34}}$ where $\hat{y}> \frac{1}{\del^{12}}$ is an integral function of $\del$ that will be chosen later.
For $k=0,1,\ldots ,\frac{\alpha}{\del}-1$, and $i\geq 0$ let $Q_{i,k}= \theta+ \lceil \log_{1+\del} \alpha^{k+\frac{i\alpha}{\del}} \rceil$, where we interpret $Q_{i,\frac{\alpha}{\del}}$ as $Q_{i+1,0}$, and $Q_{-1,k}=\theta$ for all $k$.
Let $\Psi_{i,k}=(1+\del)^{Q_{i,k}}$ for $i \geq -1$, $0\leq k \leq\frac{\alpha}{\delta}$.

We define a new instance $A_k$ based on $\tilde{A}$ as follows.  The instance $A_k$ has the same set of $m$ machines and $n$ jobs, but we will modify the release dates of some jobs.  For every job $j$, for which there exists an integer $i$ such that the release date of $j$ (in $\tilde{A}$) is in the time frame $D_{i,k}=\left[\Psi_{i,k},\Psi_{i,k+1}\right)$, we increase the release date of $j$ to be $\Psi_{i,k+1}$. Let $|D_{i,k}|=\Psi_{i,k+1}-\Psi_{i,k}$.

\begin{lemma}
Given a timely optimal schedule $\opt_{\tilde{A}}$ for the instance $\tilde{A}$, there exists a value of $k$ such that $A_k$ has a feasible schedule of cost at most $(1+2\del)\opt_{\tilde{A}}$.
\end{lemma}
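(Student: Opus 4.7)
The plan is a shifting/averaging argument over the $\alpha/\del$ choices of $k$. For each $k\in\{0,\ldots,\alpha/\del-1\}$ I would construct a feasible (timely) schedule $SOL_k$ for $A_k$ starting from $\opt_{\tilde{A}}$ and establish
\[
\sum_{k=0}^{\alpha/\del-1}\bigl(SOL_k(A_k)-\opt_{\tilde{A}}\bigr)\;\leq\;2\alpha\cdot\opt_{\tilde{A}};
\]
since there are $\alpha/\del$ terms, averaging yields some $k$ with $SOL_k(A_k)\leq(1+2\del)\opt_{\tilde{A}}$, which is the desired schedule.

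For each $k$, the intervals $\{D_{i,k}\}_{i\geq 0}$ are separated by huge \emph{free regions} $[\Psi_{i,k+1},\Psi_{i+1,k})$, whose length exceeds $|D_{i,k}|$ by a factor on the order of $\alpha^{\alpha/\del-1}$. Call a job $j$ \emph{problematic for $k$} if $r_j\in D_{i,k}$ for some $i$ and its starting time in $\opt_{\tilde{A}}$ lies in $D_{i,k}$; any other job is already feasible for $A_k$ because its starting time already meets the shifted release date $\Psi_{i,k+1}$. To build $SOL_k$, I would, for each $i$ and each machine $\ell$, remove the problematic jobs from $D_{i,k}$ on $\ell$, compact the remaining non-problematic jobs within $D_{i,k}$ leftward (whose pseudo-cost can only decrease), and reinsert the problematic jobs as a contiguous block beginning at $\Psi_{i,k+1}$ on $\ell$. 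The non-problematic jobs originally scheduled in $[\Psi_{i,k+1},\Psi_{i+1,k})$ are pushed later by at most the block length $\leq|D_{i,k}|$; because the free region is vastly longer, the displacement does not propagate into $D_{i+1,k}$, and the construction proceeds independently for each $i$.

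Bounding $\Delta_k:=SOL_k(A_k)-\opt_{\tilde{A}}$: each problematic job $j$ with $r_j\in D_{i,k}$ incurs pseudo-cost increase at most $(1+\del)w_j|D_{i,k}|\leq 2\alpha\,w_j r_j$, using $|D_{i,k}|\leq(1+\del)(\alpha-1)\Psi_{i,k}$ and $r_j\geq\Psi_{i,k}$. To handle the displaced non-problematic jobs in $[\Psi_{i,k+1},\Psi_{i+1,k})$ (whose naive delay of $|D_{i,k}|\approx\Psi_{i,k+1}$ could near-double their pseudo-cost), I would first apply Corollary~\ref{timestretch} to $\opt_{\tilde{A}}$, paying an upfront $(1+\del)$ factor that produces idle gaps in every $\J$-interval where anything happens; because the total idle capacity in the free region (on the order of $\del^3$ times its length) dwarfs the problematic work $\leq s_\ell|D_{i,k}|$, the problematic block is inserted into pre-existing gaps and no non-problematic job is displaced. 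Combining, $\Delta_k\leq 2\alpha\sum_{j\text{ problematic for }k}w_j r_j$.

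For the averaging, the family $\{D_{i,k}\}_{i\geq 0,\ 0\leq k<\alpha/\del}$ partitions $[\Psi_{0,0},\infty)$, so each job $j$ is problematic for at most one value of $k$. Hence $\sum_k\sum_{j\text{ problematic for }k}w_j r_j\leq\sum_j w_j r_j\leq\opt_{\tilde{A}}$ (using $r_j\leq C_j^{\opt}$ and the pseudo-cost lower bound on $\opt$), giving $\sum_k\Delta_k\leq 2\alpha\cdot\opt_{\tilde{A}}$.

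The main obstacle is precisely the displaced-job issue above: without time-stretching, a shift by $|D_{i,k}|$ near $\Psi_{i,k+1}$ threatens a factor-of-two blow-up on the displaced jobs, which alone would exhaust the averaging budget. The resolution is to burn the $(1+\del)$ factor of Corollary~\ref{timestretch} upfront so that the cascade is replaced by insertion into gaps; the remaining $2\del$ slack in the $(1+2\del)$ target then exactly absorbs the problematic-job increase via the averaging over $k$.
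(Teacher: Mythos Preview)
Your averaging framework is the right skeleton, but the argument has an internal inconsistency that breaks the bound. You claim each problematic job $j$ (with $r_j\in D_{i,k}$) incurs pseudo-cost increase at most $(1+\del)w_j|D_{i,k}|$; that bound is only valid if the problematic block is reinserted \emph{contiguously} at $\Psi_{i,k+1}$, so that its jobs complete by $\Psi_{i,k+1}+|D_{i,k}|$. But you then abandon exactly this construction to avoid displacing non-problematic jobs, replacing it by ``insert into the gaps created by time-stretching.'' Those gaps have density only $\del^3$, so accommodating work $\le s_\ell|D_{i,k}|$ requires spreading it over a time window of length $\sim|D_{i,k}|/\del^3$ after $\Psi_{i,k+1}$. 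The problematic jobs then complete as late as $\sim\Psi_{i,k+1}/\del^3$, and the per-job increase becomes $\sim w_j\alpha r_j/\del^3$ rather than $2\alpha w_j r_j$. After averaging over the $\alpha/\del$ choices of $k$ you are left with a multiplicative loss of order $1/\del^2$, not $2\del$. So the two halves of your plan are individually plausible but mutually incompatible: the cost bound you quote for problematic jobs belongs to the construction you discarded.

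The paper sidesteps this entirely by \emph{not} trying to control each $\Delta_k$ separately and not distinguishing problematic from non-problematic jobs. For each $k$ it simply inserts idle time of length $|D_{i,k}|$ on every machine just before the first job starting in $D_{i,k}$, shifting everything later by that amount. Feasibility for $A_k$ is immediate. The key observation you are missing is a \emph{telescoping} bound on the sum: for a job $j$ whose start in $\opt_{\tilde A}$ lies in $D_{i',k'}$, the total shift it suffers over \emph{all} $k$ is $\sum_{(i,k):\,i<i'\text{ or }(i=i',k\le k')}|D_{i,k}|\le\Psi_{i',k'+1}\le(1+\del)\alpha\cdot\Psi_{i',k'}\le(1+\del)\alpha\,C_j^{\opt}$. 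Hence $\sum_k(\sol_k-\opt_{\tilde A})\le(1+\del)\alpha\,\opt_{\tilde A}$, and averaging over $\alpha/\del$ values of $k$ gives $(1+\del(1+\del))\opt_{\tilde A}<(1+2\del)\opt_{\tilde A}$. No time-stretching, no gap accounting, and the near-doubling you worried about for a single $k$ is harmless once aggregated.
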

\begin{proof}
Given $\opt_{\tilde{A}}$ (with specific time slots assigned to every job), we define a partition of the jobs into the sets $J_{ik}$ consisting of all jobs whose starting times (in $\opt_{\tilde{A}}$) is within the time frame $D_{i,k}$.  We denote  the total cost of the jobs in $J_{ik}$ (in the schedule $\opt_{\tilde{A}}$) by $T_{i,k}$.  Thus, $\opt_{\tilde{A}}=\sum_{i,k} T_{i,k}$.

For every value of $k$, we create a feasible schedule $\sol_k$ for the instance $A_k$ by modifying $\opt_{\tilde{A}}$ in the following way.  For $i=0,1,\ldots$, we add idle time of length $|D_{i,k}|$ for every machine just before the first job of $J_{ik}$ starts on that machine (shifting all further jobs to a later time by an additive factor of $|D_{i,k}|$).  Observe that this is indeed a feasible schedule for $A_k$ as for every job whose release date was increased to $\Psi_{i,k+1}$, its processing in $\opt_{\tilde{A}}$ starts no earlier than $\Psi_{i,k}$, and we added an idle time of length at least $|D_{i,k}|=\Psi_{i,k+1}-\Psi_{i,k}$, before its starting time, and thus the resulting starting time of the schedule is at least the new release date of the job.

We bound the difference $\sum_{k=0}^{\frac{\alpha}{\delta}-1} (\sol_k(A_k)-OPT_{\tilde{A}}(\tilde{A}))$. Assume that the starting time of job $j$ in $\opt_{\tilde{A}}$ is in $D_{i',k'}$. Then in all solutions $\sol_k$ (for $0\leq k \leq \frac{\alpha}{\delta}-1$) together, the total length of idle time that was added before the starting time of $j$ is the sum of values $|D_{i,k}|$ such that $i<i'$ or $i=i'$ and $k\leq k'$. This total length is at most $(1+\del)^{Q_{i,k+1}}\leq (1+\del)^{\theta+1}\cdot \alpha^{k+1+\frac{i\alpha}{\del}}$, while the completion time of $j$ in $\opt_{\tilde{A}}$ is at least $(1+\del)^{Q_{i,k}} \geq (1+\del)^{\theta}\cdot \alpha^{k+\frac{i\alpha}{\del}}$. Thus, $\sum_{k=0}^{\frac{\alpha}{\delta}-1} (\sol_k(A_k)-OPT_{\tilde{A}}(\tilde{A})) \leq (1+\del)\alpha OPT_{\tilde{A}}(\tilde{A})$, and  $\sum_{k=0}^{\frac{\alpha}{\delta}-1} \sol_k(A_k) \leq (\frac{\alpha}{\delta}+(1+\del)\alpha) OPT_{\tilde{A}}(\tilde{A}))$. By the pigeonhole principle, the value of $k$ for which $\sol_k(A_k)$ is minimal satisfies $\sol_k(A_k) \leq (1+\del(1+\del))OPT_{\tilde{A}}(\tilde{A})<(1+2\del)OPT_{\tilde{A}}(\tilde{A})$.
\end{proof}

For a fixed value of $k$, we partition the job set of $A_k$ into the following subsets. For every $i\geq 0$ we let $A_{ik}$ be the instance with  job set consisting  of the jobs whose release date (in $A_k$) is in the time interval $[\Psi_{i-1,k+1},\Psi_{i,k})$.   The following lemma follows using the fact that an optimal solution for $A_k$ gives a feasible solution for each $A_{ik}$. Let $\opt_k$ be an optimal cost of a solution to $A_k$, and let $\opt_{ik}$ be the cost of an optimal solution for $A_{ik}$.

\begin{lemma}
We have $\sum_i \opt_{ik} \leq \opt_k$.
\end{lemma}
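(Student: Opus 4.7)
The plan is to take an optimal timely schedule $\OPT_k$ for $A_k$ and, for each $i$, restrict it to a schedule $\sigma_i$ for $A_{ik}$ by simply keeping only the jobs belonging to $A_{ik}$ in the exact same time slots (same machine, same starting time, same completion time) as in $\OPT_k$, and removing all other jobs. I would first observe that $\sigma_i$ is a feasible timely schedule for $A_{ik}$: the machines and speeds are identical to those in $A_k$, the release dates are identical (we only restrict the job set, not modify release dates again), so the release date constraint holds as it did in $\OPT_k$; and since we only delete jobs, no two remaining jobs overlap on any machine.

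Next I would argue about costs. The pseudo-cost of $\sigma_i$ on $A_{ik}$ equals the contribution of the jobs of $A_{ik}$ to the pseudo-cost of $\OPT_k$, because each retained job has the same completion time in $\sigma_i$ as in $\OPT_k$ and the same weight. Since the subsets $A_{ik}$ for $i\geq 0$ partition the job set of $A_k$ (each job of $A_k$ has a modified release date lying in exactly one interval $[\Psi_{i-1,k+1},\Psi_{i,k})$), summing these contributions over $i$ gives exactly $\OPT_k$. Combining with $\opt_{ik} \leq \sigma_i(A_{ik})$ (by optimality of $\opt_{ik}$) yields
\[
\sum_i \opt_{ik} \;\leq\; \sum_i \sigma_i(A_{ik}) \;=\; \OPT_k \;=\; \opt_k,
\]
which is the desired bound.

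I do not expect any real obstacle here; the only subtle point to verify is that the partition by intervals $[\Psi_{i-1,k+1},\Psi_{i,k})$ indeed covers all jobs of $A_k$. Since every modified release date in $A_k$ is of the form $\Psi_{i,k+1}$ for some $i\geq -1$ (by construction of $A_k$, the job-shifting step pushed each release date up to the right endpoint of some $D_{i,k}$), every job falls in exactly one such interval, so the partition is well-defined and exhaustive. No other calculation is needed.
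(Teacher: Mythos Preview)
Your proposal is correct and matches the paper's own justification, which is the one-line remark that an optimal solution for $A_k$ restricted to the jobs of $A_{ik}$ gives a feasible solution for each $A_{ik}$, and since the $A_{ik}$ partition the job set of $A_k$, the contributions sum to $\opt_k$.

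One small inaccuracy in your ``subtle point'': it is not true that \emph{every} release date in $A_k$ has the form $\Psi_{i,k+1}$. Only jobs whose $\tilde{A}$-release date fell in some $D_{i,k}=[\Psi_{i,k},\Psi_{i,k+1})$ were shifted to $\Psi_{i,k+1}$; all other jobs kept their original release dates, which already lay in some $[\Psi_{i-1,k+1},\Psi_{i,k})$. Either way, every release date in $A_k$ lies in exactly one interval $[\Psi_{i-1,k+1},\Psi_{i,k})$, so the partition is indeed exhaustive and your conclusion stands.
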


For every value of $i,k$, all jobs in the instance $A_{ik}$ have release dates in $[\Psi_{i-1,k+1}, \frac{\Psi_{i,k}}{1+\del}]$. Recall that no job is released during the time $(\frac{\Psi_{i,k}}{1+\del},\Psi_{i,k+1})$.
We have $(1+\del)^{Q_{i,k}-1}\leq (1+\del)(1+\del)^{\theta-1}\alpha^{k+\frac{i\alpha}{\del}}$ and $(1+\del)^{Q_{i-1,k+1}}\geq (1+\del)^{\theta}\alpha^{k+1+\frac{(i-1)\alpha}{\del}} \geq (1+\del)^{Q_{i,k}-1} / \alpha^{\frac{\alpha}{\del}-1}$, and thus the number of distinct release dates in $A_{ik}$ is at most $(\frac{\alpha}{\del}-1) \cdot (\log_{1+\del} \alpha +1) \leq (\frac{\alpha}{\del}-1)(\frac{\alpha}{\del}+ 1) < \left( \frac{\alpha}{\del} \right)^2$.

In the next section we will show the existence of a polynomial time algorithm that find an approximate solution for the input $A_{ik}$, that approximates an optimal solution within a multiplicative factor of $1+\kappa\del$ for some constant value of $\kappa$, such that the output of the algorithm, denoted by $\sol_{ik}$, satisfies the following structural properties:
\begin{property}\label{prop1}
Let $\Psi$ denote  an upper bound on the maximum release date of the instance that is an integer power of $1+\del$, and let $\Psi'$ be an integer power of $1+\del$ in the interval $(\Psi \cdot \frac{\alpha}{1+\del}, \Psi \cdot \alpha(1+\del))$.
The solution $\sol_{ik}$ runs the jobs that start after $\Psi$ in non-increasing order of density, and
for every machine of speed $\sigma$ in $\sol_{ik}$ and two jobs $j,j'$ running on that machine whose completion times $C_j,C_{j'}$ satisfy $C_j-\frac{p_j}{\sigma}\geq \frac{\Psi'}{2}$, $C_{j'} \geq C_j + \Psi \cdot \frac{\hat{y}}{\del^{28}}$, then the density of $j'$ is at most the density of $j$ times $\frac{1}{(1+\del)^{\hat{y}}}$.
\end{property}

\begin{property}\label{no_large_prop}
Let $\Psi$ denote  an upper bound on the maximum release date of the instance that is an integer power of $1+\del$, and let $\Psi'$ be an integer power of $1+\del$ in the interval $(\Psi \cdot \frac{\alpha}{1+\del}, \Psi \cdot \alpha(1+\del))$.
For every machine of speed $\sigma$, there is no job $j$ assigned to this machine in $\sol_{ik}$ of size larger than $\Psi \cdot \frac{\hat{y}}{\del^{25}} \cdot \sigma$.
\end{property}

Note that the last property depends only on the allocation of jobs to machines such that for each job there is an upper bound on the index of machine that can process it.  Thus, the following holds.
\begin{remark}\label{rmk_large_prop} If $\tilde{sol}_{ik}$ satisfies Property \ref{no_large_prop} with some valid values of $\Psi$ and $\Psi'$, and we construct another schedule $\hat{sol}_{ik}$ by changing the starting times of some of the jobs and moving some jobs to lower index machines (with respect to their assigned machine in $\tilde{sol}_{ik}$), then $\hat{sol}_{ik}$ also satisfies Property \ref{no_large_prop} with the same values of $\Psi$ and $\Psi'$.
\end{remark}

In the later parts of the analysis we will use also another auxiliary property as follows.
\begin{property}\label{prop2}
Let $\Psi$ denote  an upper bound on the maximum release date of the instance that is an integer power of $1+\del$, and let $\Psi'$ be an integer power of $1+\del$ in the interval $(\Psi \cdot \frac{\alpha}{1+\del}, \Psi \cdot \alpha(1+\del))$.
The solution
$\sol_{ik}$ runs the jobs that start after $\Psi$ in non-increasing order of density, and
for every machine of speed $\sigma$ in $\sol_{ik}$ and two jobs $j,j'$ running on that machine whose completion times $C_j,C_{j'}$ satisfy $C_j-\frac{p_j}{\sigma}\geq \frac{\Psi'}{4}$, $C_{j'} \geq C_j + \Psi \cdot \frac{\hat{y}}{\del^{27}}$, then the density of $j'$ is at most the density of $j$ times $\frac{1}{(1+\del)^{\hat{y}}}$.
\end{property}

We further denote by $\sol_{ik}$ the cost of the solution $\sol_{ik}$.  Then, using $\sol_{ik}\leq (1+\kappa \del) \opt_{ik}$ for every $i,k$, we get that $\sum_i \sol_{ik} \leq (1+\kappa\del)\sum_i \opt_{ik} \leq (1+\kappa\del) \opt_k\leq (1+\kappa\del)(1+2\del) \opt_{\tilde{A}}$.

\begin{lemma}
Consider a particular machine and timely schedules $\sol_{ik}$ for every value of $i$ that satisfy properties \ref{prop1} and \ref{no_large_prop} with $\Psi=\Psi_{i,k}$.  Let $\s_i$ be the set of jobs assigned to this machine by  $\sol_{ik}$.  Then, there is a polynomial time algorithm that constructs  (for each machine) a feasible schedule of the jobs in $\bigcup_{i} \s_i$ on this machine with a total cost of at most $(1+\del)^2 \sum_i \sol_{ik}$.
\end{lemma}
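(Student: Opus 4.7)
The plan is a two-stage procedure whose composition multiplies each job's completion time by at most $(1+\del)^2$. First, I apply Corollary \ref{timestretch} to each $\sol_{ik}$ restricted to this machine, obtaining a stretched schedule $\tilde{\sol}_{ik}$ with cost at most $(1+\del)\sol_{ik}$ and idle gaps of length at least $\del^3 Z$ in every interval $\J_{q,\ell}$ that is entirely idle or contains a reserved job boundary. Second, I merge the $\tilde{\sol}_{ik}$'s into a single schedule $M$ on this machine such that each job $j \in \s_i$ has completion time $C_j^M \leq (1+\del)\tilde{C}_j^{(i)}$, where $\tilde{C}_j^{(i)}$ is its completion time in $\tilde{\sol}_{ik}$.

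For the merging, I would process $i$ in increasing order, maintaining a partial schedule $M_{<i}$ and extending it by inserting the jobs of $\s_i$ in the order of $\tilde{\sol}_{ik}$, each at the earliest feasible time respecting its release date and existing assignments in $M_{<i}$. When a job of $\sol_{i',k}$ with $i' < i$ still occupies a slot that overlaps with $\tilde{\sol}_{ik}$'s active region, shift the remaining portion of $\sol_{i',k}$'s timeline forward, placing its jobs into the gaps that Corollary \ref{timestretch} guarantees inside $\tilde{\sol}_{ik}$. Jobs of $\s_{i'}$ with $i' > i$ have release dates at least $\Psi_{i,k+1}$ and therefore do not interfere with jobs of $\s_i$ (unless $\tilde{C}_j^{(i)} \geq \Psi_{i,k+1}$, in which case the bound holds trivially).

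The main obstacle is bounding the accumulated delay on a job $j \in \s_i$ caused by schedules $\sol_{i',k}$ with $i' < i$, whose makespans are not a priori bounded by $\Psi_{i',k}$. Here I would invoke Property \ref{prop1}: jobs in $\sol_{i',k}$ starting after $\alpha \Psi_{i',k}/2$ have density decreasing by a factor of $(1+\del)^{\hat{y}}$ per time window of length $\Psi_{i',k}\hat{y}/\del^{28}$. Thus the portion of $\sol_{i',k}$ with density comparable to the densities in $\sol_{ik}$ is confined to a window of length $O(\Psi_{i',k+1}) = O(\alpha\Psi_{i',k})$, which is at most $\Psi_{i-1,k+1} \leq r_j$ because $\Psi_{i-1,k+1}/\Psi_{i'+1,k}$ grows at least like $\alpha^{(i-i'-2)\alpha/\del}$, being vast for $i' < i-1$ and approximately one for $i' = i-1$. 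The remaining light tail of each $\sol_{i',k}$ has weights smaller by a factor of $(1+\del)^{-\hat{y}}$, so even pushing it deep into the gaps of later stretched schedules adds only a negligible term to the total cost. Property \ref{no_large_prop} simultaneously bounds any individual job's processing time by $\Psi_{i',k}\hat{y}/\del^{25}$, preventing a single large job from blocking $\s_i$ for too long.

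Combining the two stages yields $C_j^M \leq (1+\del)\tilde{C}_j^{(i)} \leq (1+\del)^2 C_j^{(i)}$ for every job $j \in \bigcup_i \s_i$, so the total weighted completion time on this machine in $M$ is at most $(1+\del)^2 \sum_i \sol_{ik}$. The algorithm runs in polynomial time because the time-stretching and the sequential insertion each touch every job a constant number of times.
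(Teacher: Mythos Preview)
Your proposal contains a genuine gap: the per-job completion-time bound $C_j^M \leq (1+\del)\tilde C_j^{(i)}$ is too strong and is not supported by the argument you sketch. The density-decay reasoning you invoke via Property~\ref{prop1} is a \emph{weighted-cost} argument, not a \emph{completion-time} argument. A ``light tail'' of $\sol_{i',k}$ may have negligible total weight yet still carry non-negligible total processing time; when you push it into the $\del^3$-fraction gaps of $\tilde\sol_{ik}$, it can delay jobs of $\s_i$ by factors far larger than $1+\del$. Conversely, jobs of $\s_{i'}$ that spill past $\Psi_{i',k+1}$ may themselves need to be delayed by polynomial-in-$1/\del$ factors, and no uniform $(1+\del)$ bound on their completion times is available. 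Also, your remark that ``the portion of $\sol_{i',k}$ with density comparable to the densities in $\sol_{ik}$'' is off target: the instances $A_{ik}$ are partitioned by \emph{release date}, not by density, so jobs in different $\s_i$'s can have the same density range and there is no a priori density separation between them.

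The paper's proof does not attempt a per-job bound. Instead it (i) removes idle time after $\Psi_{i,k}/(1+\del)$ and time-stretches once; (ii) schedules each $\s_i$ unchanged within its own time frame $[\Psi_{i-1,k+1},\Psi_{i,k+1})$; and (iii) for the spillover of $\s_i$ beyond $\Psi_{i,k+1}$, chops it into blocks $\s_{i,\ell}$ of width $\Psi_{i,k+1}$ and inserts the $\ell$-th block into the $\ell$-th available gap at times $\geq \Psi_{i,k+1}/\del^5$. The crux is then a charging argument: by Property~\ref{prop1} the maximum density in $\s_{i,\ell}$ drops by $(1+\del)^{\hat y}$ per block, while the delay of block $\ell$ is at most $\Psi_{i,k+1}/\del^{8\ell}$, so the total extra cost $\sum_{\ell\geq 1} c_{i,\ell}$ is at most $\del \cdot c_{i,0}$, where $c_{i,0}$ is the cost of a reference block inside the original time frame. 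This yields the $(1+\del)^2$ bound on total cost even though individual spillover jobs are delayed by large factors. Your proposal would need to abandon the per-job bound and replace it with such a block-by-block weighted charging to close the gap.
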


\begin{proof}
We consider one specific machine and without loss of generality we assume that the speed of the machine is $1$. We first modify $\sol_{ik}$ by removing idle time periods after $\frac{\Psi_{i,k}}{1+\del}$ and denote the resulting schedule by $\sol''_{ik}$. Observe that $\sol''_{ik}$ satisfies properties \ref{prop1} and \ref{no_large_prop} with $\Psi=\Psi_{i,k}$ and $\Psi'=\Psi_{i,k+1}$ where Property \ref{prop1} holds because if two jobs satisfy the condition (on the difference between their completion times) for $\sol''_{ik}$ then they also satisfy the condition for $\sol_{ik}$, and the claim regarding Property \ref{no_large_prop} holds by Remark \ref{rmk_large_prop}. Then, consider the schedule $\sol'_{ik}$ that is the schedule obtained from $\sol''_{ik}$ by time stretching by a factor of $1+\del$.  Using Corollary \ref{timestretch}, we conclude that for every time interval, if $\sol'_{ik}$ contains a gap, then it has idle time of at least $\del^3$ times the length of the interval.  Moreover, the total costs of the solutions $\sol'_{ik}$ and $\sol_{ik}$ satisfy $\sum_i \sol'_{ik} \leq (1+\del)\sum_{i} \sol_{ik}$.

We next modify the schedule $\sol'_{ik}$ for every $i$ as follows. We start with constructing an initial schedule of some of the jobs.  For every $i$ we say that the time frame $[\Psi_{i-1,k+1},\Psi_{i,k+1})$ {\it belongs} to $i$, and
   the jobs of $A_{ik}$ with completion times (according to $\sol'_{ik}$) within the time frame that belongs to $i$ are scheduled according to the schedule $\sol'_{ik}$.  Due to the release dates constraints, their starting times are also during this time frame, and the other jobs of $A_{ik}$ have later completion times in $\sol'_{ik}$, and we shortly describe the schedule of these jobs. This completes the description of the initial schedule and the subset of jobs that are scheduled according to it.
    Given the initial schedule we define gaps of this schedule as follows.  For every time interval $\J_{\iota,\ell}$ that belongs to the time frame of $i$, we say that a gap of $\sol'_{ik}$ during $\J_{\iota,\ell}$ is a gap of the initial schedule.

   Consider a fixed value of $i$ for which the instance $A_{ik}$ has jobs that are not scheduled in the initial schedule. That is, there is at least one job of $A_{ik}$ whose completion time in $\sol'_{ik}$ is after time $\Psi_{i,k+1}$. Beginning with the staring time in $\sol''_{ik}$  $\mu$ of the first such job,  define subsets of $\s_i$ as follows.  For any integer $\ell \geq 1$, let $\s_{i,\ell}$ be the set of jobs of $A_{ik}$ whose starting time in $\sol''_{ik}$ is in the time frame $[\mu + (\ell-1) \Psi_{i,k+1} , \mu +\ell\Psi_{i,k+1} )$.  Observe that by Property \ref{no_large_prop} which $\sol''_{ik}$ satisfies (and thus $\sol'_{ik}$ also satisfies using Remark \ref{rmk_large_prop}), running this set of jobs on the machine requires a time interval of at most $\Psi_{i,k+1}(1+\del)  \leq 2 \Psi_{i,k+1}$.

For every $i$, we allocate space to the job sets $\s_{i,\ell}$ as follows: For every time interval $[(1+\del)^t,(1+\del)^{t+1})$ between two consecutive integer powers of $1+\del$ for which the initial schedule  has a gap and $(1+\del)^t \geq  \frac{\Psi_{i,k+1}}{\del^5}$, we allocate a set of jobs $\s_{i,\ell}$ for one value of $\ell$ in an increasing order of $\ell$.  That is, the first such gap receives $\s_{i,1}$, the second receives $\s_{i,2}$, etc.  We apply this procedure for every value of $i$. This completes the description of the resulting schedule where all jobs are assigned.  Next, we establish the feasibility of the schedule and bound its cost.

First note that the total processing times (on this machine) of jobs which are allocated to one specific time interval $[(1+\del)^t,(1+\del)^{t+1})$ is at most $$\sum_{i: \Psi_{i,k+1}/(\del^5) \leq (1+\del)^t} 2\Psi_{i,k+1} .$$  Denote by $\hat{\imath}$ the maximum value of $i$ for which $\Psi_{i,k+1}/(\del^5) \leq (1+\del)^t$, then this total processing time is at most $3  \Psi_{\hat{\imath}, k+1}$ because $\Psi_{i,k+1} \leq \del\Psi_{i+1,k+1}$ and thus $\sum_{i:i\leq \hat{\imath}} \Psi_{i,k+1} \leq \frac{3}{2(1+\del)} \Psi_{\hat{\imath},k+1}$. The resulting schedule is feasible by Corollary \ref{timestretch}, because
 $3 \Psi_{\hat{\imath}, k+1}\leq 3\del^5(1+\del)^t\leq \del^4(1+\del)^t$
 is at most $\del^3$ times the length of the time interval $[(1+\del)^t,(1+\del)^{t+1})$, which is $\del(1+\del)^t$.

 We next bound the increase of the cost due to the delay of the jobs in $\s_{i,\ell}$ for all $\ell$ with respect to the cost of $\sol'_{ik}$.
We let $\s_{i,0}$ be the set of jobs whose starting times in $\sol'_{ik}$ is in the time interval $[\Psi_{i,k+1}/2, 3\Psi_{i,k+1}/4)$ (note that every job assigned to this machine has size at most $\del \Psi_{i,k+1}$ and thus $\s_{i,0}$ is non-empty).   We next argue that for every value of $\ell \geq 1$ the density of the maximum density job in $\s_{i,\ell}$ is at most $(1+\del)^{\hat{y}\ell}$ times the minimum density of a job in $\s_{i,0}$.  We assume that the jobs in $\s_{i,\ell}$ are ordered in a non-decreasing order of their densities.  For $\ell\geq 0$ we denote by $j_{\ell}$ the first job in $\s_{i,\ell}$.  For $\ell=1$, observe that the completion time of $j_0$ is at most $3\Psi_{i,k+1}/4+ \del \Psi_{i,k+1} \leq \frac{4}{5}  \Psi_{i,k+1}$, and the completion time of $j_1$ is at least $\Psi_{i,k+1} \geq \frac{4}{5}  \Psi_{i,k+1} + \Psi \cdot \frac{\hat{y}}{\del^{28}}$, and
the claim follows by Property \ref{prop1}.  Next, assume that the claim holds for $\ell-1$. To prove the claim for $\ell$, it suffices to show that the density of $j_{\ell}$ is at most $(1+\del)^{\hat{y}}$ times the density of $j_{\ell-1}$.  This last claim holds because the completion time (in $\sol''_{ik}$) of $j_{\ell-1}$ is at most $\mu + (\ell-1) \Psi_{i,k+1} + 2 \del  \Psi_{i,k+1}$, and the completion time of $j_{\ell}$ is at least $\mu + \ell \Psi_{i,k+1} > \mu + (\ell-1) \Psi_{i,k+1} + 2 \del  \Psi_{i,k+1} + \Psi_{i,k} \cdot \frac{\hat{y}}{\del^{28}}$ and the claim follows by Property \ref{prop1}.
Then, for every $\ell \geq 1$ and every pair of jobs $j\in \s_{i,0}$ and $j'\in \s_{i,\ell}$, we conclude that the density of $j'$ is at most the density of $j$ times $\frac{1}{(1+\del)^{\hat{y}\ell}}$.

   Since there are no release dates of jobs from $\s_i$ after $\Psi_{i,k}$ and since $\s_{i,1}\neq \emptyset$, the solution $\sol''_{ik}$ does not have idle time during the time interval $[\Psi_{i,k}, \Psi_{i,k+1})$, and thus the total processing times of the jobs in $\s_{i,0}$ is at least $\frac{\Psi_{i,k+1}}{4}-\del\Psi_{i,k+1} \geq \frac{\Psi_{i,k+1}}{5}$ due to a possible job completing in this time frame but starting earlier.  On the other hand the total size of the jobs in $\s_{i,\ell}$ is at most $(1+\del)\Psi_{i,k+1}$.

Moreover, note that the delay of the set of jobs $\s_{i,\ell}$ is by an additional (additive) time interval of length at most $\Psi_{i,k+1} \cdot \frac{1}{\del^{8\ell}}$ with respect to their schedule in $\sol'_{ik}$.

Let $d_i$ be the minimum density of a job in $\s_{i,0}$.
Denote by $c_{i,\ell}$ the additional cost of delaying the set of jobs $s_{i,\ell}$, and by $c_{i,0}$ the cost of $\s_{i,0}$ in the schedule $\sol'_{ik}$, then $c_{i,0} \geq \frac{\Psi_{i,k+1}}{5} \cdot d_i \cdot \frac{\Psi_{i,k+1}}{2}$. Denote by $t_{i,\ell}$ the total processing time of the jobs in $\s_{i,\ell}$, for $\ell\geq 0$.  Then, for $\hat{y}\geq \frac{1}{\del^{12}}$ we have $(1+\del)^{\hat{y}} \geq (1+\del)^{1/(\del^{12})} \geq \frac{1}{\del^{11}}$, and therefore $\frac{1}{\del^8(1+\del)^{\hat{y}}-1} \leq \del^{2}$. Thus, the following holds.
\begin{eqnarray*}
\sum_{\ell=1}^{\infty} c_{i,\ell} &\leq& \sum_{\ell=1}^{\infty} t_{i,\ell}\cdot d_i \cdot \frac{1}{(1+\del)^{\hat{y}\ell}} \cdot \frac{\Psi_{i,k+1}}{\del^{8\ell}} \\ &\leq& \sum_{\ell=1}^{\infty} (1+\del)\Psi_{i,k+1} \cdot d_i \cdot \frac{1}{(1+\del)^{\hat{y}\ell}} \cdot \frac{\Psi_{i,k+1}}{\del^{8\ell}}\\&\leq&
10 \cdot c_{i,0} \cdot (1+\del)\cdot  \sum_{\ell=1}^{\infty} \frac{1}{(1+\del)^{\hat{y}\ell}} \cdot \frac{1}{\del^{8\ell}}\\ &=& 10 \cdot c_{i,0} \cdot (1+\del)\cdot \frac{1}{\del^8(1+\del)^{\hat{y}}-1}\\
&\leq& 10 \cdot c_{i,0} \cdot (1+\del)\cdot \del^2 \leq \del c_{i,0} .
\end{eqnarray*}
\end{proof}

\begin{corollary}
If there exists an algorithm with time complexity $T(n,m,\frac{1}{\del})$ such that for a given pair $i$ and $k$, it constructs a schedule $\sol_{ik}$ satisfying properties \ref{prop1} and \ref{no_large_prop} for $\Psi=\Psi_{i,k}$ and $\Psi'=\Psi_{i,k+1}$ such that $\sol_{ik}\leq (1+\kappa \del)\opt_{ik}$, then there exists an algorithm with time complexity $n\frac{\alpha}{\del} T(n,m,\frac{1}{\del})$ that approximates $\tilde{A}$ within a factor of $(1+\kappa \del)(1+2\del)(1+\del)^2$.
\end{corollary}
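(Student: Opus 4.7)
The plan is to combine the per-subproblem algorithm across the two-level decomposition (shift parameter $k$ and release-date band $i$) and then paste the resulting per-band schedules together machine-by-machine, paying the three multiplicative losses separately.

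For each $k \in \{0,1,\ldots,\tfrac{\alpha}{\del}-1\}$, I would proceed as follows. For each nonempty band index $i$ (there are at most $n$ such indices, since every band contains at least one job), invoke the assumed algorithm on $A_{ik}$ to obtain a schedule $\sol_{ik}$ that satisfies Properties~\ref{prop1} and \ref{no_large_prop} with $\Psi=\Psi_{i,k}$, $\Psi'=\Psi_{i,k+1}$, and cost at most $(1+\kappa\del)\opt_{ik}$. Now apply the preceding lemma on each machine separately: it produces, in polynomial time, a single feasible schedule $\sol_k$ for $A_k$ whose total cost on each machine is at most $(1+\del)^2$ times the sum of the machine's costs in the $\sol_{ik}$'s. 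Summing over machines and using the earlier lemma $\sum_i \opt_{ik} \leq \opt_k$, we obtain
\[
\sol_k(A_k) \;\leq\; (1+\del)^2 \sum_i \sol_{ik} \;\leq\; (1+\del)^2(1+\kappa\del)\sum_i \opt_{ik} \;\leq\; (1+\del)^2(1+\kappa\del)\,\opt_k.
\]

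Finally, output the schedule of minimum cost among the $\tfrac{\alpha}{\del}$ schedules $\sol_0,\sol_1,\ldots,\sol_{\alpha/\del-1}$. By the shifting lemma there exists a value $k^\star$ with $\opt_{k^\star} \leq (1+2\del)\opt_{\tilde{A}}$, so at least one of the computed schedules $\sol_{k^\star}$ satisfies
\[
\sol_{k^\star}(A_{k^\star}) \;\leq\; (1+\del)^2(1+\kappa\del)(1+2\del)\,\opt_{\tilde{A}},
\]
and every $\sol_k$ is a feasible schedule for $A_k$, hence a feasible schedule for $\tilde{A}$ (since $A_k$ only tightens release dates relative to $\tilde{A}$). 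Thus the minimum-cost schedule yields the claimed approximation ratio $(1+\kappa\del)(1+2\del)(1+\del)^2$ for $\tilde{A}$.

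For the time complexity, the dominant cost is the invocation of the assumed algorithm: at most $n$ calls per value of $k$ and $\tfrac{\alpha}{\del}$ values of $k$, giving $n\tfrac{\alpha}{\del}\,T(n,m,\tfrac{1}{\del})$ calls. The combining step of the preceding lemma runs in polynomial time in $n$ and $m$, which is absorbed into $T(n,m,\tfrac{1}{\del})$. The only mildly delicate point is verifying that $\sol_{ik}$ indeed satisfies the hypotheses of the combining lemma (i.e.\ Properties \ref{prop1} and \ref{no_large_prop} with $\Psi=\Psi_{i,k}$), which is exactly the property guaranteed by the assumed algorithm; this guarantee is what allows the per-machine pasting step to charge the delays of the later job-sets $\s_{i,\ell}$ to gaps far in the future without destroying the per-band approximation factor.
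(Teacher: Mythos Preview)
Your argument is correct and follows exactly the approach implicit in the paper: loop over the $\tfrac{\alpha}{\del}$ shift values $k$, for each $k$ solve the at most $n$ nonempty band instances $A_{ik}$, combine them per machine via the preceding lemma at a cost of $(1+\del)^2$, use $\sum_i\opt_{ik}\le\opt_k$ and the shifting lemma $\opt_{k^\star}\le(1+2\del)\opt_{\tilde A}$, and output the best over $k$. The paper states this corollary without proof, relying on precisely the chain of inequalities you wrote out.
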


\subsection{A guessing step on the structure of the optimal schedule}
In what follows, we consider one specific instance $\hat{A}=A_{i\tilde{\kappa}}$,
and we scale all release dates and job sizes so that the minimum
release date of a job in $\hat{A}$ is $1$, and the maximum release
date of a job in $\hat{A}$ is at most $\Psi=\frac{\Psi_{i\tilde{\kappa}}}{\Psi_{i-1,\tilde{\kappa}+1}}$ (and thus
$\Psi \leq \alpha^{\frac{\alpha}{\del}-1}$), and we let $\Psi'= \frac{\Psi_{i,\tilde{\kappa}+1}}{\Psi_{i-1,\tilde{\kappa}+1}}$.

Given a schedule $\sol$ for
$\hat{A}$, we define machine colors as follows.  The color of
machine $i$ in $\sol$ is the subset of the time intervals of the
form $[(1+\del)^t,(1+\del)^{t+1})$ such that $\sol$ has a starting
time or some idle time during this interval on machine $i$, and $t$ is
an integer in the interval $[0,\lceil \log_{1+\del}
\alpha^{\frac{\alpha}{\del}-1}\rceil+1]$.  Observe that the number
of possible colors for a machine is
$O((\alpha)^{\frac{\alpha}{\del^2}-1})$.  We
say that the color of machine $i$ is pink (or $i$ is a pink
machine) if its color is the set of all time intervals within
$[0,\lceil \log_{1+\del} \alpha^{\frac{\alpha}{\del}-1}\rceil+1]$.  In what follows, we assume that the number of machines in the input is at least $\frac{2}{\del^7}+3$.  Observe that if this assumption does not hold, then the PTAS of Afrati et al. \cite{Afrati+99} for a constant number of unrelated machines is actually an EPTAS and we use that algorithm instead.
The {\it palette} of schedule $\sol$ is the vector with
$\frac{1}{\del^7}+1$ components consisting of the colors of the
machines with smallest indices (that is, the fastest machines,
breaking ties in favor of smaller indices), where the $i$-th
component of the palette is the color of machine $i$.

Given a palette $\pi$ and an integer $\tau$, we denote by $(1+\del)^{\tau}\pi$ the palette defined as follows.  For $\tau=1$, let $\sol_{\pi}$ be a schedule with palette $\pi$, then $(1+\del)\pi$ is the palette of the shifted schedule of $\sol_{\pi}$, and for larger values of $\tau$, we define the palette by recursion using $(1+\del)^{\tau} \pi=(1+\del) (1+\del)^{\tau-1}\pi$.

\begin{claim}
The number of possibilities for palettes of schedules is $O\left((\left(\alpha\right)^{\frac{\alpha}{\del^2}-1})^{\frac{1}{\del^7}+1}\right)$.
\end{claim}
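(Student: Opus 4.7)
The plan is a direct counting argument. A palette is a vector of $\frac{1}{\del^7}+1$ colors, so it suffices to bound the number of possible colors of a single machine and then raise that bound to the power $\frac{1}{\del^7}+1$.

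First I would count the number of candidate time intervals. A color is a subset of those intervals $[(1+\del)^t,(1+\del)^{t+1})$ for which $t$ is an integer in $[0,\lceil \log_{1+\del}\alpha^{\frac{\alpha}{\del}-1}\rceil+1]$; there are exactly $\lceil \log_{1+\del}\alpha^{\frac{\alpha}{\del}-1}\rceil+2$ such intervals. Hence the number of colors is at most
\[
2^{\lceil \log_{1+\del}\alpha^{\frac{\alpha}{\del}-1}\rceil+2}.
\]

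Next I would convert this bound into the form stated in the claim. Using the elementary inequality $(1+\del)^{1/\del}\geq 2$ (valid for $\del\in(0,1]$), one gets $\log_{1+\del}2\leq 1/\del$, and therefore
\[
2^{\log_{1+\del}\alpha^{\frac{\alpha}{\del}-1}} = \alpha^{(\frac{\alpha}{\del}-1)\log_{1+\del}2} \leq \alpha^{(\frac{\alpha}{\del}-1)/\del} = \alpha^{\frac{\alpha}{\del^2}-\frac{1}{\del}} \leq \alpha^{\frac{\alpha}{\del^2}-1},
\]
where the last step uses $\frac{1}{\del}\geq 1$. The additive constants in the exponent (the ``$+2$'' from counting endpoints and the ceiling) contribute only a constant multiplicative factor that is absorbed by the $O(\cdot)$ notation. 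This gives the color bound $O(\alpha^{\frac{\alpha}{\del^2}-1})$.

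Finally, since the palette is a tuple of $\frac{1}{\del^7}+1$ independent colors (one per machine of smallest index), the number of possible palettes is at most the number of colors to the power $\frac{1}{\del^7}+1$, yielding the stated bound $O\bigl((\alpha^{\frac{\alpha}{\del^2}-1})^{\frac{1}{\del^7}+1}\bigr)$. There is no genuine obstacle here — the argument is a straightforward pigeonhole/counting estimate; the only mildly delicate step is the conversion between base $2$ and base $\alpha$ in the exponent, which is handled by the bound $\log_{1+\del}2\leq 1/\del$.
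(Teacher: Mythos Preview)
Your proposal is correct and follows essentially the same approach as the paper: the paper states (without proof) that the number of colors of a single machine is $O(\alpha^{\frac{\alpha}{\del^2}-1})$, and the claim on palettes follows immediately by raising this to the power $\frac{1}{\del^7}+1$. Your argument simply fills in the derivation of the color bound via the estimate $\log_{1+\del}2\leq 1/\del$, which is exactly the intended computation.
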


We next modify the optimal schedule and show that there is a near optimal schedule whose palette has a pink machine.
\begin{lemma}\label{pinkmachinelem}
Let $\sol'$ be a timely schedule for instance $\hat{A}$ whose palette is $\pi'$.  Then, there is another timely schedule $\sol$ (with palette $\pi$) that has a pink machine and such that $\sol(\hat{A})\leq (1+\del)^2 \sol'(\hat{A})$.  Moreover, if $\sol'$ satisfies with $\Psi$ and $\Psi'$ properties \ref{prop2} and \ref{no_large_prop}, then $\sol$ satisfies properties \ref{prop1} and \ref{no_large_prop} with the same values of $\Psi$ and $\Psi'$.
\end{lemma}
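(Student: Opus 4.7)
The plan is to build $\sol$ from $\sol'$ in two stages, each incurring a multiplicative cost blowup of at most $1+\del$.

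Stage 1: Apply the shifting operation of Claim \ref{stre} to $\sol'$, producing $S(\sol')$ with cost $(1+\del)\sol'(\hat{A})$. The assignment of jobs to machines is unchanged, so Property \ref{no_large_prop} carries over by Remark \ref{rmk_large_prop}. Both $C_j - p_j/\sigma$ and $C_{j'} - C_j$ scale by $1+\del$ under shifting. Consequently, a pair $(j, j')$ satisfying the Property \ref{prop1} thresholds $C_j - p_j/\sigma \geq \Psi'/2$ and $C_{j'} - C_j \geq \Psi\hat{y}/\del^{28}$ in $S(\sol')$ corresponds in $\sol'$ to a pair with $C_j - p_j/\sigma \geq \Psi'/(2(1+\del)) \geq \Psi'/4$ and $C_{j'} - C_j \geq \Psi\hat{y}/((1+\del)\del^{28}) \geq \Psi\hat{y}/\del^{27}$, which are precisely the Property \ref{prop2} thresholds. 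The density-ratio conclusion from Prop~\ref{prop2} on $\sol'$ therefore transfers to Property \ref{prop1} for $S(\sol')$.

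Stage 2: Make the fastest machine (machine~$1$) pink. For each interval $I_t = [(1+\del)^t, (1+\del)^{t+1})$ in the palette's range on which machine~$1$ has no start and no idle in $S(\sol')$, insert an idle slot of length $\del^2(1+\del)^t$ at the start of $I_t$ on machine~$1$, by delaying all subsequent machine-$1$ jobs by the same amount. After this modification, every interval in the palette range on machine~$1$ contains either a start or some idle, so machine~$1$ becomes pink and the palette of $\sol$ contains a pink component. The total delay imposed on any job with completion time $C$ on machine~$1$ is at most $\sum_{t:(1+\del)^t\leq C}\del^2(1+\del)^t \leq \del C$, so each machine-$1$ completion scales up by at most $1+\del$, and combined with Stage~$1$ this gives $\sol(\hat{A}) \leq (1+\del)^2\sol'(\hat{A})$. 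Property \ref{no_large_prop} survives via Remark \ref{rmk_large_prop} since the assignment is unchanged. For Property \ref{prop1}, the delays are monotone forward in time on machine~$1$, so $C_{j'} - C_j$ and $C_j - p_j/\sigma$ are both non-decreasing; combined with the timely bound $p_j/\sigma \leq C_j/(1+\del)$, any pair qualifying for Prop~\ref{prop1} in $\sol$ already qualified for Prop~\ref{prop1} in $S(\sol')$, where the ratio condition was established.

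The main obstacle is in Stage~$2$: a job strictly occupying an entire interval $I_t$ cannot be disrupted merely by inserting idle inside $I_t$ while keeping the job non-preemptive, so the delay must push the job's start past $I_t$, which could exceed $\del^2(1+\del)^t$ for large jobs. A refined construction must either choose the candidate pink machine so that no large job spans a full interval on it (using Property \ref{no_large_prop} to bound such occurrences, potentially picking a machine within the palette other than machine~$1$), or amortize the cost of the larger delays against the pseudo-costs of the spanning jobs themselves. Either way, the monotonicity of delays keeps Property \ref{prop1} intact throughout.
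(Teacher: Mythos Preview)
Your approach differs from the paper's in a way that creates the very obstacle you struggle with. The paper does not insert idle time into a busy machine. Instead, after one time-stretch of $\sol'$ (giving $\sol''$ with a gap in every interval that is not fully inside some reserved job), it selects a machine $v$ with $2\leq v\leq \frac{1}{\del^7}+1$ of \emph{minimum total assigned weight} (hence at most a $\del^7$-fraction of the total weight), and moves every job of $v$ that starts no later than $\Psi$ into a late gap on machine~$1$, blowing up each moved job's completion time by at most a factor $1/\del^6$. Machine~$v$ is then idle throughout the palette range and therefore pink. This sidesteps the spanning-job obstacle entirely: one never tries to carve idle time out of a machine that is running a long job; one simply evacuates a lightly loaded machine.

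Beyond the unresolved spanning-job issue (neither of your suggested fixes is developed, and ``amortizing against the spanning jobs'' would require those jobs' weights to dominate the weights of everything delayed behind them, which you have no control over), your Stage~2 argument for Property~\ref{prop1} is logically backwards. Monotone forward delays imply that $C_{j'}-C_j$ and $C_j-p_j/\sigma$ are \emph{at least as large} in $\sol$ as in $S(\sol')$. Hence a pair $(j,j')$ meeting the Property~\ref{prop1} thresholds in $\sol$ may have strictly smaller gaps in $S(\sol')$ and fail the thresholds there, so you cannot pull the density conclusion back. (The direction you can prove---that pairs qualifying in $S(\sol')$ still qualify in $\sol$---is the useless one.) The paper needs a genuinely different argument for Property~\ref{prop1} on machine~$1$, which now carries the extra jobs from $v$: using that the total migrated size is at most $\Psi/\del$ and every job has size at most $\Psi\hat y/\del^{25}$ (Property~\ref{no_large_prop}), for any qualifying pair $(j,j')$ in $\sol$ one locates nearby \emph{original} machine-$1$ jobs $\hat\jmath,\hat\jmath'$ whose separation still exceeds the Property~\ref{prop2} threshold in $\sol''$, and then inherits the density drop via Property~\ref{prop2} together with the non-increasing density ordering after~$\Psi$.
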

\begin{proof}
First, we consider the schedule $\sol''$ obtained from $\sol'$ by time stretching by a factor of $1+\del$.  Let $v$ be the index of a machine such that $2\leq v\leq \frac{1}{\del^7}+1$ and machine $v$ is assigned a set of jobs (in $\sol''$) of minimum total weight.  Then, the total weight of the jobs assigned to machine $v$ is at most $\del^7$ times the total weight of the jobs in $\hat{A}$.  We next modify the solution $\sol''$ by moving to the first machine all the jobs assigned to machine $v$ starting no later than $\Psi$  while increasing the completion time of each such job by a multiplicative factor of at most $\frac{1}{\del^6}$.  The schedule of all other jobs remains as it is in $\sol''$.  This will prove the claim where $\sol$ is the resulting schedule.

For every job $j$ that we move which is assigned to complete in $\sol''$ during the time interval $[(1+\del)^t,(1+\del)^{t+1})$ on machine $v$, we schedule $j$ on the last gap in $\sol''$ on machine $1$ such that the completion time of $j$ will be at most $\frac{1}{\del^6} \cdot (1+\del)^{t+1}$.  We note that if $\sol''$ has a time interval $[(1+\del)^{t'},(1+\del)^{t'+1})$ with a gap on machine $1$, then it is assigned jobs from machine $v$ whose completion times are in the time frame $[\del^6 (1+\del)^{t'+1}, \del^4 (1+\del)^{t'+1})$ (on machine $v$), and thus all these jobs are completed (in $\sol''$) by time $\del^4(1+\del)^{t'}$. Therefore, all these jobs fit into the length of the idle time period within the time interval $[(1+\del)^{t'},(1+\del)^{t'+1})$ (using Corollary \ref{timestretch} and the fact that the length of this interval is $\del(1+\del)^{t'}$).

We modify the solution $\sol$ such that for every machine, the jobs starting after the largest release date are processed according to a non-decreasing order of their density (this modification does not increase the cost of the solution).
To prove the last part of the claim, note that since $\sol'$ satisfies Property \ref{no_large_prop}, so does $\sol$ (using Remark \ref{rmk_large_prop}).  Next, consider a pair of jobs $j$ and $j'$ satisfying the conditions of Property \ref{prop2} in $\sol'$. If they are assigned in $\sol'$ to a machine that is not the first machine, then they clearly satisfy the conditions of Property \ref{prop1} and the claim holds.  Thus, consider the case in which $\sol'$ assigns both $j$ and $j'$ to the first machine.  Note that the total size of jobs that were moved from machine $v$ to machine $1$ is at most $\frac{\Psi}{\del}$ (because all of them started on machine $v$ not later than $\Psi$, and thus end not later than $\frac{\Psi}{\del}$ and the first machine is not slower than $v$). Denote by $start(j)$ and $start(j')$ the starting times of $j$ and $j'$ (in $\sol$), respectively.

We find that the last starting time of a job $\hat{\jmath}'$ that starts not later than $j'$ and it was originally assigned to machine $1$ in $\sol''$ is at least $start(j') - \frac{\Psi}{\del} - \Psi \cdot \frac{\hat{y}}{\del^{25}} \geq start(j') - 2\Psi \cdot \frac{\hat{y}}{\del^{25}}$, the first starting time of a job $\hat{\jmath}$ that starts not earlier than $j$ and it was originally assigned to machine $1$ in $\sol''$ is at most $start(j) + \frac{\Psi}{\del} + \Psi \cdot \frac{\hat{y}}{\del^{25}} \leq start(j)+ 2\Psi \cdot \frac{\hat{y}}{\del^{25}}$.  Since $j$ and $j'$ satisfy the condition on the difference between their starting times as in Property \ref{prop2} and $4\Psi \cdot \frac{\hat{y}}{\del^{25}} \leq \Psi \cdot \frac{\hat{y}}{\del^{28}}- \Psi \cdot \frac{\hat{y}}{\del^{27}}$, the starting times of $\hat{\jmath}$ and $\hat{\jmath}'$ satisfy the condition on their difference as in Property \ref{prop1}. The claim follows because the density of $j$ is not smaller than the density of $\hat{\jmath}$ (as the jobs are processed in non-increasing order of their densities) which is at least $(1+\del)^{\hat{y}}$ times the density of $\hat{\jmath}'$ (because $\sol''$ satisfies Property \ref{prop2}) and the density of $\hat{\jmath}'$ is at least the density of $j'$.
\end{proof}

The algorithm enumerates all possible palettes such that the palette has a pink machine.  We will denote by $(\hat{A},\pi,v)$ the instance obtained from $\hat{A}$ by requiring that there exists a non-negative integer $\tau$ for which the schedule has a palette $(1+\del)^{\tau}\pi$ and a pink machine $v$ (where $2\leq v \leq \frac{1}{\del^7}$ and $\pi$ is the palette of the near-optimal solution to $\hat{A}$ whose existence is established in Lemma \ref{pinkmachinelem}).  We conclude the following.

\begin{corollary}
 Assume that there is an algorithm with time complexity $\hat{T}(n,m,\frac{1}{\del})$ that approximates the instance $(\hat{A},\pi,v)$ with an approximation ratio $1+\hat{\kappa} \del$ such that the returned solution satisfies properties \ref{prop2} and \ref{no_large_prop}, then there is an algorithm with time complexity  $O\left(\left(\alpha\right)^{(\frac{\alpha}{\del^2}-1)(\frac{1}{\del^7}+1)}\cdot \hat{T}(n,m,\frac{1}{\del})\right)$ that approximates instance $\hat{A}$ with an approximation ratio $(1+\del)^2\cdot (1+\hat{\kappa}\del)$, and hence an algorithm with time complexity $O\left(n\frac{\alpha}{\del} \left( \frac{\alpha}{\del}\right)^{(\frac{\alpha}{\del}-1)(\frac{1}{\del^7}+1)}\cdot \hat{T}(n,m,\frac{1}{\del})\right)$ that approximates $\tilde{A}$ with an approximation ratio $(1+2\del)(1+\del)^4\cdot (1+\hat{\kappa}\del)$.
\end{corollary}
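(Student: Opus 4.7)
The plan is to prove the two parts in sequence: the first part by enumerating all candidate palettes and invoking the assumed algorithm on each restricted instance, and the second part by composing with the preceding corollary that reduces $\tilde{A}$ to the subinstances $A_{ik}$.

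For the first part, I define an algorithm $\mathcal{B}$ that enumerates every pair $(\pi,v)$ with $2\leq v\leq \frac{1}{\del^7}+1$ such that the $v$-th component of $\pi$ is pink, runs the assumed algorithm on the restricted instance $(\hat{A},\pi,v)$, and returns the cheapest solution obtained. The running time bound follows from the palette counting claim together with the $O(\frac{1}{\del^7})$ choices of $v$; the polynomial-in-$\frac{1}{\del}$ overhead is absorbed into the exponential factor in $\alpha$. For the approximation ratio, I apply Lemma \ref{pinkmachinelem} to an optimal timely schedule $\opt_{\hat{A}}$ of $\hat{A}$ to produce a schedule $\sol^*$ with some palette $\pi^*$ and pink machine at some index $v^*\in\{2,\dots,\frac{1}{\del^7}+1\}$, whose cost is at most $(1+\del)^2\,\opt_{\hat{A}}(\hat{A})$. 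Since $\sol^*$ is feasible for the restricted instance $(\hat{A},\pi^*,v^*)$ (with the appropriate $\tau$ encoded in $\pi^*$), the restricted optimum is at most $\sol^*(\hat{A})$; when $\mathcal{B}$ processes this pair the assumed algorithm returns a solution of cost at most $(1+\hat{\kappa}\del)(1+\del)^2\,\opt_{\hat{A}}(\hat{A})$, and $\mathcal{B}$'s minimization only improves this bound.

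For the second part, I combine $\mathcal{B}$ with the preceding corollary, which turns an $A_{ik}$-level approximation satisfying Properties \ref{prop1} and \ref{no_large_prop} into a $\tilde{A}$-level approximation. The crucial observation making $\mathcal{B}$ usable as a subroutine is that Property \ref{prop2} implies Property \ref{prop1}: any pair $(j,j')$ satisfying the Property \ref{prop1} hypotheses $C_j-p_j/\sigma \geq \Psi'/2$ and $C_{j'}\geq C_j + \Psi\,\hat{y}/\del^{28}$ automatically satisfies the Property \ref{prop2} hypotheses ($\Psi'/4\leq \Psi'/2$ and, since $\del<1$, $\Psi\,\hat{y}/\del^{27}\leq \Psi\,\hat{y}/\del^{28}$), while the ``non-increasing density order'' clause is identical in the two properties. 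Hence the schedule returned by the assumed algorithm (which satisfies Properties \ref{prop2} and \ref{no_large_prop}) also satisfies Properties \ref{prop1} and \ref{no_large_prop}, so $\mathcal{B}$ qualifies as the required subroutine with $(1+\kappa\del) = (1+\del)^2(1+\hat{\kappa}\del)$ and $T = O((\alpha)^{(\alpha/\del^2-1)(1/\del^7+1)}\,\hat{T})$. Plugging these parameters into the preceding corollary gives approximation ratio $(1+\del)^2(1+\hat{\kappa}\del)(1+2\del)(1+\del)^2 = (1+2\del)(1+\del)^4(1+\hat{\kappa}\del)$ and the claimed running time for $\tilde{A}$.

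The main subtle point of the plan is the implication Property \ref{prop2}$\Rightarrow$Property \ref{prop1}: without it, one would have to explicitly transform the output of the assumed algorithm by another invocation of Lemma \ref{pinkmachinelem}, which would inflate the ratio by an additional $(1+\del)^2$ factor and miss the stated bound. Once this implication is in hand, the rest is routine bookkeeping: enumerating palettes, invoking Lemma \ref{pinkmachinelem} only at the level of the optimum to identify the pair the enumeration must cover, and composing the two approximation guarantees multiplicatively.
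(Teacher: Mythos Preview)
Your proof is correct and follows the paper's intended (but unwritten) argument: enumerate palettes with a pink machine, invoke Lemma~\ref{pinkmachinelem} on the optimum to certify that some enumerated pair $(\pi^*,v^*)$ has restricted optimum at most $(1+\del)^2\opt_{\hat A}$, and then plug the resulting $\hat A$-level guarantee into the preceding corollary. Your observation that Property~\ref{prop2} directly implies Property~\ref{prop1} (because the hypotheses of Property~\ref{prop1} are strictly stronger while the conclusion is identical) is exactly what is needed to make the output of the enumeration algorithm eligible as the subroutine $\sol_{ik}$ in that corollary without paying an extra $(1+\del)^2$ factor, and it is the reason the stated ratio contains $(1+\del)^4$ rather than $(1+\del)^6$.
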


\subsection{Applying the shifting technique on the densities of jobs}
Let $\xi= \lceil \ell \log_{1+\del}\frac 1{\del} \rceil$ for a
fixed integer $\ell =25$ (and thus $\frac{1}{\del^{25}} = \frac{1}{\del^{\ell}} \leq
(1+\del)^{\xi} <
\frac{1+\del}{\del^{\ell}}<\frac{2}{\del^{\ell}}<\frac{1}{\del^{\ell+1}}= \frac{1}{\del^{26}}$. Since $(1+\del)^{\frac{1}{\del^2}}>\frac{1}{\del}$, we have $\xi \leq \frac{\ell}{\del^2}<\frac{1}{\del^3}$.
For an integer
$c \in \mathbb{Z}$, let $\Omega_c=\{c\xi+1,\ldots,(c+1)\xi\}$.

Let $0 \leq \zeta\leq \frac{1}{\del^{\ell+1}}-1$. We define the
instance $A_{\zeta}$ by modifying the weights of jobs in $\hat{A}$ (we still consider only solutions that have palette $(1+\del)^{\tau}\pi$ and pink machine $v$). For
a job $j$, if for some integer $\nabla$, $\log_{1+\del} \frac{w_j}{p_j}
\in \Omega_{\nabla/\del^{\ell+1}+\zeta}$, then $w^{\zeta}_j=w_j \cdot
(1+\del)^{\xi}$, and otherwise $w^{\zeta}_j=w_j$. In the first
case, we have $w^{\zeta}_j = (1+\del)^\xi w_j \leq
\frac{1+\del}{\del^{\ell}} w_j$. Let $\cO^{\zeta}$ be an optimal
solution for $A_{\zeta}$. As the set of jobs and machines is the
same in $\hat{A}$ and $A_{\zeta}$ (and the only property of a job that can differ in the two instances is its weight), the sets of feasible (timely) solutions
for the two instances are the same.  Next, we bound the increase
of the cost due to the transformation from $\hat{A}$ to $A_{\zeta}$. As a result, no job  $j \in A_{\zeta}$ has a density such that $\log_{1+\del} \frac{w^{\zeta}_j}{p_j} \in \Omega_{\nabla/\del^{\ell+1}+\zeta}$ for any integer $\nabla$. Any value $(1+\del)^{\beta}$ where $\beta \in   \Omega_{\nabla/\del^{\ell+1}+\zeta}$ for an integer $\nabla$ is called a forbidden density for $\zeta$.  The following claim is similar to Claim \ref{brshiftnorelease} used for the problem without release dates.

\begin{claim}
Given a solution $SOL$, any $0 \leq {\zeta} \leq \frac{1}{\del^{\ell+1}}-1$ satisfies $SOL(\hat{A}) \leq
SOL(A_{{\zeta}})$, and there exists a value $0 \leq \bar{\zeta} \leq \frac{1}{\del^{\ell+1}}-1$ such that we have $SOL(A_{\bar{\zeta}}) \leq  (1+2\del) SOL(\hat{A})$. Additionally, there exists a value $0 \leq \zeta' \leq \frac{1}{\del^{\ell+1}}-1$
such that $\cO'(\hat{A}) \leq \cO^{\zeta'}(A_{\zeta'})  \leq  (1+2\del) \cO'(\hat{A})$, and if a solution $SOL_1$ satisfies $SOL_1(A_{\zeta'}) \leq (1+k'\del))\cO^{\zeta'}(A_{\zeta'})$ for some $k'>0$, then
$SOL_1(\hat{A}) \leq (1+(2k'+2)\del) \cO'(\hat{A})$.
\end{claim}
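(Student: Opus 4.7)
The plan is to mirror the argument of Claim \ref{brshiftnorelease} almost verbatim, since the transformation $\hat{A}\mapsto A_{\zeta}$ only modifies weights and preserves all feasibility data (job sizes, release dates, machine speeds, and the structural palette/pink-machine constraints). Consequently, the set of feasible (timely) schedules is identical for $\hat{A}$ and for every $A_{\zeta}$, and the only difference between $SOL(\hat{A})$ and $SOL(A_{\zeta})$ is the per-job weight $w_j$ vs.\ $w_j^{\zeta}\geq w_j$.

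First I would establish $SOL(\hat{A})\le SOL(A_{\zeta})$: for each job, $w_j^{\zeta}\ge w_j$ and the completion time $C_j$ is the same in both instances, so comparing the objective term by term immediately gives the inequality (using that the permutation realized by $SOL$ is feasible for both). Next, to produce $\bar{\zeta}$, I would decompose the $\hat{A}$-cost of $SOL$ according to which class index $\zeta$ a job falls into: define $SOL_{\zeta}$ as the contribution to $SOL(\hat{A})$ coming from jobs $j$ whose log-density lies in some $\Omega_{\nabla/\del^{\ell+1}+\zeta}$. These classes partition the jobs, so $SOL(\hat{A})=\sum_{\zeta=0}^{1/\del^{\ell+1}-1}SOL_{\zeta}$. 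In $A_{\zeta}$ exactly the jobs counted in $SOL_{\zeta}$ have their weights multiplied by $(1+\del)^{\xi}$ while the others are unchanged, giving
\[
SOL(A_{\zeta})\le SOL(\hat{A})+((1+\del)^{\xi}-1)\,SOL_{\zeta}.
\]
Averaging over $\zeta$ and applying the pigeonhole principle yields $\bar{\zeta}$ with $SOL_{\bar{\zeta}}\le \del^{\ell+1}SOL(\hat{A})$, whence $SOL(A_{\bar{\zeta}})\le (1+\del^{\ell+1}(1+\del)^{\xi})SOL(\hat{A})\le(1+2\del)SOL(\hat{A})$, using $(1+\del)^{\xi}<1/\del^{\ell}$.

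For the second part, I would instantiate the first part with $SOL=\cO'$ (the optimum for $\hat{A}$ restricted to the class of solutions under consideration) to obtain a $\zeta'$ with $\cO'(A_{\zeta'})\le(1+2\del)\cO'(\hat{A})$. By optimality, $\cO^{\zeta'}(A_{\zeta'})\le\cO'(A_{\zeta'})$, while from the first bullet applied to $\cO^{\zeta'}$ we get $\cO^{\zeta'}(\hat{A})\le\cO^{\zeta'}(A_{\zeta'})$, and $\cO'(\hat{A})\le\cO^{\zeta'}(\hat{A})$ by optimality of $\cO'$ on $\hat{A}$. Chaining gives $\cO'(\hat{A})\le\cO^{\zeta'}(A_{\zeta'})\le(1+2\del)\cO'(\hat{A})$. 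Finally, for any $SOL_1$ with $SOL_1(A_{\zeta'})\le(1+k'\del)\cO^{\zeta'}(A_{\zeta'})$, I would chain
\[
SOL_1(\hat{A})\le SOL_1(A_{\zeta'})\le(1+k'\del)(1+2\del)\cO'(\hat{A})=(1+(k'+2)\del+2k'\del^2)\cO'(\hat{A}),
\]
and absorb the $2k'\del^2$ term using $\del\le 1/36$ to conclude $SOL_1(\hat{A})\le(1+(2k'+2)\del)\cO'(\hat{A})$.

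I do not anticipate a real obstacle: the only thing one has to double-check is that the class of admissible schedules (timely, palette $(1+\del)^{\tau}\pi$, pink machine $v$) is defined purely in terms of starting times, release dates, sizes, and speeds, and is therefore invariant under the weight re-scaling that defines $A_{\zeta}$; this is why $\cO'$ and $\cO^{\zeta'}$ can be compared freely across the two instances. The $\xi,\ell$ bookkeeping is identical to that in Claim \ref{brshiftnorelease} and requires no new estimate.
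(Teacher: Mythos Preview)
Your proposal is correct and follows essentially the same route as the paper's own proof: the monotonicity $SOL(\hat{A})\le SOL(A_{\zeta})$ from $w_j^{\zeta}\ge w_j$, the pigeonhole choice of $\bar{\zeta}$ via the decomposition $SOL(\hat{A})=\sum_{\zeta} SOL_{\zeta}$, and the chaining through optimality of $\cO'$ and $\cO^{\zeta'}$ are all exactly what the paper does. One small slip: you quote ``$(1+\del)^{\xi}<1/\del^{\ell}$'', but in fact $(1+\del)^{\xi}\ge 1/\del^{\ell}$ by the definition of $\xi$; the correct bound (already recorded in the paper) is $(1+\del)^{\xi}<(1+\del)/\del^{\ell}<2/\del^{\ell}$, which is what you actually need to conclude $\del^{\ell+1}(1+\del)^{\xi}\le 2\del$.
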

\begin{proof}
We start with the first claim. Since for any $\zeta$, the weight
of a job in $A_{\zeta}$ is no smaller than the weight of the
corresponding job in $\hat{A}$, $SOL(\hat{A})\leq SOL(A_{\zeta})$ holds for any $\zeta$.
Seeing
$SOL$ as a solution for $\hat{A}$ (that is, $SOL$ is an assignment of a machine and a completion time for each job), let $SOL_{\zeta}$ denote the total
weighted completion time in $SOL$ (for $\hat{A}$) of jobs for which
$w_j\neq w^{\zeta}_j$, that is, the contribution of these jobs to
the objective function value for the instance $\hat{A}$. We have
$SOL(\hat{A})=\sum_{\zeta=0}^{\frac 1{\del^{\ell+1}}-1} SOL_{\zeta}$,
and $SOL(A_{\zeta})\leq \sum_{0 \leq \eta \leq \frac
1{\del^{\ell+1}}-1, \eta \neq \zeta} SOL_{\eta}+
(1+\del)^{\xi}SOL_{\zeta} \leq
SOL(\hat{A})+(1+\del)^{\xi}SOL_{\zeta}$.

Let $\bar{\zeta}$ be such that
$SOL_{\bar{\zeta}}$ is minimal. Then, $SOL_{\bar{\zeta}} \leq
{\del^{\ell+1}}SOL(\hat{A})$. We get $SOL(A_{\bar{\zeta}}) \leq
(1+\del^{\ell+1}(1+\del)^{\xi}) SOL(\hat{A})  \leq (1+2\del)SOL(\hat{A}) $.

The second part will follow from the first one.  Let $SOL'=\cO'$. By the second claim of the first part, there exists a value $\zeta'$ such that $\cO'(A_{\zeta'}) \leq (1+2\del)\cO'(\hat{A})$. Since $\cO'$ and
$\cO^{\zeta'}$ are optimal solutions for $\hat{A}$ and $A_{\zeta'}$
respectively, we have $\cO'(\hat{A}) \leq \cO^{\zeta'}(\hat{A})$ and
$\cO^{\zeta'}(A_{\zeta'}) \leq \cO'(A_{\zeta'})$.
Letting
$SOL''=\cO^{\zeta'}$ we get (using the first part of the first claim) $\cO^{\zeta'}(\hat{A}) \leq
\cO^{\zeta'}(A_{\zeta'})$, which proves $\cO'(\hat{A}) \leq \cO^{\zeta'}(A_{\zeta'})  \leq  (1+2\del) \cO'(\hat{A})$. We get $SOL_1(\hat{A}) \leq SOL_1(A_{\zeta'})\leq (1+k'\del)O^{\zeta'}(A_{\zeta'}) \leq (1+2\del)(1+k'\del) \cO'(\hat{A}) = (1+(k'+2)\del+2k'\del^2) \cO'(\hat{A})\leq (1+(2k'+2)\del) \cO'(\hat{A})$.
\end{proof}

Thus, we apply the following algorithm for every value of $\zeta$.  Namely, it suffices to show how to approximate the optimal solution for $A_{\zeta}$ with palette $\pi$ and pink machine $v$ where the algorithm results in solutions with palette $(1+\del)^{\tau}\pi$ for a fixed  positive integer value of $\tau$.  A solution of minimum cost (as a solution for $\hat{A}$) among all the solutions obtained for different values of $\zeta$ will be the output of the algorithm for approximating $\hat{A}$.  In the next section we will show an algorithm that receives an input consisting of a subset of the jobs of $A_{\zeta}$ where the ratio between the maximum density and the minimum density is at most $(1+\del)^y$ where $y=\frac{\xi}{\del^{\ell+1}}-\xi-1$ (and we let $\hat{y}=\frac{\xi}{\del^{\ell+1}}$ denote the number of distinct  densities of these jobs in the instance $\hat{A}$), and outputs a solution of cost at most $(1+\del)^4$ times the cost of an optimal solution for this instance (recall that these are pseudo-costs).  We will ensure that these solutions have palette $(1+\del)^2\pi$ and pink machine $v$.  Observe that this solution satisfies that every machine completes its processing no later than
  $\frac{\hat{y}\alpha^{\frac{\alpha}{\del}-1}}{\del^{25}}$ where $\alpha^{\frac{\alpha}{\del}-1}$ is the maximum release date.
  In what follows, we show that this procedure can be used to approximate an optimal solution for $A_{\zeta}$ as well.

Given a specific value of $\zeta$, we partition the job set of $A_{\zeta}$ into subsets such that the ratio between the maximum density and the minimum density of jobs of a common subset is upper bounded by $(1+\del)^y$ and for every two jobs $j$ and $j'$ of distinct subsets such that the density of $j$ is larger than the density of $j'$, we have that the density of $j$ is at least $(1+\del)^{\xi}$ times the density of $j'$.  Formally we define instances $\ak$ for every integer value of $k$ in which the set of machines is the same as in $A_{\zeta}$ and the job set of $\ak$ is the subset of jobs of $A_{\zeta}$ with densities in the interval $[(1+\del)^{\left( k/\del^{\ell+1}+\zeta+1\right) \xi+1}, (1+\del)^{\left( (k+1)/\del^{\ell+1}+\zeta\right) \xi}]$.  We will use the following characterization of the partition of the job set into the job sets of the instances $\ak$.
\begin{property}
Consider a pair of jobs $j$ and $j'$ for which (in $A_{\zeta}$) the density of $j$ is at least the density of $j'$.  Then, if $j$ and $j'$ are jobs of a common instance $\ak$, then the density of $j$ is at most $(1+\del)^y$ times the density of $j'$.  However, if $j$ belongs to the job set of $\ak$ and $j'$ belongs to the job set of $\akk$ such that $k>k'$, then the density of $j$ is at least $(1+\del)^{\xi(k-k')}$ times that of $j'$.
\end{property}

We denote by $\solk$ an approximated solution for $\ak$ such that $\solk$ has palette $(1+\del)^2\pi$ and pink machine $v$.  We will show in the next section how to compute such a schedule.  Next we explain how to combine the different solutions (one for each integer value of $k$ such that the job set of $\ak$ is non-empty).  The goal of the next step is to avoid a situation in which there is a machine $u$ such that in a specific time is assigned a very high density job in one of the solutions $\solk$ that is very short, and in another solution it is assigned a very long job that has low density.  Combining two such schedules is problematic, and we will avoid this (this is also the reason for introducing the notion of palette).  To avoid this situation, we introduce the following definition.

\begin{definition}
Consider a schedule $\sol$.  A time interval $\J_{t,u}$ where $(1+\del)^t \leq \alpha^{\frac{\alpha}{\del}-1}$  is called sparse for $\sol$, if there is a job with a starting time in $\J_{t,u}$ but the total size of jobs with starting times in $\J_{t,u}$ is at most $s_u \cdot \del^5(1+\del)^t$ (where $s_u$ is the speed of $u$).  A machine $u$ is called sparse, if there is a time interval on machine $u$ which is sparse.
\end{definition}

\begin{lemma}\label{nosparselemma}
Consider a schedule $\solk$ for the instance $\ak$ with a palette $(1+\del)^2\pi$ and pink machine $v$.  Then, there is another schedule $\solka$ for instance $\ak$ with palette $\pi'=(1+\del)^4\pi$  and pink machine $v$ such that $\solka(\ak) \leq (1+\del)\solk(\ak)$, $\solka$ does not have a sparse machine of index at least $\frac{1}{\del^7} +2$, and for every time interval $[(1+\del)^t,(1+\del)^{t+1})$, $\solka$ has at most one machine for which this time interval is sparse.  Moreover, there exists a polynomial time algorithm that constructs $\solka$ from $\solk$.
\end{lemma}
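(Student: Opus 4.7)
The plan is to construct $\solka$ from $\solk$ by creating gaps via shifting and then routing the jobs starting in sparse intervals onto the pink machine $v$. I would proceed in three phases. First, apply one (or two, as needed to reach palette $(1+\del)^4\pi$) iteration(s) of the shift operation from Claim \ref{stre} to $\solk$. Each shift moves every job's starting and completion interval one index higher, leaves the pink machine $v$ still pink (its color only gains intervals), and multiplies pseudo-costs by $(1+\del)$. This supplies the ``room'' on each machine to absorb subsequent insertions without creating overlaps.

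Second, I would walk through every slow machine $u$ (index $\geq \frac{1}{\del^7}+2$) and every sparse interval $\J_{t,u}$. By definition of sparse, the jobs that start in $\J_{t,u}$ have total size at most $s_u\del^5(1+\del)^t$, and each such job is small for $\J_{t,u}$ (otherwise the size threshold is violated), so by the timely condition it also completes inside $\J_{t,u}$. I would move the entire block of these jobs onto the pink machine's interval $\J_{t,v}$: since $v$ is pink, $\J_{t,v}$ contains a start or idle time of the shifted $\solk$, so an insertion is possible; and since $s_v\geq s_u$, the block occupies on $v$ a total processing time of at most $\del^5(1+\del)^t$, tiny compared to the interval length $s_v\del(1+\del)^t$. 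Summing across all slow machines contributing to one interval $\J_{t,v}$ stays below the interval capacity because $s_u\leq s_{\frac{1}{\del^7}+2}\leq s_v$ and the factor $\del^5$ leaves ample slack. Release dates and the timely condition are preserved because the moved jobs had release dates $\leq (1+\del)^t$, and the interval $\J_{t,v}$ begins at $(1+\del)^t\geq \del\cdot p_j/s_v$ for every moved $j$. After this phase no slow machine is sparse.

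Third, I would handle palette machines. For each interval index $t$ at which more than one machine of index $\leq \frac{1}{\del^7}+1$ is sparse, pick one (the pink machine $v$ itself, say) to remain sparse, and migrate the jobs that start in $\J_{t,u}$ for every other sparse palette $u$ into $\J_{t,v}$ by the same routing as before. Because the movements only drain jobs from non-pink palette machines and only deposit onto the pink machine (which stays pink), the color of each of the top $\frac{1}{\del^7}+1$ fastest machines is exactly what it becomes after the initial shifts, so the palette is $(1+\del)^4\pi$ as required, and Remark \ref{rmk_large_prop}-style observations imply Property \ref{no_large_prop} is not disturbed.

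The main obstacle will be the cost accounting for $\solka(\ak)\leq (1+\del)\solk(\ak)$. The shifts alone contribute a $(1+\del)$ (resp.\ $(1+\del)^2$) factor; the re-routing onto $v$ must be shown to introduce no further increase because each moved job now completes inside $\J_{t,v}$, the same interval index as its original completion on the slower machine $u$, so its pseudo-cost $w_j(1+\del)^{t+1}$ does not grow. The delicate point is verifying that the cumulative insertions into a single interval $\J_{t,v}$ do not push any previously scheduled job on $v$ out of its interval; this follows from the geometric speed decay, the $\del^5$ factor in the definition of sparseness, and the gap budget inside $\J_{t,v}$ provided by the shift. Once this capacity check is in hand, the remaining verifications (timeliness, palette, pink machine, Property \ref{no_large_prop}) are routine.
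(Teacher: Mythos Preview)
Your approach has a genuine capacity gap in phase~2 (and it recurs in phase~3). You propose to move, for each time index $t$, the jobs from \emph{every} sparse interval $\J_{t,u}$ with $u\geq\frac{1}{\del^7}+2$ into the single interval $\J_{t,v}$, and you justify this by saying ``$s_u\leq s_v$ and the factor $\del^5$ leaves ample slack.'' But the number of such machines $u$ is not a constant; it can be as large as $m-\frac{1}{\del^7}-1$. The total size you are trying to insert into $\J_{t,v}$ is $\sum_{u}s_u\,\del^5(1+\del)^t$, and nothing in the hypotheses bounds $\sum_u s_u$ by a constant multiple of $s_v$. So the block you are packing into $\J_{t,v}$ can far exceed both the interval length $s_v\del(1+\del)^t$ and any gap created by shifting or time stretching. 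The same overflow occurs in phase~3: with up to $\frac{1}{\del^7}$ palette machines contributing $s_u\del^5(1+\del)^t\leq\del^5(1+\del)^t$ each, the total can reach $\del^{-2}(1+\del)^t$, which again dominates the available room on $v$.

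The paper avoids this by a different consolidation: it does \emph{not} route everything to $v$ at once. After a first time-stretch, it repeatedly takes \emph{two} machines $u_1<u_2$ that are both sparse at time $t$ and moves $u_2$'s jobs into $u_1$. Because $u_1$ is itself sparse at that moment, its current occupancy (including anything received in prior iterations) is at most $s_{u_1}\del^5(1+\del)^t$, so the $\del^4$-gap from time stretching can absorb one more batch of size $\leq s_{u_2}\del^5(1+\del)^t\leq s_{u_1}\del^5(1+\del)^t$. Each iteration eliminates one sparse machine at time $t$, so the process terminates with at most one sparse machine per interval. Only then, after a second time-stretch, is that single remaining sparse machine (if its index is $\geq\frac{1}{\del^7}+2$) moved to $v$, and now the insertion is a single batch of size at most $s_u\del^5(1+\del)^t\leq s_v\del^5(1+\del)^t$, which fits. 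The pairwise consolidation is the missing idea in your argument.

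A smaller issue: you invoke the shift of Claim~\ref{stre} to create room, but that claim produces a valid schedule for the \emph{stretched} input $\bar{A'}$; it does not by itself guarantee a $\del^3$-fraction gap in every non-spanned interval for the original sizes. The paper uses time stretching (Corollary~\ref{timestretch}) for this purpose, and applies it twice, which is also why the palette advances by two steps to $(1+\del)^4\pi$.
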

\begin{proof}
We first create a solution $\solkb$ from $\solk$ by time stretching by a factor of $1+\del$ (this ensures that any sparse interval has a gap) and then applying the following process.  As long as there exists a time interval $I_t=[(1+\del)^t,(1+\del)^{t+1})$ that is sparse (even after some modifications in earlier iterations of the algorithm have been applied) on two machines $u_1$ and $u_2$ such that $u_2> u_1$, then we move the set of jobs that starts during $\J_{t,u_2}$ to run on machine $u_1$ starting immediately after the completion of the last job that is originally processed in $\J_{t,u_1}$ (that is, the last job that starts and completes its processing during this time interval).  Since in the solution obtained by time stretching there is a gap of length at least $s_{u_1}\cdot \del^4(1+\del)^t$ during $\J_{t,u_1}$ (due to sparseness, even if a part of the gap was already used to receive jobs in an earlier iteration, the gap is more than half empty, and can accommodate an additional size of at least $s_{u_1}\cdot \del^5(1+\del)^t$), and since $u_1$ is not slower than $u_2$, we can complete processing all jobs that start on $u_2$ during this time interval (that is, this modification does not increase the pseudo-cost of the solution obtained from $\solk$ by time stretching).  We denote the resulting schedule by $\solkb$ and observe that $\solkb(\ak) \leq (1+\del)\solk(\ak)$.

We next apply once again time stretching by a factor of $1+\del$.  We observe that machine $v$ is pink and its schedule is obtained from the final $\solkb$ by time stretching by a factor of $1+\del$.  Thus, for every time interval $\J_{t,v}$, there is a gap of length at least $s_v \cdot \del^4(1+\del)^t$.  For every value of $t$, if there exists a machine $u$ for which $\J_{t,u}$ is sparse in $\solkb$ and $u \geq  \frac{1}{\del^7} +2$, then we move the set of jobs that start during $\J_{t,u}$ to run on machine $v$ during $\J_{t,v}$ (they will all be completed within a time interval which is shorter than the length of the gap during $\J_{t,v}$ because $v$ is not slower than $u$).  In the resulting schedule denoted by $\solka$ there are no sparse machines of index at least $\frac{1}{\del^7} +2$, the palette of $\solka$ is  $(1+\del)^4\pi$, and $\solka(\ak) \leq (1+\del)^2\solk(\ak)$.
\end{proof}

Observe that we can always assume that solution $\solka$
satisfies that the completion time of any machine is at most
$\alpha^{\frac{\alpha}{\del}-1}\cdot \frac{\hat{y}}{\del^{25}}$
(by Corollary \ref{job_shifting_cor}). For every value of $k$, we
let $\solkab$ be the solution obtained from $\solka$ by time
stretching by a factor of $1+\del$, and thus $\solkab$ has palette
$(1+\del)^5\pi$.  We modify the schedule $\solkab$ by first changing the starting times of the jobs whose starting time and completion time belong to a common time interval so as delaying the processing of these jobs as much as possible.  We apply an additional modification in which if a time interval $\J_{i,u}$ starting strictly after the maximum release date of a job is sparse, then its preceding interval (on the same machine) has a total processing time of at least $6\del^5(1+\del)^{i-1}$ (otherwise all the jobs whose starting times are in $\J_{i,u}$ are moved to start in $\J_{i-1,u}$, and we repeat the process as long as there exists such a sparse but non-empty interval starting strictly after the maximum release date).

We consider the solutions $\solkab$ for all values of $k$ on one
specific machine $u$ and without loss of generality assume that
$s_u=1$. Recall that $\solkab$ was obtained from a timely schedule by time stretching by a factor of $1+\del$, and thus for every time interval that is not contained in the reserved period of a job in $\ak$, there is a gap of length at least $\del^3$ times the length of the interval.  Throughout the process of creating a combined solution,
 we guarantee that $\widehat{\sol}$ is a feasible schedule of
all the jobs which the solutions $\solkkab$ schedule on machine
$u$ for values of $k'$ which are at least the current value of $k$
(which is the index of the iteration).

Based on the palette, i.e., on the color of $u$ if $u\leq \frac{1}{\del^7}+1$, we conclude that
there is no interval $\J_{i,u}$ for which there is an index $k$
for which $\solkab$ assigns jobs to a sparse interval on
$\J_{i,u}$ and another index $k'$ for which $\solkkab$ assigns a job $j$ a reserved starting time
smaller than $(1+\del)^i$ and a reserved completion time at
least $(1+\del)^{i+1}$. Thus, for every interval $\J_{i,u}$, we have
one of the following two cases: either there might be indices of
$k$ for which the interval $\J_{i,u}$ is sparse for $\solkab$, and
in this case for every value of $k'$, $\solkkab$ has a gap during
$\J_{i,u}$, or there are no such sparse intervals in $\J_{i,u}$.
Note that if $u\leq \frac{1}{\del^7}+1$ and $(1+\del)^i$ is strictly larger than the maximum release date, then we do not know which case holds before processing the solutions and observing for the first time either a sparse interval or a job whose reserved period contains $\J_{i,u}$.
If it is possible that the first case holds (i.e., it is not forbidden by the color of $u$), we allocate space within the gap of $\J_{i,u}$ for sparse intervals of
$\J_{i,u}$ (of different solutions), that we call {\it sparse-gap}, whose size is set to
$2\del^5(1+\del)^i$, and we ensure that the total size of jobs assigned to
$\J_{i,u}$ by $\widehat{\sol}$ and were assigned to $\J_{i,u}$ as
part of sparse intervals by any solution $\solkab$ will be at most
$2\del^5(1+\del)^i$.

We will also have another type of gaps for each interval
$\J_{i,u}$ which we call {\it postpone-gap} of size
$\del^5(1+\del)^i$, that will be used for the assignment of jobs
that the solutions $\solkab$ (for all values of $k$) assign to
earlier time intervals, and we decided to postpone to this time
interval.  We will ensure that the total size of jobs assigned to
this type of gap of $\J_{i,u}$ will be at most $\del^5(1+\del)^t$.

We will ensure that the positions of the two types of gaps in a
common time interval are consecutive, but the exact starting time
may be changed as we schedule other jobs to start or complete
during $\J_{i,u}$.

The starting times of jobs in an interval $\J_{i,u}$ will satisfy the following properties.  If there are jobs of $\ak$ that are scheduled to start during $\J_{i,u}$ and are not assigned to one of the gaps (either to the postpone-gap of $\J_{i,u}$ or to the sparse-gap of $\J_{i,u}$), then their starting time in $\widehat{\sol}$ is exactly as it is in $\solkab$.  Other jobs may be assigned to one of the gaps, but in this case they will start and complete during the corresponding gap whose positions will be fixed later.

A time interval $\J_{i,u}$ can be free (and all time intervals are
initialized to be free), or taken, and taken time intervals are
assigned a pair $(k',t)$ based on an interval $\J_{t,u}$ and an instance
$\akk$. In this case, we will charge the increase of the cost due to
postponing jobs which were supposed to start during $\J_{i,u}$ in
any further solution $\solkkkab$ ($k''<k'$) to a postpone-gap in later time
intervals.  We will keep the invariant that the set of free
intervals are a suffix of the list of intervals (ordered from
earliest to latest).  Taken time intervals are defined in one of two cases.  Either there are jobs assigned to the postpone-gap of the interval or a later interval, and in this case for the analysis we will declare an {\it intermediate pair} for this interval, or the total size of jobs that are assigned to start during $\J_{i,u}$ exceeds $\del^5(1+\del)^i$ (or similarly for a later interval) and in this case no intermediate pair is declared for this interval.

We start with an empty schedule $\widehat{\sol}$, where every time
interval is free.  For every time interval, we decide if we have a
sparse-gap in it or not (in future iterations we may decide to
remove a sparse-gap but this may happen only if no job is assigned
to this sparse-gap). That is, for $u> \frac{1}{\del^7}+1$, we decide to have a sparse-gap for every time interval (some of them cannot be used but we will find out this information as we combine jobs to the schedule $\widehat{\sol}$), and for $u \leq \frac{1}{\del^7}+1$, we use the information of the color of $u$ to decide which time intervals will have sparse-gap (every time interval that is not contained in a reserved period of a job).
The value of $k$ is set to the largest value
for which $\solkab$ processes at least one job on machine $u$.

Consider an iteration $k$ in which we incorporate the set of jobs
scheduled by $\solkab$ on machine $u$ into the schedule
$\widehat{\sol}$.  We process the time intervals $\J_{t,u}$ in
decreasing order of $t$. Each such value of $t$ corresponds to a
sub-iteration. Let $J_{k,t}$ be the set of jobs that $\solkab$
starts during $\J_{t,u}$.

First, assume that the interval $\J_{t,u}$ is free or is taken by a pair $(t',k)$ for $t'>t$. We will
schedule the jobs $J_{k,t}$ to start during the time interval
$\J_{t,u}$ as follows.

\begin{itemize}
\item Consider the case in which the total size of $J_{k,t}$ is at
most $\del^5(1+\del)^t$, then we schedule the jobs $J_{k,t}$ to
start during the sparse-gap of this time interval. If (after the jobs $J_{k,t}$ are assigned) the total
size of jobs assigned to this sparse-gap is larger than
$\del^5(1+\del)^t$ and $\J_{t,u}$ is free, then we declare that $\J_{t,u}$ and all free intervals $\J_{t',u}$ (for $t'<t$) are taken and
we assign those intervals the pair $(k,t)$ without an
intermediate pair.

\item{} Assume that the total size of $J_{k,t}$ is larger than
$\del^5(1+\del)^t$. In this case we observe that the interval $\J_{t+1,u}$  is either free or taken by the pair $(t',k)$ that has the same value of $k$ as in the current iteration (and similarly for later intervals). We assign starting times to the jobs of $J_{k,t}$ as they are in $\solkab$.  Observe that this gives a feasible schedule as there is a gap of length $\del^4(1+\del)^t$ in the schedule $\solkab$ during $\J_{t,u}$ and this is sufficient to run the other jobs that are currently scheduled to this time interval in $\widehat{\sol}$.
If $\J_{t,u}$ is free, then we declare
 $\J_{t,u}$ as taken and assign it the pair $(k,t)$. If $\J_{t,u}$ is declared taken in this sub-iteration,
then  we also declare all the free intervals $\J_{t',u}$ (for
$t'<t$) as taken and assign them the pair $(k,t)$ without an
intermediate pair.
Moreover, if one of the jobs of $J_{k,t}$ is scheduled to complete
(in $\solkab$) at time at least $(1+\del)^{t+1}$, then we already
know that the set of intervals that belong to its reserved time
period (in $\solkab$) were not assigned any job, and all these
time intervals are declared as taken and assigned the pair $(k,t)$
without an intermediate pair.   Furthermore, consider the job $j$ of $J_{k,t}$ that we decide to schedule last, and assume that it completes in a different time interval $\J_{t',u}$.  Then, if $\J_{t',u}$ is free, and the total size of $j$ that we decided to process during $\J_{t',u}$ is at least $\del^5(1+\del)^{t'}$, then we declare the interval $\J_{t',u}$ as taken and assign it the pair $(k,t)$ without an intermediate pair.
\end{itemize}

Next, assume that the interval $\J_{t,u}$ is taken and its assigned
pair is $(k',t')$ where $k'>k$. Then, all remaining jobs of $\ak$ that were
not assigned by previous sub-iterations (not only of one time interval) are assigned to the last
postpone-gap before time $\frac{1}{\del^{10(k'-k)}}(1+\del)^{t'}$. In this case
we say that these jobs were postponed to this postpone-gap with
difference $k'-k$, and we also say that these postponed jobs of $\ak$
{\it charge} the pair $(k',t')$ with difference $k'-k$. If the
time interval containing this postpone-gap or any earlier time
interval were free, then we declare all of these free time
intervals as taken and assign these time intervals the pair
$(k',t')$ via the intermediate pair $(k,t)$.

Note that each postpone-gap gets jobs of at most one sub-iteration
as afterwards its time interval is declared taken.  That is, in any further iteration, it will not be assigned more jobs.

We note that whenever the algorithm tries to schedule (without postponing) a job $j$ whose reserved time period in its corresponding schedule $\solkab$ contains an interval $\J_{i,u}$ then the sparse-gap of $\J_{i,u}$ is empty.  This holds for all machines with index at least $\frac{1}{\del^7}+2$ as no interval has a sparse gap,  for $u \leq \frac{1}{\del^7}+1$ and time interval that starts no later than the maximum release date, the claim holds by the palette. Last, consider a later time interval on $u$. Assume by contradiction that the interval $\J_{i,u}$ was assigned some jobs to its sparse-gap. These jobs were assigned to the sparse-gap of $\J_{i,u}$ in a previous iteration of the algorithm (as $\solkab$ is feasible and the entire interval $\J_{i,u}$ is reserved for $j$ in $\solkab$).  In such a previous iteration $k'$, when we add jobs to the sparse-gap of $\J_{i,u}$ we must try to schedule the jobs that are processed by $u$ during $\J_{i-1,u}$ in the solution $\solkkab$.  These jobs are either added to the solution or postponed, and in both cases this makes the interval $\J_{i-1,u}$ taken.  This fact ensures that the interval in which $j$ starts is also taken before iteration $k$ starts, and this contradicts the assumption that the algorithm tries to
schedule $j$ without postponing it.

Furthermore, a free interval $\J_{i,u}$ may be assigned a set of jobs to its sparse-gap of total size at most $\del^5(1+\del)^i$ and perhaps one partial job such that the size that is processed during $\J_{i,u}$ is below $\del^5(1+\del)^i$, and the processing of these jobs will not contradict the schedule of any set of jobs starting in the interval $\J_{i,u}$ in any of the schedules $\solkab$ since the gap in $\solkab$ appears as early as possible in this interval.

Observe that if there exists a set of jobs that charge a pair $(k',t')$, then the
total size of jobs that $\widehat{\sol}$ schedules during
$\J_{t',u}$ that were jobs of instances $\akkk$ for $k''\geq k$
and were scheduled by $\solkkkab$ to $\J_{t',u}$ is at least
$\del^5(1+\del)^{t'}$.

\begin{claim}\label{postpone_job_size1}
For every pair $(k',t')$, the set of jobs that are postponed with
difference $\mu=k'-k$ that charge the pair $(k',t')$ has total
size at most $(1+\del)^{t'}\cdot \frac{1}{\del^{10(\mu-1)+3}}$.
\end{claim}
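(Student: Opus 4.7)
The plan is to identify the postponed jobs explicitly and then bound their total size using the feasibility and timeliness of $\solkab$. Let $k=k'-\mu$ and let $(k,t_0)$ be the earliest (largest-$t$) sub-iteration of iteration $k$ at which case~2 is triggered with pair $(k',t')$. Every sub-iteration $(k,t)$ with $t>t_0$ must have been in case~1 and therefore scheduled $J_{k,t}$; hence the ``remaining'' jobs postponed at $(k,t_0)$ are exactly the jobs of $\ak$ on $u$ whose $\solkab$-starting time lies in $\J_{t,u}$ for some $t\leq t_0$. In particular, they all start by time $(1+\del)^{t_0+1}$.

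Next I bound $t_0$. The interval $\J_{t_0,u}$ is taken by $(k',t')$ in exactly one of two ways: (a) directly in iteration $k'$, which forces $t_0\leq t'$; or (b) via an intermediate pair from an iteration $k''$ with $k<k''<k'$, in which case $k''$ placed its postponed jobs into a postpone-gap in an interval $\J_{i,u}$ with $i\geq t_0$ that was the last gap before time $(1+\del)^{t'}/\del^{10(k'-k'')}$. Case (b) yields $(1+\del)^{t_0+1}\leq (1+\del)^{t'}/\del^{10(\mu-1)}$ since $k'-k''\leq \mu-1$, and for $\mu=1$ case (b) is vacuous so the stronger $(1+\del)^{t_0+1}\leq (1+\del)^{t'+1}$ holds.

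The final step bounds the total size of the postponed jobs. Feasibility of $\solkab$ on the speed-$1$ machine $u$ implies that the total work completed on $u$ by time $(1+\del)^{t_0+1}$ is at most $(1+\del)^{t_0+1}$, so the total size of jobs started on $u$ by this time exceeds this only by the size of the single in-progress job, if any. Timeliness of $\solkab$ forces that in-progress job to have size at most $(1+\del)^{t_0+1}/\del$, so the total postponed size is at most $(1+1/\del)(1+\del)^{t_0+1}\leq (2/\del)(1+\del)^{t_0+1}$. Plugging in the bound on $(1+\del)^{t_0+1}$ gives, for $\mu\geq 2$, a total of at most $2(1+\del)^{t'}/\del^{10(\mu-1)+1}\leq (1+\del)^{t'}/\del^{10(\mu-1)+3}$ (using $2\del^{2}\leq 1$); and for $\mu=1$ it gives $2(1+\del)^{t'+1}/\del\leq (1+\del)^{t'}/\del^{3}$ (using $2(1+\del)\del^{2}\leq 1$), both valid for $\del\leq 1/36$.

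The main obstacle is correctly enumerating the ways $\J_{t_0,u}$ can be taken by $(k',t')$: in particular, recognizing that only iterations strictly between $k$ and $k'$ can produce intermediate pairs that take intervals with $\tilde{t}>t'$, and that the worst such ``push'' of $t_0$ above $t'$ is the one given by the algorithm's own choice of postpone-gap, namely $(1+\del)^{t_0+1}\leq (1+\del)^{t'}/\del^{10(\mu-1)}$. Without the timely-starting-time property of $\solkab$, the only available size bound on the in-progress job would be Property \ref{no_large_prop}'s $\Psi\hat{y}/\del^{25}$, which is too large when $(1+\del)^{t'}$ is small; timeliness is exactly what lets the additive in-progress term be absorbed into the multiplicative $1/\del^{10(\mu-1)+3}$ slack.
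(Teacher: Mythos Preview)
Your argument follows the paper's approach closely, but there is a genuine gap in case~(a). You assert that if $\J_{t_0,u}$ is taken ``directly in iteration $k'$'' (i.e., without an intermediate pair) then $t_0\leq t'$. This is not true. In sub-iteration $(k',t')$, if some job of $J_{k',t'}$ has a reserved completion time at or after $(1+\del)^{t'+1}$, the algorithm explicitly declares every interval in that job's reserved time period (and possibly the interval containing its completion) as taken and assigns them the pair $(k',t')$ \emph{without} an intermediate pair. These intervals lie strictly above $t'$, so $t_0>t'$ is perfectly possible in case~(a).

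The paper repairs exactly this: since $\widehat{\sol}$ (equivalently $Sol^{(k')}$) is timely, any job that starts in $\J_{t',u}$ is completed by $\frac{(1+\del)^{t'+1}}{\del}$, so the last interval taken this way ends no later than the interval containing $\frac{(1+\del)^{t'+1}}{\del}$. Hence the postponed jobs of $\ak$ start in $\solkab$ no later than roughly $\frac{(1+\del)^{t'+2}}{\del}$ and (by timeliness of $\solkab$, just as you use) complete by $\frac{(1+\del)^{t'+2}}{\del^2}\leq \frac{(1+\del)^{t'}}{\del^3}$. If you replace your unjustified ``$t_0\leq t'$'' by the timely bound $(1+\del)^{t_0+1}\leq \frac{(1+\del)^{t'+2}}{\del}$ in case~(a), your final arithmetic still closes (you need $2(1+\del)^2\del\leq 1$, which holds for $\del\leq 1/36$), and the rest of your proof---case~(b) and the work-plus-one-straddling-job bound---is correct and essentially identical to the paper's.
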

\begin{proof}
We first argue that if there are jobs of $\ak$ that are postponed and charge
$(k',t')$ without an intermediate pair, then these postponed
jobs of $\ak$ have total size at most $(1+\del)^{t'}\cdot
\frac{1}{\del^{3}}$.
 Since the set of jobs that
$\widehat{\sol}$ starts during $\J_{k',t'}$ is completed no later
than $\frac{(1+\del)^{t'+1}}{\del}$ as it is timely, and since the
jobs of $\ak$ which are postponed are not postponed via an
intermediate pair,  the jobs of $\ak$ that are postponed start (in $\solkab$) no later than the end of the time
interval containing $\frac{(1+\del)^{t'+1}}{\del}$ and thus are
completed by $\frac{(1+\del)^{t'+2}}{\del^2}\leq
\frac{(1+\del)^{t'}}{\del^3}$, and thus the claim follows.

Next, consider the case where the jobs of $\ak$ are postponed and
charge $(k',t')$ via the intermediate pair $(\hat{k},\hat{t})$.
Then, $k'>\hat{k}>k$ and using the postponing rule, the postponed jobs of ${A^{(\hat{k})}}$ are postponed to
a postpone-gap not later than
$\frac{1}{\del^{10(k'-\hat{k})}}(1+\del)^{t'}$. Therefore, the set of
jobs that $\ak$ which are postponed via the intermediate pair
$(\hat{k},\hat{t})$ are starting in $\solkab$ not later than the
time interval that contains this postpone-gap, and thus not later
than $\frac{1}{\del^{10(k'-\hat{k})}}(1+\del)^{t'+1}$, and thus
complete (in $\solkab$) not later than
$\frac{1}{\del^{10(k'-\hat{k})+1}}(1+\del)^{t'+1}\leq
(1+\del)^{t'}\cdot \frac{1}{\del^{10(\mu-1)+3}}$, where the
inequality holds using $k'-\hat{k} \leq \mu-1$, and the claim
follows.
\end{proof}

\begin{claim}\label{postpone_job_size2}
For every value of $t''$, the total size of the set of jobs that are postponed to a
common postpone-gap during  $\J_{t'',u}$ is at most
$\del^{10}(1+\del)^{t''}$. Thus, the resulting schedule
$\widehat{\sol}$ is feasible.
\end{claim}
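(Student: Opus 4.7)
The plan is to bound the size of jobs postponed to a single postpone-gap, and then combine this with the gap guarantee of Corollary \ref{timestretch} to derive overall feasibility. As a first step, recall from the construction (specifically from the observation that ``each postpone-gap gets jobs of at most one sub-iteration as afterwards its time interval is declared taken'') that all jobs placed into the postpone-gap of $\J_{t'',u}$ come from a single sub-iteration. That sub-iteration deals with some $\ak$ at an interval $\J_{t,u}$ whose assigned pair is $(k',t')$, so $\mu := k' - k \geq 1$, and Claim \ref{postpone_job_size1} already bounds the total size of these jobs by $(1+\del)^{t'}/\del^{10(\mu-1)+3}$.

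The key step is to convert this bound, which is expressed in terms of $(1+\del)^{t'}$, into one in terms of $(1+\del)^{t''}$. By the postponing rule, these jobs are sent to the last postpone-gap strictly before time $X := (1+\del)^{t'}/\del^{10\mu}$, and the fact that this postpone-gap lies in $\J_{t'',u}$ forces (up to a small factor coming from the discreteness of the interval endpoints) $(1+\del)^{t'} \leq \del^{10\mu}(1+\del)^{t''+O(1)}$. Substituting into the bound from Claim \ref{postpone_job_size1} gives total size at most $\del^{10\mu}(1+\del)^{t''+O(1)}/\del^{10\mu-7} = \del^{7}(1+\del)^{t''+O(1)}$, which is at most $\del^{10}(1+\del)^{t''}$ once the $(1+\del)^{O(1)}$ slack is absorbed using $\del \leq 1/36$.

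For the feasibility conclusion, I would argue interval by interval on machine $u$ (assuming speed $1$ without loss of generality). If $\J_{t'',u}$ lies entirely inside the reserved period of a job of some $\solkab$, then by the palette-based rules the construction assigns no jobs to its sparse-gap or postpone-gap, and its feasibility is inherited directly from that of $\solkab$. Otherwise, Corollary \ref{timestretch} guarantees an idle segment of length at least $\del^3 Z = \del^4(1+\del)^{t''}$ within the interval, which comfortably contains both the sparse-gap (of size $2\del^5(1+\del)^{t''}$) and the postpone-gap (of size $\del^5(1+\del)^{t''}$); the size bound just proved then ensures that the jobs actually postponed into the postpone-gap fit within the allocated $\del^5(1+\del)^{t''}$. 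Jobs that are assigned at their original positions from the various $\solkab$'s retain their feasibility directly from those schedules.

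The main obstacle is the geometric substitution in the second paragraph: the exponent $10\mu$ in the postponing rule and the exponent $10(\mu-1)+3$ in Claim \ref{postpone_job_size1} are tuned precisely so that their difference is the $\mu$-independent constant $7$, yielding a uniform bound regardless of how many levels of intermediate postponement have occurred. Bookkeeping the small $(1+\del)^{O(1)}$ slack coming from the discreteness of the time intervals and absorbing it into the gap between exponents $7$ and $10$ is the subtle check that makes the bound go through under $\del \leq 1/36$.
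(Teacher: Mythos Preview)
Your overall approach matches the paper's: invoke the single--sub-iteration observation, apply Claim~\ref{postpone_job_size1} to bound the postponed size in terms of $(1+\del)^{t'}$, then lower-bound $(1+\del)^{t''}$ in terms of $(1+\del)^{t'}$ via the postponing rule and substitute. However, two steps in your write-up do not go through as stated.

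\medskip
\textbf{The lower bound on $(1+\del)^{t''}$.} You write that ``the fact that this postpone-gap lies in $\J_{t'',u}$ forces (up to a small factor coming from the discreteness of the interval endpoints) $(1+\del)^{t'} \leq \del^{10\mu}(1+\del)^{t''+O(1)}$.'' But the only information you have used is that the postpone-gap is \emph{before} $X=(1+\del)^{t'}/\del^{10\mu}$, which gives an \emph{upper} bound on $(1+\del)^{t''}$, not a lower bound. The lower bound comes from the word \emph{last}: you must argue that there is no postpone-gap strictly between $\J_{t'',u}$ and $X$. Postpone-gaps are carved out of the idle time created by time-stretching, so an interval entirely contained in the reserved period of some job has no postpone-gap. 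Hence the interval just below $X$ might be gap-free, and you have to bound how far back the last available gap can be. This is exactly where timeliness enters: a job running through an interval near $X$ must have started at time at least a $\del$-fraction of $X$, and the interval containing that start has a gap. The paper's proof makes this explicit (``Since $\widehat{\sol}$ is timely, the last postpone-gap before time $\frac{1}{\del^{10\mu}}(1+\del)^{t'}$ is starting not earlier than $\frac{1}{\del^{10\mu-1}}(1+\del)^{t'-2}$''), which yields $(1+\del)^{t'} \lesssim \del^{10\mu-2}(1+\del)^{t''}$, \emph{not} $\del^{10\mu}(1+\del)^{t''+O(1)}$. The extra $\del^{-2}$ loss is real, not mere discreteness.

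\medskip
\textbf{The final inequality.} Even granting your (too-strong) substitution, you obtain total size $\leq \del^{7}(1+\del)^{t''+O(1)}$ and then assert this is at most $\del^{10}(1+\del)^{t''}$. Since $\del<1$ we have $\del^{7}>\del^{10}$, and $(1+\del)^{O(1)}>1$, so this inequality is simply false. With the correct substitution $(1+\del)^{t'}\leq \del^{10\mu-2}(1+\del)^{t''}$ and the bound $(1+\del)^{t'}/\del^{10\mu-7}$ from Claim~\ref{postpone_job_size1}, you get $\del^{5}(1+\del)^{t''}$, which is exactly what is needed to fit inside the postpone-gap of size $\del^{5}(1+\del)^{t''}$ and establish feasibility. (The exponent $10$ in the statement appears to be a slip; the paper's own proof targets and attains $\del^{5}$, and only $\del^{5}$ is required.)
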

\begin{proof}
Let $k$ be such that jobs of $\solkab$ were postponed to be
scheduled during $\J_{t'',u}$.  That is, there exist values of
$k'$ and $t'$ such that the set of jobs of $\ak$ of total length
$\Pi_{k}$ are postponed charging the pair $(k',t')$. Note that
each postpone-gap gets jobs of at most one sub-iteration as
afterwards its time interval is declared taken (and the next assignment of jobs will take place in a new iteration).  Thus, the values
of $k,k',t'$ are unique for a given time interval $\J_{t'',u}$
whose postpone-gap gets jobs.  We let $\mu=k'-k$, then by Claim
\ref{postpone_job_size1}, we conclude that $\Pi_{k} \leq
(1+\del)^{t'}\cdot \frac{1}{\del^{10(\mu-1)+3}}$. Since
$\widehat{\sol}$ is timely, the last postpone-gap before time
$\frac{1}{\del^{10\mu}}(1+\del)^{t'}$ is starting not earlier than
$\frac{1}{\del^{10\mu-1}}(1+\del)^{t'-2}$, and thus its time interval
is starting after time $T=\frac{1}{\del^{10\mu-2}}(1+\del)^{t'}$.
To prove the claim, we need to show that $\Pi_k \leq \del^5 T$,
and thus it suffices to show that $(1+\del)^{t'}\cdot
\frac{1}{\del^{10(\mu-1)+3}} = (1+\del)^{t'}\cdot
\frac{1}{\del^{10\mu-7}} = (1+\del)^{t'} \del^5
\frac{1}{\del^{10\mu-2}} = \del^5 T$, which holds for all $\mu$ as $\del\leq
1$.
\end{proof}

\begin{claim}
The total pseudo-cost of machine $u$ in the resulting schedule
satisfies $\widehat{\sol} \leq (1+\del) \sum_k \solkab$ that is
$1+\del$ times the total pseudo-cost of the jobs assigned to
machine $u$ in all the solutions $\solkab$.
\end{claim}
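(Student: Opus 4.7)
The plan is to decompose the pseudo-cost of $\widehat{\sol}$ on machine $u$ into two contributions: $S_0$ from jobs placed at their original $\solkab$ starting times or in some sparse-gap, and $S_1$ from jobs placed in some postpone-gap. I would first argue $S_0 \leq \sum_k \solkab(u)$, where $\solkab(u)$ denotes the per-machine pseudo-cost of $u$ in $\solkab$. Jobs placed at their original starting time contribute identically to both schedules. Jobs placed in the sparse-gap of some $\J_{t,u}$ arise from a set $J_{k,t}$ of total size at most $\del^5(1+\del)^t$ (the algorithm's rule for using a sparse-gap) and therefore fit inside $\J_{t,u}$; since those jobs also start in $\J_{t,u}$ in $\solkab$, their pseudo-cost in $\widehat{\sol}$ is no larger than in $\solkab$.

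The heart of the proof is the bound $S_1 \leq \del \sum_k \solkab(u)$, which I would establish by an amortized charging argument indexed by the taken pairs $(k',t')$. For each such pair and each integer $\mu \geq 1$, let $P_{k',t',\mu}$ be the set of jobs of $A^{(k'-\mu)}$ postponed charging $(k',t')$ with difference $\mu$. By Claim \ref{postpone_job_size1} the total size of $P_{k',t',\mu}$ is at most $(1+\del)^{t'}/\del^{10(\mu-1)+3}$. The postpone-gap where these jobs land lies in a time interval whose right endpoint is at most a constant times $(1+\del)^{t'}/\del^{10\mu}$, which upper-bounds the completion-time factor in their pseudo-cost. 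By the density partition of $A_\zeta$ into the instances $\ak$, the maximum density of $A^{(k'-\mu)}$ is at most $(1+\del)^{-((\mu-1)\xi/\del^{\ell+1}+\xi+1)}$ times the minimum density $D_{k',\min}$ of $A^{(k')}$. Multiplying these three bounds yields a bound on the new pseudo-cost of $P_{k',t',\mu}$ of the form $D_{k',\min}(1+\del)^{2t'}$ times a polynomial in $1/\del$ divided by the doubly-exponential factor $(1+\del)^{(\mu-1)\xi/\del^{\ell+1}+\xi+1}$.

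Next I would lower-bound the pseudo-cost in $\widehat{\sol}$ available at pair $(k',t')$ to absorb the above. For each pair $(k',t')$ I would identify a ``reason'' set $R_{k',t'}\subseteq A^{(k')}$ of total size at least $\del^5(1+\del)^{t'}$ whose starting times in $\widehat{\sol}$ are at or after $(1+\del)^{t'}$ on $u$. Such a set is extracted by a case analysis on the three scenarios that make the algorithm declare $(k',t')$ taken: overflow of the sparse-gap of $\J_{t',u}$ by jobs of $J_{k',t'}$; existence of a job in $J_{k',t'}$ whose reserved period covers $\J_{t',u}$ and a later interval; and a job of $J_{k',t'}$ completing in $\J_{t',u}$ with at least $\del^5(1+\del)^{t'}$ of its processing there. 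These reason jobs are placed by the algorithm at their original $\solkkab$ starting times or in a nearby sparse-gap (never postponed), so their pseudo-cost in $\widehat{\sol}$ is part of $S_0$; moreover each job of the input lies in at most one $R_{k',t'}$ (the pair where it starts in its own $\solkkab$), so the reason contributions are disjoint. Using the density lower bound, the pseudo-cost of $R_{k',t'}$ in $\widehat{\sol}$ is at least $\del^5 D_{k',\min}(1+\del)^{2t'}$.

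Dividing the upper bound on the pseudo-cost of $P_{k',t',\mu}$ by the lower bound on the pseudo-cost of $R_{k',t'}$ cancels the common factor $D_{k',\min}(1+\del)^{2t'}$ and leaves a ratio bounded by a polynomial in $1/\del$ of the form $\del^{-(20\mu-c)}$ times $(1+\del)^{-(\mu-1)\xi/\del^{\ell+1}-\xi-1}$ for a small constant $c$. Using $(1+\del)^\xi \geq 1/\del^{25}$ the $\mu=1$ contribution is of order $\del^{17}$; for $\mu\geq 2$ the doubly-exponential factor $(1+\del)^{(\mu-1)\xi/\del^{\ell+1}}$ overwhelms the $\del^{-O(\mu)}$ term, so the geometric sum over $\mu\geq 1$ remains of order $\del^{17}$. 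Summing over all taken pairs $(k',t')$ then yields $S_1\leq O(\del^{17})\sum_{(k',t')}\mathrm{PC}(R_{k',t'}\text{ in }\widehat{\sol})\leq O(\del^{17})\cdot S_0\leq \del\sum_k\solkab(u)$, and hence $\widehat{\sol}(u)=S_0+S_1\leq (1+\del)\sum_k\solkab(u)$. The main obstacle I expect is the case analysis associating a disjoint $R_{k',t'}$ of the claimed size and density to every taken pair, especially for pairs that become taken through an intermediate-pair cascade rather than directly through the current $J_{k',t'}$; this requires carefully tracing the algorithm to verify that every such pair inherits a witness set of reason jobs with the stated lower bound on pseudo-cost.
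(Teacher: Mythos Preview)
Your approach is essentially the paper's own argument: both decompose the pseudo-cost into non-postponed and postponed parts, then for each charged pair $(k',t')$ bound the postponed cost against the pseudo-cost $c^{k',t'}$ of the non-postponed jobs $J_{k',t'}$ starting in $\J_{t',u}$, using the size bound of Claim~\ref{postpone_job_size1}, the completion-time bound $\le (1+\del)^{t'}/\del^{10\mu}$, and the density gap between $A^{(k')}$ and $A^{(k'-\mu)}$. Two simplifications the paper makes that you can adopt: it uses only the weak density gap $(1+\del)^{\xi\mu}\ge \del^{-25\mu}$ (already enough to beat the $\del^{-20\mu}$ blow-up, giving $\sum_\mu \del^{5\mu-2}\le\del$), so your doubly-exponential bound is unnecessary; and it takes $R_{k',t'}=J_{k',t'}$ directly, which is automatically disjoint across pairs, so no elaborate case analysis is needed---the only subtlety is that when $(k',t')$ becomes taken via sparse-gap overflow the $\del^5(1+\del)^{t'}$ of mass may include jobs from $A^{(k'')}$ with $k''>k'$, but those have even higher density, so the lower bound on $c^{k',t'}$ (suitably interpreted) only improves.
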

\begin{proof}
We consider one specific pair $(k',t')$ for which the job set
$J_{k',t'}$ assigned to start during $\J_{t',u}$ in $\widehat{\sol}$ are not postponed jobs, and let $c^{k',t'}$ be the
total pseudo-cost of $J_{k',t'}$ . We also consider the sets of jobs
charging this pair. These are postponed jobs of smaller values of
$k$, and the total pseudo-cost of jobs of $\ak$ that charge the
pair $(k',t')$ (in the solution $\widehat{\sol}$) is denoted by
$c_{k}$, and we denote by $J_k$ the set of jobs of $\ak$ that are
postponed jobs (and charge the pair $(k',t')$).

In order to prove the claim it suffices to show that $\sum_{k<k'}
c_k \leq \del c^{k',t'}$.  We know by Claim
\ref{postpone_job_size1}, that the total size of the jobs of $J_k$
 is at most $(1+\del)^{t'}\cdot
\frac{1}{\del^{10(k'-k-1)+3}}$, and we concluded above that the
total size of the jobs of $J_{k',t'}$ is at least
$\del^5\cdot(1+\del)^{t'}$ and this is at least $\del^{2+10(k'-k)}$
times the total size of the jobs of $J_k$.

On the other hand, the minimum density of a job in $J_{k',t'}$ is
at least $(1+\del)^{\xi(k'-k)}\geq (\frac{1}{\del})^{25(k'-k)}$
times the maximum density of a job of $J_{k}$.

Therefore, the total pseudo-cost of the jobs of all the sets $J_k$
is at most
\begin{eqnarray*}
\sum_{k<k'} c_k &\leq & c^{k',t'} \cdot \sum_{k<k'}
\del^{25(k'-k)} \cdot \frac{1}{\del^{2+10(k'-k)}} \cdot
\frac{1}{\del^{10(k'-k)}}\\
&\leq& c^{k',t'} \cdot \frac{1}{\del^2} \sum_{\mu=1}^{\infty}
\del^{5\mu}
\leq c^{k',t'} \cdot \frac{\del^3}{1-\del^{5}} \leq \del
c^{k',t'} .
\end{eqnarray*}
\end{proof}

\begin{claim}
The solution obtained from $\widehat{\sol}$ by permuting the jobs
starting after time $\alpha^{\frac{\alpha}{\del}-1}$ to be
processed according to non-increasing order of their densities is
a solution of total pseudo-cost of at most $(1+\del) \cdot
\sum_{k} \solkab$ and it satisfies properties \ref{prop2} and
\ref{no_large_prop}.  Furthermore, the palette of $\widehat{\sol}$
is $(1+\del)^3 \pi$.
\end{claim}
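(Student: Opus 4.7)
The plan is to prove the three assertions in the statement separately, building on the preceding per-machine claim which already bounds the pseudo-cost of $\widehat{\sol}$ (before permutation) by $(1+\del)\sum_k \solkab$.

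For the cost bound, I would first observe that the permutation is local to each machine and reorders only jobs whose starting times exceed $\alpha^{\alpha/\del-1}$, which upper bounds all release dates; hence no release-date constraint is violated. Next, since these jobs run back-to-back on the machine after the max release date (any idle time between them has already been removed in the combining process), the multiset of completion times on each machine is invariant under permutation. Computing pseudo-cost amounts to choosing an assignment of weights to these fixed rounded-up completion-time slots, and a standard rearrangement-inequality argument (equivalently, Smith's rule transferred to pseudo-costs) shows that pairing higher-density jobs with earlier slots minimizes this sum. Thus the permuted pseudo-cost does not exceed that of $\widehat{\sol}$, preserving the preceding claim's bound when summed over all machines.

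For the two structural properties, Property \ref{no_large_prop} transfers trivially from $\widehat{\sol}$ to its permuted version by Remark \ref{rmk_large_prop}, and it holds for $\widehat{\sol}$ because each $\solkab$ satisfies it (inherited through time stretching from $\solka$, which gets it from Lemma \ref{nosparselemma}) and the combining procedure only redirects jobs either within the same machine (via sparse-gaps and postpone-gaps) or to the pink/lower-index machine. For Property \ref{prop2}, take $j, j'$ on a common machine with $C_j - p_j/\sigma \ge \Psi'/4$ and $C_{j'}\ge C_j + \Psi\hat y/\del^{27}$; since $\Psi'/4 > \Psi$, both start after the maximum release date and are therefore in non-increasing density order after the permutation. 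If $j \in \ak$ and $j'\in\akk$ with $k'-k \ge 1/\del^{\ell+1}$, the density separation between instances already gives ratio $\ge (1+\del)^{\xi(k'-k)} \ge (1+\del)^{\hat y}$. Otherwise at most $1/\del^{\ell+1}$ consecutive instances contribute between $j$ and $j'$, each of total processing at most $\Psi\hat y/\del^{25}$ per machine by Corollary \ref{job_shifting_cor}, so the combining-induced time shift between $\solka$ and $\widehat{\sol}$ is bounded by a quantity strictly smaller than the slack $\Psi\hat y/\del^{27}-\Psi\hat y/\del^{28}$ built into Property \ref{prop2}. This means the same pair in $\solka$ still satisfies the temporal hypothesis of Property \ref{prop1} (with its stronger $\Psi'/2$ and $\Psi\hat y/\del^{28}$ thresholds), whose conclusion is exactly what we need.

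For the palette, I would track the palette through the chain $\solk \to \solka \to \solkab \to \widehat{\sol}$, noting which operations are responsible for an actual shift of the color intervals versus which only rearrange jobs within existing intervals; the permutation itself preserves the palette since it does not change the set of intervals where starts or idle times occur on any machine. The target $(1+\del)^3\pi$ is obtained by summing the shifts contributed by the time-stretching operations that genuinely translate the palette (as opposed to the pink-machine/sparse-gap reshufflings that preserve it) on top of the base palette $\pi$ of the target optimum for $A_\zeta$. The main obstacle will be Property \ref{prop2}: carefully quantifying the maximum temporal distortion a job can experience between $\solka$ and $\widehat{\sol}$ under the multi-layer combining rules (postpone-gaps with difference $\mu$, intermediate pairs, and the $1/\del^{10\mu}$ postponement distance) and checking that the chosen exponents of $\del$ in the definitions of $\hat y$, $\xi$, and the thresholds $\Psi\hat y/\del^{27}$ and $\Psi\hat y/\del^{28}$ leave enough slack to make both case analyses go through.
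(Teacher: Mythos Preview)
Your treatment of the cost bound and of Property~\ref{no_large_prop} is correct and matches the paper's intent: the paper's short justification points precisely to the fact that the job-to-machine allocation in $\widehat{\sol}$ coincides with that of the collection $\{\solka\}_k$, which immediately gives Property~\ref{no_large_prop} (via Remark~\ref{rmk_large_prop}), and the density permutation can only lower pseudo-cost.

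Where you diverge from the paper is in the handling of Property~\ref{prop2}. The paper does \emph{not} invoke Property~\ref{prop1} of the individual schedules $\solka$ at all. Its entire argument rests on two facts: (i)~every $\solka$ (hence $\solkab$) completes each machine by time $\Psi\hat{y}/\del^{25}$, so on any fixed machine the jobs coming from a single instance $\ak$ occupy total processing time at most $\Psi\hat{y}/\del^{25}$; and (ii)~the allocation is preserved. After the density permutation the jobs of each $\ak$ form a contiguous block of length at most $\Psi\hat{y}/\del^{25}$, and the required density separation is read off directly from the temporal gap $\Psi\hat{y}/\del^{27}$ relative to this per-block bound --- no transfer of temporal hypotheses back to any $\solka$ is needed.

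Your proposed route through Property~\ref{prop1} has a genuine gap in the ``otherwise'' case. If $j\in\ak$ and $j'\in\akk$ with $k\neq k'$ but $|k-k'|<1/\del^{\ell+1}$, there is no single solution $\solka$ containing both jobs, so the reduction you sketch has no target to land on. Your quantitative bookkeeping is also off by many orders of magnitude: $1/\del^{\ell+1}=1/\del^{26}$ consecutive instances, each contributing up to $\Psi\hat{y}/\del^{25}$ of processing on the machine, give a total of order $\Psi\hat{y}/\del^{51}$, which is vastly larger than the slack $\Psi\hat{y}/\del^{27}-\Psi\hat{y}/\del^{28}$ you hope to fit it into. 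The paper sidesteps this entirely by never attempting to trace back to the per-instance structural properties; only the completion-time ceiling of each $\solkab$ is used.
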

The claim follows because $\solka$ (and therefore also $\solkab$)
satisfies that the completion time of any machine is at most
$\alpha^{\frac{\alpha}{\del}-1}\cdot \frac{\hat{y}}{\del^{25}}$,
and the fact that in $\widehat{\sol}$ the allocation of jobs to
machines is the same as in the collection of solutions $\{
\solka\}_k$ (this is the same allocation as in $\{ \solkab\}_k$).

\subsection{EPTAS for bounded inputs (with release dates)}
In this section, we show how to approximate (within a factor of $(1+\del)^4$) any input of the form $\ak$ with palette $\pi$ and pink machine $v$.  Specifically, we assume that the jobs of the instance have densities that are integer powers of $1+\del$ in the interval $[1,(1+\del)^y]$, the release date of every job is an integer power of $1+\del$ in the interval $[1,R]$ where $R=\alpha^{\frac{\alpha}{\del}-1}$, and the size of every job, the weight of every job, and the speed of every machine, are integer powers of $1+\del$.  Moreover, we know that there exists a near optimal schedule in which every machine completes process all jobs assigned to it by time $L=\alpha^{\frac{\alpha}{\del}-1}\cdot \frac{\hat{y}}{\del^{25}}$.  Observe that $y,R,L$ are functions of $\del$.

We apply the following preprocessing of the instance.  For every non-negative integer value of $t$ such that $(1+\del)^t \leq R$, we denote the set of jobs released at time $(1+\del)^t$ by $\J_t$, and by $\hat{m}_t$ the machine of smallest index (according to the palette $\pi$) which has a gap or starting time of a job during $I_t=[(1+\del)^t,(1+\del)^{t+1})$ (observe that $\hat{m}_t\leq v$ because $v$ is pink).  If the set of jobs $\J_t$ can be scheduled on machine $\hat{m}_t$ and their total processing time on this machine $\hat{m}_t$ is at most $\del^4(1+\del)^t$ (that is, their total size is at most $\del^3 s_{\hat{m}_t}$ times the length of the interval), then we will schedule these jobs during the time interval $I_t$ on machine $\hat{m}_t$, and remove these jobs from the instance.  We apply this preprocessing for every value of $t$, and denote by $\J$ the resulting job set.

\begin{lemma}
Let $\sol$ be a feasible solution for the resulting instance (with job set $\J$) whose total pseudo-cost is at most $1+\kappa\del$ times the total pseudo-cost of an optimal solution $\opt_{\J}$ for that instance, then the solution $\sol'$ resulting from $\sol$ by first scheduling the removed jobs as suggested by the preprocessing and afterwards apply time stretching by a factor of $1+\del$ on the resulting solution,  is a feasible solution (with a palette obtained from $(1+\del)\pi$) whose cost is at most $(1+\del)(1+\kappa\del)$ times the total pseudo-cost of an optimal solution for the original instance.
\end{lemma}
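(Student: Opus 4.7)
The plan is to establish three things for the constructed schedule $\sol'$: feasibility, that its palette has the claimed form, and the claimed pseudo-cost bound. I would do these in the order below.

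\textbf{Feasibility.} The schedule $\sol$ is timely for the preprocessed instance with job set $\J$. I would combine the two operations conceptually by first appealing to Corollary \ref{timestretch}: time-stretching $\sol$ by $1+\del$ produces a timely schedule in which every interval $\J_{i,\ell}$ that is entirely idle or contains a reserved starting/completion time has idle space of length at least $\del^3 s_{\ell}\del(1+\del)^i$. By the definition of the palette $\pi$ and of $\hat{m}_t$, machine $\hat{m}_t$ has either a gap or a starting time during $I_t$ in every schedule with palette $\pi$, so after stretching, the image of $I_t$ on $\hat{m}_t$ (which is $I_{t+1}$) contains a gap of length at least $\del^3\cdot s_{\hat m_t}\del(1+\del)^{t+1}\ge \del^4 s_{\hat m_t}(1+\del)^t$. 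By the preprocessing rule, the total processing time on $\hat{m}_t$ of the removed job set $\J_t$ is at most $\del^4(1+\del)^t$, so $\J_t$ fits in this gap. The jobs of $\J_t$ are released at $(1+\del)^t\le(1+\del)^{t+1}$, so release dates are respected, and after this placement the schedule remains timely.

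\textbf{Palette.} Since the only locations where we insert a removed job are intervals $I_t$ on a machine $\hat{m}_t$ that already had (under $\pi$) a gap or starting time, the insertion cannot create new "starting time or idle time" colors in any interval; at worst it eliminates a gap in a slot that already had a starting time recorded in the palette. Time-stretching then shifts every interval index by exactly one, which is by definition the action on palettes that produces $(1+\del)\pi$. So the palette of $\sol'$ is obtained from $(1+\del)\pi$ (possibly after the adjustment noted above).

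\textbf{Cost bound.} Let $\opt$ denote an optimal solution for the original instance and write $\opt=\opt|_{\J}+\opt|_{\text{removed}}$ for its pseudo-cost decomposition. I would bound the two contributions to $\sol'$ separately. For jobs in $\J$: $\sol(\J)\le (1+\kappa\del)\opt_{\J}\le(1+\kappa\del)\opt|_{\J}$, and time-stretching by $1+\del$ multiplies pseudo-costs by at most $1+\del$. For each removed job $j\in\J_t$, its completion time in $\sol'$ lies in the stretched interval $I_{t+1}$ and is therefore at most $(1+\del)^{t+2}$, so its contribution to $\sol'$ is at most $(1+\del)^{t+2}w_j$; on the other hand, because release dates in $A'$ are rounded up to powers of $1+\del$, the completion time of $j$ in $\opt$ is at least $(1+\del)^t$, so its pseudo-cost in $\opt$ is at least $(1+\del)^{t+1}w_j$, giving a ratio of at most $1+\del$. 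Combining,
\[
\sol'(\text{orig}) \;\le\; (1+\del)(1+\kappa\del)\,\opt|_{\J} \;+\; (1+\del)\,\opt|_{\text{removed}} \;\le\; (1+\del)(1+\kappa\del)\,\opt.
\]

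\textbf{Main obstacle.} The only delicate point is verifying that when the stretched image of $I_t$ on $\hat{m}_t$ is contained in the reserved period of a single long job running on $\hat{m}_t$ in $\sol$, Corollary \ref{timestretch} does not directly guarantee a gap in that interval. I would rule this case out using the definition of $\hat{m}_t$: because the palette records a gap or starting time in $I_t$ on $\hat{m}_t$, no job of $\sol$ can have its reserved period strictly containing $I_t$ on $\hat{m}_t$, so the corollary's gap hypothesis applies. Once this is checked, the rest of the argument is routine bookkeeping on intervals and pseudo-costs.
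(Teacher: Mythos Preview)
Your proposal is correct and follows essentially the same approach as the paper: the cost decomposition into $\J$-jobs and removed jobs, the bounds $(1+\del)^{t+1}w_j$ and $(1+\del)^{t+2}w_j$ for the removed jobs in $\opt$ and $\sol'$ respectively, and the final combination are identical to the paper's argument. The paper's own proof actually addresses only the cost bound and takes feasibility and the palette claim for granted, so your additional feasibility discussion (including the ``main obstacle'' about reserved periods) is more detailed than what the paper provides, but it is sound.
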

\begin{proof}
Denote  the optimal solution for the original instance with respect to the total pseudo-cost by $\opt$.  Let $\J'$ be the set of jobs not in $\J$, that is, the set of jobs that we removed during the preprocessing, and let $\Delta$ be the total pseudo-cost of the jobs in $\J'$ according to $\opt$.  Then since $\opt$ is a feasible schedule, the set of jobs in $\J' \cap \J_t$ has total pseudo-cost in $\opt$ which is at least $(1+\del)^{t+1} \cdot \sum_{j\in \J' \cap \J_t} w_j$, and in $\sol'$ this total pseudo-cost is $(1+\del)^{t+2} \cdot \sum_{j\in \J' \cap \J_t} w_j$.  As for the other jobs, note that $\opt$ gives a feasible schedule for the instance with job set $\J$, whose total pseudo-cost is at least $\opt - \sum_t (1+\del)^{t+1} \cdot \sum_{j\in \J' \cap \J_t} w_j$, and thus the total pseudo-cost of $\opt_{\J}$ is at most this bound.  Thus the total pseudo-cost of $\sol'$ is at most $(1+\del)(1+\kappa\del) \cdot \left( \opt - \sum_t (1+\del)^{t+1} \cdot \sum_{j\in \J' \cap \J_t} w_j \right) + \sum_t (1+\del)^{t+2} \cdot \sum_{j\in \J' \cap \J_t} w_j \leq (1+\del)(1+\kappa\del)\opt$, and the claim follows.
\end{proof}

Let $\gamma=\frac{\del^{20}}{(1+\del)^y (y+1)(\log_{1+\del}R+1)}$.  We define a job $j$ of size $(1+\del)^i$ to be huge on machine $u$ of speed $s$ if $\frac{(1+\del)^i}{s} \geq L$ (these jobs cannot be assigned to $u$), $j$ is large on $u$ if $\frac{(1+\del)^i}{s} \in [ \gamma, L)$, and otherwise it is small.  Observe that every machine $u$ is assigned at most $\frac{L}{\gamma}$ large jobs, and this bound is a function of $\del$.

We define a type of job $j$ to be the vector consisting of its size, weight, and release date.  We say that a job type is large for machine $u$ if a job of this type is large for $u$.  Note that the number of job types that are large for $u$ is at most $(\lceil \log_{1+\del}\frac{L}{\gamma} \rceil+1) \cdot (y+1) (1+\log_{1+\del} R)$.  We let $\J_{i,r,t}$ denote the set of jobs of  size $(1+\del)^i$, density $(1+\del)^r$ (and thus weight $(1+\del)^{i+r}$) and release date $(1+\del)^t$, and let $n_{i,r,t}=|\J_{i,r,t}|$ be the number of such jobs.

We next define machine types as follows.  Each machine of index $u\leq \frac{1}{\del^7} +1$ has its own type, the set of machines of indices at least $\frac{1}{\del^7} +2$ is partitioned into machine types, such that two machines of indices at least $\frac{1}{\del^7} +2$ have the same type if they have a common speed.  Let $T$ denote the set of machines types, and for $\sigma\in T$ we let $m_{\sigma}$ denote the number of machines of type $\sigma$, and by $s_{\sigma}$ the common speed of machines of type $\sigma$.

Let $\hat{s}$ be the speed of the machine of index $\frac{1}{\del^7} +2$ (this is a fastest machine of index at least $\frac{1}{\del^7} +2$).  We say that a machine type $\sigma$ is slow (and every machine of this type is a slow machine) if $s_{\sigma} < B \hat{s}$ for a constant $B$ depending on $\del$ that we will define later.  A machine type $\sigma$ is fast if it is not slow.  We let $T_{fast}$ be the set of fast machine types, and $T_{slow}=T\setminus T_{fast}$ be the set of slow machine types.  Then $|T_{fast}|\leq \frac{1}{\del^7} +2+ \lceil \log_{1+\del} \frac{1}{B} \rceil$,  which is again a constant for every constant value of $B$.

We define a machine configuration as a vector that defines a schedule of one machine, and we will denote by $\C$ the set of all configurations.  For a configuration $C$, the first component $s(C)$ is an integer such that the speed of every machine with configuration $C$ is $(1+\del)^{s(C)}$, and the second component is the machine type $\sigma(C)\in T$ denoting that every machine with configuration $C$ has machine type $\sigma(C)$.  For every $0 \leq r\leq y$, $\log_{1+\del} \gamma (1+\del)^{s(C)} \leq i \leq \log_{1+\del} L(1+\del)^{s(C)}$, $0\leq t \leq \log_{1+\del}R$, and $t \leq i' \leq \log_{1+\del} L$, we have a component $N_C(r,i,i',t)$ denoting the number of jobs of density $(1+\del)^r$, size $(1+\del)^i$ and release date $(1+\del)^t$ that are scheduled to start during the time interval $[(1+\del)^{i'},(1+\del)^{i'+1})$ as large jobs for this machine.  Moreover, for every $0 \leq r\leq y$, $0\leq t \leq \log_{1+\del}R$, and $t\leq i' \leq \log_{1+\del} L$, we have a component $n_C(r,i',t)$ expressing that the total size of jobs of density $(1+\del)^r$ and release date $(1+\del)^t$ that are scheduled to start during the time interval $[(1+\del)^{i'},(1+\del)^{i'+1})$ as small jobs for this machine is in the interval $((n_C(r,i',t)-1)\gamma(1+\del)^{s(C)}, n_C(r,i',t)\gamma(1+\del)^{s(C)}]$.

We process the set $\C$ and remove  infeasible configurations from it.  More precisely, first if the second component is a machine type corresponding to one of the machines with index at most $\frac{1}{\del^7} +1$, we verify that the configuration does not contradict the color of this machine (as indicated by the palette), that is, if a job starts in a time interval, then the color of the machine does not forbid this.  Next, we consider configurations with second component corresponding to machine types of machines with indices at least $\frac{1}{\del^7} +2$, and if the configuration corresponds to sparse machine, then we discard this configuration (as we showed in Lemma \ref{nosparselemma} that there exists a near optimal solution where no sparse machine with index at least $\frac{1}{\del^7} +2$ exists). Here we relax this condition and apply the following deletion rule.  If there exists $i'$ for which $0<\sum_{r,i,t} N_C(r,i,i',t)(1+\del)^i+\sum_{r,t} n_C(r,i',t)\gamma(1+\del)^{s(C)}\leq \del^5(1+\del)^{i'}(1+\del)^{s(C)}$, then we delete $C$ from the list of configurations.
 Finally, for all machine types, we consider each configuration $C$ and try to schedule jobs in each time interval without creating violations.  To do so, we process the time intervals with increasing index of $t$, and whenever we reach a time interval for which the configuration $C$ defines a set of jobs to be started during the interval (as small or large jobs), we schedule the jobs in a non-decreasing order of their sizes, and we allocate total size of $(n_C(r,i',t)-1)\gamma(1+\del)^{s(C)}$ to small jobs of density  $(1+\del)^r$ and release date $(1+\del)^t$  to be scheduled during the time interval $[(1+\del)^{i'},(1+\del)^{i'+1})$.  If a violation occurs, i.e., if we try to schedule a set of jobs to start in a given time interval (after the last job of previous time intervals has completed), and the last such job for a given time interval does not start in its time interval, then the configuration is said to be infeasible and we remove it from $\C$.  Otherwise, we constructed a {\it virtual schedule} for a configuration $C$ of a set of jobs (not necessarily jobs that exist in $\J$) and we compute the total pseudo-cost of  this virtual schedule, which we will denote by $cost(C)$.  We further remove  all configurations for which the virtual schedule is not timely from $\C$.  Note that an empty set of jobs  gives a feasible configuration for every machine type (recall that this does not correspond to a sparse machine).

Let $\C_{\sigma}$ be the set of configurations for which the second component is $\sigma$.  Then, for every $\sigma$, we have that $|\C_{\sigma}|$ is at most $$ D=\left( \frac{L}{\gamma} \right)^{(y+1)[(\log_{1+\del} \frac{L}{\gamma}) +3][(\log_{1+\del} {L}+2](\log_{1+\del}R+1)}$$ which is constant for every value of $\sigma$.  We will use the value of $B$ as $B=\frac{\del^6}{L(1+\del)^2 D}$ which is indeed a fixed constant as we declared.   We denote by $\C_{fast}=\cup_{\sigma\in T_{fast}} \C_{\sigma}$, and thus $|\C_{fast}|$ is a constant term  (a function of $\del$).

Next, we define a mixed-integer linear program $\Pi$.  The decision variables are as follows.  For every configuration $C\in \C$, we have a variable $X_C$ denoting the number of machines with configuration $C$.  We will require $X_C$ to be integer if $C\in \C_{fast}$, and otherwise we will allow $X_C$ to be fractional.  The other set of decision variables are $Y_{C,r,i,i',t}$, denoting the number of jobs of density $(1+\del)^r$, size $(1+\del)^i$ and release date $(1+\del)^t$ that are scheduled to start during the time interval $[(1+\del)^{i'},(1+\del)^{i'+1})$ as small jobs on a machine with configuration $C$.  The last set of decision variables is allowed to be fractional.  The variable $Y_{C,r,i,i',t}$ exists only if a job of size $(1+\del)^i$ is small for a machine with configuration $C$ (that is for a machine with speed $s(C)$), and $t\leq i'$.  Using these decision variables, the mathematical program $\Pi$ is defined as follows.

\begin{eqnarray}
\min& \sum_{C\in \C} cost(C)X_C & \nonumber\\
s.t.& \sum_{C\in \C_{\sigma}} X_C = m_{\sigma} & \forall\  \sigma\in T \label{rel_cons1}\\
& \sum_{i' \geq t} \left( \sum_C N_C(r,i,i',t) X_C + \sum_C Y_{C,r,i,i',t}\right) = n_{i,r,t} & \forall\  0\leq r \leq y,\nonumber \\ & & \forall\  \log_{1+\del} \gamma (1+\del)^{s(C)} \leq i \leq \nonumber \\ && \leq \log_{1+\del} L(1+\del)^{s(C)},\nonumber \\ & & \forall\  0\leq t \leq \log_{1+\del}R \label{rel_cons2}\\
&\sum_i (1+\del)^i Y_{C,r,i,i',t} \leq n_C(r,i',t)\gamma(1+\del)^{s(C)} X_C& \forall\  C\in \C, \forall\  0\leq r\leq y, \nonumber\\
\\ && \forall\  0\leq t \leq \log_{1+\del}R,\nonumber \\
& & \forall\  t\leq i'\leq \log_{1+\del} L  \label{rel_cons3}\\
&X_C, Y_{C,r,i,i',t} \geq 0& \forall\  C\in \C, \forall\  0\leq r\leq y, \nonumber\\
&&\forall\  \log_{1+\del} \gamma (1+\del)^{s(C)} \leq i\leq \nonumber \\ && \leq \log_{1+\del} L(1+\del)^{s(C)},\nonumber \\&&\forall\  0\leq t \leq \log_{1+\del}R\nonumber \\&& \forall\  t \leq i' \leq \log_{1+\del} L\nonumber
\end{eqnarray}
Condition (\ref{rel_cons1}) ensures that we assign at most $m_{\sigma}$ machines to configurations with type $\sigma$. Condition (\ref{rel_cons2}) ensures that every job of size $(1+\del)^i$, density $(1+\del)^r$ and release date $(1+\del)^t$ is scheduled either as a large job or as a small job.  Finally, Condition (\ref{rel_cons3}) ensures that we do not try to schedule small jobs with total size that exceeds the total size of small jobs (with a given density and release date) that is allowed by the definition of machine configurations.
We denote by $(X^*,Y^*)$ an optimal solution to the mixed-integer linear program $\Pi$ whose cost is $Z^*$.

\begin{lemma}
Let $\sol$ be a feasible timely schedule satisfying the palette $\pi$ with no sparse machine of index at least $\frac{1}{\del^7}+2$ to the input consisting of the job set $\J$ of total pseudo-cost $\sol$ such that $\sol$ does not have a sparse machine of index at least $\frac{1}{\del^7} +2$.  Then $Z^*\leq \sol$.
\end{lemma}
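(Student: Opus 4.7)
The plan is to exhibit, for the given schedule $\sol$, a feasible solution $(X,Y)$ of $\Pi$ whose objective value is at most $\sol$, so that $Z^*\le\sol$ follows from optimality of $(X^*,Y^*)$. The construction mirrors that of the corresponding theorem in Section~\ref{brsimp}, now adapted to release dates and the time-interval-based pseudo-cost. For each machine $u$ of $\sol$, I read off a configuration $C(u)$ from the schedule: the first two components are the speed exponent of $u$ and its machine type; I let $N_{C(u)}(r,i,i',t)$ be the exact number of large jobs of density $(1+\del)^r$, size $(1+\del)^i$, and release date $(1+\del)^t$ whose starting time lies in $[(1+\del)^{i'},(1+\del)^{i'+1})$ on $u$; and I let $n_{C(u)}(r,i',t)=\lceil S_{r,i',t}/(\gamma(1+\del)^{s(C(u))})\rceil$, where $S_{r,i',t}$ is the total size of small (for $C(u)$) jobs of density $(1+\del)^r$ and release date $(1+\del)^t$ that start in that interval on $u$ in $\sol$. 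I then set $X_C$ to the number of machines with $C(u)=C$, and $Y_{C,r,i,i',t}$ to the total (over $u$ with $C(u)=C$) number of small jobs of the given parameters that $\sol$ starts in interval $i'$ on $u$.

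Next I verify that each $C(u)$ lies in $\C$ after the deletion rules and that the constraints of $\Pi$ are met. The palette-consistency rule for machines of index at most $\frac{1}{\del^7}+1$ is satisfied because $\sol$ has palette $\pi$. For machines of index at least $\frac{1}{\del^7}+2$ the relaxed sparseness deletion rule does not apply: since no such machine is sparse in $\sol$, every interval $i'$ on $u$ containing a starting time already has actual work strictly exceeding $\del^5(1+\del)^{i'+s(C(u))}$, and because $n_{C(u)}(r,i',t)\gamma(1+\del)^{s(C(u))}$ is a ceiling and therefore \emph{at least} the actual small-job size there, the total work in the configuration at that interval still exceeds the threshold. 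The virtual schedule for $C(u)$ is timely and feasible because the large-job positions agree with $\sol$ interval-by-interval, the small-job block of each class uses $(n_{C(u)}(r,i',t)-1)\gamma(1+\del)^{s(C(u))}\le S_{r,i',t}$ total size (strictly less than in $\sol$), and the non-decreasing-size ordering within an interval does not alter the aggregate completion time of a block. Constraint~(\ref{rel_cons1}) holds by construction (each machine contributes exactly one configuration); (\ref{rel_cons2}) holds because every job of $\J$ is accounted for either as a large job via some $N_{C(u)}$ or as a small job via some $Y_{C(u),r,i,i',t}$; and (\ref{rel_cons3}) is immediate from the ceiling definition of $n_{C(u)}$.

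The remaining step is to show that for each $u$, $cost(C(u))\le$ (pseudo-cost of the jobs assigned to $u$ in $\sol$); summing then yields $\sum_C cost(C)X_C\le\sol$. Each large job contributes the same weight and starts in the same interval in the virtual schedule and in $\sol$; its virtual completion interval is no later than its actual one because the cumulative size preceding it in the virtual schedule is at most the cumulative size preceding it in $\sol$ (virtual small-job blocks are no larger, and large jobs match exactly), so its pseudo-cost cannot increase. For each block of small jobs with density $(1+\del)^r$, release date $(1+\del)^t$, and starting interval $i'$, the virtual block has the same density, no more total size, and completes no later than the actual block in $\sol$; aggregating its pseudo-cost as density $\times$ total size $\times$ (a power of $1+\del$ determined by the completion interval) yields a quantity no larger than the contribution of the corresponding jobs in $\sol$. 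The hard part of the argument will be this last comparison, because the virtual placement of a block can start slightly earlier within its interval (due to accumulated virtual savings) and individual small jobs are reordered by size, so individual per-job pseudo-costs need not decrease; I would therefore bound the block's contribution as a whole by tracking the cumulative processed size on $u$ interval by interval and using that the virtual cumulative never exceeds the actual cumulative, which ensures that each block's virtual completion interval is at most its actual one.
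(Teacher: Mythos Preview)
Your proposal is correct and follows essentially the same route as the paper's proof: for each machine $u$ you read off a configuration $C(u)$ from $\sol$ (speed, type, exact large-job counts $N_{C(u)}$, and rounded-up small-job size buckets $n_{C(u)}$), set $X_C$ to the number of machines with that configuration, set the $Y$-variables by counting small jobs, verify the palette and sparseness deletion rules, check constraints (\ref{rel_cons1})--(\ref{rel_cons3}), and finally argue $cost(C(u))$ does not exceed the pseudo-cost of $u$ in $\sol$. This is exactly the paper's argument; your handling of the sparseness deletion rule for machines of index at least $\frac{1}{\del^7}+2$ (using that the ceiling in $n_{C(u)}$ only increases the configuration's per-interval work) is in fact more explicit than what the paper writes, and you are right to flag the per-machine cost comparison as the only nontrivial step---the paper dispatches it in a single sentence by observing that in every time interval the virtual schedule starts no more total size (and hence no more total weight) than $\sol$ does.
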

\begin{proof}
Based on $\sol$, we will define a feasible integer solution to $\Pi$ whose cost as a solution to $\Pi$ is at most $(1+\del)\sol$.  For every machine $\lambda$, we define a configuration of $\lambda$ which we denote by $C_{\lambda}$ according to $\sol$ as follows.  the first component $s(C_{\lambda})$ is defined such that the speed of machine $\lambda$ is $(1+\del)^{s(C_{\lambda})}$, and the second component is the type of $\lambda$.  For every $0 \leq r\leq y$, $\log_{1+\del} \gamma (1+\del)^{s(C)} \leq i \leq \log_{1+\del} L(1+\del)^{s(C)}$, $0\leq t \leq \log_{1+\del}R$, and $t \leq i' \leq \log_{1+\del} L$, the component $N_{C_{\lambda}}(r,i,i',t)$ is the number of jobs of density $(1+\del)^r$,  size $(1+\del)^i$ and release date $(1+\del)^t$ that $\sol$ schedules to start during the time interval $[(1+\del)^{i'},(1+\del)^{i'+1})$ as large jobs for $\lambda$.  Finally, for every $0 \leq r\leq y$, $0\leq t \leq \log_{1+\del}R$, and $t\leq i' \leq \log_{1+\del} L$, the component $n_{C_{\lambda}}(r,i',t)$ is calculated by first computing the total size of jobs of density $(1+\del)^r$ and release date $(1+\del)^t$ that are scheduled to start during the time interval $[(1+\del)^{i'},(1+\del)^{i'+1})$ as small jobs for this machine and dividing the result by $\gamma(1+\del)^{s(C)}$ and then we round up the result to the next integer. Observe that the configuration $C_{\lambda}$ is a feasible configuration as it satisfies the palette, and for every machine $\lambda$, the virtual schedule starts in each interval a set of jobs of smaller total size compared to the set of jobs started in this interval according to $\sol$.  Now, we set the variables $X_C$ to be the number of machines whose configuration according to $\sol$ is $C$, and we compute $Y_{C,r,i,i',t}$ by counting the number of small jobs of each type that the solution $\sol$ schedules on machines with configuration $C$ in each time interval.  Clearly constraints (\ref{rel_cons1}) are satisfied because every machine has a configuration calculated to it, constraints (\ref{rel_cons2}) hold because every job is scheduled by $\sol$ either as a large job or as a small job, and (\ref{rel_cons3}) holds because when we calculate the value of $n_{C_{\lambda}}(r,i',t)$ we round up the total size assigned to this machine and time interval.

Regarding the cost of machine $\lambda$ in $\sol$ with respect to $cost(C_{\lambda})$, for every time interval $\J_{t,\lambda}$, the total weight of jobs included in $cost(C_{\lambda})$ is no larger than the total weight for this interval in $\sol$.
\end{proof}
The correctness of the scheme for the bounded instance is established using the following lemma.

\begin{lemma}
Let $(X^*,Y^*)$ be an optimal solution  for the mixed-integer linear program $\Pi$ with palette $\pi$.  Then, there is a polynomial time algorithm that transforms $(X^*,Y^*)$ into a feasible solution with palette  $(1+\del)^2\pi$ for the scheduling problem, such that the cost of this solution is at most $(1+\del)^3 \sum_{C\in \C} cost(C)X^*_C$.
\end{lemma}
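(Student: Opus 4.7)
The plan is to round $(X^*, Y^*)$ to an integral configuration assignment, materialize each configuration as a concrete per-machine schedule via the virtual schedule, absorb the First-Fit bin-packing residual of small jobs by one time-stretch, re-route the jobs left unassigned by the rounding onto the pink machine $v$, and apply a second time-stretch to accommodate those. The two time-stretches jointly account for the $(1+\del)^2$ palette factor and contribute $(1+\del)^2$ to the cost; a third $(1+\del)$ factor absorbs the rounding residuals.

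First I set $X'_C = \lfloor X^*_C \rfloor$, which has no effect for $C \in \C_{fast}$ since those are already integral. For $C \in \C \setminus \C_{fast}$ with $X^*_C > 0$ I scale the small-job variables by $Y'_{C,r,i,i',t} = \lfloor (X'_C / X^*_C) Y^*_{C,r,i,i',t} \rfloor$, and for $C \in \C_{fast}$ I take $Y'_{C,r,i,i',t} = \lceil Y^*_{C,r,i,i',t} \rceil$. Mirroring the computation in Section~\ref{brsimp}, the value $\sum_C cost(C) X'_C$ plus the additive $Y$-correction exceeds $Z^*$ by at most a $\del^{O(1)} Z^*$ term, since on each fast configuration the rounding-up contribution is dominated by $cost(C)$ via the lower bound on $cost(C)$ furnished by Lemma~\ref{lboncost}.

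Next I materialize the schedule. For each $C$ with $X'_C > 0$ I take $X'_C$ machines of type $\sigma(C)$ and install on each the virtual schedule of $C$. Since the $N_C$-counts are integer and constraint (\ref{rel_cons2}) holds, actual jobs of every $(r,i,t)$-type are matched into the large-job slots. For the small jobs, for every $(C, r, i', t)$ I create $X'_C$ bins of capacity $n_C(r,i',t)\,\gamma\,(1+\del)^{s(C)}$ inside the reserved region of the virtual interval $[(1+\del)^{i'}, (1+\del)^{i'+1})$, and pack the $Y'_{C,r,i,i',t}$ actual small jobs by First Fit. The overflow per bin is at most one item of size below $\gamma(1+\del)^{s(C)}$. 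A first time-stretch by $1+\del$ (Claim~\ref{stre}, Corollary~\ref{timestretch}) produces in every nonempty interval a gap of length at least $\del^3$ times the interval length. By the deletion rule on $\C$, any nonempty interval $[(1+\del)^{i'},(1+\del)^{i'+1})$ of configuration $C$ carries at least $\del^5(1+\del)^{i'+s(C)}$ work, so the overflow, of order $\gamma(1+\del)^{i'+s(C)}$, comfortably fits.

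The remaining obstacle is the unassigned jobs, arising from the rounding of $X^*_C$ and of $Y^*$-values on $C \in \C \setminus \C_{fast}$. By the choice $B = \del^6/(L(1+\del)^2 D)$, the work of any slow configuration is at most $LB\hat{s}$, and by basic LP-vertex reasoning (the number of basic variables in the LP relaxation of $\Pi$, once the integer $X_C$ are fixed, is bounded by the number of tight constraints, which after filtering is a constant function of $\del$), the total residual size is a $\del^{O(1)}$ fraction of the per-unit-time capacity of $v$. I place each unassigned job on $v$ in the first gap at or after its release date, which exists because $v$ is pink, and apply a second time-stretch by $1+\del$ to provide the necessary slack, exactly as in Lemmas~\ref{pinkmachinelem} and~\ref{nosparselemma}. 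The palette becomes $(1+\del)^2 \pi$ because each stretch shifts every interval index by one; the total pseudo-cost is at most $(1+\del)^3 Z^*$ upon summing the rounding loss ($\del^{O(1)} Z^*$), the multiplicative $(1+\del)$ per time-stretch (Claim~\ref{stre}), and the pink-machine insertion of unassigned jobs ($\del^{O(1)} Z^*$, since their total weight is dominated by the cost of the pink machine's own load). The main difficulty I expect is the careful bookkeeping of the two stretches together with the palette constraint: one must ensure that neither stretch causes a colored machine to start a job in an interval forbidden by its color, and that the pink-machine gaps remain simultaneously large enough to host the First-Fit overflow \emph{and} the unassigned jobs without double-booking any time slot.
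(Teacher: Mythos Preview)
Your high-level plan matches the paper's, but there is a real gap in the step where you absorb the small-job overflow. You invoke the deletion rule on $\C$ to assert that every nonempty interval $[(1+\del)^{i'},(1+\del)^{i'+1})$ of a configuration carries at least $\del^5(1+\del)^{i'+s(C)}$ work. That rule, however, is only applied to configurations whose machine type has index at least $\frac{1}{\del^7}+2$. For the first $\frac{1}{\del^7}+1$ machines the palette may well permit sparse intervals, and these machines are exactly the fast ones (hence integral $X_C$), so they are present in your materialized schedule. On such a machine an interval can contain arbitrarily little work, and your bound ``overflow $\ll \del^5 \cdot$(work in the interval)'' fails. The paper closes this hole with two separate arguments: for small jobs with release date at most $(1+\del)^{i'-1}$ it uses the optimality of $(X^*,Y^*)$ to argue that the \emph{preceding} interval on the same machine is non-sparse (otherwise moving those jobs earlier would reduce cost), and for small jobs released exactly at $(1+\del)^{i'}$ it uses the preprocessing step (had their total size been tiny they would already have been removed and placed on $\hat{m}_{i'}$). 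You need both pieces; neither follows from the deletion rule.

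A second, smaller issue: your ``basic LP-vertex reasoning'' to bound the total size of unassigned jobs is not the paper's mechanism and is unlikely to work as stated, since the number of constraints of type (\ref{rel_cons2}) depends on the input (through the set of job sizes), not only on $\del$. The paper instead observes directly that only slow configurations are fractional, there are at most $D$ configurations per machine type, each carries work at most $L\cdot s_\sigma$, and the slow speeds form a geometric series bounded by $B\hat{s}$; together with the choice $B=\del^6/(L(1+\del)^2 D)$ this yields total residual size at most $\del^5 s_v$. Finally, the cost of the unassigned jobs is not charged to ``the pink machine's own load'' but to $\sum_C cost(C) X''_C$, using that in the fractional residual $X''$ each unassigned job is (fractionally) scheduled with completion time at least its release date.
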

\begin{proof}
We define an integral vector $X'$ that specifies the number of machines of each configuration by the following rule  $X'_C=\lfloor X^*_C \rfloor$, and the residual fractional vector $X''$ is defined using $X''_C=X^*_C-X'_C$.  Then, $\sum_{C\in \C} cost(C) X^*_C = \sum_{C\in \C} cost(C) X'_C + \sum_{C\in \C} cost(C) X''_C$.  We consider a schedule of some of the jobs of $\J$ as follows.  We schedule $X'_C$ machines according to configuration $C$.  For each such machine with configuration $C$, consider every time interval $[(1+\del)^{i'}, (1+\del)^{i'+1})$.  If the configuration has jobs with  starting times during this time interval and the machine has index at most $\frac{1}{\del^7}+1$, then $\pi$ allows the start of jobs during the time interval.
We allocate (at most) $\lceil \frac{Y_{C,r,i,i',t}}{X^*_C} \rceil$ jobs of density $(1+\del)^r$, size $(1+\del)^i$, release date $(1+\del)^t$ to start during the time interval $[(1+\del)^{i'}, (1+\del)^{i'+1})$. We also allocate (at most) $N_C(r,i,i',t)$ jobs of density $(1+\del)^r$,  size $(1+\del)^i$, and release date $(1+\del)^t$ to start during the time interval $[(1+\del)^{i'},(1+\del)^{i'+1})$ on this machine.  We order the jobs starting on this machine during this time interval in a non-decreasing order of their sizes.

Observe that the total length of small jobs assigned to time interval $[(1+\del)^{i'},(1+\del)^{i'+1})$ may exceed the amount $(n_C(r,i',t)-1)\gamma(1+\del)^{s(C)}$ that we used for the virtual schedule by at most $(\log_{1+\del}R +1)(y+1)\gamma(1+\del)^{s(C)} \cdot [ 1+ \sum_{q=0}^{\infty} \frac{1}{(1+\del)^q} ] \leq (1+\del)^{(s(C)}(\log_{1+\del}R +1)(y+1) \gamma \frac{2}{\del} \leq (1+\del)^{s(C)} \frac{\del^{18}}{(1+\del)^y}$ (the last inequality holds by the definition of $\gamma$).  This excess arises due to two reasons.  The first one is the difference between $(n_C(r,i',t)-1)\gamma(1+\del)^{s(C)}$ that we use for the creation of the virtual schedule and $n_C(r,i',t)\gamma(1+\del)^{s(C)}$ that bounds (in $\Pi$) the average total size of small jobs (of the corresponding density and release dates) assigned to such time interval.  The second is due to (at most) one additional small job of each release date, size, and density (due to the ceiling operation).  This total excess bound is less than $\del^{18}(1+\del)^{s(C)}$. Therefore, we apply time stretching by a factor of $1+\del$ of the schedule of these machines and obtain a feasible schedule (since for every time interval for which we increase the total size of small jobs assigned to start during it, in the schedule obtained by time stretching has a gap of length at least $\del^3$ times the length of the time interval, and this is at least $\del^4(1+\del)^{s(C)}$ even for the shortest such time interval).

Moreover,  by increasing the total size of jobs assigned to such a machine to start during the time interval $[(1+\del)^{i'},(1+\del)^{i'+1})$ by an additive term of $(1+\del)^{s(C)} \frac{\del^{18}}{(1+\del)^y}$ which is for a non-sparse machine at most $\frac{\del^{13}}{(1+\del)^y}$ times the total size of jobs starting on such a machine during $[(1+\del)^{i'},(1+\del)^{i'+1})$ (because the machine is not sparse, and we increase the total size of small jobs starting in this time interval only if some jobs start during this time interval), and thus the total increase of the cost due to the increase total size of small jobs starting during $[(1+\del)^{i'},(1+\del)^{i'+1})$ is at most $\del^{13}$ times the cost of the jobs starting during the same time interval according to the virtual schedule.  Thus, for every non-sparse machine which we assign jobs in the current partial schedule, the total pseudo-cost of the jobs assigned to it is at most $(1+\del)(1+ \del^{13})$ times the cost of its configuration.  The bound of $\del^{13}$ times the cost of the jobs starting during the same time interval according to the virtual schedule applies for every time interval that is not sparse on a given machine (on the virtual schedule of that machine).

We next bound the increase of the cost for sparse time intervals (this may happen only for machines with index at most $\frac{1}{\del^7} +1$).  We partition the increase of the cost into two parts.  First consider the contribution of jobs with release date at most $(1+\del)^{i'-1}$.  By the optimality of $(X^*,Y^*)$ as a solution for $\Pi$, we cannot modify the solution by allocating these jobs to the previous time interval (which would require changing the   configuration of this machine and the values of $Y^*$).   Therefore, the machine is busy (according to the virtual schedule) for a period of at least $\del^5(1+\del)^{i'-1}$ during this previous time interval, and again the total contribution of the increase of the total size of small jobs due to the variables $Y_{C,r,i,i',t}$ for $i' >t$ is at most $\del^{13}$ times $cost(C)$.  Next, consider the contribution of jobs released at $(1+\del)^{i'}$ to the cost of jobs starting during $[(1+\del)^{i'},(1+\del)^{i'+1})$ for the machine where this time interval is sparse.  Observe that the total size of jobs released at $(1+\del)^{i'}$ is at least $\del^5(1+\del)^{i'} (1+\del)^{s(C)}$ (as otherwise all these jobs are removed in the preprocessing step), and thus the total increase of the cost due to the small jobs of this sparse time interval is at most $\del^{13}$ times the total cost of jobs released at $(1+\del)^{i'}$ in $(X^*,Y^*)$.  Thus, the total contribution of all the jobs released at $(1+\del)^{i'}$ that are scheduled on sparse intervals  during the time frame $[(1+\del)^{i'},(1+\del)^{i'+1})$ (on all the first $\frac{1}{\del^7}+1$ machines) is at most $\del^{5}$ times the total cost of jobs released at $(1+\del)^{i'}$ in $(X^*,Y^*)$.
We conclude that the total cost of the jobs which we schedule so far is at most $(1+\del)(1+\del^{13})^2\sum_{C\in \C} cost(C)X'_C +\del^{5}\sum_{C\in \C} cost(C)X^*_C$.

Observe that the pink machine $v$ (of speed $s_v$) has either a starting time or idle time in every time interval in the current partial schedule.  We apply once again time stretching of the current partial schedule by a factor of $1+\del$ and create a solution of these jobs of cost at most $(1+\del)^2(1+\del^{13})^2\sum_{C\in \C} cost(C)X'_C +\del^{5}(1+\del)\sum_{C\in \C} cost(C)X^*_C$.

Denote by $J''$ the set of jobs which were not assigned so far and recall that $\hat{s}=s_{\frac{1}{\del^7} +2}$ and $B\hat{s}$ is an upper bound on the speed of configurations for which $X^*_C \neq X'_C$.
Note that $(X^*,Y^*)$ assigns jobs of total size at most $(1+\del)\cdot L\cdot s$ for any machine of speed $s$. Since there are at most $D$ machines that did not receive a configuration of each type, and by our choice of $B$, the total size of the jobs of $J''$ is at most $L\cdot (1+\del) \cdot  D \cdot B \cdot \hat{s} \sum_{q=0}^{\infty}\frac{1}{(1+\del)^q} = L \cdot \frac{(1+\del)^2}{\del} \cdot D \cdot B\cdot \hat{s}\leq L\frac{(1+\del)^2}{\del} \cdot D \cdot B \cdot s_v = \del^5 s_v$, where the last equation holds because $B=\frac{\del^6}{L(1+\del)^2 D}$. Thus, every unassigned job which is released at time $(1+\del)^t$ can be scheduled on machine $v$ during the time interval $[(1+\del)^t,(1+\del)^{t+1})$ (this is so because the total size of all the jobs in $J''$ is at most $\del^5 s_v$ and thus processing a subset of these jobs can be done within the gap of machine $v$ during the time interval $[(1+\del)^t,(1+\del)^{t+1})$).  We conclude that the total pseudo-cost of the jobs in $J''$ is at most $(1+\del)\sum_{j\in J''} r_j \leq (1+\del) \sum_{C\in \C} cost(C)X''_C$, where the last inequality holds because in $X''$ all the jobs of $J''$ are scheduled (fractionally) and the completion time of each job is always at least its release date.

Thus, the total cost of the resulting schedule is at most
\begin{eqnarray*}&&(1+\del)^2(1+\del^{13})^2\sum_{C\in \C} cost(C)X'_C +(1+\del)\del^{5}\sum_{C\in \C} cost(C)X^*_C + (1+\del) \sum_{C\in \C} cost(C)X''_C\\ &\leq &(1+\del)^3\left[ \sum_{C\in \C} cost(C)X'_C + \sum_{C\in \C} cost(C)X''_C \right]\\ &=& (1+\del)^3 \sum_{C\in \C} cost(C)X^*_C.
\end{eqnarray*}
\end{proof}
\bibliographystyle{abbrv}

\end{document}